\newcolumntype{P}[1]{>{\centering\arraybackslash}p{#1}}
\newtheorem{theorem}{Theorem}
\newtheorem{lemma}{Lemma}
\newtheorem{claim}{Claim}
\newtheorem{corollary}{Corollary}
\newtheorem{definition}{Definition}
\newtheorem{observation}{Observation}
\newtheorem{proposition}{Proposition}
\newtheorem{redrule}{Reduction Rule}
\theoremstyle{definition}
\newcommand{\poly}{{\sf poly}}
\renewcommand{\epsilon}{\varepsilon}
\DeclareMathOperator{\operatorClassNP}{NP}
\newcommand{\classNP}{\ensuremath{\operatorClassNP}\xspace}
\DeclareMathOperator{\operatorClassNPC}{NP-complete\xspace}
\newcommand{\classNPC}{\ensuremath{\operatorClassNPC}}
\DeclareMathOperator{\operatorClassNPH}{NP-hard\xspace}
\newcommand{\classNPH}{\ensuremath{\operatorClassNPH}}
\DeclareMathOperator{\operatorClassCoNP}{coNP}
\newcommand{\classCoNP}{\ensuremath{\operatorClassCoNP}}
\DeclareMathOperator{\operatorClassFPT}{FPT\xspace}
\newcommand{\classFPT}{\ensuremath{\operatorClassFPT}\xspace}
\DeclareMathOperator{\operatorClassW}{W}
\newcommand{\classW}[1]{\ensuremath{\operatorClassW[#1]}}
\newlength{\RoundedBoxWidth}
\newsavebox{\GrayRoundedBox}
\newenvironment{GrayBox}[1]%
   {\setlength{\RoundedBoxWidth}{.93\textwidth}
    \def\boxheading{#1}
    \begin{lrbox}{\GrayRoundedBox}
       \begin{minipage}{\RoundedBoxWidth}}%
   {   \end{minipage}
    \end{lrbox}
    \begin{center}
    \begin{tikzpicture}%
       \node(Text)[draw=black!20,fill=white,rounded corners,%
             inner sep=2ex,text width=\RoundedBoxWidth]%
             {\usebox{\GrayRoundedBox}};
        \coordinate(x) at (current bounding box.north west);
        \node [draw=white,rectangle,inner sep=3pt,anchor=north west,fill=white] 
        at ($(x)+(6pt,.75em)$) {\boxheading};
    \end{tikzpicture}
    \end{center}}
\newenvironment{defproblemx}[2][]{\noindent\ignorespaces%
                                \FrameSep=6pt%
                                \parindent=0pt%
                \vspace*{-1.5em}
                \ifthenelse{\isempty{#1}}{%
                  \begin{GrayBox}{\textsc{#2}}%
                }{%
                  \begin{GrayBox}{\textsc{#2} parameterized by~{#1}}%
                }
                \begin{tabular*}{\textwidth}{@{\hspace{.1em}} >{\itshape} p{1.8cm} p{0.8\textwidth} @{}}%
            }{
                \end{tabular*}%
                \end{GrayBox}%
                \ignorespacesafterend
            }  
\newenvironment{defproblemxb}[2][]{\noindent\ignorespaces
  \FrameSep=6pt%
  \parindent=0pt%
  \vspace*{-1.5em}
  \ifthenelse{\isempty{#1}}{%
    \begin{GrayBox}{\textsc{#2}}%
    }{%
      \begin{GrayBox}{\textsc{#2} parameterized by~{#1}}%
      }
      \begin{tabular*}{\textwidth}{@{\hspace{.1em}} >{\itshape} p{1.2cm} p{0.85\textwidth} @{}}%
      }{
      \end{tabular*}%
    \end{GrayBox}%
    \ignorespacesafterend
  }
\newif\iffvs
\newcommand{\cO}{\mathcal{O}}
\newcommand{\Oh}{\mathcal{O}}
\newcommand{\lr}[1]{\left( #1\right)}
\newcommand{\LR}[1]{\left\{ #1\right\}}
\newcommand{\cost}{{\sf cost}}
\newcommand{\dstart}{d_{\sf start}}
\newcommand{\sfstart}{{\sf start}}
\newcommand{\sfend}{{\sf end}}
\newcommand{\sfmax}{{\sf max}}
\newcommand{\dend}{d_{\sf end}}
\newcommand{\dmax}{d_{\sf max}}
\newcommand{\cI}{\mathcal{I}}
\newcommand{\cS}{\mathcal{S}}
\newcommand{\cD}{\mathcal{D}}
\newcommand{\unitvec}{\bm{1}}
\newcommand{\real}{\mathbb{R}}
\newcommand{\realplus}{\mathbb{R}_{\ge 0}}
\newcommand{\pname}[1]{{\sc #1}}
\newcommand{\ProblemFormat}[1]{\pname{#1}}
\newcommand{\ProblemIndex}[1]{\index{problem!\ProblemFormat{#1}}}
\newcommand{\ProblemName}[1]{\ProblemFormat{#1}\ProblemIndex{#1}{}\xspace}
\newcommand{\probConnectivityFracUp}{\ProblemName{$k$-Expanding to Connectivity}}
\newcommand{\probConnectivityFracDown}{\ProblemName{$k$-Shrinking to Connectivity}}
\newcommand{\probAcyclicity}{\ProblemName{Min Shrinking to Acyclicity}}
\newcommand{\probEdgelessnessDown}{\ProblemName{$k$-Shrinking to Independence}}
\newcommand{\probAcyclicityDown}{\ProblemName{$k$-Shrinking to Acyclicity}}
\newcommand{\probEdgelessnessMinDown}{\ProblemName{Min $k$-Shrinking to Independence}}
\newcommand{\probAcyclicityMinDown}{\ProblemName{Min $k$-Shrinking to Acyclicity}}
\newcommand{\probConnGen}{{\sc Generalized $k$-Expanding to Connectivity}}
\begin{document}


\title{Parameterized Geometric Graph Modification with Disk Scaling\footnote{F.~V.~Fomin and P.~A.~Golovach acknowledge support from the Research Council of Norway via the project BWCA (grant no. 314528). T.~Inamdar acknowledges support from  IITJ Research Initiation Grant (grant number I/RIG/TNI/20240072) and the European Research Council (ERC) under the European Union’s Horizon 2020 research and innovation programme. S. Saurabh acknowledges support from the European Research Council (ERC) under the European Union’s Horizon 2020 research and innovation programme (grant agreement No. 819416) Swarnajayanti Fellowship (No. DST/SJF/MSA01/2017-18). M.Zehavi acknowledges support by the European Research Council (ERC) grant no. 101039913 (PARAPATH).}}

\author{
Fedor V. Fomin\thanks{
Department of Informatics, University of Bergen, Norway.}
\and
Petr A. Golovach\addtocounter{footnote}{-1}\footnotemark{}
\and
Tanmay Inamdar\thanks{Indian Institute of Technology Jodhpur, Jodhpur.}
\and
Saket Saurabh \thanks{The Institute of Mathematical Science, HBNI, Chennai, India, and University of Bergen. }
\and
Meirav Zehavi~\thanks{Ben-Gurion University of the Negev, Beer-Sheva, Israel} 
}

\date{}

\maketitle
\thispagestyle{empty}
\begin{abstract}
The parameterized analysis of graph modification problems represents the most extensively studied area within Parameterized Complexity. Given a graph $G$ and an integer $k\in\mathbb{N}$ as input, the goal is to determine whether we can perform at most $k$ operations on $G$ to transform it into a graph belonging to a specified graph class $\mathcal{F}$. Typical operations are combinatorial and include vertex deletions and edge deletions, insertions, and contractions. However, in many real-world scenarios, when the input graph is constrained to be a geometric intersection graph, the modification of the graph is influenced by changes in the geometric properties of the underlying objects themselves, rather than by combinatorial modifications. It raises the question of whether vertex deletions or adjacency modifications are necessarily the \emph{most appropriate} modification operations for studying modifications of geometric graphs.

We propose the study of the disk intersection graph modification through the scaling of disks. This operation is typical in the realm of topology control but has not yet been explored in the context of Parameterized Complexity.
We design parameterized algorithms and kernels for modifying to the most basic graph classes: edgeless, connected, and acyclic. Our technical contributions encompass a novel combination of linear programming, branching, and kernelization techniques, along with a fresh application of bidimensionality theory to analyze the area covered by disks, which may have broader applicability.

\end{abstract}

\section{Introduction}\label{sec:intro}
\setcounter{page}{1}
In a {\em graph modification problem}, the input consists of an $n$-vertex graph $G$ and an integer $k$. The objective is to determine whether $k$ {\em modification operations}---such as vertex deletions, or edge deletions, insertions or contractions, or combinations thereof---are sufficient to obtain a graph with prescribed structural properties such as being planar, bipartite, chordal, interval, acyclic or edgeless.
Graph modification problems encompass some of the fundamental challenges in graph theory and graph algorithms. 
They represent the most intensively studied area of research within Parameterized Complexity~\cite{downey2013fundamentals,DBLP:books/sp/CyganFKLMPPS15,fomin2019kernelization,DBLP:journals/talg/Thomasse10,bliznets2016lower,cao2016linear,cao2017minimum,bliznets2018subexponential,DBLP:journals/talg/AgrawalLMSZ19,jansen2018approximation,DBLP:conf/soda/AgrawalM0Z19,DBLP:journals/talg/FominLLSTZ19,DBLP:conf/icalp/BougeretJS20,DBLP:conf/soda/LiN20,DBLP:journals/corr/abs-1911-02653} (this list is illustrative rather than comprehensive; see \cite{CrespelleDFG23,downey2013fundamentals,DBLP:books/sp/CyganFKLMPPS15,fomin2019kernelization} for more information).  Here, the number of allowed modifications, $k$, is considered a {\em parameter}. With respect to $k$, we seek a {\em fixed parameter tractable }(\classFPT) algorithm, namely, an algorithm whose running time has the form $f(k)n^{\cO(1)}$ for some computable function $f$. 
It would not be an exaggeration to state that the design of \classFPT  algorithms for graph modification problems has played one of the most central roles in the development of Parameterized Complexity as a field.

In recent years, parametrized analysis of classic graph problems restricted to geometric intersection graphs---that is, where the input graph $G$ belongs to a prespecified class of geometric intersection graphs---has gained a lot of interest. Naturally, particular attention has been given to graph modification problems restricted to geometric intersection graphs. For example, we refer to~\cite{alber2004geometric,DBLP:conf/esa/Marx05,DBLP:conf/iwpec/Marx06,giannopoulos2008parameterized,jansen2010polynomial,DBLP:conf/soda/FominLS12,DBLP:conf/stoc/BergBKMZ18,DBLP:journals/dcg/FominLPSZ19,DBLP:conf/soda/Panolan0Z19,DBLP:conf/icalp/FominLP0Z19,DBLP:journals/tcs/BergKW19,DBLP:journals/algorithmica/BonnetR19,DBLP:conf/compgeom/FominLP0Z20}. While most classic \classNP-hard graph problems become polynomial-time solvable on proper interval, interval and circular arc graphs, they remain \classNP-hard\ on most other geometric intersection graph classes of interest, which gives rise to their parameterized analysis (see the aforementioned papers). 
Geometric intersection graph classes studied in this context include, for example, unit disk and disk graphs, unit square and square graphs, convex polygon  and polygon graphs, map graphs, string graphs, and the generalization of these classes to higher dimensions.
However, one central question arises:

\begin{tcolorbox}[colback=gray!5!white,colframe=gray!75!black]
	Are vertex deletion and edge deletion/insertion/contraction the {\em right} modification operations to use when dealing with graph modification problems on geometric intersection graphs?
\end{tcolorbox}

Geometric intersection graphs naturally give rise to modification operations that are {\em geometric}---specifically, we modify the graph by modifying its geometric structure! Arguably, the most fundamental operations in this regard are {\em scaling} and {\em shifting (i.e., moving)} the input geometric~objects.

\begin{tcolorbox}[colback=green!5!white,colframe=gray!75!black]
	In this paper, our primary objective is to launch a systematic investigation into parameterized graph modification problems concerning geometric intersection graphs, utilizing the operation of scaling the input geometric structure.
	We regard a significant aspect of our contribution as both conceptual and foundational, with the goal of paving the way for this emerging research direction.
\end{tcolorbox}

\subsection{Topology Control and Disk Scaling as a Modification Operation}
Topology Control is a fundamental technique used in wireless ad hoc network and sensor networks to reduce energy consumption and radio interference \cite{santi2005topology}. This technique is used to control the topology of the graph representing the communication links between network nodes.  The usual goal is to maintain or achieve some global graph property (like being connected), or reduce interference (i.e., obtain an edgeless graph) while reducing energy consumption.  The geometric operation of scaling---i.e., shrinking or expanding radii---of disks has received significant attention within research of wireless network design due to its relevance to power consumption and interference in the networks. From a theoretical perspective, there is also a large body of literature on algorithmic methods for range assignment in wireless sensor and ad-hoc networks, see \cite{lloyd2005algorithmic,ChambersFHMMVSW18,calinescu2003range,caragiannis2006energy,wan2002minimum}
for a (very incomplete) sample of various approaches, and it is also of interest for other purposes (see, e.g., \cite{acharyya2020range,eppstein2016maximizing,biniaz2018faster}). Due to this significance, we consider problems that use disk scaling (i.e., shrinking or expanding a subset of disks) in order to achieve certain topological properties in graph. Next, we formally define the problems considered in the paper.

\paragraph*{Recent work on geometric graph modification.} Recently, Fomin et al.~\cite{FominG0Z22,FominG0Z23} have studied some problems related to (re)packing of unit disks and point spreading from the perspective of parameterized complexity. Taking a broader perspective, these problems can also be seen as working in the model of geometric graph modification via \emph{geometric operations}, also studied in \cite{ChambersFHMMVSW18,AloupisDFKORW13,CarmiK07}. Indeed, these papers consider \emph{movement} of points or disks as an operation to achieve intersection graph that is edgeless. In this paper, we focus on \emph{scaling} (i.e., shrinking or expanding) as the operation, and in addition to edgelessness, we also consider acyclicity and connectivity. Another common theme with the work of \cite{FominG0Z23} is that they also combine enumeration of partial solutions from the (partial) kernel, and then reduce the problem to checking a system of polynomial inequalities. This high level scheme is similar to the one described for designing FPT algorithms for \probEdgelessnessMinDown/\textsc{Acyclicity}, where we use an LP. But this similarity is only superficial, and seems unavoidable due to the inherently \emph{continuous nature} of the problems. The specific context, and in particular, the kind of arguments used to arrive at a point where we can use LP, are quite involved, and different from that in \cite{FominG0Z23}, as we will explain this further in the following section. Nevertheless, we hope that using geometric operations for graph modification opens up a new research direction, and we return to this topic in the conclusion. 

\subsection{Problem Definitions}
We fix some notation. Let $P$ be a finite family of points in the plane, and $r: P \to \real_{\ge 0}$ be an assignment of points to radii. Then, the disk graph associated with $P$ and $R$ is defined as follows. Let $\cD(P,r)$ be a set that contains the disk centered at $p$ with radius $r(p)$ for each $p \in P$, and let $G(P,r)$ denote the disk graph corresponding to $\cD(P,r)$, i.e., a graph where we have a vertex for each point in $P$, and we have an edge between two points if and only if their corresponding disks intersect. Let $\unitvec: P \to \{1\}$ denote the unit radius assignment, and let $G(P, \unitvec)$ denote the \emph{unit disk graph} corresponding to a set of points $P$.

\subsubsection{Shrinking to Achieve Edgelessness and Acyclicity} \label{subsec:intro-subset}

The first set of problems is related to shrinking a subset of disks in order to achieve 
two fundamental graph properties, namely, that of being edgeless and acyclic. 
The general setting is as follows. We are given a unit disk graph $G(P, \unitvec)$ corresponding to a set of $n$ points $P \subset \real^2$. We want to \emph{shrink} the disks corresponding to a subset $S \subseteq P$ (i.e., change the radii of disks of each point in $S$ to a value smaller than $1$), such that the resulting intersection graph induces an independent set (i.e., is edgeless). We consider two variants of this setting. First, we have a cardinality constrained problem, called \probEdgelessnessDown (see \Cref{fig:indep}).
\begin{figure}[t]
	\centering
	\includegraphics[scale=0.8,page=2]{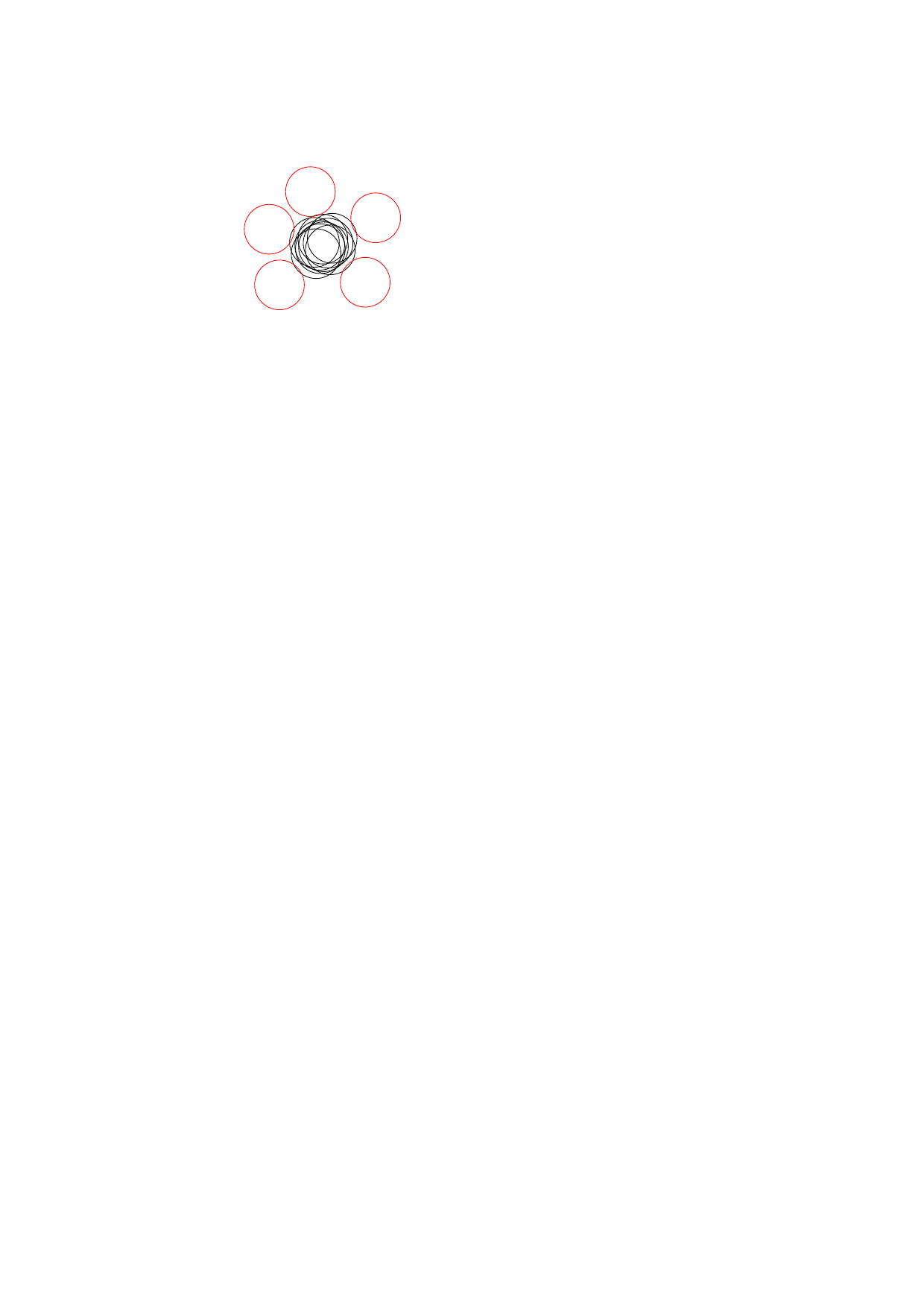}
	\captionsetup{font=small}
	\caption{Left: Original intersection graph $G = G(P, \unitvec)$ denoting an instance $(P, \sfrac{1}{2}, 3)$ of \probEdgelessnessDown. Right: Edgeless graph $G(P, r)$ that shrinks three disks to radius $\sfrac{1}{2}$ (shown in red).} 
	\label{fig:indep}
\end{figure}
\begin{tcolorbox}[colback=white!5!white,colframe=gray!75!black]
	\probEdgelessnessDown
	\begin{description}
		\item[Input:]  A finite set of points $P \subset \real^2$, a constant $0 \le \alpha \le 1$, and an integer $k \ge 0$.
		\item[Task:] Decide whether there exists a solution $(S, r)$ such that
		\begin{enumerate}[leftmargin=*]
			\item $S \subseteq P$ is a set of size \textbf{at most} $k$ corresponding to shrunken disks,
			\item \vspace{-0.15cm}corresponding radii $r: P \to \real_{\ge 0}$ such that $r(p) = \begin{cases}
				1 &\text{ if } p \not\in S
				\\\alpha &\text{ if } p \in S
			\end{cases}$
			\item \vspace{-0.4cm}$G(P, r)$ is edgeless
		\end{enumerate}
	\end{description}
\end{tcolorbox}
We also consider a cost-minimization version of the problem, called \probEdgelessnessMinDown, where the cost of a solution is the total change in the radii, or the sum of \emph{shrinking factors}.
\begin{tcolorbox}[colback=white!5!white,colframe=gray!75!black]
	\probEdgelessnessMinDown
	\begin{description}
		\item[Input:]  A finite set of points $P \subset \real^2$, a constant $0 \le \alpha \le 1$, an integer $k \ge 0$, and budget $\mu \ge 0$.
		\item[Task:] Decide whether there exists a solution $(S, r)$ such that
		\begin{enumerate}[leftmargin=*]
			\item $S \subseteq P$ is a set of size \textbf{at most} $k$ corresponding to shrunken disks,
			\item corresponding radii $r: P \to \realplus$ where $\begin{cases}
				r(p) = 1 &\text{ if } p \not\in S
				\\\alpha \le r(p) \le 1 &\text{ if } p \in S
			\end{cases}$
			\item \vspace{-0.2cm}$G(P, r)$ is edgeless, and
			\item $\displaystyle \cost(S, r) \coloneqq \sum_{p \in S} (1-r(p)) \le \mu$.
		\end{enumerate}
	\end{description}
\end{tcolorbox}

\begin{figure}
	\centering
	\includegraphics[scale=0.7,page=3]{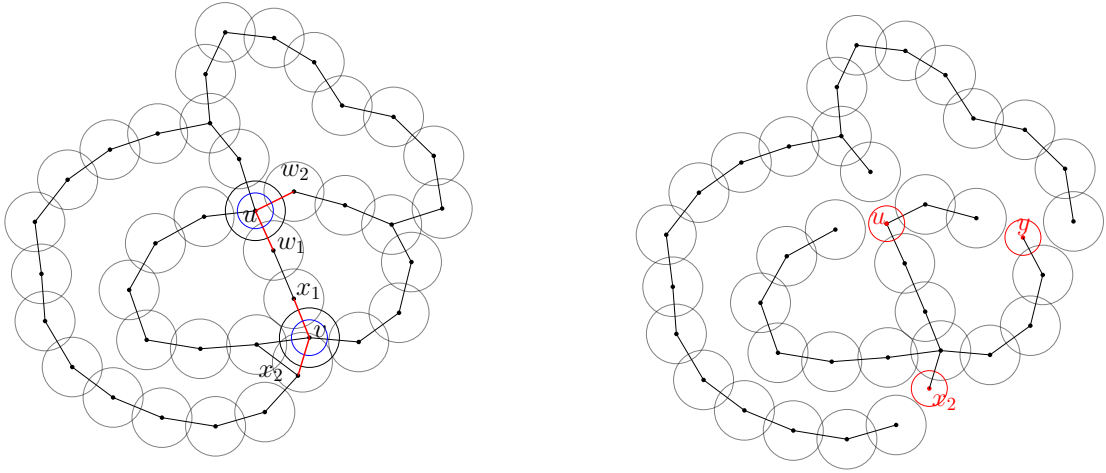}
	\captionsetup{font=small}
	\caption{Left: Original intersection graph $G = G(P, \unitvec)$ corresponding to an instance $(P, \sfrac{1}{2}, 3)$ of \probAcyclicityDown. Note that $\LR{u, v}$ is a minimum feedback vertex set for $G$. However, even after shrinking the corresponding disks to $\sfrac{1}{2}$ (shown in blue), the edges $\LR{uw_1, uw_2, vx_1, v_2}$ (shown in red) are present in the resulting intersection graph, showing that shrinking disks is not equivalent to vertex deletion. Right: A solution that shrinks three disks (shown in red) to radius $\sfrac{1}{2}$ corresponding to $u, x_2, y$, resulting in an acyclic graph $G(P, r)$.} 
	\label{fig:acyclic}
\end{figure}

Analogous to \probEdgelessnessDown, we define \probAcyclicityDown, where the goal is to achieve $G(P, r)$ that is \emph{acyclic} instead of edgeless (as in item 3 in the definition) (see \Cref{fig:acyclic} for an example). \probAcyclicityMinDown is the ``cost-minimization'' version (similar to \probEdgelessnessMinDown), where we want to determine whether we can shrink at most $k$ disks to radii in $[\alpha, 1]$ in order to achieve an acyclic subgraph, such that $\cost(S, r) \le \mu$. 

Note that the operation of shrinking a disk to radius $0$ is not necessarily equivalent to deleting the corresponding vertex. Thus, even when $\alpha = 0$, \probAcyclicityDown (resp.~\probEdgelessnessDown) is not exactly equivalent to finding a feedback vertex set (resp.~vertex cover) of size $k$ in the original unit disk graph. Nevertheless, we will make use of this connection between the two problems. Now, we turn to the second class of problems considered in this paper.

\subsubsection{Shrinking/Expanding to Connectivity} \label{subsec:intro-connectivity}
Now we turn to another graph property that is fundamental in topology control, namely connectedness. We consider two variants: shrinking \emph{at least} $k$ disks while \emph{retaining connectivity}, and expanding \emph{at most} $k$ disks to \emph{achieve connectivity}.\footnote{Observe that expansion versions do not make sense for the problems considered in the earlier section, where we want to obtain an edgeless/acyclic graph -- if the graph is not already edgeless/acyclic, one cannot achieve the desired property by expanding a disk.} Next, we define the two problems.

In the first problem, called \probConnectivityFracDown \footnote{Ideally, the problem should be named \textsc{$k$-Shrinking} \textsl{\textsc{While Retaining}} \textsc{Connectivity}. However, for the sake of brevity, we stick with \probConnectivityFracDown.}, we are given a connected unit disk graph $G(P, \unitvec)$ corresponding to a set of $n$ points $P \subset \real^2$. We want to \emph{shrink} the disks corresponding to a subset $S \subseteq P$ of size \emph{at least} $k$ (i.e., change the radii of disks corresponding to $S$ to a value smaller than $1$), such that the resulting intersection graph \emph{remains} connected. 
\begin{tcolorbox}[colback=white!5!white,colframe=gray!75!black]
	\probConnectivityFracDown
	\begin{description}
		\item[Input:]  A finite set of points $P \subset \real^2$, a constant $0 \le \alpha \le 1$, and a parameter $k \ge 0$.
		\item[Task:] Decide whether there exists a solution $(S, r)$ such that
		\begin{itemize}[leftmargin=*]
			\item $S \subseteq P$ is a set of size \textbf{at least} $k$ corresponding to shrunken disks,
			\item \vspace{-0.15cm}corresponding radii $r: P \to \realplus$ such that $r(p) = \begin{cases}
				1 &\text{ if } p \not\in S
				\\\alpha &\text{ if } p \in S
			\end{cases}$
			\item \vspace{-0.4cm}$G(P, r)$ is connected.
		\end{itemize}
	\end{description}
\end{tcolorbox}

\begin{figure}[H]
	\centering
	\includegraphics[scale=0.9,page=5]{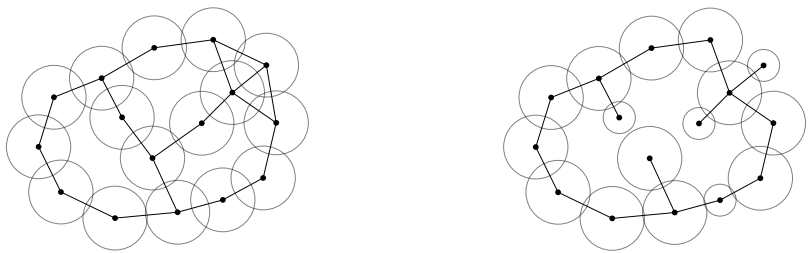}
	\captionsetup{font=small}
	\caption{Left: Original intersection graph $G = G(P, \unitvec)$ corresponding to an instance $(P, \sfrac{1}{2}, 4)$ of \probConnectivityFracDown. Right: A solution that shrinks $4$ disks to radius $\sfrac{1}{2}$ while maintaining connectivity in the resulting intersection graph $G(P, r)$. }
	\label{fig:connex}
\end{figure}

The next problem, which we call \probConnectivityFracUp, is in a sense, the ``complement version'' of \probEdgelessnessDown. Here, we are given a unit disk graph $G(P, \unitvec)$ corresponding to a set of $n$ points $P \subseteq \real^2$. We want to \emph{expand} the disks corresponding to a subset $S \subseteq P$ of size at most $k$ (i.e., change the radii of the disks corresponding to $S$ to a value larger than $1$), such that the resulting intersection graph \emph{becomes} connected. Formally, the problem is defined as follows.
\begin{tcolorbox}[colback=white!5!white,colframe=gray!75!black]
	\probConnectivityFracUp
	\begin{description}
		\item[Input:]  A finite set of points $P \subset \real^2$, a constant $\alpha \ge 1$, and a parameter $k \ge 0$.
		\item[Task:] Decide whether there exists a solution $(S, r)$ such that
		\begin{itemize}[leftmargin=*]
			\item $S \subseteq P$ is a set of size \textbf{at most} $k$ corresponding to expanded disks,
			\item \vspace{-0.15cm}corresponding radii $r: P \to \realplus$ such that $r(p) = \begin{cases}
				1 &\text{ if } p \not\in S
				\\\alpha &\text{ if } p \in S
			\end{cases}$
			\item \vspace{-0.4cm}$G(P, r)$ is connected.
		\end{itemize}
	\end{description}
\end{tcolorbox}

\section{Overview of the Technical Results} \label{sec:overview}

Our algorithmic technical contribution is, mainly, twofold:
\begin{itemize}
	\item The first contribution is a novel combination of linear programming, kernelization and branching/exhaustive search: we observe that linear programming can handle continuous domains, kernelization can reduce the search space within these domains, and, when combined with exhaustive search/branching, these can yield the desired solution. While several of our results are based on this approach, we would like to highlight, in particular, the FPT algorithm for \probAcyclicityDown (see Theorem~\ref{thm:minfvs-fpt}), which exhibits a more intricate combination. 
	\item The second contribution is, in a sense, the first ``geometric'' use of bidimensionality theory. Here, the application is tied up with the analysis of the different areas covered by disks. We employ this approach to design a subexponential FPT algorithm for \probConnectivityFracDown (see Theorem~\ref{thm:subexp-conn}). 
\end{itemize}

We believe that our two approaches are quite general in nature, and will be applicable to other parameterized problems that involve placement of geometric objects in the Euclidean space, particularly when dealing with continuous domains. We summarize our results in \Cref{table:results}, and we elaborate on each of them in the upcoming subsections. One remark is that, all of these problems are shown to be \classNPH~in \Cref{sec:lower}. These \classNP-hardness results are shown via reduction from well-known planar problems, namely, \textsc{Independent Set, Feedback Vertex Set} in planar graphs, and \textsc{Planar 3-SAT}, respectively. Now we turn to our algorithmic results.

\begin{table}
	\renewcommand{\arraystretch}{1.3}
	\centering
	\footnotesize
	\begin{tabular}{|P{4.6cm}|P{2.7cm}|P{3.8cm}|P{3.8cm}|}
		\hline
		Objective &  Type of Result & Cardinality version &  Cardinality and cost-minimization version \\
		\hline
		\multirow{3}{*}{{\sc $k$-Shrinking to} {\sc Independence}}  & Kernelization  & $\Oh(k)$ partial kernel (Thm.~\ref{thm:vc-partialpolykernels-informal}), true polynomial kernel (Thm.~\ref{thm:polykernel-vc}) & $\Oh(k)$ partial kernel (Thm.~\ref{thm:vc-partialpolykernels-informal}) \\\cline{2-4}
		& Parameterized complexity  & Subexponential FPT (Thm.~\ref{thm:vc-subexp}) & FPT (Thm.~\ref{thm:vc-sumk}) \\\cline{2-4}
		& Approximation  & EPTAS (Thm.~\ref{thm:bicriteria}) & Bicriteria EPTAS (Thm.~\ref{thm:bicriteria}) \\\cline{2-4}
		\hline
		\multirow{2}{*}{\sc $k$-Shrinking to Acyclicity}  & Kernelization \& Compression  & true polynomial kernel (Thm.~\ref{thm:fvs-polykernel}) & partial polynomial compression (Thm.~\ref{thm:fvs-partialpolykernels-informal}) \\\cline{2-4} 
		& Parameterized complexity & Subexponential FPT (Thm.~\ref{thm:fvs-subexp}) & FPT (Thm.~\ref{thm:minfvs-fpt}) \\
		\hline
		{\sc $k$-Shrinking to Connectivity} & Parameterized complexity & Subexponential FPT (Thm.~\ref{thm:subexp-conn}) & -- \\
		\hline {\tiny A generalization of}\qquad
		{\sc $k$-Shrinking to Connectivity} & Parameterized complexity & W[1]-hard (Thm.~\ref{thm:genw1hard}) & -- \\
		\hline
	\end{tabular}
	\caption{A summary of the results in the paper. Here, cost constraints refer to the cost-minimzation versions of the respective objectives, as defined before. Additionally, we also show that all of the problems to be NP-hard, which is not mentioned in the table.} \label{table:results}
\end{table}

\subsection{Subexponential FPT Algorithms} \label{subsec:overview-subexp}

In this section, we discuss the first technical contribution of our paper, namely, to design subexponential FPT algorithms for \probEdgelessnessDown, \probAcyclicityDown and \probConnectivityFracDown (specifically the last problem).

\paragraph*{Independence and Acyclicity.}
Recall that in these problems, the goal is to shrink at most $k$ disks to radius $\alpha$ such that the resulting intersection graph is edgeless (resp.~acyclic). For the sake of ease of exposition, let us suppose that $\alpha$, the radius of the shrunken disks, is a constant. We will start by eliminating \emph{large} cliques, i.e., we show using packing arguments that, if the largest clique in the original intersection graph $G = G(P, \unitvec)$ has size $\Omega_{\alpha}(1)$ \footnote{For the sake of simplicity, we use $\alpha$ in the subscript of $\Oh(\cdot)$ and $\Omega(\cdot)$ notation to suppress the dependence on $\alpha$ in the current overview.}, then no solution can shrink any subset of disks to radius $\alpha$ and remove all the edges. This bounds the maximum clique-size. Now, we use \emph{bidimensionality} arguments to show that, either the graph has $\Omega(\sqrt{k}) \times \Omega(\sqrt{k})$ grid as a minor, or the treewidth of the graph is bounded by $\Oh_{\alpha}(\sqrt{k})$. In the former case, we can conclude that it is a no-instance, whereas in the latter case one can perform dynamic programming on the tree decomposition of width $\Oh_{\alpha}(\sqrt{k})$ to solve the problem. This leads to the following results.

\begin{restatable}{theorem}{subexponentialvck} \label{thm:vc-subexp}
	There exists an algorithm that solves an instance $\cI = (P, k, \alpha)$ of \probEdgelessnessDown in time $2^{\Oh((\frac{1}{\alpha})^2\sqrt{k})}\cdot n^{\Oh(1)}$.
\end{restatable}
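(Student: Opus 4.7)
The plan is to follow the two-step bidimensionality strategy outlined just before the theorem statement.

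\textbf{Bounding the clique number.} First I would show that if $\omega(G)$ is too large relative to $\alpha$, the instance must be rejected. Let $C$ be any clique of $G := G(P,\unitvec)$ and $(S,r)$ a hypothetical solution. If $p,q \in C \setminus S$ then $d(p,q)\le 2 = r(p)+r(q)$, so the edge $pq$ persists in $G(P,r)$, contradicting edgelessness. Hence $|C \setminus S| \le 1$, i.e., $|C \cap S| \ge |C|-1$. On the other hand, all centers of $C$ are pairwise within distance $2$, so by Jung's theorem they lie in a disk of radius at most $2/\sqrt 3$; meanwhile the shrunken points of $C \cap S$ must be pairwise at distance $>2\alpha$. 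A standard area-packing argument then yields $|C \cap S| = O(1/\alpha^2)$, so $|C| \le c/\alpha^2$ for an absolute constant $c$. If some clique of $G$ violates this bound we return no; otherwise we may henceforth assume $\omega(G) = O(1/\alpha^2)$.

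\textbf{Bidimensionality.} With the clique number under control, I invoke the bidimensionality machinery for unit disk graphs to produce, in polynomial time, either a $t \times t$ grid minor of $G$ with $t = \lceil c' \sqrt{k+1}\rceil$, or a tree decomposition of $G$ of width $w = O(\omega(G) \cdot \sqrt k) = O((1/\alpha)^2 \sqrt k)$. Observe that any feasible $S$ must form a vertex cover of $G$: if $uv \in E(G)$ with $u,v \notin S$, the edge $uv$ survives in $G(P,r)$. For $c'$ sufficiently large, the grid-minor outcome forces the vertex cover number of $G$ to exceed $k$ via the bidimensionality lower bound tailored to unit disk graphs of bounded clique, certifying a no-instance.

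\textbf{Dynamic programming and wrap-up.} In the treewidth outcome I run a standard DP over the tree decomposition. The state at a bag records which of its vertices lie in $S$ together with the running cardinality $|S|$. For each edge $uv \in E(G)$, when its endpoints first co-occur in a bag we check the correct local constraint based on their $S$-membership: (i) $u,v\notin S$ is infeasible since $d(u,v)\le 2$; (ii) exactly one of $u,v$ in $S$ requires $d(u,v) > 1 + \alpha$; (iii) both in $S$ requires $d(u,v) > 2\alpha$. All constraints are edge-local in $G$ and hence fully captured by the tree decomposition, so the DP runs in $2^{O(w)} n^{O(1)} = 2^{O((1/\alpha)^2 \sqrt k)} n^{O(1)}$ time, matching the claimed bound. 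The delicate step is the bidimensionality argument: since vertex cover is not minor-monotone for general graphs, one must use that in unit disk graphs of bounded clique size a $\Theta(\sqrt k) \times \Theta(\sqrt k)$ grid minor forces vertex cover larger than $k$, and that the $\omega$-dependent $\sqrt k$ treewidth/grid-minor bound is constructive; I would rely on the existing unit disk graph bidimensionality framework to supply both directions, and careful bookkeeping of the constants is what yields the $(1/\alpha)^2$ factor in the exponent.
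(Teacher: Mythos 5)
Your proposal is correct and follows essentially the same route as the paper: a packing argument bounding the clique size of $G(P,\unitvec)$ by $\Oh(1/\alpha^2)$ on yes-instances, the grid-minor/clique-tree-decomposition win-win for unit disk graphs with $t=\Theta(\sqrt{k})$ (grid minor $\Rightarrow$ no-instance, since any solution must be a vertex cover), and a $2^{\Oh(w)}$ vertex-cover-style dynamic program on a decomposition of width $\Oh((1/\alpha)^2\sqrt{k})$. The only point worth noting is that your hedging in the grid-minor case is unnecessary: a $\lceil 2\sqrt{k+1}\rceil\times\lceil 2\sqrt{k+1}\rceil$ grid minor lifts to a matching of size $k+1$ in $G$ itself (matchings are minor-monotone), which is exactly how the paper certifies a no-instance, with no reliance on clique-size-specific bidimensionality bounds.
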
 

\begin{restatable}{theorem}{subexponentialfvsk} \label{thm:fvs-subexp}
	There exists an algorithm that solves an instance $\cI = (P, k, \alpha)$ of \probAcyclicityDown in time $(\frac{1}{\alpha}k)^{\Oh((\frac{1}{\alpha})^2\sqrt{k})}\cdot n^{\Oh(1)}$.
\end{restatable}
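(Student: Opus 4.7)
The plan is to lift the bidimensionality template from Theorem~\ref{thm:vc-subexp} to the acyclicity setting. The crucial preliminary observation is that any shrinking solution $(S, r)$ for \probAcyclicityDown yields a feedback vertex set of the input unit disk graph $G = G(P, \unitvec)$ of size at most $k$. Indeed, every vertex in $G(P, r) \setminus S$ has radius $1$, so the edge set of $G(P, r) \setminus S$ coincides with that of $G \setminus S$; as $G(P, r)$ is acyclic, so is $G \setminus S$, and $S$ is an FVS of $G$. Consequently, lower bounds on $\mathrm{FVS}(G)$ certify no-instances. I would also bound $\omega(G)$ by $\Oh(1/\alpha^2)$: in any clique $K$, the centers lie in a disk of radius $\Oh(1)$; at most two members of $K$ may remain unshrunken (otherwise three pairwise-intersecting unit disks form a triangle in $G(P, r)$); and a grid-cell packing argument shows that once $|K| = \Omega(1/\alpha^2)$, some cell of side $\alpha$ contains three shrunken centers, producing a triangle among the $\alpha$-radius disks and contradicting acyclicity. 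Thus we may assume $\omega(G) = \Oh(1/\alpha^2)$.

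With this clique bound in hand, I would invoke the bidimensionality theorem for unit disk graphs already used in Theorem~\ref{thm:vc-subexp} to obtain the dichotomy: either $G$ admits a $\Omega(\sqrt{k}) \times \Omega(\sqrt{k})$ grid minor, or $G$ has a tree decomposition of width $t = \Oh(\omega \cdot \sqrt{k}) = \Oh(\sqrt{k}/\alpha^2)$, computable in FPT time. In the first case, since the feedback vertex number of a $t' \times t'$ grid is $\Omega((t')^2)$, we conclude $\mathrm{FVS}(G) > k$ and return NO. In the second case, I would run a dynamic programming algorithm on the tree decomposition: for each bag the DP guesses which of its vertices are in $S$ (this determines the effective adjacencies of $G(P, r)$ within the bag), maintains a counter for $|S|$, and carries the standard FVS-style bookkeeping of forest components passing through the bag via a partition-based or rank-based scheme. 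This yields a running time of $t^{\Oh(t)} \cdot n^{\Oh(1)} = (k/\alpha)^{\Oh(\sqrt{k}/\alpha^2)} \cdot n^{\Oh(1)}$, matching the stated bound.

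The main obstacle, compared to the edgelessness case of Theorem~\ref{thm:vc-subexp}, is the dynamic program on the tree decomposition: the adjacencies in $G(P, r)$ genuinely depend on the shrink/keep status of both endpoints, so the DP state must include the shrinking pattern of the current bag before adjacencies can be read off, and the argument that this information is local to the bag is more delicate than in the edgeless setting where bag-level edges suffice to certify the global property. A secondary subtlety is the clique bound, which for acyclicity must force a triangle among shrunken disks (rather than merely an intersecting pair as in the edgeless case), and hence requires the slightly sharper packing argument sketched above.
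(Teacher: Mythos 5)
Your proposal is correct and follows essentially the same strategy as the paper: eliminate large cliques via a packing argument, invoke the clique-grid/NCTD machinery of~\cite{DBLP:journals/dcg/FominLPSZ19} to obtain either a large grid minor (in which case answer No) or a tree decomposition of width $\Oh((1/\alpha)^2\sqrt{k})$, and finish with an FVS-style DP. The two small places where you diverge from the paper's write-up are both harmless rephrasings of the same idea. For the clique bound, the paper (Lemma~\ref{lem:vcfvsLargeDegree}) bounds the \emph{degree} by roughly $6(\frac{2}{\alpha}+1)^2$ using a maximum independent set among the fully-shrunk disks together with Proposition~\ref{prop:maxDeg}; you instead bound the clique number directly by a side-$\alpha$ grid-cell pigeonhole. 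Both give $\omega(G)=\Oh(1/\alpha^2)$, which is what Observation~\ref{obs:nctdToTd} needs, and your observation that the fully-shrunk graph is a spanning subgraph of $G(P,r)$ for any radius assignment is exactly the (implicit) monotonicity fact the paper's no-instance conclusion relies on. For the grid-minor case, the paper (Observation~\ref{obs:vcfvsGrid}) points out that a $\lceil2\sqrt{k+1}\rceil\times\lceil2\sqrt{k+1}\rceil$ grid has $k+1$ vertex-disjoint cycles, at least one of which survives any $k$-shrinking; you instead note that a solution induces an FVS of $G$ of size $\le k$ while the grid has feedback vertex number $\Omega(t'^2)$. These are equivalent. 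Your explicit preliminary observation (a shrinking solution yields an FVS of $G(P,\unitvec)$) is a clean packaging of what the paper uses but never states, and your remark that the DP state must carry the bag's shrink/keep pattern (since adjacencies in $G(P,r)$ depend on both endpoints) is precisely the extra bookkeeping hidden behind the paper's terse ``straightforward dynamic algorithm similar to \textsc{Feedback Vertex Set}'' in Lemma~\ref{lem:fvsDP}.
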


\paragraph*{Connectivity}
Now we turn to the subexponential FPT algorithm for \probConnectivityFracDown, which is one of the main results of the paper. This algorithm also begins with the elimination of large cliques. However, already for this step, our arguments are substantially more involved 
\begin{wrapfigure}{r}{4.5cm}
	\centering
	\includegraphics[scale=0.8,page=1]{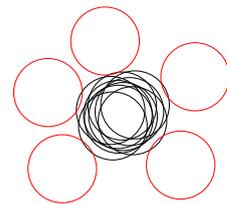}
	\captionsetup{font=footnotesize}
	\caption{There is a solution that shrinks some of the disks in the clique; however, shrinking \emph{all} the disks in the clique results in the disconnected graph (due to the red disks)}\label{fig:connclique}
\end{wrapfigure} 
compared to all previously known subexponential-time algorithms for problems on geometric intersection graphs that also eliminate large cliques 
(see, e.g., \cite{DBLP:conf/soda/FominLS12,DBLP:journals/dcg/FominLPSZ19,DBLP:conf/icalp/FominLP0Z19,DBLP:conf/compgeom/FominLP0Z20,DBLP:conf/soda/LokshtanovPSXZ22}). Specifically, for all of these problems, the existence of a large clique trivially implies that we have a yes-instance (or a no-instance) at hand. 
However,  for \probConnectivityFracDown, the proof is more complicated. Further, and perhaps more importantly, even once we have proved that every optimal solution contains ``most'' vertices of a clique, and we have ``guessed'' (by branching) which these vertices are, we can neither shrink their corresponding disks (since then, we might be left with a clique just as large as it was before) nor delete them. In particular, deletion is incorrect since it might turn a no-instance into a yes-instance, and vice versa (see \Cref{fig:connclique}). Fortunately, we are able to prove that it is possible to delete all of the guessed set except for a carefully chosen subset of a few representatives from it.

The next step of our proof handles cycles (or, more precisely, closed walks) that we term {\em non-empty}. Specifically,  we denote some disks as unshrinkable; at the beginning, these are simply the disks that cannot be shrunk without violating connectivity. Then, for a cycle, we consider  the ``area that it encloses'', and identify two sets of disks of interest: the disks intersecting this area (the {\em relevant set}), and the disks strictly in the interior of this area (the {\em interior set}). If all the disks in the relevant set are unshrinkable, then the cycle is termed unshrinkable, and if the interior set is non-empty, then the cycle is termed non-empty. Then, we prove that, given an unshrinkable cycle, it is safe to delete its entire interior set (which, in case of a non-empty cycle, yields progress). Additionally, if the interior set of a cycle is non-empty and contains a shrinkable disk, then the cycle is called {\em reducible}; we note that a shrinkable non-empty cycle might {\em not} be reducible. We prove that if we have a large enough collection of reducible cycles whose relevant sets are disjoint, then we have a yes-instance at hand.

Thirdly, we turn to deal with the case where we have a large grid as minor---here, it will also become clear why we had to analyze non-empty cycles. 
Again, let us remark that all previously known subexponential-time algorithms for problems on geometric intersection graphs, when encountering the case of a large grid as a minor, can simply say Yes or No. For us, having a large grid minor does not imply that we have a yes-instance (or a no-instance), although it may seem so at first glance---indeed, we might even be dealing with an ``extreme'' case where all of the disks that compose the grid minor are unshrinkable (see \Cref{fig:gridminor}). 
To overcome this difficulty, first of all, we do not consider just any grid minor, but a grid minor such that the closed walks corresponding 
\begin{wrapfigure}{r}{5.9cm}
	\centering
	\vspace{-0.2cm}
	\includegraphics[scale=0.58,page=6]{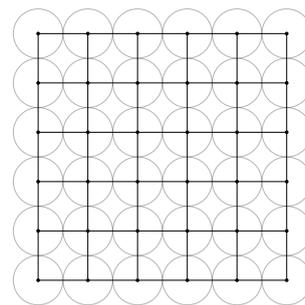}
	\captionsetup{font=footnotesize}
	\caption{Each disk corresponding to the vertices of the grid minor is ``touching'' its neighbors, and shrinking any of the disks results in a disconnected graph.}\label{fig:gridminor}
\end{wrapfigure} 
(according to the minor model) 
to its $4$-cycles have some particular embedding on the plane. 
Then, we are able to show that within such a grid minor, one can identify either an unshrinkable non-empty closed walk (and, then, apply a reduction rule) or  a large enough collection of reducible cycles whose relevant sets are disjoint (and, then, output Yes and terminate). Putting everything together, our algorithm works as follows. As long as we have a large clique, we branch to eliminate most of it. Then, based on a known result, we know that either we have a large grid minor of a particular form (and, then, proceed as described in the previous paragraph) or that the treewidth of the graph is small. In this latter case, we can simply use dynamic programming to directly solve the problem. Thus, we obtain the following result, which is discussed in \Cref{subsec:fpt-subexp-conn}.
\vspace{0.5cm}
\begin{restatable}{theorem}{subexpconnectivity} \label{thm:subexp-conn} 
	There exists an algorithm that solves an instance $\cI = (P, k, \alpha)$ of	\probConnectivityFracDown in time $(\frac{k}{\alpha})^{\Oh((\frac{1}{\alpha})^2 \cdot k^{3/4})}\cdot n^{\Oh(1)}$.
\end{restatable}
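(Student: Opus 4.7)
The plan is to implement the three-phase strategy outlined informally in the overview: first eliminate cliques by branching, then exhaustively apply cycle-based reduction rules, and finally run a bidimensionality win--win between a small-treewidth DP and a large planar grid minor. Throughout, a disk is marked \emph{unshrinkable} once the branching has committed that it must have radius $1$ in every solution extending the current branch; other disks are called \emph{shrinkable}.

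\textbf{Phase 1 (clique branching).} By a standard packing argument, a unit disk clique of size $\Omega((1/\alpha)^2)$ cannot have all its members shrunk to radius $\alpha$ while preserving the edges to the rest of the graph that the clique witnesses. In any yes-solution, all but $\Oh((1/\alpha)^2)$ members of such a clique therefore keep radius $1$. I would branch over the $\Oh((1/\alpha)^2)$-size subset of disks in the clique that are shrunk in the current branch; in every branch, the remaining disks of the clique are marked unshrinkable. To avoid the trap illustrated in Figure~\ref{fig:connclique}, these unshrinkable disks are not deleted outright: a small, carefully chosen set of representatives (covering the same external neighborhood and area) is retained, and the rest are removed by a separate safe-deletion rule justified by the fact that any connectivity path through a deleted disk can be rerouted through a retained representative of equal radius.

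\textbf{Phase 2 (cycle reductions).} I would formalize the notions from the overview: for a closed walk $C$ embedded via the disk drawing, its \emph{relevant set} is the set of disks intersecting the closed region it bounds, its \emph{interior set} consists of disks lying strictly inside; $C$ is \emph{unshrinkable} if its entire relevant set is unshrinkable, \emph{non-empty} if the interior set is nonempty, and \emph{reducible} if it is shrinkable and non-empty. Two lemmas drive the phase. First, if an unshrinkable non-empty closed walk exists, the entire interior set may be safely deleted: any connectivity path entering the interior must cross the unshrinkable boundary, and therefore the external connectivity pattern is unaffected by the deletion, while the contribution of the interior disks to the count $k$ can be realized instead by additional shrinkings outside (which the deletion does not forbid). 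Second, if there is a collection of $\Omega(k)$ reducible closed walks whose relevant sets are pairwise disjoint, then the instance is a yes-instance: each such walk provides an independently shrinkable disk whose shrinking cannot disconnect the graph because the surrounding walk still certifies a cycle of unit disks containing it.

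\textbf{Phase 3 (bidimensionality win--win).} After the earlier phases are applied exhaustively, I would invoke an excluded-grid theorem for planar-style minors, tailored to disk graphs, to conclude that either the treewidth of the current graph is $\Oh((1/\alpha)^2 \cdot k^{3/4})$, in which case a DP over the tree decomposition solves the problem in time $(k/\alpha)^{\Oh((1/\alpha)^2 k^{3/4})} \cdot n^{\Oh(1)}$ (the DP states track the number of shrunk disks so far plus the connectivity pattern along the bag, and the radius alphabet is the two-element set $\{1,\alpha\}$), or the graph contains a grid minor of side length $\Omega(k^{3/4})$ whose branch sets and $4$-cycle models admit a consistent planar embedding. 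In the latter case, I would argue geometrically that among the $\Omega(k^{3/2})$ unit $4$-cycles of the grid minor, either one of them (after lifting through the minor model) yields an unshrinkable non-empty closed walk---contradicting exhaustiveness of Phase 2 and thus completing a reduction---or a constant fraction of them yield reducible closed walks with pairwise disjoint relevant sets, and Phase 2 reports yes. The $k^{3/4}$ exponent (rather than $\sqrt{k}$) arises from balancing the relevant-set disjointness loss against the $\Omega(k)$ count required by the termination rule.

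\textbf{Main obstacle.} The decisive difficulty, and the place where this proof departs from all prior subexponential algorithms on disk graphs, is the grid-minor case of Phase 3. A large grid minor does not immediately give a yes- or no-answer, since the disks forming the grid may be entirely unshrinkable (Figure~\ref{fig:gridminor}). The heart of the argument is therefore the geometric extraction, from any sufficiently large grid minor with a consistent planar embedding, of either an unshrinkable non-empty closed walk or many reducible closed walks with disjoint relevant sets. Making this dichotomy quantitative, with the correct dependence on $k$ and $\alpha$, is what forces the relevant/interior-set formalism of Phase 2 and ultimately controls the $k^{3/4}$ exponent in the running time.
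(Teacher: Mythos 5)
Your skeleton matches the paper's (clique handling by branching, the unshrinkable/non-empty/reducible closed-walk formalism, a well-behaved grid minor versus a treewidth DP), but your Phase 1 states the key structural fact backwards, and this breaks both correctness and the running-time mechanism. For \probConnectivityFracDown the paper proves the opposite of what you claim: every \emph{optimal} solution \emph{shrinks} all but $\Oh_{\alpha}(1)$ disks of the closed neighborhood of any vertex (\Cref{lem:connElim1}); it is not true that in yes-solutions all but $\Oh((1/\alpha)^2)$ members of a large clique keep radius $1$ (the task is to shrink \emph{at least} $k$ disks, and a solution avoiding a big clique can typically be enlarged by shrinking almost all of it). Consequently your branching --- guess the $\Oh_{\alpha}(1)$ disks to be shrunk and mark the rest unshrinkable --- restricts the search to solutions shrinking only constantly many disks per clique and will answer No on yes-instances. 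The paper branches the other way: it guesses a ``special'' set $U$ comprising most of $N[v]$, commits to shrinking it, \emph{deletes} $U$ and lowers the budget to $k-|U|$, keeping only $\Oh_{\alpha}(1)$ representatives chosen (via \Cref{lem:connElim2} and the claim inside \Cref{lem:connLargeDegree}) so that every deleted shrunken disk still touches something retained; this is what makes the deletion safe in both directions, whereas your plan to delete most of the radius-$1$ disks and ``reroute through representatives of equal radius'' is a different, unjustified move and provides no progress measure.

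Relatedly, your account of where $k^{3/4}$ comes from is not a real mechanism. In the paper, branching is triggered only when the maximum clique has size at least $k^{1/4}+c_{\alpha}$, so each branch decreases $k$ by at least $k^{1/4}$, giving recursion depth $k^{3/4}$ and a branching tree of size $k^{\Oh_{\alpha}(k^{3/4})}$; once branching stops, the clique bound $k^{1/4}+\Oh_{\alpha}(1)$ times the $5\ell$-NCTD with $\ell=\Oh(\sqrt{k})$ yields width $\Oh((1/\alpha)^2k^{3/4})$, and the grid minor used has side only $7\lceil\sqrt{k}\rceil$, whose $\geq k$ pairwise far-apart $8$-cycles already supply the $k$ reducible walks with disjoint relevant sets needed by \Cref{lem:nonEmptyCycShrinkable}. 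Your parameters (cliques reduced to $\Oh_{\alpha}(1)$, a grid of side $\Omega(k^{3/4})$, an exponent from ``balancing disjointness loss against the $\Omega(k)$ count'') are internally inconsistent --- with constant-size cliques the same machinery would give width $\Oh_{\alpha}(\sqrt{k})$ --- and do not yield the stated bound as argued. Your Phases 2--3 do track the paper's \Cref{lem:nonEmptyCycUnshrinkable,lem:nonEmptyCycShrinkable,lem:connGrid}, but note that the interior-deletion rule is justified by rerouting connectivity along the unshrinkable walk with $k$ unchanged (not by ``realizing the interior's contribution to $k$ by extra shrinkings outside''), and it is applied only to the $\Oh(k)$ walks $W_{i,j}$ extracted from the well-behaved minor model, not exhaustively over all closed walks.
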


\paragraph{Expanding to achieve Connectivity.} Surprisingly, it turns out that, unlike its shrinking counterpart, a generalization of \probConnectivityFracUp is \classW{1}-hard. In this generalization, we may only expand the disks centered at a specified subset of points; whereas the disks centered at the rest of the points are fixed to be unit disks. We show this using a reduction from \textsc{Covering Points by Unit Disks}, which is known to be \classW{1}-hard \cite{Marx07a}. This result is described in \Cref{subsec:w1-hard-conn}. This result shows that not all natural problems of interest in this geometric graph modification model are fixed-parameter tractable. \footnote{We remark that our algorithmic results (FPT algorithms, kernels, etc.) for the other problems---namely, \probEdgelessnessDown/\textsc{Acyclicity}/\textsc{Connectivity}and their variants---also hold for such a generalization, where a subset of disks cannot be shrunk. In fact, our algorithm for \probEdgelessnessDown supposes the presence of unshrinkable disks, as there, specifically, it simplifies the write-up. Thus, the preceding remark remains valid. However, for the sake of simplicity of presentation of the other results, we decided to stick with the special cases where any disks can be shrunk.}

\subsection{Polynomial Kernels and Compressions and their Consequences} \label{subsec:overview-kernels}
Now we discuss our results regarding polynomial kernels and compressions for \probEdgelessnessDown/\textsc{Acyclicity} and their minimization variants. We also discuss how these results also enable us to design FPT algorithms for the minimization variants, using a combination of partial enumeration of the solution space of the kernels, and linear programming. Note that the subexponential FPT algorithms for \probEdgelessnessDown/\textsc{Acyclicity} discussed in the previous section work directly on the original instances and do not rely on the kernels.

\subsubsection{{\sc (Min) $k$-Shrinking to Independence}} \label{subsubsec:vc-kernels}

We show the following result in \Cref{subsec:vc-partialkernel}.
\begin{restatable}{theorem}{vcpartialpolykernels}{\rm [Informal]} \label{thm:vc-partialpolykernels-informal}
	There exists a partial kernel for \probEdgelessnessDown (resp.\ for \probEdgelessnessMinDown), where the number of points in the resulting instance is $\Oh(k)$.
\end{restatable}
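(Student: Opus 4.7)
The plan is to exploit the observation that any feasible solution $(S,r)$ to either problem yields a vertex cover of the original unit disk graph $G := G(P,\unitvec)$. Indeed, for every edge $\{u,v\}$ of $G$ at least one of $u,v$ must lie in $S$: otherwise both disks remain at radius $1$ and hence still intersect, contradicting the edgelessness of $G(P,r)$. So if the input is a yes-instance then $G$ admits a vertex cover of size at most $k$. I will combine this with a packing argument tailored to unit disks to bound $|P|$ by $\Oh(k)$.

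First I would introduce a single reduction rule: delete every isolated vertex of $G$ from $P$, leaving $\alpha$, $k$, and (in the minimization version) $\mu$ unchanged. Safety is straightforward: an isolated point $p$ has no incident edges in $G$, so resetting $r(p):=1$ cannot create or destroy any edge of $G(P,r)$. In particular, any feasible solution in which $p$ is shrunken can be turned into a feasible solution of no larger cardinality and no larger $\cost(S,r)$ by dropping $p$ from $S$. Thus both feasibility with respect to $k$ and the minimum of $\cost(S,r)$ under $\mu$ are preserved, making the rule safe for \probEdgelessnessDown as well as \probEdgelessnessMinDown.

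After exhaustively applying the rule, every point of $P$ has a neighbor in $G$. Fix any feasible $(S,r)$ and consider the independent set $I := P\setminus S$ of $G$. Each $v\in I$ has a neighbor, which must lie in $S$. For a fixed $u\in S$, every neighbor of $u$ in $I$ lies inside the disk of radius $2$ centered at $u$, while any two points of $I$ are at Euclidean distance strictly greater than $2$ from each other. A standard packing argument (comparing the areas of pairwise disjoint radius-$1$ disks centered at the $I$-neighbors, all contained in the disk of radius $3$ around $u$) bounds $|N_G(u)\cap I|$ by an absolute constant $c$ that does not depend on $\alpha$. Summing over $u \in S$ yields $|I| \le c|S| \le ck$ and hence $|P| \le (c+1)k = \Oh(k)$. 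The kernelization algorithm therefore exhaustively applies the isolated-vertex rule and, if the resulting $|P|$ exceeds $(c+1)k$, returns a trivial no-instance; otherwise it returns the reduced instance itself, which already has $\Oh(k)$ points.

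The main delicate point is the safety argument for the cost variant, where one must preserve not only the existence of a size-$\le k$ solution but also the infimum of $\cost(S,r)$ subject to the budget $\mu$. The exchange step above handles both issues simultaneously, because dropping an isolated $p$ from $S$ weakly decreases both $|S|$ and $\cost(S,r)$ without affecting edgelessness; writing this carefully so that the same reduced instance serves both variants is the only bookkeeping one has to be careful about. Beyond that, no further reduction beyond the isolated-vertex rule is required, and the two versions share an identical partial kernel.
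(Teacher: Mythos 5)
Your proof is correct and follows essentially the same route as the paper: both rest on the observation that the shrunken set must be a vertex cover of $G(P,\unitvec)$, on discarding isolated vertices, and on a unit-disk packing bound limiting how many pairwise-independent points can neighbor a single point of the cover. The only cosmetic difference is that the paper makes the bound algorithmic by computing a $2$-approximate vertex cover and keeping its closed neighborhood, whereas you keep all non-isolated points and reject outright once their number exceeds the packing-derived $\Oh(k)$ threshold; both yield the same $\Oh(k)$ partial kernel for the cardinality and cost versions alike.
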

Note that in this result, we are only able to bound the number of points in the resulting instance by $\Oh(k)$. However, the number of bits required to encode the coordinates of each point may still be unbounded. Therefore, we only obtain a \emph{partial kernel} for these two problems. 

Since \probEdgelessnessDown is a ``discrete'' problem, we can in fact extend this argument to obtain a (fully) polynomial kernel. Here, we need to additionally show that the geometric information that is relevant for solving \probEdgelessnessDown can be encoded using polynomially many bits in $k$. Thus, we obtain the following result, which we prove in \Cref{subsec:polykernel-vc}.
\begin{restatable}{theorem}{vcpolykernel} \label{thm:polykernel-vc}
	There exists a (true) polynomial kernel for \probEdgelessnessDown, paramterized by $k$.
\end{restatable}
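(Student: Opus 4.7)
My plan is to combine the partial kernel of Theorem~\ref{thm:vc-partialpolykernels-informal} with a reduction to a discrete combinatorial problem, whose answer is then re-encoded as a small \probEdgelessnessDown instance. First, I apply Theorem~\ref{thm:vc-partialpolykernels-informal} to replace the input with an equivalent instance $(P',\alpha,k)$ having $|P'|=N=\Oh(k)$ points; the only remaining obstacle is that the coordinates of $P'$ may still require arbitrarily many bits to describe.

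The key structural observation is that \probEdgelessnessDown depends only on a $4$-way classification of every pair $\{p,q\}\subseteq P'$, namely which of the intervals $[0,2\alpha]$, $(2\alpha,1+\alpha]$, $(1+\alpha,2]$, or $(2,\infty)$ contains $|pq|$. Indeed, the fate of the potential edge $pq$ under a choice of shrunk set $S$ is determined purely by threshold comparisons: if $p,q\in S$ the edge vanishes iff $|pq|>2\alpha$; if exactly one of $p,q$ is in $S$ iff $|pq|>1+\alpha$; and if neither is in $S$ iff $|pq|>2$. Consequently, the whole instance is captured by $\Oh(k^2)$ bits of classification data together with $\alpha$ and $k$. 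Pairs in $[0,2\alpha]$ are irrecoverable (output a trivial NO instance), vertices incident to any pair in $(2\alpha,1+\alpha]$ are forced into $S$, and the residual task is a \textsc{Vertex Cover}-type instance on $\Oh(k)$ vertices and $\Oh(k^2)$ edges, whose edges are exactly the pairs in $(1+\alpha,2]$.

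The final and most delicate step is to re-realize this $\poly(k)$-bit combinatorial problem as a genuine \probEdgelessnessDown instance whose point coordinates and the new shrinking factor $\alpha''$ all have bit-complexity $\poly(k)$. For this I would adapt the gadget construction underlying the NP-hardness reduction in \Cref{sec:lower}: represent each vertex of the residual \textsc{Vertex Cover} instance by one point, each edge by a small geometric gadget of $\Oh(1)$ points, and place everything on a $\poly(k)\times\poly(k)$ integer grid with a rationally chosen $\alpha''$ so that intra-gadget, adjacent-gadget, and non-adjacent-gadget distances fall into the correct intervals. The main obstacle is exactly this realization step: a generic unit disk graph can require exponentially many bits to realize (so one cannot invoke a black-box semi-algebraic sample-point theorem, whose bounds are exponential in the dimension $2N$), and the input point set from the partial kernel gives no a priori margin guarantee between pairwise distances and the thresholds $2\alpha,1+\alpha,2$. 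The saving grace is that we need to realize only the structured \textsc{Vertex Cover}-type instance produced in step two, and an explicit grid-based construction mirroring the \classNP-hardness reduction suffices; as a secondary route, one could instead strengthen the partial kernel to guarantee a $2^{-\poly(k)}$ margin at every pair-threshold comparison and then snap the $\Oh(k)$ kernel points to a $2^{-\poly(k)}$-spaced grid without altering the $4$-classification.
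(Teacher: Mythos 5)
Your first two steps coincide with the paper's: apply the $\Oh(k)$-point partial kernel, then observe that the instance is determined by classifying each pair of kernel points according to whether $|pq|$ falls below $2\alpha$, between $2\alpha$ and $1+\alpha$, between $1+\alpha$ and $2$, or above $2$ (the paper packages this as the function $\nu$ with values $\bot,2,1$ and an annotated graph problem $\Pi'_1$ of size $\Oh(k^2)$ bits; note only that the paper works with \emph{open} disks here, so your strict/non-strict threshold conventions are flipped at the boundaries, a cosmetic issue). The divergence is in the last step, and that is exactly where your proposal has a genuine gap. Your primary route -- adapting the gadgetry of the NP-hardness reduction to realize the residual \textsc{Vertex Cover}-type instance on a polynomial grid -- does not go through as sketched: that reduction starts from a \emph{cubic planar} graph and leans on a rectilinear planar embedding (\Cref{prop:planar-embedding}) so that distinct edge-wires never come close to one another. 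Your residual instance on $\Oh(k)$ vertices and $\Oh(k^2)$ edges is in general far from planar, and in a unit-disk realization crossing or even nearby wires create spurious adjacencies; you would need crossover gadgets for \probEdgelessnessDown, which you neither design nor argue exist. Your secondary route is also unsubstantiated: the partial kernel gives no margin at the three thresholds, and ``strengthening'' it to guarantee a $2^{-\poly(k)}$ margin, or snapping the given points to a fine grid, presupposes that you can replace the point set by another one realizing the same $4$-way classification -- which is precisely the realization problem you are trying to solve, so the argument is circular.

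The paper closes this step with a much softer argument that you could adopt verbatim: the annotated problem $\Pi'_1$ is trivially in \classNP, its instances have size $\poly(k)$, and \probEdgelessnessDown is \classNP-complete (\Cref{thm:np-hardness-vc}); hence there is \emph{some} polynomial-time many-one reduction from $\Pi'_1$ back to \probEdgelessnessDown, and applying it to the $\poly(k)$-size compressed instance yields an equivalent \probEdgelessnessDown instance (with parameter bounded by its size) of size $\poly(k)$. This standard ``compression plus \classNP-completeness of the target'' argument requires no explicit geometric construction and no margin or grid-snapping analysis, which is what makes the paper's proof complete where your proposal currently is not.
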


\paragraph{FPT Algorithm for \probEdgelessnessMinDown.} Note that usually the existence of (partial) kernels is sufficient to immediately conclude that the problem is FPT by enumerating all possible solutions. However, for the case of \probEdgelessnessMinDown, even if one fixes a subset of disks of size $k$ to be shrunk (i.e., a \emph{partial solution}), the new radius of each disk is a real number in the range $[\alpha, 1]$. Thus, the number of possible solutions is infinite. Here, we enumerate all partial solutions, and use \emph{linear programming} \footnote{Since our LP involves real numbers, we cannot use standard LP solvers that run in (weakly) polynomial time. However, since we work in Real RAM model, we can use an algebraic LP solver such as \cite{ChanTALG18,MatousekSW96} to find an optimum solution, resulting in the algorithm with the claimed running time bound. More on this in the preliminaries and technical sections.} to check whether there exists a radius assignment of cost at most $\mu$, resulting in an edgeless graph. Thus, we obtain the following result in \Cref{subsec:fpt-vc-ksum}.
\begin{restatable}{theorem}{vcsumk} \label{thm:vc-sumk}
	There exists an algorithm that decides an instance $\cI = (P, k, \alpha, \mu)$ of \probEdgelessnessMinDown in time $k^{\Oh(k)}\cdot n^{\Oh(1)}$. Additionally, if $\cI$ is a yes-instance, the algorithm can return an \emph{optimal solution} for $\cI$.
\end{restatable}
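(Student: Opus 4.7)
The plan is to combine the partial kernel of Theorem~\ref{thm:vc-partialpolykernels-informal} with an enumeration of candidate shrinking sets and an algebraic linear programming subroutine. First I would apply the partial kernel to reduce to an instance $(P,\alpha,k,\mu)$ with $|P| = \Oh(k)$; the coordinates remain real, but this is fine in the Real RAM model. Next I iterate over all subsets $S \subseteq P$ of size at most $k$ to play the role of the shrunken disks. Since $|P| = \Oh(k)$, the number of such subsets is $\binom{\Oh(k)}{k} = k^{\Oh(k)}$. For each candidate $S$ I would first discard it in polynomial time if $P \setminus S$ does not already induce an independent set in $G(P,\unitvec)$, since no choice of radii can erase an edge between two unshrunken unit disks. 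The remaining freedom is then purely continuous: choose radii $r(p) \in [\alpha,1]$ for each $p \in S$.

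For each surviving $S$, finding the cheapest feasible assignment is a linear program. Introducing a variable $r_p$ for every $p \in S$ and fixing $r_q = 1$ for $q \notin S$, I would solve
\begin{align*}
\text{minimize} \quad & \sum_{p \in S} (1 - r_p) \\
\text{subject to} \quad & \alpha \le r_p \le 1 && \forall\, p \in S, \\
& r_p + r_q \le d(p,q) && \forall\, \{p,q\} \subseteq P,\ p \neq q,
\end{align*}
via the algebraic LP solvers of \cite{ChanTALG18,MatousekSW96}, which run in strongly polynomial time in the Real RAM model. This LP has $|S| \le k$ variables and $\Oh(k^2)$ constraints, so each call takes time polynomial in $k$. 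The input instance is a yes-instance iff some $S$ yields LP optimum $\le \mu$, and the corresponding minimizer $r$, together with $S$, is the optimal solution to be returned.

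One technical wrinkle is that the non-strict constraint $r_p + r_q \le d(p,q)$ permits two disks to be tangent; depending on the convention adopted for disk intersection, one may have to verify that a tight LP vertex can be perturbed into the strictly feasible region without losing optimality, which goes through by using the slack in the bounds $r_p \ge \alpha$ (or, equivalently, by replacing each $d(p,q)$ by $d(p,q) - \eta$ for a sufficiently small $\eta > 0$ without affecting the combinatorial structure of the LP). Putting everything together, the total running time is $k^{\Oh(k)} \cdot n^{\Oh(1)}$, matching the claimed bound. The main conceptual step, and the only real obstacle, is the seamless interlocking of the discrete kernelization (which controls the combinatorial explosion from $|P|$ down to $\Oh(k)$) with the continuous LP step (which handles the otherwise infinite search space $[\alpha,1]^S$ of radii); once this bridge is in place, everything else is a routine enumeration together with an off-the-shelf algebraic LP call.
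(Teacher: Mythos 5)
Your proposal is correct and follows essentially the same route as the paper: reduce via the $\Oh(k)$-point partial kernel, enumerate the $k^{\Oh(k)}$ candidate sets of shrunken disks, and solve one LP per candidate with the algebraic solver of \Cref{prop:lpsolver}, keeping the cheapest feasible outcome. The only deviation is your tangency ``wrinkle,'' which is moot here since the paper works with open disks for this problem, so the non-strict constraint $r_p+r_q\le |pq|$ already characterizes non-intersection exactly and no perturbation is needed.
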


\subsubsection{{\sc (Min) $k$-Shrinking to Acyclicity}} \label{subsec:overview-acyclic}

We first give an overview of our result regarding the (partial) polynomial compressibility of \probAcyclicityMinDown, which, along with the FPT algorithm for the problem (as stated subsequently) is the second main result of the paper. 

\begin{restatable}{theorem}{fvspartialpolykernel}{\rm [Informal]}  \label{thm:fvs-partialpolykernels-informal}
	There exists a partial polynomial compression for \probAcyclicityMinDown into an annotated graph problem, where the size of the resulting instance is bounded by a polynomial in $k$, assuming each real number in the instance can be stored using $\Oh(1)$ bits. 
\end{restatable}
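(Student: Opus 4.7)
The plan is to combine packing arguments with feedback-vertex-set-style structural reductions, and then use annotations to encode the continuous cost information, adapting the techniques of the partial kernel for \probEdgelessnessMinDown (Theorem~\ref{thm:vc-partialpolykernels-informal}) to the global acyclicity setting.

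The key observation driving the approach is the following: if $(S,r)$ is any feasible solution to an instance of \probAcyclicityMinDown, then $S$ must be a feedback vertex set of the original unit-disk graph $G(P, \unitvec)$. Indeed, shrinking a disk can only remove edges incident to disks in $S$, so every cycle of $G(P, \unitvec)$ must touch $S$; in particular, if $G(P, \unitvec)$ has no feedback vertex set of size at most $k$ we immediately answer No. This gives the problem a combinatorial FVS-like backbone that can be exploited for compression, while the cost-minimization aspect is handled separately via annotations.

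The reduction proceeds in three stages. First, I would use a standard packing argument to bound the maximum clique of $G(P, \unitvec)$ by $\Oh(k/\alpha^{2})$: a clique of size larger than $k+c(\alpha)$ forces $k$ of its disks to be shrunk, but the remaining $c(\alpha)$ pairwise intersecting disks of radius $\ge \alpha$ still contain a triangle for suitable $c(\alpha)=\Oh(1/\alpha^{2})$, making the instance infeasible. Second, I would adapt the clique-based and degree-based reduction rules from the compression of \probEdgelessnessMinDown, together with FVS-kernelization style reductions on the underlying graph $G(P, \unitvec)$, to isolate a set $P_{\text{core}}\subseteq P$ of $\poly(k)$ disks which are candidates for shrinking in an optimal solution. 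The remaining disks are ``passive'' in the sense that, although they may participate in cycles, an optimal solution never shrinks them; these are either discarded or replaced by small annotated gadgets recording the edges they induce among $P_{\text{core}}$. Third, the output is an annotated graph $H$ on $V(H)=P_{\text{core}}$, where each vertex is annotated by its center, allowable radius range $[\alpha,1]$, and a cost function, and where the edge set is augmented by annotations describing the ``non-shrinkable'' intersection structure inherited from the removed disks. Under the assumption that each real number occupies $\Oh(1)$ bits, this annotated instance has size $\poly(k)$.

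The main technical obstacle is proving that the reductions preserve both feasibility and the optimum cost simultaneously. Acyclicity is a global property, so removing or summarizing a set of disks can create ``phantom'' shortcuts: a cycle of the original graph might traverse a removed disk and so disappear from $H$, even though the corresponding shrinkings are still required in an optimal solution. To handle this, the annotations must encode enough information to reconstruct every cycle class that can be affected by an optimal shrinking, and one needs a careful lifting argument showing that an optimal solution $(S', r')$ on $H$ can be extended (without increasing cost) to an optimal solution $(S,r)$ on the original instance by setting $r(p)=1$ for all $p \in P \setminus P_{\text{core}}$ while preserving acyclicity of $G(P,r)$. The correctness of this extension, and the argument that no cycle of $G(P,\unitvec)$ is lost by the compression, is where the bulk of the work lies.
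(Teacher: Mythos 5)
Your high-level route --- bound cliques/degrees by packing, exploit the fact that any solution must be a feedback vertex set of $G(P,\unitvec)$, summarize the remaining disks by annotations, and finish with a cost-preserving lifting argument --- is the same route the paper takes, but the proposal has a genuine gap exactly where you place ``the bulk of the work'': the definition of $P_{\text{core}}$ and of its annotations is never made concrete, and the concrete claims you do make about it would fail. An optimal solution of \probAcyclicityMinDown may be forced to shrink disks that are \emph{internal degree-$2$ vertices deep inside arbitrarily long paths} (namely when the only gap of length at least $2\alpha$ on the path lies between two internal disks), and which internal disks these are depends on the metric data along the path. Hence a $\poly(k)$-size set of ``candidates for shrinking'' outside of which no optimal solution ever shrinks cannot be isolated by clique/degree rules or FVS-style rules alone; correspondingly, your lifting step ``set $r(p)=1$ for all $p\in P\setminus P_{\text{core}}$'' would in general yield an infeasible assignment. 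Similarly, annotating the summarized disks only with ``the edges they induce among $P_{\text{core}}$'' records adjacency but not the numerical data that determines \emph{how many} disks must be shrunk, and at \emph{what cost}, to destroy a cycle passing through the summarized region --- which is exactly what must be preserved in the cost-minimization version.

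What is missing is the paper's key structural step. First, one proves that an optimal solution can be assumed to shrink at most two internal disks on any maximal degree-$2$ path (and similarly for isolated cycle components), so each such path can be replaced by a single annotated ``reducible edge'' carrying just the three distances $\dstart,\dend,\dmax$, and each isolated cycle by one annotated vertex recording the number of disks and the cost needed to break it; these constant-size \emph{numeric} annotations are what make the compression cost-exact, and they are also why the result is only a partial compression under the $\Oh(1)$-bits-per-real assumption. Second, the size bound requires an adapted FVS-type counting argument on the resulting multigraph: the shrunk degree-$\ge 3$ vertices form a set $W$ of size at most $k$ hitting the remaining cycles, but up to $2k$ vertices may effectively lose degree because edges are broken inside reducible paths, and combining this with the $\Oh(k)$ maximum-degree bound gives $\Oh(k^2)$ vertices and $\Oh(k^3)$ multiedges. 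Your observation that a solution is an FVS of $G(P,\unitvec)$ is true but does not by itself bound the compressed instance, precisely because shrinking is not vertex deletion and cycles of the compressed multigraph can be destroyed without touching $W$.
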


We first state relatively straightforward reduction rules. First, we argue that, if any vertex in the initial intersection graph has a ``large'' degree (i.e., larger than some $ck$), then it implies the existence of a clique of size at least $k+3$ -- note that no solution that is allowed to shrink at most $k$ disks can remove all the edges of such a clique, which implies that we have a no-instance. Thus, either the maximum degree in the given intersection graph is bounded by $\Oh(k)$, or we can say No. Further, we can safely delete all vertices of degree at most one as well as acyclic components from the graph. So far, these arguments bear resemblance to those used to obtain a polynomial kernel for \textsc{Feedback Vertex Set}. From this point onward, however, we deviate significantly from the standard arguments, and need to rely on the inherent geometric nature of the problem.

\begin{figure}[t]
	\centering
	\includegraphics[scale=0.7,page=9]{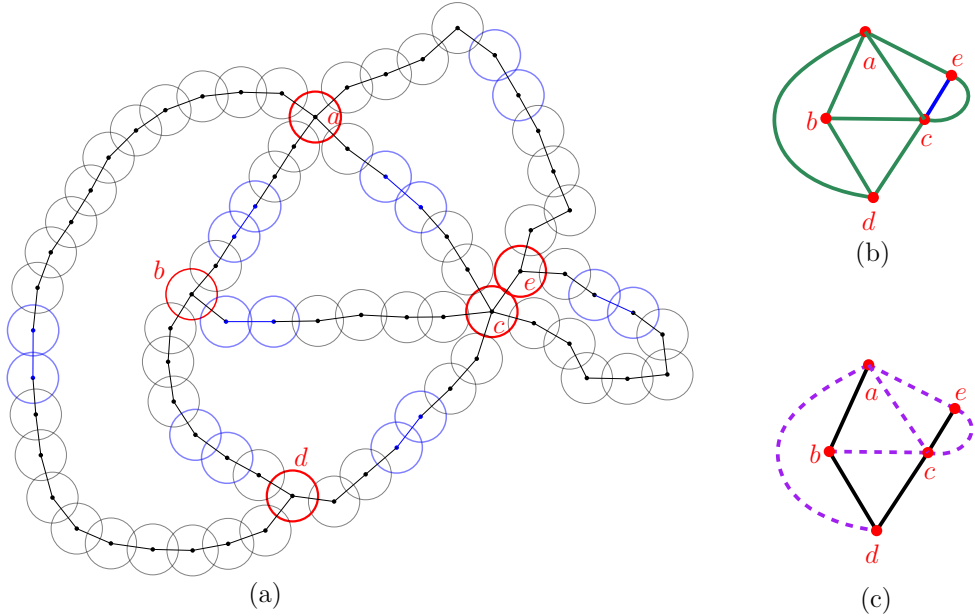}
	\captionsetup{font=small}
	\caption{(a) Given intersection graph after removing the vertices of degree at most $1$. Vertices of degree at least $3$ are shown in red. (b) Resulting multigraph $H$ on vertices of degree at least $3$, obtained after replacing degree-$2$ paths with reducible edges (shown in green) and one of the two edges $ce$ is an original edge (blue). Annotations of green edges include, for example, distance between farthest pairs of disks in the corresponding path (shown in blue in figure (a)). (c) A particular guess for the edges of $R$ (dashed purple), such that $H - R$ (graph induced on black edges) is acyclic. For each edge of $R$, we guess the number of disks that will be shrunk to remove this edge. Then, an optimal values of the shrunk radii that is compatible with this guess is found using linear programming. For example, such a solution may shrink the disk centered at $c$ (removing $bc, ac$), $e$  (removing $ae$ and together with $c$, the edge $ce$), and the two blue disks in the path corresponding to $ad$.}\label{fig:fvscompression}
\end{figure}

We now handle each connected component of the graph separately. If a connected component is a simple cycle, then observe that an optimal solution shrinks either one or two disks, depending on the distances between the consecutive vertices in the cycle. We can compute the optimal choice and get rid of all such isolated cycles. Next, we aim to handle ``long'' degree-$2$ paths in a component. Note that an optimal solution shrinks $\ell \le 2$ disks from such a path. We replace such a path in the intersection graph with a single edge, called a \emph{reducible edge}, that is annotated with an optimal choice for shrinking $0 \le \ell \le 2$ disks from the original path. The remaining edges in the graph are \emph{original} edges, and we annotate each edge with the Euclidean distance between its two endpoints. 

Thus, we obtain a graph $H$ (in fact, in the actual construction it will be a multigraph). Further, the minimum degree of $H$ is $3$, and the maximum degree of $H$ is $\Oh(k)$. At this point, we can argue that, if the given instance is a yes-instance, then $|V(H)| = \Oh(k^2)$ and $|E(H)| = \Oh(k^3)$. The task of determining whether the original instance is a yes-instance of \probAcyclicityMinDown is equivalent to determining whether $H$ is a yes-instance of an appropriately defined graph problem, say $\Pi$. Note that although the size of $H$ is bounded by $\Oh(k^3)$, the edges of $H$ are annotated with relevant distances denoting optimal choices, and thus can be real numbers in general. Thus, we can only infer a (partial) polynomial compression from \probAcyclicityMinDown into $\Pi$.

We can use the compressed instance of $\Pi$ to design an FPT algorithm for \probAcyclicityMinDown, as follows. Recall that $|V(H)| = \Oh(k^2)$ and $|E(H)| = \Oh(k^3)$. Thus, any acyclic subgraph of $H$ with vertex set $V(H)$ must contain at most $|V(H)| - 1 = \Oh(k^2)$ edges. We iterate over all such edge sets $F \subseteq E(H)$ inducing an acyclic subgraph -- note that there are at most $k^{\Oh(k^2)}$ guesses. This implies that each edge in $R = E(H) \setminus F$ must be removed by the solution; note that $|R| = \Oh(k^3)$. Note that an original edge of $R$ can be removed by shrinking either one or both of its endpoints ($3$ choices), whereas a reducible edge can be removed by shrinking $1$ or $2$ disks in the corresponding path, and these choices are encoded in the annotation. Thus, there are a constant number of choices for removal of each edge of $R$, which results in $2^{\Oh(k^3)}$ further guesses. For each set of guesses, we can determine an optimal choice for each such guess using the annotated information, and formulate a linear program (LP) that captures the best choices. By solving the LP, in time $2^{\Oh(k^3 \log k)}$, we can find a solution that shrinks at most $k$ disks, if one exists. Finally, we return the best solution found over all $2^{\Oh(k^3)}$ guesses, which leads to the following result. 

\begin{restatable}{theorem}{minfvsfpt} \label{thm:minfvs-fpt}
	There exists an algorithm that takes as an input of \probAcyclicityMinDown, and in time $2^{\Oh(k^3\log k)} \cdot n^{\Oh(1)}$ time, either correctly concludes that it is a no-instance; or finds a solution of the smallest cost that shrinks at most $k$ disks.
\end{restatable}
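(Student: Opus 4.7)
The plan is to apply Theorem~\ref{thm:fvs-partialpolykernels-informal} as a preprocessing step to obtain, in polynomial time, an annotated multigraph $H$ with $|V(H)| = \Oh(k^2)$ and $|E(H)| = \Oh(k^3)$, minimum degree $3$ and maximum degree $\Oh(k)$ (otherwise we already report a no-instance). Every edge of $H$ is either an \emph{original} edge $uv$ carrying the Euclidean distance $d(u,v)$ between its endpoints, or a \emph{reducible} edge replacing a long degree-$2$ path of the intersection graph; for each reducible edge, the annotation records, for $\ell \in \{0,1,2\}$, the cheapest way to break the corresponding path by shrinking exactly $\ell$ of its interior disks together with the identities of those disks. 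The task then reduces to choosing, in $H$, a set of edges to break so that the remainder is acyclic, the total number of shrunk disks is at most $k$, and the total shrinking cost is minimized.

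First I would enumerate every edge set $F \subseteq E(H)$ that induces a spanning acyclic subgraph of $H$. Since $|F| \le |V(H)| - 1 = \Oh(k^2)$ and $|E(H)| = \Oh(k^3)$, there are at most $k^{\Oh(k^2)}$ such sets. Setting $R = E(H) \setminus F$, every edge of $R$ must be broken. For each $e \in R$, I would branch over the constantly many \emph{modes} of breaking $e$: for an original edge $uv$ one guesses which of $\{u\}$, $\{v\}$, or $\{u,v\}$ is shrunk; for a reducible edge one picks one of the $\Oh(1)$ annotated $\ell$-shrink configurations. Together, each branch determines a shrink-set $S \subseteq P$; branches with $|S|>k$ are immediately discarded. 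The edge-wise branching contributes $k^{\Oh(k^3)}$ further choices in total.

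For each surviving branch, I would write down a linear program over the real variables $\{r(p) : p \in S\}$ with bounds $\alpha \le r(p) \le 1$, and one linear constraint per edge of $R$ certifying that the chosen shrink-mode actually disconnects the relevant pair: for an original edge $uv$ broken by shrinking $u$ alone this takes the form $r(u) + 1 \le d(u,v) - \varepsilon$, analogously for the other modes; for a reducible edge the annotation directly supplies the required linear inequalities on the one or two interior radii it identifies. The objective is $\min \sum_{p \in S}(1-r(p))$, compared against the budget $\mu$. Because the coefficients are real, I would invoke an algebraic (Real-RAM) LP solver as already discussed in the paper, on a program with $\Oh(k^3)$ constraints and $\Oh(k)$ variables. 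Across all branches I return the cheapest feasible radius assignment, or declare a no-instance if none exists.

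The main obstacle I expect is proving correctness, specifically, that the branching genuinely captures some global optimum $(S^\star, r^\star)$. The crux is to show that the compression is \emph{locally optimal}: for each reducible edge, the interaction of its interior disks with the rest of the instance is mediated only through the two endpoints of the represented path (whose radii remain fixed at $1$), and therefore the restriction of any global optimum to those interior disks must be one of the $\Oh(1)$ configurations precomputed and stored in the annotation. A secondary technical point is that the constraint ``$|S| \le k$'' is not expressible inside the LP; it is enforced combinatorially by discarding branches in which the guessed shrink-set is too large. Collecting the three cost factors yields total running time
\[
k^{\Oh(k^2)} \cdot k^{\Oh(k^3)} \cdot \bigl(k^{\Oh(1)}+n^{\Oh(1)}\bigr) \;=\; 2^{\Oh(k^3 \log k)}\cdot n^{\Oh(1)},
\]
matching the stated bound.
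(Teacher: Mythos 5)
Your proposal is correct and tracks the paper's proof almost step-for-step: apply the partial compression (Theorem~\ref{thm:fvs-min-compression}), enumerate the $k^{\Oh(k^2)}$ acyclic spanning edge sets $F$, branch over a constant number of removal modes per edge of $R=E(H)\setminus F$, solve a Real-RAM LP per branch with \Cref{prop:lpsolver}, and keep the cheapest feasible assignment — yielding $2^{\Oh(k^3\log k)}\cdot n^{\Oh(1)}$. Two small caveats worth noting: first, the paper's compressed multigraph also contains \emph{isolated-cycle} vertices (degree $0$, not $\ge 3$) produced by \Cref{redrule:fvs-4'-isolated-cycles}; these contribute a fixed side term to both the disk count and the cost, and the paper handles them separately before writing the LP, whereas your sketch silently assumes minimum degree $3$ throughout. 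Second, your ``local optimality'' justification asserts that the two endpoints of a reducible path have radii fixed at $1$ — they do not, since both are degree-$\ge 3$ vertices whose radii are LP variables possibly constrained by several incident edges; the correct argument is \Cref{cl:fvs-longpath} (at most two interior disks ever need to be shrunk on a long reducible path) together with \Cref{lem:fvs-pi2-equivalence}, not independence of the interior from the endpoints. Neither issue changes the algorithm you describe (which treats endpoint radii as LP variables anyway), but the stated justification would need repair.
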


The aforementioned approach also yields a polynomial compression for \probAcyclicityDown into an intermediate multigraph problem. However, since \probAcyclicityDown is a ``discrete'' problem, one can, in fact, encode all the relevant geometric information using polynomially bits in $k$, resulting in a polynomial kernel for the problem.

\begin{restatable}{theorem}{fvspolykernel} \label{thm:fvs-polykernel}
	There exists a polynomial kernel for \probAcyclicityDown parameterized by the number of disks $k$ that one is allowed to shrink.
\end{restatable}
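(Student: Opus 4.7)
The plan is to follow the same reduction sequence used for \probAcyclicityMinDown in \Cref{thm:fvs-partialpolykernels-informal} and then exploit the discrete nature of \probAcyclicityDown---where each disk is either left at radius $1$ or shrunk to the common value $\alpha$---in order to replace the real-valued annotations by $\poly(k)$-bit annotations, yielding an equivalent instance of polynomial total size.

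First I would apply the reductions of \Cref{thm:fvs-partialpolykernels-informal}: declare a no-instance whenever the input intersection graph has a vertex of degree larger than $ck$ (which forces a clique of size $k+3$ that cannot be destroyed by $k$ shrinks), peel off degree-$\leq 1$ vertices and acyclic components, resolve isolated cycles locally, and replace every maximal degree-$2$ path by a single \emph{reducible edge} between its two degree-$\geq 3$ endpoints. In a yes-instance this produces a multigraph $H$ with $|V(H)| = \Oh(k^2)$ and $|E(H)| = \Oh(k^3)$; any violation of these bounds is immediately rejected.

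Next I would discretize the annotations. Since the final radii lie in $\{1,\alpha\}$, every Euclidean distance relevant to the problem interacts with only three thresholds, namely $2$, $1+\alpha$, and $2\alpha$. Consequently, the behavior of an original edge $e=uv$ under shrinking is captured by an $\Oh(1)$-bit type table recording, for each $(s_u,s_v)\in\{0,1\}^2$, whether $e$ is removed. For a reducible edge $e=uv$ arising from a degree-$2$ path $P$, let $c_e(s_u,s_v)$ be the minimum number of internal vertices of $P$ whose shrinking disconnects $u$ from $v$ along $P$, as a function of the endpoint status $(s_u,s_v)$. Each such value is computed by a one-dimensional dynamic program along $P$, and because the total shrink budget is $k$, any value exceeding $k+1$ is indistinguishable from $\infty$ and may be capped; the four capped values thus occupy $\Oh(\log k)$ bits per edge. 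Crucially, since internal vertices of distinct reducible paths are disjoint and of original degree $2$, shrinking them is local to their own path, so the functions $c_e$ decouple across edges of $H$.

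A solution to the original instance now corresponds bijectively to a pair $(s,B)$ with $s:V(H)\to\{0,1\}$ and $B\subseteq E(H)$ such that $(V(H), E(H)\setminus B)$ is acyclic and
\[
\sum_{v\in V(H)} s_v \;+\; \sum_{uv\in B} c_e(s_u,s_v) \;\leq\; k.
\]
The annotated multigraph thus constitutes an equivalent instance of $\Oh(k^3 \log k) = \poly(k)$ bits. To turn this compression into a \emph{true} polynomial kernel of \probAcyclicityDown, each annotated edge of $H$ is simulated by a short chain of disks whose pairwise distances are selected in the intervals delimited by the thresholds $2\alpha$, $1+\alpha$, $2$ so as to realize precisely the prescribed type or value $c_e$. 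Because each $c_e$ is capped at $k+1$, every gadget uses $\Oh(k)$ disks and the whole construction uses $\poly(k)$ disks in total. The hardest part of the proof will be this last step: arranging the $\Oh(k^3)$ gadgets in the plane so that they together form a single unit disk instance with no unintended intersections between distinct gadgets, which will require a widely-spaced planar layout of $H$ together with explicit coordinate constructions analogous to those used in the \classNPH\ reductions of \Cref{sec:lower}. Once realizability is established, the bit bounds derived above yield the claimed polynomial kernel.
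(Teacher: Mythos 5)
Your reduction to the annotated multigraph $H_A$ and the discretization of the annotations closely track the paper (the paper uses the ternary $\nu$-function rather than $\Oh(\log k)$-bit caps, but that difference is cosmetic). The genuine gap is in the final step, where you try to convert the compressed multigraph instance back into an actual \probAcyclicityDown instance by building per-edge disk gadgets and placing them via ``a widely-spaced planar layout of $H$''. However, $H_A$ has $\Oh(k^2)$ vertices and up to $\Theta(k^3)$ edges, with vertex degrees as large as $\Theta(k)$, so $H_A$ is in general nowhere near planar and admits no crossing-free drawing to widen. The analogy to the NP-hardness constructions of \Cref{sec:lower} fails precisely here: those reductions start from planar source problems (\textsc{Planar FVS}, \textsc{Monotone Planar 3-SAT}), and planarity is what makes the unit-disk embedding possible. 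For a dense multigraph you would need unit-disk crossover gadgets that provably introduce no spurious edges or cycles, and you give no construction for these; as stated, the layout step does not go through.

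The paper avoids this difficulty entirely by an abstract argument: it observes that $\Pi'_1 \in \classNP$ and invokes the NP-completeness of \probAcyclicityDown (\Cref{thm:np-hardness-fvs}) to obtain a polynomial-time Karp reduction from $\Pi'_1$ back to \probAcyclicityDown. Applied to the size-$\poly(k)$ instance $(H_A,\prec,k)$, this produces a \probAcyclicityDown instance of size $\poly(k)$ (and the output parameter can be capped at the number of output points, hence is also $\poly(k)$), which is exactly a polynomial kernel. This is the standard ``polynomial compression into an NP problem $+$ NP-completeness of the target $\Rightarrow$ polynomial kernel'' argument, and it is both correct and considerably simpler than the explicit geometric realization you propose. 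If you wish to make your approach work, you would first need to route $H_A$ through a planar intermediate (or design genuine unit-disk crossover gadgets); that missing step is the crux.
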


\subsection{Approximation Schemes for {\sc (Min) \probEdgelessnessDown}} Finally, we design EPTASes, i.e., efficient polynomial-time approximation schemes for (\textsc{Min}) \probEdgelessnessDown. For \probEdgelessnessDown, for a fixed $\epsilon > 0$, our algorithm either correctly concludes that the given instance is a no-instance, or returns a solution that shrinks at most $(1+\epsilon)k$ disks to radius $\alpha$. For \probEdgelessnessMinDown, our algorithm returns a \emph{bicriteria} EPTAS. That is, it either correctly concludes that the given instance is a no-instance, or returns a solution of cost at most $(1+\epsilon)\mu$ that shrinks at most $(1+\epsilon)k$ disks to a radius in $[\alpha, 1]$. Thus, we obtain the following theorem. These results are described in \Cref{sec:vc-eptas}.
\begin{restatable}{theorem}{vcbicrtieria} \label{thm:bicriteria}
	There exists a bicriteria EPTAS (resp.\ EPTAS) for \probEdgelessnessMinDown (resp.\ \probEdgelessnessDown) running in time $2^{\Oh(\frac{1}{\epsilon} \log(\frac{1}{\epsilon}))} \cdot n^{\Oh(1)}$.
\end{restatable}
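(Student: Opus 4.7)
The plan is to use the shifting technique adapted for unit disk graphs, combined with the FPT algorithm from \Cref{thm:vc-sumk}. I would set $L = \Theta(1/\epsilon)$ and, for each shift $(s_x, s_y) \in \{0, 1, \ldots, L-1\}^2$, consider the partition of the plane into $L \times L$ cells induced by the grid lines $x = iL + s_x$ and $y = jL + s_y$. For each shift, define \emph{boundary points} as points whose unit disk crosses a grid line (equivalently, the center lies within distance $1$ of a horizontal or vertical grid line). Since each $p \in P$ is a boundary point for only $\Oh(1)$ shifts in each direction, a standard averaging argument over the $L^2$ shifts yields a ``good'' shift $(s_x^*, s_y^*)$ such that, for any fixed optimal solution $(S^*, r^*)$, the points in $S^*$ that are boundary contribute at most an $\Oh(\epsilon)$-fraction to both $|S^*|$ and to $\cost(S^*, r^*)$.

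For the good shift, I would decouple the cells and solve each subproblem independently. Since boundary disks link adjacent cells, the procedure is to enumerate over a carefully chosen set of local interface choices at the grid lines (guided by the averaging guarantee), and for each such choice solve the remaining subproblem on each $L \times L$ cell using the FPT algorithm of \Cref{thm:vc-sumk} (or \Cref{thm:vc-subexp} for the cardinality version). The final answer is the best solution obtained, aggregated over all shifts. Correctness follows from the averaging argument, which guarantees that the reported solution exceeds the optimum by at most a factor of $(1+\epsilon)$ in both cardinality and cost, and that all inter-cell edges are handled through the enumerated boundary interface.

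The main obstacle is to bound the per-cell FPT parameter by $\Oh(1/\epsilon)$, so that the overall running time is $2^{\Oh((1/\epsilon)\log(1/\epsilon))} \cdot n^{\Oh(1)}$. A naive packing argument yields only $\Oh(1/\epsilon^2)$ shrinks per cell, which is too weak; however, if a cell forces more than $\Theta(1/\epsilon)$ shrinks beyond its inevitable boundary contribution, then the averaged-out global cost already exceeds $(1+\epsilon)k$ (or $(1+\epsilon)\mu$), so the algorithm rejects. Combined with the subexponential FPT algorithm of \Cref{thm:vc-subexp} running in $2^{\Oh(\sqrt{k_C}/\alpha^2)}$ time on a cell with parameter $k_C$, this yields the claimed running time for the cardinality version. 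For the cost-minimization version, the extra logarithmic factor in the exponent arises from the LP-based enumeration of ``combinatorial types'' of radii assignments used inside the proof of \Cref{thm:vc-sumk}; care must be taken to coordinate these LPs across neighbouring cells so that the radii of the shared boundary disks are consistent, which is the most delicate part of the argument.
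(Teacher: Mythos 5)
There is a genuine gap at exactly the point you flag as the ``main obstacle.'' Your plan is to solve each $\Theta(1/\epsilon)\times\Theta(1/\epsilon)$ cell with the FPT algorithms of \Cref{thm:vc-sumk}/\Cref{thm:vc-subexp}, which requires the per-cell parameter (number of shrunk disks) to be bounded by a function of $\epsilon$ alone. It is not: a single cell may contain up to $n$ points packed into unit sub-cells, and any feasible solution must shrink all but one point of every such clique, so the optimum itself shrinks $\Theta(n)$ disks inside one cell. Your proposed fix --- ``if a cell forces more than $\Theta(1/\epsilon)$ shrinks beyond its inevitable boundary contribution, then the global cost already exceeds $(1+\epsilon)k$, so reject'' --- is invalid, because those forced shrinks are part of $\opt$ rather than an overhead over $\opt$; a yes-instance with huge $k$ can have cells that each force many shrinks, and rejecting it would be wrong. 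Consequently, running \Cref{thm:vc-sumk} per cell costs $k_C^{\Oh(k_C)}$ with $k_C$ as large as $k$, which is not of the form $f(\epsilon)\cdot n^{\Oh(1)}$ (and \Cref{thm:vc-subexp} additionally injects a $1/\alpha$ dependence into the exponent that the theorem statement does not allow). Your treatment of the inter-cell interface is also only sketched; enumerating ``interface choices'' along a grid line cannot work verbatim, since the number of boundary disks per cell is likewise unbounded.

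The paper's proof keeps your shifting/averaging step but replaces the per-cell FPT call by a clique dichotomy inside each cell: the expanded cell is tiled by unit sub-cells, each inducing a clique from which any solution shrinks all but at most one disk; sub-cells with at most $1+4/\epsilon$ points have their shrunk subset guessed exactly ($2^{\Oh(1/\epsilon)}$-type enumeration), while for larger sub-cells \emph{all} points are declared shrinkable, which loses only a multiplicative $|K|/(|K|-1)\le 1+\epsilon/4$ factor in the number of shrunk disks --- this is where the bicriteria slack in $k$ is actually spent, and it is what makes the per-cell work depend only on $\epsilon$ while remaining polynomial in $n$. For \probEdgelessnessMinDown{} an LP (solved via \Cref{prop:lpsolver}) then finds the cheapest radii compatible with each guess, and the cells are recombined by a dynamic program $B[m,k']=\min_t B[m-1,k'-t]+A[C_m,t]$ indexed by the global budget; disks near cell boundaries appear in at most four subproblems, their radii are reconciled by taking the minimum, and the resulting over-counting is absorbed by the shifting guarantee. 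To repair your argument you would need to import this dense/sparse sub-cell split and a budget-indexed combination step; without them the reduction to bounded-parameter subinstances, and hence the claimed running time, does not go through.
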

These results are obtained using the well-known \emph{shifting technique} \cite{Baker94,HochbaumM85} to reduce the problem to bounded-size instances, at the expense of a small loss in the approximation factor. Each bounded-size instance is then solved optimally. We give an overview for our bicriteria EPTAS for \probEdgelessnessMinDown where the arguments are more involved; those for \probEdgelessnessDown are relatively simpler.  First, we subdivide the instance into sub-instances induced by $\Oh(1/\epsilon) \times \Oh(1/\epsilon)$ squares, such that the number of optimal disks intersecting the boundaries of the squares is at most $\epsilon k$, and the total sum of changes in the radii of such disks is at most $\epsilon \mu$. In order to solve each subproblem, we further partition the square into smaller grid cells, such that each cell induces a clique in the given UDG. If such a clique contains at most $c/\epsilon$ disks, then we guess the exact subset of disks that are shrunk by an optimal solution. Otherwise, if the clique contains more than $c/\epsilon$ disks, then we mark all disks as \emph{potentially shrinkable}. Note that all except at most one disk from each clique must be shrunk by any solution. Thus, we can argue that the total number of disks that are shrunk in the latter case is at most $\epsilon k$. Now, for each guess of the \emph{potentially shrinkable disks}, we use linear programming to find an optimal solution that is only allowed to shrink the potentially shrinkable disks, in order to result in an edgeless graph. Finally, we need to combine the solutions of subproblems in a careful manner -- since the disks near the boundaries of the cell appear in multiple subproblems. We achieve this using dynamic programming over the optimal solutions computed for the bounded-size instances.

\section{Outline} \label{sec:outline}
We start in \Cref{sec:prelim} with preliminaries including definitions and notations regarding graph theory, geometry, and parameterized algorithms. In \Cref{sec:vc}, we discuss the (partial) kernels and FPT algorithms for (\textsc{Min}) \probEdgelessnessDown. In \Cref{sec:fvs-probs}, we discuss our kernel (resp.~compression and FPT algorithm) for \probAcyclicityDown (resp.~\probAcyclicityMinDown). In \Cref{sec:subexpFPT}, we discuss subexponential FPT algorithms for \probEdgelessnessDown/\textsc{Acyclicity/Connectivity}. In \Cref{sec:vc-eptas}, we design (bicriteria) EPTASes for (\textsc{Min}) \probEdgelessnessDown. In \Cref{sec:lower} we give our lower bound results including NP-hardness for \probEdgelessnessDown/\textsc{Acyclicity/Connectivity}, and \classW{1}-hardness for \probConnectivityFracUp. Finally, in \Cref{sec:conclusion}, we give concluding remarks and future directions. We remark that the conclusion section can be read independently of the technical sections.

\section{Preliminaries}\label{sec:prelim} 
Given $n\in\mathbb{N}$, let $[n]=\{1,2,\ldots,n\}$. 

\paragraph{Graphs.} We use standard graph-theoretic terminology and refer to the textbook of Diestel~\cite{Diestel12} for missing notions. We consider only finite undirected graphs.  For  a graph $G$,  $V(G)$ and $E(G)$ are used to denote its vertex and edge sets, respectively. Given a vertex $v \in V(G)$, let $N_G(v)$ and $N_G[v]$ denote the open and closed neighborhoods of $v$ in $G$, respectively. When $G$ is clear from context, we drop it from the subscript. An {\em $a\times b$ grid}, denote by $\Gamma_{a\times b}$, is the graph on vertex set $\{v_{i,j}: i\in[a], j\in[b]\}$ and edge set $\{\{v_{i,j},v_{i',j'}\}: i,i'\in[a], j,j'\in[b], |i-i'|+|j-j'|=1\}$. The {\em natural plane embedding of $\Gamma_{a\times b}$}, denoted by $d_{a\times b}$, is the mapping of each vertex $v_{i,j}$ to the point $(i,j)$ on the plane, and of each edge to the straight line segment between its endpoints. We say that a graph $H$ is a {\em minor} of a graph $G$ if there exists a function $\varphi: V(H)\rightarrow 2^{V(G)}$, called a {\em minor model}, such that {\em (i)} for every $v\in V(H)$, $G[\varphi(v)]$ is a non-empty connected graph, and {\em (ii)} for every distinct $u,v\in V(H)$, $\varphi(u)\cap \varphi(v)=\emptyset$.
A graph $G$ is \emph{planar} if it has a plane embedding, that is, it can be drawn on the plane without crossing edges.    A \emph{rectilinear} embedding is a planar embedding of $G$ such that vertices are mapped to points with integer coordinates and each edge is mapped into a broken line consisting of an alternate sequence of horizontal and vertical line segments. 

Treewidth is a structural parameter indicating how much a graph resembles a tree. Formally:

\begin{definition}\label{def:treewidth}
	A \emph{tree decomposition} of a graph $G$ is a pair ${\cal T}=(T,\beta)$ of a tree $T$
	and $\beta:V(T) \rightarrow 2^{V(G)}$, such that
	\begin{enumerate}
		\itemsep0em 
		\item\label{item:twedge} for any edge $\{x,y\} \in E(G)$ there exists a node $v \in V(T)$ such that $x,y \in \beta(v)$, and
		\item\label{item:twconnected} for any vertex $x \in V(G)$, the subgraph of $T$ induced by the set $T_x = \{v\in V(T): x\in\beta(v)\}$ is a non-empty tree.
	\end{enumerate}
	The {\em width} of $(T,\beta)$ is $\max_{v\in V(T)}\{|\beta(v)|\}-1$. The {\em treewidth} of $G$, denoted by $\mathsf{tw}(G)$, is the minimum width over all tree decompositions of $G$.
\end{definition}

\paragraph{Points, Disks and segments.} For two points $a$ and $b$ in the plane, we use $ab$ to denote the line segment with endpoints in $a$ and $b$. The \emph{distance} between 
$a=(x_1,y_1)$ and $b=(x_2,y_2)$  or the \emph{length} of $ab$, is $|ab|=\|a-b\|_2=\sqrt{(x_1-x_2)^2+(y_1-y_2)^2}$. 
The \emph{closed disk} $D(c,r)$ with a \emph{center} $c=(c_1,c_2)$ of radius $r$ in the plane is the set of points $(x,y)$ satisfying the inequality $(x-c_1)^2+(y-c_2)^2 \le r^2$. Analogously, the corresponding \emph{open disk} $D_o(c, r)$ is the set of points $(x, y)$ that satisfy the preceding inequality in a strict manner. For a (open or closed) disk with center $c$ and radius $r$, the set of points $(x, y)$ with $(x-c_1)^2 + (y-c_2)^2 = r^2$ is called the \emph{boundary} of the disk. Note that two open disks that only share a common point on their boundaries do not intersect with each other. We measure the distance between two disks as the distance between their centers.

\paragraph{Remark.} We work with \emph{open disks} for (\textsc{Min}) \probEdgelessnessDown / \textsc{Acyclicity}; and with \emph{closed disks} in setting of \probConnectivityFracDown. By making this assumption---which is without any significant loss of generality---we avoid repeated references to infinitesimal perturbations as well as dealing with infima/suprema, which makes the presentation cleaner.

\paragraph{Geometric Definitions.}
As it is common in Computational Geometry, we assume the Real RAM computational model in our algorithmic results. Informally, this means that we assume that basic arithmetic operation may over reals can be performed in unit time. In particular, we allow the input points to have real coordinates. 

Let $\cS$ be a set of geometric objects in the plane (i.e., non-empty subsets of $\mathbb{R}^2$). Then the intersection graph $G(\cS)$ is the graph with the set of vertices  $\cS$ such that  two vertices are adjacent if and only if the corresponding two objects in $\cS$ have a non-empty intersection. 
A \emph{disk graph} is the intersection graph of a finite family of disk on the plane and a disk graph is a \emph{unit disk graphs} if all the disks are of the same radius.  

Let $P$ be a set of points on the plane and let $r: P \to \realplus$ be a family of nonnegative real numbers. Then, we use $\cD(P,r)$ to denote the set of disks $\{D(p,r(p))\colon p\in P\}$. To simplify notation, we use $G(P,r)$ to denote the disk graph $G(\cD(P,r))$. Recall that we use $\unitvec$ to denote the unit radius assignment (which assigns every point of the corresponding set of points a radius of $1$).

\paragraph{Parameterized Complexity.} We refer to the book of Cygan et al.~\cite{DBLP:books/sp/CyganFKLMPPS15} for introduction to the area and undefined notions.  A \emph{parameterized problem} is a language $L\subseteq\Sigma^*\times\mathbb{N}$, where $\Sigma^*$ is a set of strings over a finite alphabet $\Sigma$. An input of a parameterized problem is a pair $(x,k)$, where $x$ is a string over $\Sigma$ and $k\in \mathbb{N}$ is a \emph{parameter}. 

A parameterized problem is \emph{fixed-parameter tractable} (or \classFPT) if it can be solved in $f(k)\cdot |x|^{\Oh(1)}$ time for some computable function~$f$.  
The complexity class \classFPT contains  all fixed-parameter tractable problems. 
Parameterized complexity theory also provides tools  to refute the existence of an \classFPT algorithm for a paramterized problem under some complexity-theoretic assumptions.
The most used assumption is that $\classFPT\neq\classW{1}$. 
Then to show that it is unlikely that a parameterized problem is in \classFPT, one can prove that it is \classW{1}-hard by demonstrating a  \emph{parameterized reduction} from a known \classW{1}-hard problem;  we refer to \cite{DBLP:books/sp/CyganFKLMPPS15} for the formal definitions of the class \classW{1} and parameterized reductions. 

A \emph{kernelization} (or \emph{kernel}) for a paramterized problem $L$ is a polynomial time algorithm that, given an instance $(x,k)$ of $L$, outputs an instance $(x',k')$ of $L$ such that 
(i) $(x,k)\in L$ if and only if $(x',k')\in L$ and (ii) $|x'|+k'\leq f(k)$ for a computable function $f$.  The function $f$ is called the kernel \emph{size}; a kernel is \emph{polynomial} if $f$ is a polynomial. It can be shown that every decidable \classFPT problem admits a kernel. However, it is unlikely that all \classFPT problems
have polynomial kernels. In particular, there is the now standard \emph{cross-composition} technique to show that a parameterized problem does not admit a polynomial kernel unless $\classNP\subseteq \classCoNP/\poly$. We refer to~\cite{DBLP:books/sp/CyganFKLMPPS15} and the recent book on kernelization of Fomin et al.~\cite{fomin2019kernelization} for details. 

There is a slightly weaker notion of kernelization, called \emph{partial kernelization}, which was introduced by Betzler et al. \cite{BetzlerGKN11} (also \cite{BasavarajuFRS16}). Here, instead of reducing the entire instance to one bounded by the size of the parameter, we try to bound only a ``part'' of the instance by a function of the parameter. In our context, for some problems, our partial kernelization algorithm outputs instances where the number of points in the instance is bounded by a function of the parameter, but the values (of coordinates, radii, cost budget or annotations) may be real numbers. 

\paragraph{Linear Programming in Real RAM model.} In many of our algorithms, we will need to find optimal/feasible solutions for linear programs defined over real numbers (recall that we work in real RAM model). To this end, we cannot use the standard weakly-polynomial algorithms based on interior point or ellipsoid methods. Instead, we use the following result.

\begin{proposition}[\cite{ChanTALG18}] \label{prop:lpsolver}
	There exists a deterministic algorithm for finding an optimal solution to a linear program with $d$ variables and $n$ constraints that runs in time $d^{\Oh(d)} \cdot n$ time in Real RAM model.
\end{proposition}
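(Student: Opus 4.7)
Since the statement is cited verbatim from \cite{ChanTALG18}, no original proof is required here; my plan would be to simply defer to that paper. Still, let me sketch the strategy one would follow to reproduce the bound. The natural starting point is Seidel's randomized incremental algorithm, which processes the $n$ halfspaces one by one in a uniformly random order, maintaining the current optimum and, whenever a newly inserted constraint violates it, recursively solving a $(d-1)$-dimensional LP restricted to the bounding hyperplane of that constraint. A standard backwards analysis shows that the expected running time is $O(d!\cdot n)$, which is of the form $d^{O(d)}\cdot n$; the task is thus to derandomize.

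The deterministic version I would aim for combines Megiddo's prune-and-search paradigm with a deterministic sampling step. Concretely, the steps would be: (i) handle the base case $d = O(1)$ in $O(n)$ time via a direct sweep; (ii) for general $d$, split the constraints into groups of constant size, recursively solve an LP on each group to obtain a pool of candidate optimal points, and then, via multidimensional parametric search that internally reduces to lower-dimensional decision problems, identify and discard a constant fraction of the constraints that are guaranteed not to be active at the global optimum; (iii) repeat on the pruned instance. This yields a recurrence of the shape
\[
T(n,d) \;=\; T(cn,d) \;+\; d^{O(d)}\cdot n
\]
for some constant $c < 1$, which solves to $T(n,d) = d^{O(d)}\cdot n$, as desired.

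The main obstacle, and the actual content of \cite{ChanTALG18}, is to make the ``group and prune'' step work deterministically while keeping the dependence on $d$ down to $d^{O(d)}$: one needs deterministic constructions of small $\epsilon$-approximations (or $\epsilon$-nets) for the range space of halfspaces in $\mathbb{R}^d$, computable in linear time in $n$ and with only $d^{O(d)}$ overhead. The naive derandomization via the method of conditional probabilities overshoots this budget, so one has to exploit the combinatorial structure of halfspace range spaces in a more refined way; this is precisely what Chan's framework delivers, and invoking it closes the argument.
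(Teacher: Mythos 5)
Your proposal is correct and matches the paper exactly: the paper offers no proof of this proposition, simply citing \cite{ChanTALG18} as a black box, which is what you do. Your additional sketch of the derandomized Clarkson/Seidel-style machinery is reasonable background but is not needed, since the statement is used purely as an imported tool.
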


\section{(Partial) Kernels for {\sc (Min) $k$-Shrinking to Independence} and Consequences} \label{sec:vc}

In \Cref{subsec:vc-partialkernel}, we design linear partial kernels for (\textsc{Min}) \probEdgelessnessDown. Then, in \Cref{subsec:polykernel-vc}, we use this result to obtain a true polynomial kernel for \probEdgelessnessDown. Finally, in \Cref{subsec:fpt-vc-ksum}, we use the partial linear kernel for \probEdgelessnessMinDown in order to design an FPT algorithm for the problem parameterized by $k$. For the sake of brevity, we use the shorthand $\Pi_1$ for \probEdgelessnessDown, and $\Pi_2$ for \probEdgelessnessMinDown. 

\subsection{Partial Kernels for (\textsc{Min}) \probEdgelessnessDown} \label{subsec:vc-partialkernel}

Observe that if $\cI$ is a yes-instance of $\Pi_1$ (resp. $\Pi_2$), then the set of $k$ shrunken disks must be a vertex cover of $H \coloneqq G(P, \mathbf{1})$. Thus, we compute a $2$-approximation to vertex cover of $H$, and if the size of this vertex cover is larger than $2k$, we conclude that $H$ has no vertex cover of size at most $k$, which implies that $\cI$ is a no-instance. This leads to the following observation.
\begin{observation} \label{obs:min-vc}
	There exists a polynomial-time algorithm that, given an instance $\cI$ of $\Pi_1$ (resp.\ $\Pi_2$), either correctly concludes that $\cI$ is a no-instance, or outputs $U \subseteq P$ that is a vertex cover for $H$ of size at most $2k$.
\end{observation}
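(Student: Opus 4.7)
The plan is to observe that any feasible solution for either problem must shrink a set $S$ of disks that forms a vertex cover of $H = G(P, \mathbf{1})$, and then invoke the classical $2$-approximation for minimum vertex cover.

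First I would argue the structural claim: if $(S, r)$ is a feasible solution for $\Pi_1$ or $\Pi_2$, then $S$ is a vertex cover of $H$. Indeed, suppose for contradiction that some edge $\{p, q\}$ of $H$ has both endpoints outside $S$. By definition of $H$, the unit disks $D(p,1)$ and $D(q,1)$ intersect. Since $p, q \notin S$, we have $r(p) = r(q) = 1$, so the edge $\{p,q\}$ persists in $G(P, r)$, contradicting that $G(P, r)$ is edgeless. Since $|S| \le k$ by definition of both problems, this shows that a yes-instance forces $H$ to admit a vertex cover of size at most $k$.

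Next I would run the textbook $2$-approximation for \textsc{Vertex Cover}: compute any maximal matching $M$ in $H$ greedily in polynomial time, and let $U$ be the set of endpoints of edges in $M$. Then $U$ is a vertex cover of $H$, and $|U| = 2|M| \le 2 \cdot \tau(H)$, where $\tau(H)$ denotes the minimum vertex cover number, because any vertex cover must include at least one endpoint of each matched edge. If $|U| > 2k$, then $\tau(H) > k$, which by the structural claim means $\cI$ cannot be a yes-instance, so we may safely report no. Otherwise we return $U$, which by construction is a vertex cover of $H$ of size at most $2k$.

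There is essentially no obstacle here, since both the structural claim and the $2$-approximation are standard; the only point that needs care is the direction of the implication (the vertex cover lower bound holds regardless of the value of $\alpha$ and regardless of whether we are working with $\Pi_1$ or $\Pi_2$, because the constraint ``$r(p) = 1$ for $p \notin S$'' and the edgelessness of $G(P, r)$ appear identically in both problem definitions). The whole routine runs in polynomial time, matching the statement of the observation.
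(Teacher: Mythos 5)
Your proposal is correct and follows exactly the paper's argument: the shrunken set of any feasible solution must be a vertex cover of $H$ (since two unshrunk endpoints of an $H$-edge both keep radius $1$ and the edge survives in $G(P,r)$), so one runs the standard $2$-approximation for vertex cover and rejects if its output exceeds $2k$. The only difference is that you spell out the maximal-matching implementation of the $2$-approximation, which the paper leaves implicit.
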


Now, we prove the following theorem.

\begin{theorem} \label{thm:vc-partialpolykernels}
	There exists a polynomial-time algorithm that, given an instance $\cI = (P, k, \alpha)$ of $\Pi_1$ (resp.\ $\cI = (P, k, \alpha, \mu)$ of $\Pi_2$), outputs an equivalent instance $\cI' = (T, k,  \alpha)$ of $\Pi_1$ (resp.\ $\cI' = (T, k, \alpha, \mu)$ of $\Pi_2$) such that $P' \subseteq P$, and $|P'| = \Oh(k)$.
\end{theorem}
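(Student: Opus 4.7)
\medskip

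The plan is to combine the $2$-approximate vertex cover from Observation~\ref{obs:min-vc} with a standard packing argument for unit disk graphs. The first step is to preprocess: if $v \in P$ is an isolated vertex of $G(P,\unitvec)$, remove it. This is safe because $v$ participates in no edge, and therefore shrinking the disk at $v$ neither removes any edge (for $\Pi_1$) nor has any incentive in a minimum-cost solution (for $\Pi_2$, it would only add cost without helping); formally, any solution $(S,r)$ for $\cI$ induces a solution of no greater cost for the instance with $v$ removed, and conversely any solution for the reduced instance extends to $\cI$ by setting $r(v)=1$. After this reduction, every vertex of $P$ has at least one neighbor in $G(P,\unitvec)$.

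Next, I apply Observation~\ref{obs:min-vc} to the reduced instance. If the algorithm concludes no, we are done. Otherwise we obtain a vertex cover $U \subseteq P$ with $|U| \le 2k$. Let $I = P \setminus U$; since $U$ is a vertex cover, $I$ is an independent set of $G(P,\unitvec)$. After the preprocessing step, every $v \in I$ has a neighbor in $G(P,\unitvec)$, and this neighbor must lie in $U$ (otherwise it would be in $I$, contradicting that $I$ is independent). Hence each vertex of $I$ is adjacent in $G(P,\unitvec)$ to at least one vertex of $U$.

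The key step is a packing argument to bound $|I \cap N_{G(P,\unitvec)}(u)|$ for each $u \in U$ by an absolute constant $c$. Indeed, every point $q \in I \cap N_{G(P,\unitvec)}(u)$ lies within Euclidean distance strictly less than $2$ from $u$, so the open unit disk centered at $q$ is contained in the open disk of radius $3$ centered at $u$. Moreover, since $I$ is an independent set of $G(P,\unitvec)$, the open unit disks centered at the points of $I \cap N_{G(P,\unitvec)}(u)$ are pairwise disjoint. A standard area comparison then yields
\[
|I \cap N_{G(P,\unitvec)}(u)| \cdot \pi \le \pi \cdot 3^2,
\]
so $c \le 9$ suffices. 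Summing over $U$ gives $|I| \le c \cdot |U| \le 2ck$, and therefore $|P| \le (2c+2)k = \Oh(k)$ after preprocessing.

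Outputting $P' = P$ (i.e., the point set after removing isolated vertices) together with the unchanged parameters $k$, $\alpha$ (and $\mu$ for $\Pi_2$) yields the desired partial kernel; equivalence of $\cI$ and $\cI'$ follows from the safeness of the isolated-vertex removal rule discussed above, and the whole procedure clearly runs in polynomial time. The only genuine obstacle is making the packing constant explicit and ensuring the removal rule is cost-preserving for $\Pi_2$; both are routine but must be argued carefully so that the same reduction serves both the cardinality and cost-minimization versions without altering $\alpha$, $k$, or $\mu$.
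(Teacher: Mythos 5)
Your proposal is correct and follows essentially the same route as the paper: a $2$-approximate vertex cover $U$, the observation that discarding isolated points (equivalently, keeping only the closed neighborhood of $U$) is safe for both $\Pi_1$ and $\Pi_2$, and a constant packing bound on the number of independent-set points adjacent to each vertex of $U$. The only differences are cosmetic — the paper phrases the kept set as $N[U]$ and obtains the constant $6$ rather than your area-based $9$, which does not affect the $\Oh(k)$ bound.
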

\begin{proof}
	Since the arguments for $\Pi_1$ and for $\Pi_2$ are almost identical, we focus on $\Pi_1$, and only highlight the differences for $\Pi_2$.
	Given an instance $\cI = (P, \alpha, k)$ of $\Pi_1$, the kernelization algorithm works as follows. First we use \Cref{obs:min-vc} to obtain a subset $U \subseteq P$ that forms a vertex cover of size at most $2k$ for the intersection graph $H$, or conclude that $\cI$ is a no-instance. Suppose that $|U| \le 2k$. Let $T \subseteq P$ denote the union of closed neighborhoods of the vertices in $U$; in particular $U \subseteq T$. Let $F \coloneqq P \setminus T$, and $I \coloneqq P \setminus U$. 
	We summarize the simple consequence of the definitions in the following observation.
	\begin{observation} \label{obs:vc-properties}\ 
		\begin{enumerate}
			\item $T = \{p \in P : |pq| \le 2 \text{ for some } q \in U \}$
			\item For any two distinct points $p, q \in I$, $|pq| > 2$.
			\item For any $p \in F$ and $q \in U$, $|pq| > 2$.
		\end{enumerate}
	\end{observation}
	Let $\cI' = (T, k, \alpha)$ be an instance of $\Pi_1$. We first prove the equivalence. Observe that $F$ is the set of isolated vertices in $H$, and without loss of generality, we can assume that no solution of $\cI$ shrinks a disk corresponding to a point in $F$. This implies that, if $\cI$ is a yes-instance then any solution to $\cI$ translates to a solution to $\cI'$ of the same size (and cost for $\Pi_2$). 
	
	In the other direction, suppose that $\cI'$ is a yes-instance with a solution $(S',r')$. We claim that $(S', r)$ is also a solution for $\cI$, where $r$ is an extension of $r$ to $P$ by setting $r(p) = 1$ for all points $p \in P \setminus T$. Note that for any points $p, q \in T$, the disks $D(p, r(p))$ and $D(q, r(q))$ do not intersect each other. Using \Cref{obs:vc-properties}, it follows that for any distinct $p, q \in I$, or $p \in F$ and $q \in U$, $|pq| > 2$, so the disks $D(p, r(p))$ and $D(q, r(q))$ cannot intersect each other. Therefore, $\sigma$ is a solution for the original instance $\cI$. This shows that the instances $\cI$ and $\cI'$ of $\Pi_1$ are equivalent.  Further, observe that the costs of the two solutions are equal. 
	
	Now, notice that for every $p \in U$, let $Q_p \subseteq I$ be the set of points $q$ such that $|pq| < 2$. Note that $pq \in E(H)$. However, note that for any distinct $q_1, q_2 \in Q_p$, $|q_1q_2| > 2$ due to \Cref{obs:vc-properties}. It follows that $|Q_p| \le 6$ -- this follows since $H$ is a unit disk graph. Now, since $|U| \le 2k$, it follows that $|T| \le |U| + 6|U| = \Oh(k)$. 
	
\end{proof}

\subsection{A Polynomial Kernel for \probEdgelessnessDown} \label{subsec:polykernel-vc}

Proceeding as in the proof of \Cref{thm:vc-partialpolykernels}, we obtain the partial linear kernel, which is given by the set of points $T$, which is the union of closed neighborhoods of a $2$-approximate vertex cover $U$ for the original intersection graph $H = G(P, \unitvec)$. As shown in the proof of \Cref{thm:vc-partialpolykernels}, $(P, \alpha, k)$ and $(T, \alpha, k)$ are equivalent. In order to obtain a truly polynomial kernel, we give a compression to an intermediate annotated graph problem, which we call $\Pi'_1$.

Before discussing the compression, we need the following definitions.
\begin{definition} \label{defn:removal}
	We say that $(S, r)$ is a \emph{solution} for an instance of (\textsc{Min}) \probEdgelessnessDown, if $S \subseteq P$ is a set of size at most $k$, and $r(p) = \alpha$ (resp. $\alpha \le r(p) < 1$) for all $p \in S$ and $r(p) = 1$ for all $p \not\in S$, and if the resulting intersection graph $G(P, r)$ is edgeless. We say that a solution $\sigma = (S, r)$ \emph{removes} an edge $e = uv$ of $H$ if $e$ does not exist in the \emph{resulting} intersection graph $G(P, r)$. 
\end{definition}
Consider an edge $uv \in E(H)$, then $0 < |uv| \le 2$. To encode whether a solution to $\Pi_1$ can remove an edge $uv$, and if so, how many vertices need to be shrunk; we define an auxiliary function $\nu: [0, 2] \to \{1, 2, \bot\}$ for designing the compression for $\Pi_1$.

\begin{definition} \label{defn:nu}
	The function $\nu: (0, 2] \to \{1, 2, \bot\}$ is defined as follows.
	$$\nu(d) = \begin{cases}
		\bot &\text{ if } 0 < d < 2\alpha \text{\qquad\ \ \small $\backslash\backslash$ \emph{no solution can remove the edge}}
		\\2 &\text{ if } 2\alpha \le d < 1 + \alpha \text{\quad \small $\backslash \backslash$ \emph{edge can only be removed by shrinking both endpoints to $\alpha$}}
		\\1 &\text{ if } 1+\alpha \le d \le 2  \text{\qquad \small $\backslash \backslash$ \emph{edge can be removed by shrinking one of the endpoints to $\alpha$}}
		
	\end{cases}$$
\end{definition}

\begin{observation} \label{obs:nu-properties}
	Let $p$ and $q$ be two points with $0 < d = |pq| \le 2$. Then, the following are true.
	\begin{itemize}
		\item $\nu(d) = 1$ iff $D(p, 1) \cap D(q, \alpha) = \emptyset$.
		\item $\nu(d) = 2$ iff $D(p, \alpha) \cap D(q, \alpha) = \emptyset$, but $D(p, 1) \cap D(q, \alpha) \neq \emptyset$.
		\item $\nu(d) = \bot$ iff $D(p, \alpha) \cap D(q, \alpha) \neq \emptyset$.
	\end{itemize}
\end{observation}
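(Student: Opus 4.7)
The plan is to reduce all three equivalences to a single elementary geometric fact: two open disks $D(p,r_1)$ and $D(q,r_2)$ (recall the open-disk convention adopted in the preliminaries for this problem) are disjoint if and only if $|pq|\ge r_1+r_2$, and, equivalently, they have non-empty intersection if and only if $|pq|< r_1+r_2$. Once this is in hand, each bullet becomes a direct translation of a disk-intersection condition into an inequality on $d=|pq|$, which we simply match against the three cases of \Cref{defn:nu}.

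Concretely, I would argue the three bullets in sequence. For $\nu(d)=1$, the condition $D(p,1)\cap D(q,\alpha)=\emptyset$ is equivalent to $d\ge 1+\alpha$, and combined with the standing assumption $d\le 2$ this is exactly the range $1+\alpha\le d\le 2$ that defines $\nu(d)=1$. For $\nu(d)=2$, the condition $D(p,\alpha)\cap D(q,\alpha)=\emptyset$ translates to $d\ge 2\alpha$, while $D(p,1)\cap D(q,\alpha)\ne\emptyset$ translates to $d<1+\alpha$; intersecting these two yields precisely $2\alpha\le d<1+\alpha$. For $\nu(d)=\bot$, the condition $D(p,\alpha)\cap D(q,\alpha)\ne\emptyset$ is equivalent to $d<2\alpha$, which together with $d>0$ gives $0<d<2\alpha$.

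Each ``iff'' is then automatic because the three ranges $(0,2\alpha)$, $[2\alpha,1+\alpha)$, $[1+\alpha,2]$ partition $(0,2]$ (using $2\alpha\le 1+\alpha$, which holds since $\alpha\le 1$), and the three disk configurations described are mutually exclusive and exhaustive for distances in this interval: every $d$ lies in exactly one range and produces exactly one of the three intersection patterns, so the one-directional implications established above automatically upgrade to biconditionals. There is no substantive obstacle; the one thing to be careful about is consistently honoring the open-disk convention so that the boundary distances $d=2\alpha$ and $d=1+\alpha$ end up on the correct side of each weak inequality, matching the half-open ranges used in \Cref{defn:nu}.
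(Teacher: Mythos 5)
Your proposal is correct and is exactly the reasoning the paper intends: the paper states \Cref{obs:nu-properties} without proof (only the illustration in \Cref{fig:nufunction}), since under the open-disk convention the three bullets are precisely the translation of the disjointness criterion $|pq|\ge r_1+r_2$ into the three ranges of \Cref{defn:nu}. Your explicit handling of the boundary cases $d=2\alpha$ and $d=1+\alpha$ and the partition argument for upgrading to biconditionals fills in the routine details the paper leaves implicit.
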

See \Cref{fig:nufunction} for an illustration.

\begin{figure}
	\centering
	\includegraphics[scale=1,page=7]{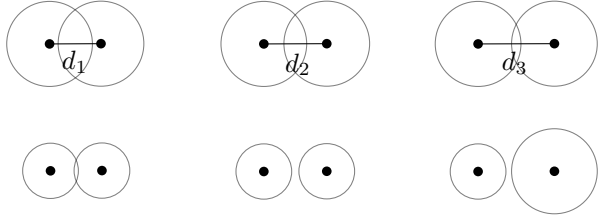}
	\caption{Top row: Pairs of disks with varying distances between the centers, with $\nu(d_1) = \bot, \nu(d_2) = 2$ and $\nu(d_3) = 1$. Here, $\alpha = 0.65$. 
		\\Bottom row: The corresponding edge can be removed by shrinking no, $2$ and $1$ disk, respectively.} \label{fig:nufunction}
\end{figure}

Let $H' = H[T] = G(T, \unitvec)$. For each edge $e = uv \in E(H')$, we compute the value $\nu(|uv|)$, and annotate the edge with $\nu(|uv|)$, which we denote by $\nu(e)$. Let $H_A$ denote the resulting annotated graph---note that $H_A$ has same set of vertices and edges as $H'$, except that each edge $e \in E(H_A)$ is annotated with $\nu(e) \in \LR{\bot, 1, 2}$. 

If for any edge $e \in E(H_A)$, $\nu(e) = \bot$, then no solution can remove the edge $e$. Thus, we stop and output a trivial no-instance of $\Pi_1$. Thus, we may assume that each $e \in E(H_A)$ is annotated with $\nu(e) \in \{1, 2\}$. 

Now, the intermediate problem $\Pi'_1$ is defined as follows. The input to $\Pi'_1$ is the instance $(H, k)$, where $H$ is the annotated graph as defined above. The task is to determine whether there exists a solution of the form $(S, {\sf cover})$, where 
\begin{itemize}
	\item $S \subseteq V(H_A)$ is a subset of vertices that is shrunk,
	\item ${\sf cover}: E(H_A) \to V(H_A) \cup (V(H_A) \times V(H_A))$ is a function that encodes for each edge $e \in E(H_A)$, whether $e$ is removed by shrinking one or both of its endpoints.
\end{itemize}
The solution must satisfy the following properties:
\begin{itemize}
	\item $|S| \le k$
	\item For each edge $e = uv \in E(H_A)$, ${\sf cover}(e) \subseteq S$, and,
	\begin{itemize}
		\item If $\nu(e) = 2$, then ${\sf cover}(e) = \{u, v\}$. That is, if the edge can only be removed by shrinking both of its endpoints, both of its endpoints must be shrunk,
		\item If $\nu(e) = 1$, then $\emptyset \neq {\sf cover}(e) \subseteq \{u, v\}$. That is, if the edge can be removed by shrinking one of its endpoints, then at least one of its endpoints must be shrunk.
	\end{itemize}
\end{itemize}

Observe that the problem $\Pi'_1$ is trivially in \classNP. Furthermore, the size of the instance $(H_A, k)$ is bounded by $\Oh(k^2)$, since $|E(H_A)| = \Oh(k^2)$, and each edge is annotated with one bit of information. Finally, since $\Pi_1$ is \classNPC\ via \Cref{thm:np-hardness-vc}, there exists a polynomial-time reduction from $(H_A, k)$ to an equivalent instance $(P', \alpha', k')$ of $\Pi_1$ -- this follows from the fact that there exists a polynomial-time reduction from any problem in \classNP\ to a problem in \classNPC; furthermore, the size of the resulting instance is polynomial in the size of the original instance. Thus, size of the resulting instance of $\Pi_1$ is bounded by a polynomial in the original parameter $k$. 

\vcpolykernel*

\subsection{An FPT Algorithm for \probEdgelessnessMinDown} \label{subsec:fpt-vc-ksum}
Given an instance $\cI = (P,\alpha, k, \mu)$ of $\Pi_2$, and use \Cref{thm:vc-partialpolykernels} to obtain an equivalent instance $\cI' = (T, \alpha, k, \mu)$ such that $|T| = \Oh(k)$. Note that the costs of the optimal solutions for $\cI$ and $\cI'$ -- if they are yes-instances -- are equal. Thus, here we focus on finding an optimal solution to the instance $\cI'$, if one exists, or conclude that $\cI'$ is a no-instance. The decision version of $\Pi_2$ can then be solved by checking whether the cost of the optimal solution found is at most $\mu$. 

We iterate over all subsets $S' \subseteq T$ of size at most $k$ -- note that there are at most $|T|^{\Oh(k)} = k^{\Oh(k)}$ such subsets. Consider an iteration corresponding to a fixed subset $S'$, which we call as \emph{shrinkable}. The remaining points $L = T \setminus S'$ are marked as \emph{unshrinkable}. 

Now, we define a linear program (LP) corresponding to each guess $(S', L)$. In this LP, we have a variable $r(p)$ for each $p \in T$. For each $p \in L$, we set $r(p) = 1$. On the other hand, for each $p \in S'$, we add a linear constraint $\alpha \le r(p) \le 1$. Finally, for each pair of points $p, q \in T$, with $|pq| \le 2$, we add a constraint $r(p) + r(q) \le |pq|$. The linear objective to minimize is $\sum_{p \in T} (1-r(p))$. It is easy to see the equivalence between a feasible solution that shrinks disks from $S'$ and a feasible solution to the LP. 

Note that the number of variables is $|T| = \Oh(k)$ and the number of constraints is $\Oh(k^2)$. Thus, using the algorithm of \Cref{prop:lpsolver} runs in time $k^{\Oh(k)} \cdot \Oh(k^2) = k^{\Oh(k)}$ time, we can find an optimal solution minimizing $\sum_{p \in S'} r(p)$, if the LP is feasible; or conclude that the guess $S'$ is incorrect.  We consider the minimum-cost solution found over all iterations, if one exists. Otherwise, we correctly conclude that $\cI'$, and thus $\cI$, does not admit any solution that shrinks at most $k$ disks, and of cost at most $\mu$.

\vcsumk*

\section{Compressibility of ({\sc Min}) \probAcyclicityDown and Consequences} \label{sec:fvs-probs}

In this section, we give polynomial kernel (resp.\ polynomial partial kernel) for \probAcyclicityDown (resp.\ \probAcyclicityMinDown). For \probAcyclicityMinDown, we then use the (partial) kernel to obtain an FPT algorithm. Since the initial arguments towards obtaining (partial) kernel are common for both the problems, we keep the discussion common for the initial part, and diverge subsequently. For the sake of brevity, we refer to \probAcyclicityDown as $\Pi_1$, and  to \probAcyclicityMinDown as $\Pi_2$.

\subsection*{Obtaining Annotated Graph via Reduction Rules} \label{subsec:annotated}

As in the previous section, let $H \coloneqq G(P, \unitvec)$ denote the original (open) unit disk graph. We have the following reduction rule. Further, we adapt \Cref{defn:removal} of a \emph{solution} appropriately for (\textsc{Min}) \probAcyclicityDown.

\begin{redrule} \label{redrule:fvs0}
	Suppose that in the given instance $\cI = (P, \alpha, k)$ of $\Pi_1$ (resp.\ $\Pi_2$), we have that $k \le 0$, and the graph $H$ is not a forest, then conclude that $\cI$ is a no-instance of $\Pi_1$ (resp.\ $\Pi_2$).
\end{redrule}

\begin{lemma} \label{lem:fvs-degbound}
	If $\cI$ is a yes-instance of $\Pi_1$ (resp.\ $\Pi_1$), then the degree of any vertex $p$ in the \emph{original} intersection graph $G$ is upper bounded by $25k + 50 = \Oh(k)$.
\end{lemma}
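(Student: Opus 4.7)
The plan is to prove the contrapositive: if some vertex $p \in P$ has degree at least $25k + 51$ in the original intersection graph $H = G(P, \unitvec)$, then $\cI$ must be a no-instance of $\Pi_1$ (and of $\Pi_2$). Combined with the assumption that $\cI$ is a yes-instance, this yields the claimed bound.

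First I would localize the neighborhood. Every $q \in N_H(p)$ satisfies $|pq| < 2$, so $N_H(p) \subseteq D_o(p, 2)$, which is contained in the axis-aligned square $Q = [p_x - 2, p_x + 2] \times [p_y - 2, p_y + 2]$ of side length $4$. I then partition $Q$ into a $5 \times 5$ grid of $25$ axis-aligned cells, each of side length $4/5$ and hence of diameter $4\sqrt{2}/5 < 2$. Any two points lying in a common cell are at Euclidean distance strictly less than $2$, so (under the open-disk convention used for these problems) the corresponding unit disks intersect, contributing an edge of $H$. Consequently, the neighbors of $p$ that fall inside any single cell form a clique in $H$.

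The main step is then a pigeonhole argument combined with a ``clique blocks acyclicity'' observation. If $p$ has degree $d \ge 25k + 51$, some cell of the partition contains at least $\lceil d/25 \rceil \ge k + 3$ neighbors of $p$, yielding a clique $K \subseteq N_H(p)$ of size at least $k + 3$ in $H$. For any candidate solution $(S, r)$ with $|S| \le k$, at least $|K| - k \ge 3$ vertices of $K$ retain radius $1$; since the pairwise distances between their centers are unchanged and strictly less than $2$, any two of these unshrunken unit disks continue to intersect in $G(P, r)$. These three disks therefore induce a triangle, so $G(P, r)$ is not acyclic, contradicting the assumption that $\cI$ is a yes-instance of $\Pi_1$. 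The argument transfers verbatim to $\Pi_2$, because its solutions also shrink at most $k$ disks and leave every remaining disk at radius $1$.

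The only subtlety worth flagging is the role of the open-disk convention: since adjacency in $H$ requires the \emph{strict} inequality $|q_1 q_2| < 2$, the partition must have cell diameter strictly below $2$, which is precisely what the $5 \times 5$ grid with cell side $4/5$ delivers. I do not anticipate any real obstacle; the argument is a standard packing-plus-pigeonhole proof. Note that the constant $25k + 50$ is not tight (a sector-based analysis would replace $25$ by $6$), but any linear bound suffices for the kernelization steps that follow.
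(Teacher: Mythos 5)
Your proof is correct and follows essentially the same route as the paper's: a grid decomposition into $25$ cells in which co-cellular points form cliques in $H$, combined with the observation that a clique of size at least $k+3$ cannot be made acyclic by any solution shrinking at most $k$ disks. The only cosmetic difference is that the paper overlays a global unit grid, bounds each cell to $k+2$ points, and multiplies by the $25$ cells a neighbor of $p$ can occupy, whereas you localize to the $4\times 4$ square around $p$, use cells of side $4/5$, and pigeonhole to extract a single clique of size $k+3$.
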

\begin{proof}
	We overlay a unit grid (i.e., side-length $1$) on the real plane. Note that the distance between any two points of $P$ that lie in the same grid cell is at most $\sqrt{2} \le 2$. Thus, they induce a clique in $H$. Note that from any clique $K$ in $H$ of size $s \ge 3$, the disks corresponding to at least $s-2$ points in $K$ must be shrunk in the solution -- otherwise, there will be a cycle in the resulting intersection graph. Since $\cI$ is a yes-instance, we can assume that each cell contains at most $k+2$ points.
	
	Consider any point $p$ lying in a cell $C$ with center $(x, y)$. Then, for any edge $pq \in E(H)$, $q$ belongs to a cell $C'$ with center $(x', y')$ such that $|x - x'| + |y - y'| \le 2$. That is, $p$ can have neighbors in one of at most $25$ cells. From the previous paragraph, each cell contains at most $k+2$ points. This concludes the proof of the lemma.
\end{proof}
From \Cref{lem:fvs-degbound}, it is easy to see the correctness of the following two reduction rules.

\begin{redrule} \label{redrule:fvs1}
	If the degree of any vertex in $H$ is more than $25k+50$, then conclude that $\cI$ is a no-instance of $\Pi_1$ (resp.\ $\Pi_2$).
\end{redrule}

Thus, in the following, we assume that the degree of all the vertices in $H$ is bounded by $\Delta = \Oh(k)$. Next, we apply the following three reduction rules, whose correctness is immediate.

\begin{redrule} \label{redrule:fvs2}
	If any vertex corresponding to $p \in P$ has degree at most $1$ in $H$, then obtain an equivalent instance of $\Pi_1$ (resp.\ $\Pi_2$) by deleting $p$ from $P$.
\end{redrule} 

\begin{redrule} \label{redrule:fvs3}
	If any connected component $F$ of $H$ is acyclic (i.e., induces a tree), then obtain an equivalent instance of $\Pi_1$ (resp.\ $\Pi_2$) by deleting all such points of $F$ from $P$.
\end{redrule}

After \Cref{redrule:fvs1}, \Cref{redrule:fvs2}, and \Cref{redrule:fvs3}, we ensure that (i) the minimum degree in $H$ is at least $2$, (ii) the maximum degree is $\Delta = \Oh(k)$, and (iii) each connected component of $H$ contains a cycle. Assuming $\cI$ is a yes-instance, if we bound the number of degree-$2$ vertices in $H$, we immediately bound the total number of vertices in $H$, using standard arguments for feedback vertex set kernels. However, since shrinking a disk is not the same as removing the corresponding vertex from the graph, we have to use different arguments for bounding the number of vertices. This is also the reason we aim for a polynomial compression (for the case of $\Pi_1$, we then infer a polynomial kernel).

Before discussing the compression, we recall \Cref{defn:removal} of \emph{removal of an edge}, \Cref{defn:nu} of the \emph{``$\nu$ function''}, and \Cref{obs:nu-properties} for the properties of $\nu$ function. Additionally, the following observation is relevant for $\Pi_2$.

\begin{observation} \label{obs:dist-properties}
	Let $p$ and $q$ be two points with $0 < d = |pq| \le 2$, then $D(p, x_p) \cap D(q, x_q) = \emptyset$ iff $x_p + x_q \le d$. In particular, consider an edge $e = pq \in E(H)$, 
	\begin{itemize}
		\item If $0 < |pq| < 2\alpha $, then no solution that is allowed to shrink radii within range $[\alpha, 1]$ can remove the edge. We say that such an edge is \emph{irremovable}.
		\item If $2\alpha \le |pq| < 1+\alpha$, then $D(p, x_p) \cap D(q, x_q) = \emptyset$ iff $2\alpha \le x_p + x_q \le |pq| \le 1 + \alpha$. In particular, any solution must shrink at least one of the disks centered at $p$ and $q$ to $\alpha$.
		\item If $1+\alpha \le |pq| \le 2$, then $D(p, x_p) \cap D(q, x_q) = \emptyset$ iff $2\alpha \le x_p + x_q \le |pq| \le 2$. That is, at least one of the disks must be shrunk.
	\end{itemize}
\end{observation}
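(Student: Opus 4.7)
The plan is to deduce the statement from a single elementary disjointness criterion for open disks and then specialise it to the three ranges of $d$.

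First, I would establish the main equivalence: for open disks $D(p,x_p)$ and $D(q,x_q)$, one has $D(p,x_p)\cap D(q,x_q)=\emptyset$ if and only if $x_p+x_q\le d$. If $x_p+x_q>d$, pick the point $z$ on segment $pq$ with $|pz|=x_p$; then $|qz|=d-x_p<x_q$, so $z$ belongs to both open disks. Conversely, if $x_p+x_q\le d$, then for every $z$ the triangle inequality gives $|pz|+|qz|\ge d\ge x_p+x_q$, which is incompatible with the strict inequalities $|pz|<x_p$ and $|qz|<x_q$ both holding. Here it is crucial that the disks are open: this is what allows the boundary case $x_p+x_q=d$ to correspond to disjointness (the two disks touch only at a single boundary point, which belongs to neither), consistent with the \emph{Remark} on conventions stated in \Cref{sec:prelim}.

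Next, for a pair of points $p,q$ with $|pq|=d\in(0,2]$ corresponding to an edge of $H$, I would substitute the admissible radii $x_p,x_q\in[\alpha,1]$ into the criterion above. In the first regime $d<2\alpha$, we always have $x_p+x_q\ge 2\alpha>d$, so the criterion can never be satisfied and the edge cannot be removed by any solution; thus it is irremovable. In the second regime $2\alpha\le d<1+\alpha$, the criterion $x_p+x_q\le d$ combined with $x_p,x_q\ge\alpha$ immediately yields the chain $2\alpha\le x_p+x_q\le d\le 1+\alpha$; furthermore, if one of the two disks, say that at $q$, remained unshrunk so that $x_q=1$, the constraint would force $x_p\le d-1<\alpha$, contradicting $x_p\ge\alpha$. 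Hence both disks must be shrunk into $[\alpha,1)$, which in particular gives the item's weaker assertion. The third regime $1+\alpha\le d\le 2$ is handled identically, except that the upper bound $d\le 1+\alpha$ is replaced by $d\le 2$; moreover, since now $d-1\ge\alpha$, it becomes feasible to remove the edge by shrinking only a single disk, matching the phrasing in the third item.

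Since every step above is routine Euclidean geometry together with an interval arithmetic check on $x_p+x_q$, there is no real obstacle; the only point I would be careful about is to consistently invoke the open disk convention so as to avoid sign-of-inequality slips between $<$ and $\le$, and to keep the distinction between \emph{shrinkable} (radius in $[\alpha,1]$) and \emph{unshrunk} (radius exactly $1$) clean when ruling out partial shrinkings in the second regime.
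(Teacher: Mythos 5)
The paper states this observation without proof, treating it as immediate, and your overall strategy---prove the single criterion ``open disks disjoint iff $x_p+x_q\le d$'' and then do interval arithmetic on the three ranges of $d$---is exactly the intended reasoning. However, the one step you do spell out is flawed as written, and ironically it is the open-disk point you flag as the thing to be careful about: the witness $z$ on segment $pq$ with $|pz|=x_p$ satisfies $|pz|=x_p$, so it lies on the \emph{boundary} of $D(p,x_p)$ and hence does not belong to that open disk; moreover, if $x_p>d$ (which can happen, e.g.\ in the first regime where $d<2\alpha\le 2x_p$ is possible with $d<\alpha$), such a $z$ is not on the segment at all and the identity $|qz|=d-x_p$ is wrong. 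The repair is a one-liner: when $x_p+x_q>d$ and $x_p,x_q>0$, take $z$ on segment $pq$ at distance $\frac{d\,x_p}{x_p+x_q}$ from $p$; then $|pz|<x_p$ and $|qz|=\frac{d\,x_q}{x_p+x_q}<x_q$, so $z$ is strictly inside both disks in all cases. (Your converse direction via the triangle inequality is fine.)

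One further remark on the second regime: what you actually prove is that \emph{both} radii must drop strictly below $1$ (since $x_q=1$ would force $x_p\le d-1<\alpha$), and you then call the bullet's claim ``at least one disk is shrunk to $\alpha$'' a weaker assertion. Strictly speaking the two are incomparable: your both-shrunk conclusion does not imply that some radius equals exactly $\alpha$ (take $d$ slightly above $2\alpha$ and $x_p=x_q$ slightly above $\alpha$), and indeed the literal ``to $\alpha$'' reading of the bullet is not forced; the both-must-shrink statement is the correct and useful content here, so your argument is the right one---just do not present it as implying the exact-$\alpha$ phrasing.
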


For any connected component $C$ of $H$, contains at least one vertex of degree at least $3$, then we say that $C$ is an \emph{interesting} connected component. Otherwise, $C$ contains all degree-$2$ vertices (recall that the minimum degree is at least $2$), i.e., $C$ is a cycle. In this case, we say that $C$ is an \emph{isolated cycle}. We handle the two types of connected components separately.

\paragraph{Handling isolated cycles.} Consider an isolated cycle $C$ in $H$, then in any solution to $\Pi_1$ (resp.\ $\Pi_2$) must shrink at least one disk corresponding to vertices on $C$ such that at least one edge of $C$ is removed. We have the following claim.

\begin{claim}\label{cl:fvs-cycles}
	Suppose $\cI$ be a yes-instance of $\Pi_1$ (resp.\ $\Pi_2$), and let $\sigma = (S, r)$ be a solution for $\Pi_1$ (resp.\ $\Pi_2$). Let $C$ be an isolated cycle, and suppose $\sigma$ shrinks disks corresponding to at least $3$ vertices of $V(C)$. Then, there exists another solution $\sigma' = (S', r')$ for $\Pi_1$ (resp.\ $\Pi_2$) that shrinks disks corresponding to at most $2$ vertices $V(C)$. For $\Pi_2$, this claim holds with an additional property that the cost of $\sigma'$ is at most that of $\sigma$.
\end{claim}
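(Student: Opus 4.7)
The plan is to construct $\sigma' = (S', r')$ from $\sigma$ by ``undoing'' the shrinking of all but at most two disks in $V(C)$, retaining only the shrinkings needed to eliminate one chosen edge of $C$. Recall that, by hypothesis, $C$ is a connected component of $H = G(P, \unitvec)$ that forms a cycle. Since $\sigma$ is a valid solution, the induced subgraph of $G(P, r)$ on $V(C)$ is acyclic, so at least one edge $e = uv \in E(C)$ must be absent from $G(P, r)$; fix any such edge.

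Define $r'$ to agree with $r$ on $(P \setminus V(C)) \cup \{u, v\}$, and set $r'(p) = 1$ for all remaining $p \in V(C) \setminus \{u, v\}$. Let $S' = \{p \in P : r'(p) < 1\}$. Since $r'(p) \ge r(p)$ pointwise, we have $S' \subseteq S$, which immediately gives $|S'| \le |S| \le k$; and, for $\Pi_2$, monotonicity yields $\cost(\sigma') = \sum_{p \in P}(1 - r'(p)) \le \sum_{p \in P}(1 - r(p)) = \cost(\sigma)$. The construction also ensures $S' \cap V(C) \subseteq \{u, v\}$, which is precisely the required cardinality bound on $V(C)$.

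The key remaining step is to verify that $G(P, r')$ is acyclic. This splits into three local observations: (i) since $C$ is a connected component of $H$ and $G(P, r')$ is a subgraph of $H$, there are no edges of $G(P, r')$ between $V(C)$ and $P \setminus V(C)$; (ii) on $P \setminus V(C)$ the radii $r'$ and $r$ coincide, so the induced subgraph of $G(P, r')$ there equals that of $G(P, r)$, which is acyclic; (iii) on $V(C)$, the edge $uv$ remains absent from $G(P, r')$ because $r'(u) = r(u)$ and $r'(v) = r(v)$, so the induced subgraph sits inside $C - e$, which is a path. Combining (i)--(iii), $G(P, r')$ is a disjoint union of acyclic subgraphs, hence itself acyclic.

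I do not anticipate a genuine obstacle: the proof is essentially a locality argument exploiting that $C$ is an isolated component of $H$. Raising the radii inside $V(C) \setminus \{u, v\}$ back to $1$ cannot create any new edges leaving $V(C)$ (there are none in $H$ to resurrect), nor resurrect the distinguished edge $uv$ (whose endpoint radii are preserved), so the reset is entirely confined to ``simplifying'' the induced subgraph on $V(C)$ from a subgraph of $C$ to a subgraph of the path $C - e$.
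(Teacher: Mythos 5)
Your proof is correct and follows essentially the same route as the paper: fix an edge of $C$ absent from $G(P,r)$, restore radius $1$ to all of $V(C)$ except (at most) the two endpoints of that edge, and observe that size/cost do not increase while acyclicity is preserved. In fact your verification of acyclicity (using that $C$ is an isolated component of $H$, so un-shrinking within $V(C)$ cannot create edges leaving $V(C)$ and leaves a subgraph of the path $C-e$) spells out a step the paper leaves implicit.
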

\begin{proof}
	Note that the solution $\sigma$ must remove at least one edge of the cycle $C$, thus suppose this edge is $v_{i}v_{i+1}$ for some pair of consecutive vertices $v_i, v_{i+1}$ on the cycle. Then, at least one of the two disks corresponding to $v_i$ and $v_{i+1}$ is shrunk in the solution $\sigma$. Let $W = S \cap \{v_i, v_{i+1}\}$. We define a new solution $\sigma' = (S', r')$, such that $S' = S \setminus (V(C) \setminus W)$, and $r'(p) = 1$ for all points $p \not\in S'$, and $r'(p) = r(p)$ otherwise. Clearly, the solution $\sigma'$ shrinks only a fewer number of disks (and for $\Pi_2$ has cost no greater than that of $\sigma$). Furthermore, the resulting graph $G(P, r)$ does not contain the edge $v_i v_{i+1}$, since $S \cap \{v_i, v_{i+1}\} = S' \cap \{v_i, v_{i+1}\}$. 
\end{proof}
\Cref{cl:fvs-cycles} implies that from any isolated cycle $C$, the disks corresponding to at most two vertices need to be shrunk. Define $\dmax(C)$ as the maximum of the euclidean distances between the consecutive pairs of vertices on an isolated cycle $C$. We use the following reduction rule to get rid of all vertices on an isolated cycle $C$.

\begin{redrule} \label{redrule:fvs-4-isolated-cycles}
	Consider an isolated cycle $C$ in $H$. 
	\begin{itemize}
		\item If $\nu(\dmax(C)) = \bot$, i.e., if all the edges of $C$ are \emph{irremovable}, then conclude that $\cI$ is a no-instance of $\Pi_1$. 
		\item Otherwise, $\nu(\dmax(C)) \in \{1, 2\}$. In this case, obtain a new equivalent instance $\cI' = (P \setminus V(C), k-f(\dmax(C))$ of $\Pi_1$.
	\end{itemize}
\end{redrule}

We replace an isolated cycle $C$ with a special vertex $v(C)$, which we call a \emph{isolated cycle vertex}. We annotate it with $(\#(v(C))$ and $\cost(v(C)) \coloneqq 2-\dmax(C))$, where $\#(v(C)) = \nu(\dmax(C))$, i.e., the minimum number of disks that need to be shrunk in order to remove an edge from $C$. Note that if $\dmax(C) \le 2\alpha$, then $\#(v(C)) = \infty$, i.e., all the edges of $C$ are irremovable. In this case, we conclude that it is a no-instance of $\Pi_2$ via \Cref{obs:dist-properties}. Otherwise, any solution must contribute at least $\cost(v(C))$ in order to remove at least one edge from $C$, via \Cref{obs:dist-properties}. Thus, we have the following reduction rule.

\begin{redrule} \label{redrule:fvs-4'-isolated-cycles}
	If the number of isolated cycle vertices is greater than $k$, or if for some isolated cycle vertex $v(C)$, $\#(v(C)) = \infty$, then conclude that $\cI$ is a no-instance of $\Pi_2$.
\end{redrule}
Now we discuss how to handle interesting components, i.e., a connected component of $H$ that contains at least one vertex of degree at least $3$. First, we define an arbitrary total order $\preceq$ on the vertices of degree at least $3$ in $H$.

\paragraph{Handling long degree-$2$ paths.} Consider a sequence of vertices $\pi = (v_0, v_1, \ldots, v_\ell, v_{\ell+1})$ in $H$ such that (i) $\ell \ge 1$, (ii) for every $0 \le i \le \ell$, $v_iv_{i+1} \in E(H)$, (iii) the endpoints $v_0$ and $v_{\ell+1}$ have degree at least $3$ in $H$, with $v_0 \preceq v_{\ell+1}$, and (iv) all internal vertices $v_1, v_2, \ldots, v_\ell$ have degree exactly $2$ in $H$. Note that if $v_0$ and $v_{\ell+1}$ are distinct, then $\pi$ induces a path in $H$; otherwise if $v_0 = v_{\ell+1}$, then $\pi$ induces a cycle in $H$. In both the cases, we say that $\pi$ is a \emph{reducible path}. We denote the set of internal degree-$2$ vertices of $\pi$ by $I(\pi)$. If $|I(\pi)| \ge 3$, we say that $\pi$ is a \emph{long} reducible path; otherwise we say that $\pi$ is a \emph{short} reducible path. First, we claim that it suffices to shrink at most $2$ internal vertices on a long reducible path $\pi$. 
\begin{claim} \label{cl:fvs-longpath}
	Suppose $\cI$ be a yes-instance of $\Pi_1$ (resp.\ $\Pi_2$), and let $\sigma = (S, X)$ be a solution. Let $\pi$ be a long reducible path in $H$, and suppose $\sigma$ shrinks disks corresponding to at least $3$ vertices of $I(\pi)$. Then, there exists a solution $\sigma' = (S', r')$ that shrinks disks corresponding to at most $2$ vertices of $I(\pi)$. For $\Pi_2$, this claim holds with an additional property that the cost of $\sigma'$ is at most that of $\sigma$.
\end{claim}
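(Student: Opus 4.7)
Let $\sigma = (S,r)$ be the given solution, and define the ``naive reset'' $\sigma' = (S', r')$ by setting $r'(p) = r(p)$ and $S' \cap (P\setminus I(\pi)) = S \cap (P \setminus I(\pi))$ outside $I(\pi)$, and $r'(v_i) = 1$ (hence $v_i \notin S'$) for every internal vertex $v_i \in I(\pi)$. The plan is to show that either this $\sigma'$ is already a valid solution, or a single local correction that re-shrinks at most two internal vertices suffices. First observe that the only edges whose status can change from $G(P,r)$ to $G(P,r')$ are edges incident to $I(\pi)$. Since every internal vertex of $\pi$ has degree exactly $2$ in $H$, with both neighbors on $\pi$, these edges are precisely the edges of the path $\pi$ itself. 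In particular, no edges outside $\pi$ are created, so no new cycle disjoint from $\pi$ can appear in $G(P, r')$.

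Now suppose $G(P,r')$ still contains a cycle $C$. Since $G(P,r)$ is acyclic, $C$ must use at least one edge of $\pi$ that is present in $G(P,r')$ but absent in $G(P,r)$. Because each internal vertex $v_i$ has only the two neighbors $v_{i-1}, v_{i+1}$ in $H$, any cycle containing one internal vertex of $\pi$ must in fact traverse the entire $v_0$-to-$v_{\ell+1}$ subpath of $\pi$. Consequently, $\pi$ must contain at least one edge $e^* = v_j v_{j+1}$ that is absent in $G(P,r)$; that is, $r(v_j) + r(v_{j+1}) \le |v_j v_{j+1}|$. The correction is then to restore the $\sigma$-radii on the endpoints of $e^*$ that lie in $I(\pi)$: set $r'(v) \leftarrow r(v)$ and update $S'$ accordingly for each $v \in \{v_j, v_{j+1}\} \cap I(\pi)$, leaving all other internal vertices untouched (so at their radius~$1$). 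After this correction, we still have $r'(v_j) + r'(v_{j+1}) = r(v_j) + r(v_{j+1}) \le |v_j v_{j+1}|$, so $e^*$ is absent in $G(P, r')$, which destroys every cycle that must cross $\pi$.

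Correction affects at most two vertices of $I(\pi)$: both of $v_j, v_{j+1}$ in the generic case, and only one when $e^*$ is $v_0 v_1$ or $v_\ell v_{\ell+1}$. Thus $|S' \cap I(\pi)| \le 2$. For $\Pi_2$, the cost comparison is direct: outside $I(\pi)$ the two solutions coincide, for the at most two corrected vertices we took $r'(v) = r(v)$ so the contributions $(1-r(v))$ are unchanged, and for every other internal vertex $v \in S \cap I(\pi)$ we changed $r(v) < 1$ to $r'(v) = 1$, which only removes a nonnegative term from the cost. Hence $\cost(\sigma') \le \cost(\sigma)$, and $\sigma'$ is a solution of the required type.

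The only subtlety to verify is that this correction does not inadvertently reintroduce a cycle. Outside $\pi$ nothing changes, so the only candidate cycles are those traversing $\pi$; but every such cycle must use edge $e^*$, which we just ensured is absent. A minor bookkeeping point for $\Pi_2$ is that the ``restored'' radii $r(v_j), r(v_{j+1})$ automatically lie in $[\alpha, 1]$ since they were part of the original feasible $\sigma$. I expect no further obstacle; the main conceptual step is the observation that, to kill every cycle through a long degree-$2$ path, it is sufficient to kill a single designated edge, and copying the endpoint radii from $\sigma$ is the cheapest way to do so using the internal vertices.
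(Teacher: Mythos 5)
Your proof is correct, and it follows essentially the same route as the paper's: reset all internal radii to $1$, observe this is fine unless $\sigma$ removed some edge $e^*$ of $\pi$, and in that case keep shrunk only the endpoints of $e^*$ lying in $I(\pi)$ (so any surviving cycle would have to traverse the whole path and hence use the absent $e^*$). The paper phrases this as a WLOG (``assume $\sigma$ removes some edge of $\pi$'') and points to the analogous Claim on isolated cycles, whereas you derive the case split directly from the naive-reset attempt; both presentations are equivalent, and your bookkeeping of the cost for $\Pi_2$ matches the paper's.
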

\begin{proof}
	Consider a solution $\sigma = (S, r)$ with the property specified in the claim. Without loss of generality, we can assume that at least one edge of $\pi$ is not present in the resulting acyclic graph $G(P, r)$ -- otherwise, we can obtain a solution $(S', X')$ of smaller size (resp.\ smaller size \emph{and} cost for $\Pi_2$), where $S' = S \setminus I(\pi)$ and setting $r'(p) = 1$ for points in $S \cap I(\pi)$. Note that the resulting graph $G(P, r')$ will still be acyclic.
	
	Thus, at least one edge of $v_iv_{i+1}$ of $\pi$ does not exist in the resulting graph $G(P, r)$. In this case, the argument is almost identical to the proof of \Cref{cl:fvs-cycles}. Consider the solution $\sigma' = (S', X')$ as defined in the proof of \Cref{cl:fvs-cycles}, and let $G(P, r')$ be the new intersection graph. Now, suppose for contradiction that $G(P, r')$ contains a cycle $C$. Then, $C$ must contain a vertex from $I(\pi) \setminus W$. However, any cycle passing through an internal vertex of $\pi$ must have $\pi$ as a subgraph. However, this is a contradiction since the edge $v_iv_{i+1}$ does not exist in $G(P, r')$. 
\end{proof}


\paragraph{Replacing Reducible Paths.} Now we seek to replace the internal vertices of a reducible path $\pi$ by remembering some information about how a solution can remove an edge of $\pi$. Fix a reducible path $\pi = (v_0, v_1, v_2, \ldots, v_\ell, v_{\ell+1})$, where $v_{0}$ and $v_{\ell+1}$ have degree at least $3$ in $H$. We replace $\pi$ by a special kind of an edge $e(\pi)$, which we call a \emph{reducible edge}. Note that there may be multiple reducible (short or long) paths $\pi_1, \pi_2, \ldots, \pi_t$ with common endpoints $p$ and $q$. We replace each such $\pi_i$ with an associated reducible edge $e(\pi_i)$. Furthermore, both the endpoints of a reducible path may be the same vertex $p$ with degree at least $3$, in which case we replace each such $\pi_i$ with a reducible edge $e(\pi_i)$ which is a self-loop (note that in this case the internal degree-$2$ vertices on $\pi_i$ must be at least $2$). Thus, the resulting graph may be a multigraph with multiple parallel edges between vertices of degree at least $3$, and multiple self-loops on vertices. 

Next, we annotate each reducible edge $e(\pi)$ with a tuple that encodes information about different ways in which a solution of $\Pi_1$ or $\Pi_2$ can shrink some disks in order to remove different edges of $\pi$. First, consider the case where $\pi$ is a long reducible path. Let $\dstart \coloneqq |v_0v_1|$, $\dend \coloneqq |v_\ell v_{\ell+1}|$, and $\displaystyle \dmax \coloneqq \max_{1 \le i \le \ell-1} |v_i v_{i+1}|$. For $\Pi_2$, we simply annotate $e(\pi)$ with the tuple $(\dstart, \dend, \dmax)$. For $\Pi_1$, we annotate the edge with the tuple $(\nu(\dstart), \nu(\dend), \nu(\dmax))$. 

Note that if $\pi$ is a short reducible path, then note that $\dmax$ may not be defined as per the definition above, in which case we define it to be $\bot$. For $\Pi_2$, we annotate $e(\pi)$ with $(\dstart, \dend, \dmax)$. For $\Pi_1$, we annotate $e(\pi)$ with $(\nu(\dstart), \nu(\dend), \nu(\dmax))$, where we define $\nu(\bot) \coloneqq \bot$. For each reducible edge $e$, we use the shorthand $\nu_{\sfstart}(e), \nu_\sfend(e), $ and $\nu_{\sfmax}(e)$ to denote the annotated values.

\paragraph{Annotating Remaining Edges.} Consider an edge $e = pq \in E(H)$ in the original graph such that $p$ and $q$ are distinct vertices of degree at least $3$. We say that $e$ is an \emph{original edge}. We annotate each original edge as follows. For $\Pi_2$, we simply annotate it with $|pq|$, whereas for $\Pi_1$, we annotate the edge with $\nu(|pq|)$, which we refer to as $\#(e)$.

After replacing all reducible paths with reducible edges, and annotating all original edges, we have the following observation. 
\begin{observation} \label{obs:fvs-degbound-loops}
	After replacing all reducible paths with reducible edges, the degree of every vertex is upper bounded by $25k+50$, where we assume that a self-loop incident to a vertex $v$ contributes $2$ to the degree of $v$.
\end{observation}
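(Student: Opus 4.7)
The plan is to show that the degree of any vertex $v$ in the multigraph obtained after the reduction equals its degree in the original unit disk graph $H$, under the convention that self-loops count twice. Since the maximum degree of $H$ is at most $25k+50$ by \Cref{lem:fvs-degbound} and \Cref{redrule:fvs1}, the claim will then follow.

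First, I would note that every vertex $v$ that survives the replacement procedure has degree at least $3$ in $H$: degree-$0$ and degree-$1$ vertices are eliminated by \Cref{redrule:fvs2}, isolated cycles (which consist entirely of degree-$2$ vertices) are eliminated by \Cref{redrule:fvs-4-isolated-cycles}, and every other degree-$2$ vertex lies in the interior of a reducible path and is therefore contracted into a reducible edge. So the relevant vertices are precisely those of degree at least $3$ in $H$.

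Next, fix such a vertex $v$, and set up a bijection between its $\deg_H(v)$ incident edges in $H$ and the ``degree contributions'' at $v$ in the reduced multigraph. I would split the incident edges of $v$ in $H$ into two types: (i) edges $vu$ with $\deg_H(u)\ge 3$, which are retained as original edges in the reduced graph and each contribute exactly one unit of degree at $v$; (ii) edges $vu$ with $\deg_H(u)=2$, which lie on a unique reducible path $\pi$ starting at $v$ (obtained by following the degree-$2$ vertices until one reaches another vertex $w$ of degree at least $3$, possibly $w=v$). If $w\neq v$, then the two endpoint edges of $\pi$ incident to $v$ and $w$ are different, and $\pi$ is replaced by one reducible edge $e(\pi)$ joining $v$ and $w$; this contributes $1$ to the degree at $v$, in bijection with the one incident edge of $\pi$ at $v$. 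If $w=v$, the two endpoint edges of $\pi$ are both incident to $v$, and $\pi$ is replaced by a self-loop at $v$; under the stated convention this self-loop contributes $2$ to the degree at $v$, again in bijection with the two incident endpoint edges of $\pi$ at $v$.

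Summing over all reducible paths and original edges incident to $v$, every edge of $H$ incident to $v$ accounts for exactly one unit of degree in the reduced multigraph, so the (self-loop-doubled) degree of $v$ in the reduced graph equals $\deg_H(v)\le 25k+50$, as required. I expect no real obstacle here beyond carefully arguing that reducible paths with both endpoints at $v$ use up exactly two of $v$'s incident edges in $H$ and hence the self-loop convention gives the right count; this is precisely why the $2$-weight for self-loops is the natural choice.
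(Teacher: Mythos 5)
Your proposal is correct and follows essentially the same route as the paper: both arguments show that replacing a reducible path by a reducible edge (or a self-loop counted with weight $2$ when the endpoints coincide) preserves the degree of every vertex of degree at least $3$, and then invoke the earlier degree bound of $25k+50$ from \Cref{lem:fvs-degbound} and \Cref{redrule:fvs1}. Your explicit edge-by-edge bijection is just a slightly more detailed phrasing of the paper's per-replacement accounting.
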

\begin{proof}
	Consider the operation of replacing a reducible path $\pi$ with a reducible edge $pq$. Suppose $p$ and $q$ are distinct, then the degree-$2$ vertices on $\pi$ contribute exactly one toward the degrees of $p$ and $q$ respectively. On the other hand, if $p = q$, then there are at least two degree-$2$ vertices on $\pi$, which contribute one each to the degree of $p$. Thus, the replacement operation preserves the degree of each vertex that had degree at least $3$ in the original graph $G$ in all cases. Since each vertex has degree at most $25k+50$ from \Cref{redrule:fvs1}, the observation follows.
\end{proof}

\subsection{A Polynomial Kernel for \probAcyclicityDown.} \label{subsec:polykernel-fvs}

We have an annotated multigraph $H_A$ with multiple parallel edges between pairs of vertices, and multiple self-loops on a vertex such that the degree of each vertex $v \in V(H_A)$ is at least $3$ and at most $25k+50$. Let $E_r$ denote the set of reducible edges of $E(H_A)$, and $E_o$ denote the set of original edges of $E(H_A)$. We want to decide the following problem on $H_A$, which we call $\Pi'_1$. 
\ \\\begin{tcolorbox}[colback=gray!5!white,colframe=gray!75!black]
	\begin{description}
		\item[Input:]  An annotated multigraph $H_A$ with multiple parallel edges between pairs of vertices, a total order $\prec$ on $V(H_A)$, and a parameter $k$.
		\item[Task:] Determine whether there exists a solution $\sigma_{\Pi'_1} = (W, (F, K_o, K_r), f_o, g_r, c_r, w_r)$, where
		\begin{itemize}[leftmargin=*]
			\item $W \subseteq V(H_A)$ {\footnotesize \em $\triangleright$ denotes the set of vertices of degree at least $3$ that are shrunk}
			\item $(F, K_r, K_o)$ is a partition of $E(H)$ with $K_r \subseteq E_r$, and $K_o \subseteq E_o$ \newline{\ \qquad \footnotesize \em $\triangleright$ $F$ induces a forest, and the edges of $K_r \cup K_o$ are removed}
			\item $f_o: K_o \to W \cup (W \times W)$, {\footnotesize \em $\triangleright$ $f_o$ specifies how an original edge is removed}
			\item $g_r: K_r \to \{{\sf start, end, max}\}$, $c_r: K_r \to \{1, 2\}$, $n_r: K_r \to \{0, 1\}$, and $w_r: K_r \to W$ \newline{\footnotesize \em  $\triangleright$ $g_r$ encodes how an edge is removed, $c_r$ encodes how many degree-$2$ vertices are implicitly removed, and $w_r$ encodes the endpoints of reducible path to be removed}
		\end{itemize}
		such that the following properties are satisfied:
		\begin{itemize}[leftmargin=*]
			\item The graph $(V(H_A), F)$ is simple (i.e., without self-loops or parallel edges), and acyclic,
			\item  If $e = pq \in K_o$, (i) $\emptyset \neq f_o(e) \subseteq \{p, q\}$, (ii) $f_o(e) \subseteq W$, and (iii) $|f_o(e)| = \#(e)$ \newline{\footnotesize \em $\triangleright$ an original edge can be removed by shrinking one or both of its endpoints, as specified by its annotation}
			\item If $e = pq \in K_r$, with $p \preceq q$ then at least one of $\nu_{\sfstart}, \nu_{\sfend}$, and $\nu_{\sfmax}$ is at least $1$, and:
			\begin{itemize}[leftmargin=*]
				\item If $g_r(e) = {\sf start}$, then $\nu_{\sfstart}(e) \ge 1$, and $w_r(e) = p$. 
				\newline If $\nu_{\sfstart}(e) = 2$, then $c_r = 1$ 
				\newline{\footnotesize \em $\triangleright$ If a $e$ is removed by removing the first edge on the reducible path, then only shrink $p$ if possible}
				\item If $g_r(e) = {\sf end}$, then $\nu_{\sfend}(e) \ge 1$, and $w_r(e) = q$. 
				\newline If $\nu_{\sfend}(e) = 2$, then $c_r = 1$
				\item If $g_r(e) = {\sf max}$, then $c_r(e) = \nu_{\sfmax}(e) \ge 1$.
			\end{itemize}
			\item $\displaystyle |W| + \sum_{e \in K_r} c_r(e) \le k$
		\end{itemize}
	\end{description}
\end{tcolorbox}

\begin{lemma} \label{lem:fvs-pi1-compression-bound}
	If $(H_A, \prec, k)$ is a yes-instance of $\Pi'_1$, then $|V(H_A)| = \Oh(k^2)$, and $|E(H_A)| = \Oh(k^3)$, accounting for multiplicities.
\end{lemma}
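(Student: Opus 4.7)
The plan is to exploit three structural facts about a putative solution $\sigma_{\Pi'_1} = (W, (F, K_o, K_r), f_o, g_r, c_r, w_r)$: first, the budget constraint directly bounds $|W|$ and $|K_r|$; second, the definition of $f_o$ forces every edge of $K_o$ to be incident to $W$, which combined with the maximum degree of $H_A$ bounds $|K_o|$; and third, the forest structure of $F$ together with the minimum-degree lower bound on $H_A$ bounds $|V(H_A)|$, after which the maximum-degree upper bound yields the edge count.

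Concretely, I would first observe that since $c_r(e) \ge 1$ for every $e \in K_r$, the budget inequality $|W| + \sum_{e \in K_r} c_r(e) \le k$ implies $|W| \le k$ and $|K_r| \le k$. Next, for every $e = pq \in K_o$, the definition requires $\emptyset \neq f_o(e) \subseteq \{p,q\} \cap W$, so at least one endpoint of $e$ lies in $W$. Counting edges of $K_o$ by their $W$-endpoints (with multiplicity, since $H_A$ is a multigraph) and invoking Observation~\ref{obs:fvs-degbound-loops}, which guarantees $\deg_{H_A}(v) \le 25k+50$ for every $v$, gives
\[
|K_o| \;\le\; \sum_{v \in W} \deg_{H_A}(v) \;\le\; |W| \cdot (25k+50) \;=\; \Oh(k^2).
\]

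For the vertex bound, recall that $(V(H_A), F)$ is simple and acyclic, so $|F| \le |V(H_A)| - 1$. Moreover, after Reduction Rule~\ref{redrule:fvs2} every vertex of $H$ that survives into $H_A$ had degree at least $2$, and once all degree-$2$ vertices have been absorbed into reducible edges, every vertex of $H_A$ has degree at least $3$ (counting self-loops with multiplicity $2$, as in Observation~\ref{obs:fvs-degbound-loops}). Therefore
\[
3|V(H_A)| \;\le\; \sum_{v \in V(H_A)} \deg_{H_A}(v) \;=\; 2|E(H_A)| \;=\; 2(|F| + |K_r| + |K_o|) \;\le\; 2|V(H_A)| + \Oh(k^2),
\]
which rearranges to $|V(H_A)| = \Oh(k^2)$. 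Plugging this back into the maximum-degree bound yields
\[
|E(H_A)| \;=\; \tfrac{1}{2}\sum_{v \in V(H_A)} \deg_{H_A}(v) \;\le\; \tfrac{1}{2}\,|V(H_A)|\,(25k+50) \;=\; \Oh(k^3),
\]
accounting for multiplicities as required.

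The only subtle step is making sure the minimum-degree-$3$ property and the maximum-degree bound both apply verbatim to the multigraph $H_A$ (with self-loops contributing $2$ to degree and parallel edges counted with multiplicity). This is justified by Observation~\ref{obs:fvs-degbound-loops}, together with the fact that every reducible path replaced in the construction had both endpoints of degree at least $3$ in $H$; so no new low-degree vertices are created, and self-loops/parallel multiplicities are already absorbed in the degree sum. Once these conventions are fixed, the two bounds follow directly from the counting argument above.
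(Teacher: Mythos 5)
Your proof is correct, but it follows a genuinely different (and more direct) route than the paper. The paper forms the auxiliary graph $H'_A=(V(H_A),F\cup K')$ with $K'=K_o\cup(K_r\setminus K'_r)$, observes that $W$ is a feedback vertex set of $H'_A$, and then runs the classical ``min-degree-$3$ graph with an FVS of size $k$ and maximum degree $\Delta$ has $\Oh(k\Delta)$ vertices'' count, with an extra correction because up to $2k$ vertices of $H'_A$ may have dropped below degree $3$ after deleting the at most $k$ edges of $K'_r$. You instead exploit the partition $E(H_A)=F\uplus K_r\uplus K_o$ directly: $|F|\le |V(H_A)|-1$ from acyclicity, $|K_r|\le k$ from the budget (since $c_r(e)\ge 1$), $|K_o|\le |W|\cdot(25k+50)$ since every edge of $K_o$ is incident to $W$, and then the handshake inequality with the minimum-degree-$3$ bound closes the argument. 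Both proofs rest on the same structural facts about $H_A$ (minimum degree $3$, maximum degree $25k+50$ counting self-loops twice, the forest property of $F$, and the budget), and both give $|V(H_A)|=\Oh(k\Delta)=\Oh(k^2)$ and hence $|E(H_A)|=\Oh(k^3)$; your version avoids constructing $H'_A$ and the bookkeeping about its low-degree vertices, at the cost of a slightly larger hidden constant, while the paper's phrasing mirrors the standard FVS-kernel counting. Your closing caveat is also the right one to flag: the lemma implicitly concerns the $H_A$ produced by the construction, for which the degree bounds hold by Observation~\ref{obs:fvs-degbound-loops} and the fact that replacing reducible paths preserves the degrees of the degree-$\ge 3$ vertices (and, for $\Pi'_1$, isolated cycles have already been removed by Reduction Rule~\ref{redrule:fvs-4-isolated-cycles}), exactly as the paper's proof also assumes.
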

\begin{proof}
	If $(H_A, \prec, k)$ is a yes-instance of $\Pi'_1$, then there exists a solution $\sigma = (W, (F, K_o, K_r), f_o, g_r, c_r, w_r)$ with the specified properties. Let $K_r' \subseteq K_r$ be the set of edges with $g_r(e) = \sfmax$, i.e., the edges that are removed by the virtue of deletion of internal vertices. Let $K' = (K_o \cup (K_r \setminus K'_r)$, denote the remaining set of edges that are removed. We observe that for any edge $e \in K'$, at least one endpoint of $e$ belongs to the set $W$. Since $(V(H_A), F)$ is a forest, $W$ is a \emph{feedback vertex set} (in the normal sense) for the graph $H'_A = (V(H_A), F \cup K')$.
	
	Recall that in the graph $H_A$, the minimum degree is at least $3$. However, in graph $H'_A$, at most $2k$ vertices may have degree at most $2$, by the virtue of removal of the edges of $K'_r$. This follows from the fact that each edge $e \in K'_r$ has $c_r(e) \ge 1$, and $\sum_{e \in K'_r} c_r \le \sum_{e \in K_r} c_r \le k$. In the following, we use a slight variation of the argument that bounds the number of vertices in a graph with minimum degree $3$ and maximum degree $\Delta$, and feedback vertex set of size $k$.
	
	Let $m \coloneqq |V(H_A)| = |V(H'_A)|$, and let $E_W \subseteq F \cup K'$ denote the set of edges with at least one edge incident to a vertex of $W$. Note that $|E_W| \le \Delta |W|$. Finally, let $Q \coloneqq V(H) \setminus W$. Note that $H'_A[Q]$ is a forest, which implies that the number of edges in $H'_A[Q]$ is at most $|Q|-1$.  It follows that
	\begin{align*}
		\Delta |W| \ge |E_W| &\ge |E(H'_A)| - (m-|W| -1) 
		\\&\ge \frac{3(m-2k)}{2} - (m - |W| -1) \tag{Since all except at most $2k$ vertices of $H'_A$ have degree at least $3$ in $H'_A$}
		\\\implies (\Delta-1) \cdot |W| &\ge \frac{m}{2} - 3k + 1
		\\\implies k \ge |W| &\ge \frac{m/2 - 3k+1}{(\Delta-1)}
	\end{align*}
	It follows that $m \le k \cdot (\Delta-1) + 3k - 1$, which is upper bounded by $\Oh(k^2)$, since $\Delta = \Oh(k)$. Since each vertex of $V(H_A)$ has degree at most $\Oh(k)$, it follows that the number of edges in $H$ (accounting for multiplicities) is $\Oh(k^3)$.
\end{proof}

\begin{lemma} \label{lem:fvs-pi1-equivalence}
	The original instance $\cI$ of $\Pi_1$ is a yes-instance iff the instance $(H_A, \prec, k)$ of $\Pi'_1$ as defined above is a yes-instance.
\end{lemma}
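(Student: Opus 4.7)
The plan is to prove both directions of the equivalence by carefully translating a solution $(S, r)$ for $\Pi_1$ into a combinatorial solution for $\Pi'_1$ and vice versa, using the annotations $\nu_\sfstart, \nu_\sfend, \nu_\sfmax$ on reducible edges to encode shrinking choices along each reducible path, and using the fact that reducible paths ``collapse'' into edges of $H_A$ in a cycle-preserving way.

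For the forward direction, assume $\cI$ is a yes-instance of $\Pi_1$ with solution $(S, r)$. By \Cref{cl:fvs-longpath} (together with \Cref{cl:fvs-cycles}), we may assume that $|S \cap I(\pi)| \le 2$ for every reducible path $\pi$. I would then define $W \coloneqq S \cap V(H_A)$. An original edge $pq \in E_o$ is placed in $F$ if $\{p,q\} \cap S = \emptyset$ and otherwise in $K_o$ with $f_o(pq) = S \cap \{p,q\}$. A reducible edge $e$ corresponding to path $\pi$ is placed in $F$ if every edge of $\pi$ survives in $G(P, r)$; otherwise $e \in K_r$, and I would select one removed edge of $\pi$ as the ``witness,'' setting $g_r(e) \in \{\sfstart, \sfend, \sfmax\}$ according to whether the witness is $v_0 v_1$, $v_\ell v_{\ell+1}$, or an internal edge, and fixing $c_r(e)$ and $w_r(e)$ to reflect exactly how $(S, r)$ removes that witness. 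Acyclicity of $(V(H_A), F)$ would be proved by contradiction: any cycle in $(V(H_A), F)$ lifts to a closed walk in $H$ all of whose edges are preserved by $(S, r)$, yielding a cycle in $G(P, r)$, which contradicts our assumption; particular care is needed for self-loops of $H_A$ (arising from reducible paths with equal endpoints) and for parallel reducible edges in $F$. The budget bound $|W| + \sum_{e \in K_r} c_r(e) \le |S| \le k$ then holds because these contributions come from pairwise disjoint parts of $S$ (namely $W$ and the internal vertices of distinct reducible paths).

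For the backward direction, given $\sigma_{\Pi'_1} = (W, (F, K_o, K_r), f_o, g_r, c_r, w_r)$, I would set $S \coloneqq W$ initially and, for each $e \in K_r$ with underlying path $\pi = (v_0, \ldots, v_{\ell+1})$, augment $S$ by $c_r(e)$ internal vertices of $\pi$ dictated by $g_r(e)$: when $g_r(e) = \sfmax$, I pick one or both endpoints of a maximum-length internal edge of $\pi$ (choosing both iff $c_r(e) = 2$); when $g_r(e) \in \{\sfstart, \sfend\}$, I pick the single relevant internal vertex if $c_r(e) = 1$ (relying on $w_r(e) \in W$ to cover the other endpoint of the boundary edge), or both endpoints of the boundary edge if $c_r(e) = 2$. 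Setting $r(p) = \alpha$ on $S$ and $r(p) = 1$ elsewhere produces a candidate solution of size $|W| + \sum_{e \in K_r} c_r(e) \le k$. The acyclicity of $G(P, r)$ follows because any cycle in $G(P, r)$ would project to a cycle in $(V(H_A), F)$: every surviving original edge $pq$ of $H$ has $\{p,q\} \cap W = \emptyset$ and hence lies in $F$, and every reducible path whose edges all survive corresponds by our construction to an edge placed in $F$, contradicting that $(V(H_A), F)$ is a forest.

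The hard part will be reconciling the three possible positions of the witness edge on a reducible path with the three annotated values $\nu_\sfstart(e), \nu_\sfend(e), \nu_\sfmax(e)$, the pointer $w_r(e) \in W$, and the count $c_r(e) \in \{1, 2\}$ --- in particular, handling short reducible paths where $\nu_\sfmax(e) = \bot$ forbids $g_r(e) = \sfmax$, as well as reducible edges realized as self-loops or parallel edges (where ``first'' and ``last'' edges of $\pi$ share an endpoint, so $\sfstart$ and $\sfend$ interact). Once this case analysis is handled uniformly, the cycle-lifting arguments yield acyclicity in both directions without additional geometric input.
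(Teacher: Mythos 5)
Your plan follows the same route as the paper's proof: translate $(S,r)$ into $(W,(F,K_o,K_r),f_o,g_r,c_r,w_r)$ and back, with the case analysis $\sfstart/\sfend/\sfmax$ over where the removed edge of a reducible path lies, acyclicity via lifting/projecting cycles between $(V(H_A),F)$ and $G(P,r)$, and the budget accounted by the disjointness of $W$ and the internal vertices of distinct paths; your backward direction is essentially identical to the paper's. However, two steps of your forward direction would fail as literally stated. First, you place an original edge $pq$ into $K_o$ whenever $S\cap\{p,q\}\neq\emptyset$; the correct criterion (and the paper's) is whether the edge is actually \emph{removed} in $G(P,r)$. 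Under your criterion, an edge with $\nu(e)=2$ one of whose endpoints was shrunk only to kill some \emph{other} edge still survives in $G(P,r)$, yet it lands in $K_o$ with $|f_o(e)|=1\neq\#(e)$, so the constructed tuple violates the constraints of $\Pi'_1$; such surviving edges must go to $F$ (this does not hurt acyclicity, since $G(P,r)$ is acyclic and the cycle-lifting argument still applies).

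Second, when the witness on a reducible path $\pi=(v_0,v_1,\ldots,v_\ell,v_{\ell+1})$ is the boundary edge $v_0v_1$ with $\nu_{\sfstart}(e)=1$, the solution may have removed it by shrinking only the internal vertex $v_1$; then ``reflecting exactly how $(S,r)$ removes that witness'' forces $w_r(e)=v_0\notin W$, which is infeasible. You flag this reconciliation as ``the hard part'' but do not resolve it. The paper resolves it by a normalization: without loss of generality shrink $v_0$ instead of $v_1$. This exchange is safe because $v_1$ has degree $2$, so un-shrinking it can only restore edges interior to $\pi$, and any cycle passing through $\pi$'s interior must contain all of $\pi$, which remains broken at $v_0v_1$ (the same kind of argument as in \Cref{cl:fvs-cycles} and \Cref{cl:fvs-longpath}); a symmetric normalization is needed for $\sfend$. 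With these two adjustments (classify $K_o$ by removal, and normalize boundary-edge removals so the shrunk vertex is the degree-$\ge 3$ endpoint), your construction coincides with the paper's and both directions go through.
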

\begin{proof}
	It suffices to focus on the instance $\cI$ obtained after repeatedly applying reduction rules \Cref{redrule:fvs0}, \Cref{redrule:fvs1}, \Cref{redrule:fvs2}, \Cref{redrule:fvs3}, \Cref{redrule:fvs-4-isolated-cycles}. Consider a solution $\sigma = (S, X, Y)$ to $\Pi_1$ that shrinks at most $2$ disks corresponding to internal vertices on any reducible path, in accordance with \Cref{cl:fvs-longpath}. 
	
	\textbf{Forward direction.} We obtain a solution for the instance $(H, \prec, k)$ of $\Pi'_1$.	Let $W = V(H) \cap S$.
	If an original edge $pq$ is removed by $\sigma$ such that $p$ and $q$ have degree at least $3$ in $H$, then we add $pq$ to $K_o$, and define $f_o(e) = \{p, q\} \cap S$. Furthermore, from \Cref{obs:nu-properties}, $e$ can be removed by shrinking exactly $\nu(|pq|) = \#(e)$ endpoints. Note that $\#(e)$ also contributes toward $|W|$.
	
	Now fix a reducible path $\pi$ with endpoints $v_0 = p$ and $v_{\ell + 1} = q$ (with $v_0 \preceq v_{\ell+1}$) such that $\sigma$ removes at least one edge of $\pi$. Then, we add $e(\pi)$ to $K_r$. We consider different cases.
	\begin{itemize}
		\item If $\sigma$ removes the edge $e = v_0v_1$. We define $g_r(e) = \sfstart$. Via \Cref{obs:nu-properties}, if $\nu(|v_0v_1|) = \nu_{\sfstart} = 1$, then we can wlog assume that it shrinks $v_0$ (since it may also potentially eliminate other edges incident to $v_0$; whereas shrinking $v_1$ only affects $\pi$). Then, we set $w_r(e) = \{v_0\}$. 
		\\If $\nu(|v_0v_1|) = \nu(\sfstart) = 2$, then $\sigma$ must shrink both $v_0$ as well as $v_1$. In this case, we set $w_r(e) = \{v_0\}$, and $c_r = 1$. Note that we account $v_1$ in the form of $c_r(e)$ in the objective function.
		\item The case where $\sigma$ removes the edge $e = v_\ell v_{\ell+1}$ is symmetric to the previous case, and thus we omit it.
		\item If $\sigma$ removes an edge $e = v_i v_{i+1}$, where both $v_i$ and $v_{i+1}$ are internal degree-$2$ vertices of $\pi$. Without loss of generality, we can assume that $e = v_i v_{i+1}$ realizes the maximum distance $\dmax = \min_{1 \le j \le \ell-1} |v_j v_{j+1}|$. We set the value of $c_r(e)$ as $\nu_{\sfmax}$, since \Cref{obs:nu-properties} implies that $e$ can be removed by shrinking exactly $\nu_{\sfmax}$ disks.
	\end{itemize}
	We note that the subgraph induced by (i) edges of reducible paths $\pi$ such that no edge of $\pi$ is removed by $\sigma$, and (ii) original edges that are not removed by $\sigma$, induce a forest in the resulting graph $G(P, r)$. We let $F$ to be the set of corresponding reducible and original edges, which implies the graph $(V(H_A), F)$ is simple, and acyclic.	It is not difficult to verify that every disk that is shrunk by $\sigma$ contributes exactly $1$ to the objective of $\Pi'_1$. 
	
	\textbf{Backward direction.} Suppose $(H_A, \prec, k)$ is a yes-instance of $\Pi'_1$, and consider a solution $\sigma' = (W, (F, K_o, K_r), f_o, g_r, c_r, w_r)$ to $(H, \prec, k)$. We retrieve a solution $\sigma$ to the instance $\cI$ of $\Pi_1$. For every edge $e = pq \in K_o$, we shrink one or both of its endpoints, as specified by $f_o(e)$. For an edge $e = pq \in K_r$ with $p \preceq q$ corresponding to a reducible path $\pi = (p = v_0, v_1, \ldots, v_\ell, v_{\ell+1} = q)$, we consider the following different cases.
	\begin{itemize}
		\item If $g_r(e) = \sfstart$, and $\nu_{\sfstart} = 1$, then we shrink $p$ in the solution $\sigma$. If $\nu_{\sfstart} = 2$, then we shrink both $p$ and $v_1$ in $\sigma$.
		\item The case when $g_r(e) = \sfend$ is analogous to the previous case.
		\item If $g_r(e) = \sfmax$, then we shrink one or both endpoints of the edge $v_i v_{i+1}$ realizing the maximum euclidean distance along $\pi$, as specified by $c_r(e)$.  
	\end{itemize}
	Note that we can subdivide each reducible edge in $H$ appropriate number of times to obtain the original intersection graph $H$. However, the graph $(V(H_A), F)$ is a forest, and each edge of $F$ corresponds to an original edge, or a reducible path for which no edge is removed by $\sigma$. Furthermore, for every other reducible path $\pi$, $\sigma$ removes \emph{at least} one edge of $\pi$, disconnecting $\pi$. Therefore, the resulting intersection graph $G(P, r)$ corresponding to $\sigma$ is acyclic. Finally, it is not difficult to verify that the number of disks shrunk by $\sigma$ is at most $k$.
\end{proof}

The following observation is easy to verify, since a verification algorithm simply checks whether all the conditions in the definition of $\Pi_1$ are satisfied. 
\begin{observation} \label{obs:fvs-pi1-np}
	The problem $\Pi'_1$ is in \classNP.
\end{observation}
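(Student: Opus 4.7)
The plan is to exhibit a polynomial-time deterministic verifier for certificates of yes-instances, which suffices to show membership in \classNP. Given an instance $(H_A, \prec, k)$, the natural certificate is the tuple $\sigma = (W, (F, K_o, K_r), f_o, g_r, c_r, w_r)$ stipulated in the definition of $\Pi'_1$. The first step is to argue that such a certificate has polynomial bit-length: $W$ is a subset of $V(H_A)$; $(F, K_o, K_r)$ is a partition of $E(H_A)$; $f_o$ has domain $K_o$ and codomain $W \cup (W \times W)$; and $g_r, c_r, w_r$ have domain $K_r$ with codomains of constant size, $\{1,2\}$, and $V(H_A)$ respectively. Hence the whole certificate fits in $\Oh(|V(H_A)| + |E(H_A)|)$ bits, which is polynomial in the input size.

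The verifier is then a direct translation of the bulleted list in the definition of $\Pi'_1$. Checking that $(F, K_o, K_r)$ is a partition of $E(H_A)$ with $K_o \subseteq E_o$ and $K_r \subseteq E_r$ takes a single linear scan over the edges. Testing whether $(V(H_A), F)$ is simple and acyclic requires only a standard depth-first search, since simplicity amounts to verifying that no two listed edges share both endpoints and that none is a self-loop. The per-edge constraints---namely, for $e = pq \in K_o$ that $f_o(e)$ is a nonempty subset of $\{p,q\} \cap W$ of cardinality $\#(e)$, and for $e \in K_r$ that the triple $(g_r(e), c_r(e), w_r(e))$ is consistent with the annotation $(\nu_{\sfstart}(e), \nu_{\sfend}(e), \nu_{\sfmax}(e))$ of $e$ and with membership in $W$ in the prescribed way---each take constant time per edge, totalling $\Oh(|E(H_A)|)$. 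Finally, computing $|W| + \sum_{e \in K_r} c_r(e)$ and comparing against $k$ is linear.

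There is no substantial obstacle here; the main thing to be careful about is that the definition of a valid solution involves several interlocking case distinctions on the values of $\nu_{\sfstart}, \nu_{\sfend}, \nu_{\sfmax}$ and on the choice of $g_r(e) \in \{\sfstart, \sfend, \sfmax\}$, each of which must be individually checked. Since there are only finitely many such cases per edge, the whole verification runs in polynomial time, and this establishes $\Pi'_1 \in \classNP$.
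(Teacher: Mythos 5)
Your proof is correct and takes essentially the same approach as the paper: the paper merely remarks that a verifier can check all the conditions in the definition of $\Pi'_1$, and you spell out that verifier in detail (polynomial-size certificate, linear-time partition check, DFS for acyclicity, constant work per edge). Nothing needs to change.
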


\fvspolykernel*
\begin{proof}
	Given an instance $\cI$ of \probAcyclicityDown, we obtain an equivalent instance $(H_A, \prec, k)$ of $\Pi'_1$ using \Cref{lem:fvs-pi1-equivalence}. Using \Cref{lem:fvs-pi1-compression-bound}, either (i) $|V(H_A)| = \Oh(k^2)$ and $|E(H_A)| = \Oh(k^3)$ -- otherwise we can conclude that $\cI$ is a no-instance. If at any point we conclude that $\cI$ is a no-instance, then we simply output a trivial no-instance of $\Pi_1$.
	
	Now assume that the number of vertices and edges in the instance $(H_A, \prec, k)$ is bounded. Furthermore, each edge is annotated with $\Oh(1)$ bits of information. \Cref{obs:fvs-pi1-np} implies that $\Pi'_1$ is in \classNP; whereas \Cref{thm:np-hardness-fvs} implies that \probAcyclicityDown is \classNPC. Therefore, there exists a polynomial-time reduction from the instance $(H, k)$ of $\Pi'_1$ that outputs an equivalent instance $\cI'$ of \probAcyclicityDown. Note that the size of the instance, and the parameter in the new instance $\cI'$ can be bounded by a polynomial in $k$. This implies the existence of a polynomial kernel for \probAcyclicityDown.
\end{proof}

\subsection{An FPT algorithm for \probAcyclicityMinDown via Partial Compression} We proceed as in \Cref{subsec:annotated} and summarize the situation after (i) replacing all isolated cycle components with isolated cycle vertices, (ii) replacing all reducible paths with reducible edges, and (iii) annotating all isolated cycle vertices, and all the edges (reducible as well original). We have an annotated multigraph $H_A$ with multiple parallel edges between vertices, and multiple self-loops on vertices. Note that $V(H_A) = V_b \uplus V_c$, where all vertices in $V_b$ have degree at least $3$, whereas all vertices in $V_c$ are isolated cycle vertices. Similarly, $E(H_A) = E_r \uplus E_o$, where $E_r$ and $E_o$ denote the sets of reducible and original edges respectively.
\begin{itemize}
	\item Each $e(\pi) = pq \in E_r$ with $p \preceq q$ corresponding to an reducible path $\pi$ is annotated with a tuple $(\dstart, \dend, \dmax)$. We refer to these values by $\dstart(e), \dend(e)$, and $\dmax(e)$ respectively.
	\item For each isolated cycle vertex $v(C)$ is annotated with $\#(v(C))$ and $\cost(v(C))$.
	\item Each $e = pq \in E_o$ is annotated with $d(e) = |pq|$.
\end{itemize}
Now we define the following annotated problem $\Pi'_2$, and prove that this is equivalent to $\Pi_2$.
\ \\\begin{tcolorbox}[breakable,colback=gray!5!white,colframe=gray!75!black]
	\begin{description}[]
		\item[Input:] An annotated multigraph $H_A$ with (possibly) multiple parallel edges between pairs of vertices, a total order $\prec$ on $V(H_A)$, a parameter $k$, and budget $\mu$.
		\item[Task:] Determine whether there exists a solution $\sigma_{\Pi'_2} = (W, (F, K_o, K_r), f_o, g_r, c_r, w_r, y_w, y_r)$, where 
		\begin{itemize}[leftmargin=3pt]
			\item $W \subseteq V(H_A)$ {\footnotesize \em $\triangleright$ denotes the set of vertices of degree at least $3$ that are shrunk}
			\item $(F, K_r, K_o)$ is a partition of $E(H_A)$ with $K_r \subseteq E_r$, and $K_o \subseteq E_o$ \newline{\ \qquad \footnotesize \em $\triangleright$ $F$ induces a forest, and the edges of $K_r \cup K_o$ are removed}
			\item $f_o: K_o \to W \cup (W \times W)$, {\footnotesize \em $\triangleright$ $f_o$ specifies how an original edge is removed}
			\item $g_r: K_r \to \{{\sf start, end, max}\}$, $c_r: K_r \to \{1, 2\}$, $n_r: K_r \to \{0, 1\}$, and $w_r: K_r \to W$ \newline{\footnotesize \em  $\triangleright$ $g_r$ encodes how an edge is removed, $c_r$ encodes how many degree-$2$ vertices are implicitly removed, and $w_r$ encodes the endpoints of reducible path to be removed}
			\item $\tilde{r}: V(H) \to [0, 1]$, and $y_e: E_r \to [0, 2]$ \newline{\footnotesize \em  $\triangleright$ $\tilde{r}$ encodes the new radius of a vertex in $V(H_A)$, and $y_e$ encodes the cost contributed by implicit shrinking of degree-$2$ vertex that helps in removal of an edge}
		\end{itemize}
		such that the following properties are satisfied.
		\begin{itemize}[leftmargin=3pt]
			\item The graph $(V(H_A), F)$ is simple (i.e., without self-loops or parallel edges), and acyclic,
			\item $\tilde{r}(p) <1$ iff $p \in W$; and for all remaining $q \in V(H) \setminus W$, $\tilde{r}(q) = 1$.
			\\$y_e(e) > 0$ iff $e \in K_r$; otherwise $y_e(e) = 0$ for $e \in E_r \setminus K_r$.
			\item  If $e = pq \in K_o$, then $d(e) \ge 2\alpha$, and \newline
			(i) $\emptyset \neq f_o(e) \subseteq \{p, q\}$, (ii) $f_o(e) \subseteq W$, and (iii) $\tilde{r}(p) + \tilde{r}(q) \le d(e)$ \\{\footnotesize \em $\triangleright$ at least one of the endpoints must be shrunk as specified by the annotation}
			\item Suppose $e = pq \in K_r$, with $p \preceq q$, and $g_r(e) = \sfstart$.
			\\Then $\dstart(e) \ge 2\alpha$, $w_r(e) = p$ with $p \in W$, and
			\newline (A) If $2\alpha \le \dstart(e) < 1 + \alpha$, then 
			\newline\hspace*{20mm}(i) $\tilde{r}(p) = \alpha$, (ii) $y_e(e) \ge \dstart(e)-\alpha$, (iii) $c_r(e) = 1$
			\newline (B) If $1+\alpha \le \dstart(e) \le 2$, then 
			\newline\hspace*{20mm} (i) $\tilde{r}(p) \ge 2-\dstart(e)$.
			\item Suppose $e = pq \in K_r$, with $p \preceq q$, and $g_r(e) = \sfend$.
			\\Then $\dend(e) \ge 2\alpha$, $w_r(e) = q$ with $q \in W$, and
			\newline (A) If $2\alpha < \dend(e) < 1+\alpha$, then 
			\newline\hspace*{20mm} (i) $\tilde{r}(q) = \alpha$, (ii) $y_e(e) \ge \dstart(e)-\alpha$, (iii) $c_r(e) = 1$
			\newline (B) If $1 \le \dend(e) \le 2$, then 
			\newline\hspace*{20mm}(i) $\tilde{r}(q) \ge 2-\dend(e)$.
			\item Suppose $e = pq \in K_r$, with $p \preceq q$, and $g_r(e) = \sfmax$.
			\\Then $\dmax(e) \ge 2\alpha$, and
			\\\hspace*{20mm}(i) $y_e(e) \ge 2 - \dmax(e)$, (ii) $w_r(e) = \emptyset$, (iii) $c(e) = \nu(\dmax(e))$
			\item $\displaystyle |W| + \sum_{e \in K_r} c_r(e) + |V_C| \le k$, and
			\item $\cost(\sigma_{\Pi'_2}) \coloneqq \sum_{p \in V(H_A)} y_w(p) + \sum_{e \in K_r} y_e(e) + \sum_{v(C) \in V_C} \cost(v(C)) \le \mu$. 
		\end{itemize}
	\end{description}
\end{tcolorbox}

The proof of the following lemma is almost identical to that of \Cref{lem:fvs-pi1-compression-bound}, except that we apply the argument for $H_1 = (V(H_A) \setminus V_C, E(H_A))$, and then account for at most $k$ vertices of $V_C$ separately (note that $|V_C| \le k$ via \Cref{redrule:fvs-4'-isolated-cycles})
\begin{lemma} \label{lem:fvs-pi2-compression-bound}
	If $(H_A, \prec, k, \mu)$ is a yes-instance of $\Pi'_2$, then $|V(H_A)| = \Oh(k^2)$, and $|E(H_A)| = \Oh(k^3)$, accounting for multiplicities.
\end{lemma}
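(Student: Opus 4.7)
The plan is to follow the proof of \Cref{lem:fvs-pi1-compression-bound} almost verbatim, treating the isolated-cycle vertices $V_C$ as a simple additive contribution of at most $k$ to the vertex count. The very first step would be to observe that any solution to $\Pi'_2$ satisfies $|W| + \sum_{e \in K_r} c_r(e) + |V_C| \le k$, which immediately gives $|V_C| \le k$, $|W| \le k$, and $\sum_{e \in K_r} c_r(e) \le k$. Since by construction each vertex of $V_C$ replaces an entire isolated-cycle component and thus carries no incident edge in $H_A$, all edges of $H_A$ lie inside $V_b := V(H_A) \setminus V_C$. Hence I only need to bound $|V_b|$ and the edge count on $V_b$.

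Next I would set up the analogue of the feedback-vertex-set structure used in \Cref{lem:fvs-pi1-compression-bound}. Define $K'_r \subseteq K_r$ to be the set of reducible edges with $g_r(e) = \sfmax$ (those removed ``internally'', without forcing an endpoint into $W$), and let $K' := K_o \cup (K_r \setminus K'_r)$. A quick check of the $\Pi'_2$ constraints shows that every edge of $K'$ has at least one endpoint in $W$: for $e \in K_o$ because $\emptyset \neq f_o(e) \subseteq W$, and for $e \in K_r \setminus K'_r$ because the case analysis on $g_r \in \{\sfstart,\sfend\}$ forces $w_r(e) \in W$. Thus in the graph $H_1' := (V_b, F \cup K')$, the set $W$ is a feedback vertex set in the classical sense, since $(V_b, F)$ is already a forest and every other edge of $H_1'$ is hit by $W$.

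The final step is the degree-counting argument. Each vertex of $V_b$ has degree at least $3$ in $H_A$ by construction (via \Cref{obs:fvs-degbound-loops} and the minimum-degree reduction rules), and the only degree drops in passing to $H_1'$ come from deleting the edges of $K'_r$. Since $|K'_r| \le \sum_{e \in K'_r} c_r(e) \le k$, at most $2k$ vertices of $V_b$ fall below degree $3$ in $H_1'$, giving $|E(H_1')| \ge 3(|V_b| - 2k)/2$. Combining this with the fact that the edges of $H_1'$ lying entirely inside $V_b \setminus W$ form a forest (hence at most $|V_b| - |W| - 1$ such edges), and that edges touching $W$ number at most $\Delta |W|$ for $\Delta = \Oh(k)$, the same algebra as in \Cref{lem:fvs-pi1-compression-bound} yields $|V_b| \le 2k(\Delta - 1) + 6k - 2 = \Oh(k^2)$; adding $|V_C| \le k$ gives $|V(H_A)| = \Oh(k^2)$, and bounding degrees by $\Delta = \Oh(k)$ gives $|E(H_A)| = \Oh(k^3)$ counted with multiplicity. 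The only real obstacle I anticipate is clerical: carefully parsing how $W$, $V_C$, and the internal-shrinking cost $\sum_{e \in K_r} c_r(e)$ jointly consume the budget $k$, and checking that the $g_r \in \{\sfstart,\sfend,\sfmax\}$ case analysis really places an endpoint of each surviving removed edge into $W$. Once this bookkeeping is in place, the structural argument transfers from \Cref{lem:fvs-pi1-compression-bound} with no essential changes.
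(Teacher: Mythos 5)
Your proposal is correct and matches the paper's argument, which likewise reduces to the proof of \Cref{lem:fvs-pi1-compression-bound} applied to the graph on $V(H_A)\setminus V_C$ (where all edges live, since isolated-cycle vertices are isolated) and then accounts for the at most $k$ vertices of $V_C$ separately. Your bookkeeping of how $|W|$, $\sum_{e\in K_r}c_r(e)$, and $|V_C|$ share the budget $k$, and of the endpoint-in-$W$ case analysis for $g_r\in\{\sfstart,\sfend\}$, is exactly the intended transfer of that argument.
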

Now we prove the equivalence between $\Pi_2$ and $\Pi'_2$.
\begin{lemma} \label{lem:fvs-pi2-equivalence}
	The original instance $\cI = (P, \alpha, k, \mu)$ of $\Pi_2$ is a yes-instance iff the instance $\cI' = (H, \prec, k, \mu)$ of $\Pi'_2$ as defined above is a yes-instance. Furthermore, if $\cI$ and $\cI'$ are yes-instances, then the optimal costs of the two instances are equal.
\end{lemma}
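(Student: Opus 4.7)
The proof closely parallels that of \Cref{lem:fvs-pi1-equivalence}, but with additional bookkeeping to track costs. As there, we may restrict attention to the instance $\cI$ obtained after exhaustively applying \Cref{redrule:fvs0}--\Cref{redrule:fvs-4'-isolated-cycles}; in particular, isolated cycles have been replaced by isolated cycle vertices and reducible paths by reducible edges with correct annotations.

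For the forward direction, let $\sigma=(S,r)$ be a solution to $\Pi_2$ of cost at most $\mu$ shrinking at most $k$ disks. By (the cost-preserving versions of) \Cref{cl:fvs-cycles} and \Cref{cl:fvs-longpath}, we may assume that $\sigma$ shrinks at most two disks from every isolated cycle and at most two internal degree-$2$ vertices from every reducible path; moreover, by a standard local exchange, if $\sigma$ removes an edge of an isolated cycle $C$ then it may be assumed to do so on a pair whose distance equals $\dmax(C)$, and similarly for a ``$\sfmax$-edge'' inside a reducible path. We set $W:=S\cap V_b$ and $\tilde r(p):=r(p)$ for $p\in V_b$ (so $y_w(p)=1-r(p)$ accounts exactly for the cost incurred at $p$). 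For every original edge $e=pq\in E_o$ removed by $\sigma$, put $e$ into $K_o$ and let $f_o(e):=S\cap\{p,q\}$; \Cref{obs:dist-properties} guarantees $d(e)\ge 2\alpha$ and $\tilde r(p)+\tilde r(q)\le d(e)$. For every reducible edge $e=pq$ whose path $\pi=(p=v_0,\dots,v_{\ell+1}=q)$ has at least one edge removed by $\sigma$, add $e$ to $K_r$ and pick $g_r(e)\in\{\sfstart,\sfend,\sfmax\}$ according to which of $v_0v_1$, $v_\ell v_{\ell+1}$, or an internal pair is removed (breaking ties in favor of $\sfstart/\sfend$, and within $\sfmax$ in favor of the pair realizing $\dmax$). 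The values of $c_r(e)$, $w_r(e)$ and $y_e(e)$ are then set to record, respectively, how many internal vertices are shrunk, which endpoint of $e$ (if any) is the shrunk one, and the total cost contributed by the shrinking of internal degree-$2$ vertices of $\pi$ -- this last quantity is precisely determined by $\dstart(e)$, $\dend(e)$ or $\dmax(e)$ together with $\alpha$. All isolated cycle vertices $v(C)\in V_C$ are automatically accounted for by $\#(v(C))$ and $\cost(v(C))$. Finally, $F$ is defined as $E(H_A)\setminus(K_o\cup K_r)$; the fact that $(V(H_A),F)$ is simple and acyclic follows because its ``blow-up'' in the original graph consists exactly of the edges of $H$ not removed by $\sigma$, which form a forest. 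A direct summation shows that the cardinality and cost constraints of $\sigma_{\Pi'_2}$ are bounded above by those of $\sigma$.

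For the backward direction, given a solution $\sigma_{\Pi'_2}=(W,(F,K_o,K_r),f_o,g_r,c_r,w_r,\tilde r,y_e)$, we build $\sigma=(S,r)$ on $P$ as follows. For each $p\in V_b$ set $r(p):=\tilde r(p)$, so $S\cap V_b=W$. For each isolated cycle vertex $v(C)\in V_C$, shrink the $\#(v(C))$ disks along the pair of $C$ realizing $\dmax(C)$ to the unique radii from \Cref{obs:dist-properties} with total shrinking cost exactly $\cost(v(C))$. For each $e\in K_r$ with path $\pi=(v_0,\dots,v_{\ell+1})$, if $g_r(e)=\sfstart$ then shrink $v_0$ (already recorded in $\tilde r$) and, in case $\dstart(e)<1+\alpha$, additionally shrink $v_1$ to the largest radius compatible with $r(v_0)+r(v_1)\le\dstart(e)$ and $r(v_1)\ge\alpha$; symmetrically for $\sfend$; if $g_r(e)=\sfmax$, shrink the $c_r(e)=\nu(\dmax(e))$ endpoints of the maximum-distance internal edge by the amount prescribed by \Cref{obs:dist-properties}. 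By design, each removed edge $e\in K_o\cup K_r$ is absent from $G(P,r)$, and, since $(V(H_A),F)$ is acyclic, every cycle of the original graph $H$ must use some edge that has been explicitly removed; hence $G(P,r)$ is acyclic. A case-by-case verification shows that the per-edge cost contribution at each $e\in K_r$ equals $y_e(e)$, the per-vertex contribution at $p\in V_b$ equals $y_w(p)=1-\tilde r(p)$, and the per-cycle contribution at $v(C)\in V_C$ equals $\cost(v(C))$; summing gives total cost at most $\mu$ and total shrunk-disk count at most $k$.

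The only nontrivial point -- and the main bookkeeping obstacle -- is verifying that the inequalities on $y_e(e)$ in each subcase of $g_r(e)\in\{\sfstart,\sfend,\sfmax\}$ are tight under the explicit radius assignment above, and symmetrically that the inequalities on $\tilde r(p)$ can be matched by a valid $r$ that removes the corresponding edge while keeping every un-removed edge of $F$ intact. Both facts follow directly from \Cref{obs:dist-properties} applied edge-wise; in particular the equality of the two optima (rather than mere mutual implication) comes from the observation that in both directions of the construction every shrinking operation is charged to exactly one of $y_w(p)$, $y_e(e)$ or $\cost(v(C))$, with no double counting.
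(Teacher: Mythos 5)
Your proof is correct and follows essentially the same route as the paper's: both directions are handled by the same per-edge/per-cycle translation used in the proof of \Cref{lem:fvs-pi1-equivalence}, augmented with explicit cost bookkeeping via $\tilde r$, $y_e$, and the isolated-cycle annotations, and both rely on the same exchange arguments (shrink at most two disks per isolated cycle / per reducible path, remove the max-distance internal edge) to justify that the constructed $\Pi'_2$-solution has cost no larger, and conversely that the reconstructed $\Pi_2$-solution matches. Your write-up is somewhat more explicit than the paper's, which leaves the backward direction to "analogous arguments," but the content is the same.
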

\begin{proof}
	The proof of equivalence for the two instances is almost identical to the proof of \Cref{lem:fvs-pi1-equivalence}, except that we need to account for the isolated cycle vertices $V_C$. Consider an isolated cycle $C$, and the corresponding vertex $v(C) \in V_C$. Since we seek to minimize the cost, it suffices to remove an edge $e = v_iv_{i+1}$ realizing the maximum euclidean distance between consecutive vertices, $\dmax(v(C))$. If $2\alpha < \dmax(v(C)) < 1+\alpha$, then a solution for $\cI$ can remove $e$ of $C$ only by shrinking at least two vertices on $C$. Otherwise, if $1+\alpha \le \dmax(v(C)) < 2$, then a solution can remove $e$ by shrinking one vertex.
	
	Now we argue about the equivalence between the costs of the solutions.
	
	\textbf{Forward direction.} Fix an optimal solution $\sigma = (S, r)$ for $\Pi_2$. First, \Cref{obs:dist-properties} implies that an edge $e = pq$ is removed iff $r(p) + r(q) \le |pq|$. Without loss of generality, we assume that for an edge $pq \in E(H)$ such that $q$ has degree $2$, and $p$ has degree at least $3$, it holds that $r(p) \le r(q)$. This is because shrinking the disk associated with $q$ only helps in disconnecting a reducible path; whereas shrinking the disk associated with $p$ can potentially help in removal of other cycles.
	
	Note that \emph{any} solution for $\Pi'_2$ assigns the optimal value for each isolated cycle $C$ by virtue of the annotated information. We therefore focus on the rest of the graph $H_1 = (V(H_A) \setminus V_C, E(H_A))$. First, we set $W = V(H_1) \cap S$, i.e., the set of vertices of degree at least $3$ (in $H$) that is shrunk by the solution. 
	
	Consider an original edge $e = pq$ that is removed by $\sigma$. We can set $f_o(e)$ to be equal to the set of shrunken endpoints of $pq$, and the $x$-values of $p$ and $q$ are set accordingly.
	
	Now consider a reducible path $\pi = (p = v_0, v_1, \ldots, v_\ell, v_{\ell+1} = q)$ in $G$ with $p \preceq q$ that does not exist in the final intersection graph $G(P, r)$. 
	
	Suppose the edge $e = pv_1$ is removed by $\sigma_2$. Then we set $g_r(e) = \sfstart$. Note that by assumption $p$ must be shrunk by $\sigma_2$, and $v_1$ is shrunk if $\dstart(e) < 1$. We define $\tilde{r}(p) = r(p)$, $y_e(e) = 1 - r(v_1)$. Furthermore, we set $c_r(e) = 1$ if $v_1$ is shrunk, otherwise we set $c_r(e) = 0$. The case when $e = v_\ell v_{\ell+1}$ is removed by $\sigma_2$ is symmetric. 
	
	Now consider the case where some edge $e = v_{i} v_{i+1}$ is removed, where $1 \le i < \ell$, i.e., $v_i$ and $v_{i+1}$ are degree-$2$ vertices. By optimality of $\sigma_2$, we can assume that $e$ is the edge realizing maximum euclidean distance $\dmax(e)$. In this case, we set $c_r(e)$ to be $1$ or $2$ depending on how many endpoints of $e$ are shrunk; and we set $y_e(e) = (1 - r(v_i)) + (1-r(v_{i+1})) = 2 - \dmax$. Finally, accounting for the costs of isolated cycle vertices, it follows that the cost of the constructed solution for $\Pi'_2$ is equal to that of $\sigma$.
	
	\textbf{Backward direction.} Given a solution $\sigma'$ for the instance $\cI'$ for $\Pi'_2$, we construct a solution $\sigma = (S, r)$ for the instance $\cI$ of $\Pi_1$ in a way analogous to the proof of \Cref{lem:fvs-pi1-equivalence}. Note that we also need to handle isolated cycles, which we do in a manner as described above. It is straightforward to verify that each step in the reconstruction process preserves the cost, which implies that the cost of $\sigma$ is equal to that of $\sigma'$. Thus, the costs of optimal solutions for $\cI$ and $\cI'$, if they exist, must be equal.
\end{proof} 
From \Cref{lem:fvs-pi2-equivalence}, we get the following theorem.

\begin{theorem} \label{thm:fvs-min-compression}
	There exists partial polynomial compression from \probAcyclicityMinDown into the problem $\Pi'_2$ as defined above, parameterized by $k$, the number of disks that we are allowed to shrink. 
	\\In particular, given an instance $(P, \alpha, k, \mu)$, there exists a polynomial-time algorithm that outputs an equivalent instance $(H_A, \prec, k, \mu)$ of $\Pi'_2$, where the number of vertices and edges (including multiplicities) in the multigraph $H_A$ is bounded by $\Oh(k^2)$ and $\Oh(k^3)$ respectively.
\end{theorem}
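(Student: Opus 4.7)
The plan is to package together, as a polynomial-time procedure, all the reduction rules and the annotated-graph construction developed in the preceding subsection, and then to invoke Lemmas~\ref{lem:fvs-pi2-equivalence} and~\ref{lem:fvs-pi2-compression-bound} to establish equivalence and the claimed size bound, respectively.

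First, given an instance $\cI=(P,\alpha,k,\mu)$ of \probAcyclicityMinDown, I would exhaustively apply Reduction Rules~\ref{redrule:fvs0}, \ref{redrule:fvs1}, \ref{redrule:fvs2}, \ref{redrule:fvs3}, \ref{redrule:fvs-4-isolated-cycles}, and \ref{redrule:fvs-4'-isolated-cycles} in polynomial time. Each of these rules is either trivially polynomial-time implementable or relies only on computing pairwise distances in $P$, and each application either certifies a no-instance or strictly reduces $|P|$, so the whole process terminates in polynomial time. After this, if the algorithm has not already concluded that $\cI$ is a no-instance, we are left with a unit disk graph $H$ whose minimum degree is at least $2$, whose maximum degree is $\Oh(k)$, every connected component of which contains a cycle, and a set $V_C$ of at most $k$ isolated cycle vertices annotated with their $\#$- and $\cost$-values.

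Second, I would construct the annotated multigraph $H_A$ as in the subsection: contract every reducible path $\pi$ into a reducible edge $e(\pi)$ annotated with the tuple $(\dstart(e),\dend(e),\dmax(e))$ (with $\dmax(e)$ set to a sentinel value when $\pi$ is short), annotate every remaining original edge $pq$ with $d(e)=|pq|$, fix an arbitrary total order $\prec$ on $V(H_A)$ (e.g.\ lexicographic on coordinates), and output $(H_A,\prec,k,\mu)$ as the instance of $\Pi'_2$. By \Cref{obs:fvs-degbound-loops}, $H_A$ has maximum degree $\Oh(k)$ and minimum degree at least $3$ on $V(H_A)\setminus V_C$. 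Equivalence of $\cI$ with this new instance, together with preservation of the optimal cost, is exactly the content of \Cref{lem:fvs-pi2-equivalence}. To obtain the polynomial size bound I would invoke \Cref{lem:fvs-pi2-compression-bound}: if $(H_A,\prec,k,\mu)$ is a yes-instance, then $|V(H_A)|=\Oh(k^2)$ and $|E(H_A)|=\Oh(k^3)$ counting multiplicities; if at the end of the construction these bounds are violated, the algorithm may safely output a trivial no-instance of $\Pi'_2$.

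The main obstacle, and the reason the result is labelled a \emph{partial} compression, is that the annotations $\dstart,\dend,\dmax,d(e),\cost(v(C)),\mu$ are real numbers coming from the input geometry, and there is no a priori bound on their bit-lengths. Thus, while the number of combinatorial objects in $H_A$ is polynomial in $k$, the total bit-length of the instance need not be, and a genuine kernel cannot be extracted in this way (in contrast to \probAcyclicityDown, where the $\nu$-abstraction discretizes all relevant geometric information into $\Oh(1)$ bits per edge). The proof therefore stops at packaging the above construction plus the two cited lemmas; no further combinatorial work is required beyond verifying that every step — reduction rules, path contraction, annotation, and finally the size check — runs in polynomial time, which is immediate from their local nature.
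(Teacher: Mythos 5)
Your proposal is correct and follows essentially the same route as the paper: the theorem is obtained by packaging the reduction rules and the annotated-multigraph construction from the preceding subsection, invoking \Cref{lem:fvs-pi2-equivalence} for equivalence (and cost preservation) and \Cref{lem:fvs-pi2-compression-bound} for the $\Oh(k^2)$/$\Oh(k^3)$ size bounds, outputting a trivial no-instance when those bounds fail. Your remark on why only a \emph{partial} compression is obtained (unbounded bit-length of the real-valued annotations) matches the paper's reasoning as well.
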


\paragraph{Solving the instance $(H_A, \prec, k, \mu)$ of $\Pi'_2$.}  It is straightforward to handle the isolated cycle vertices using the annotated information, so we focus on $H' = (V(H_A) \setminus V_C, E(H_A))$. From \Cref{lem:fvs-pi2-compression-bound}, $|V(H')| = \Oh(k^2)$, and $|E(H')| = \Oh(k^3)$. First, we guess a subset $F$ of size at most $|V(H')| - 1$ such that the graph $(V(H'), F)$ is simple and acyclic. Note that there are $\binom{\Oh(k^3)}{\Oh(k^2)} = k^{\Oh(k^2)}$ such guesses. Now fix one such guess $F$.

Let $K_r = E_r \setminus F$ and $K_o = E_o \setminus F$ denote the remaining reducible and original edges respectively. Let $W$ denote the set of shrinkable vertices, initialized to an empty set.  Consider an original edge $e = pq \in K_o$. If $d(e) < 1$, then we add both endpoints to the set $W$. Otherwise, if $d(e) \ge 1$, we \emph{guess} an endpoint, and add it to the set $W$.

Now, consider a reducible edge $e = pq \in K_r$ with $p \preceq q$. First, we \emph{guess} $g_r(e)$ from the set $\{\sfstart, \sfend, \sfmax\}$. We also set $w_r(e)$, $c_r(e)$ such that they are compatible with $g_r(e)$ as stipulated in the definition of $\Pi'_2$. Note that there are at most $\Oh(k^3)$ edges in $K_r \cup K_o$, and for each edge we need to guess from a constant number of possibilities. Thus, the total number of guesses is upper bounded by $2^{\Oh(k^3)}$. If $|W| + \sum_{e \in K_r} c_r(E) +  |V_C| > k$, then we discard this guess. 

Now, we write a linear program that consists of the appropriate linear constraints that is compatible with (i) the guesses for $(W, (F, K_r, K_o), g_r, c_r, w_r)$ that we have made, and (ii) the constraints imposed by the definition of $\Pi'_2$ b. Finally, we also add the following constraints for each $p \in V(H')$: (i) If $p \in W$, then $1 \ge r(p) \ge \alpha$, and (ii) If $p \not\in W$, then $r(p) = 1$.

Note that we have $|V(H')| + |E(H')| = \Oh(k^3)$ variables, and (i) at most two linear constraints for every edge in $K_r \cup K_o$, and (ii) at most two linear constraints for every $p \in V(H')$, and at most two constraints, which amounts to total $\Oh(k^3)$ constraints. Thus, using the algorithm of \Cref{prop:lpsolver}, we can find an optimal solution to the resulting linear program in time $(k^{3})^{\Oh(k^3)} \cdot \Oh(k^3) = 2^{\Oh(k^3 \log k)}$; or determine that the linear program is infeasible. Next, we add the cost for handling isolated cycle vertices to the cost of the LP. Finally, if in any of the $2^{\Oh(k^3)}$ guesses, if we find a solution (if one exists) of cost at most $\mu$, then such a solution can be translated for the original instance $\cI$ of $\Pi_2$ of the same cost using the approach from the proof of \Cref{lem:fvs-pi2-equivalence}, which implies that $\cI$ is a yes-instance. On the other hand, if each of the guesses that we do not discard, the linear program turns out to be infeasible, then we conclude that $\cI$ is a no-instance of $\Pi_2$. In fact, it is straightforward to see this algorithm in fact finds an \emph{optimal solution} for an instance of the \emph{optimization version} of the problem (if one exists). This leads to the following theorem.

\begin{theorem} \label{thm:fvs2-fpt}
	There exists an algorithm that takes as an input an instance $\cI = (P, k, \alpha, \mu)$ of \probAcyclicityMinDown, and in $2^{\Oh(k^3 \log k)} \cdot n^{\Oh(1)}$ time, either correctly concludes that it is a no-instance; or finds a solution of the smallest cost that is at most $\mu$.
\end{theorem}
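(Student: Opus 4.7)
The plan is to first invoke Theorem~\ref{thm:fvs-min-compression} to transform $\cI=(P,\alpha,k,\mu)$ into the equivalent compressed instance $\cI'=(H_A,\prec,k,\mu)$ of $\Pi'_2$ in polynomial time, where, by Lemma~\ref{lem:fvs-pi2-compression-bound}, $|V(H_A)|=\Oh(k^2)$ and $|E(H_A)|=\Oh(k^3)$ accounting for multiplicities. Lemma~\ref{lem:fvs-pi2-equivalence} guarantees that the optimum cost is preserved; hence it suffices to solve the optimization version of $\Pi'_2$ on $\cI'$ within the claimed time bound and translate back. Since each isolated cycle vertex $v(C)\in V_C$ contributes a fixed cost $\cost(v(C))$ and uses one ``budget unit'' from $k$, these vertices can be handled by simple bookkeeping; the real work concerns the subgraph $H'=(V(H_A)\setminus V_C,\,E(H_A))$.

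The second step is a branching phase that enumerates all the \emph{discrete} ingredients of a hypothetical solution $\sigma_{\Pi'_2}$. First, I would enumerate every edge subset $F\subseteq E(H')$ of size at most $|V(H')|-1$ such that $(V(H'),F)$ is a simple acyclic graph; there are at most $\binom{\Oh(k^3)}{\Oh(k^2)}=k^{\Oh(k^2)}$ such choices. Given $F$, set $K_r=E_r\setminus F$, $K_o=E_o\setminus F$, and for each $e\in K_o\cup K_r$ guess its ``removal type'': for an original edge $e=pq$ guess $f_o(e)\in\{\{p\},\{q\},\{p,q\}\}$ (a constant number of options per edge), and for a reducible edge $e$ guess $g_r(e)\in\{\sfstart,\sfend,\sfmax\}$ together with the associated $w_r(e)$ and $c_r(e)$ as forced by the definition of $\Pi'_2$. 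Since $|K_o\cup K_r|=\Oh(k^3)$ and each edge has $\Oh(1)$ options, this contributes at most $2^{\Oh(k^3)}$ branches. From these guesses the set $W$ of shrunken vertices is determined; any branch violating $|W|+\sum_{e\in K_r}c_r(e)+|V_C|\le k$ is discarded.

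The third step turns each surviving branch into a linear program over the continuous unknowns $\tilde r(p)$ for $p\in V(H')$ and $y_e(e)$ for $e\in E_r$. Concretely, enforce $\tilde r(p)=1$ for $p\notin W$ and $\alpha\le \tilde r(p)\le 1$ otherwise; for each $e=pq\in K_o$ add $\tilde r(p)+\tilde r(q)\le d(e)$; and for each $e\in K_r$ add the inequalities on $\tilde r$ at $w_r(e)$ and on $y_e(e)$ dictated by the chosen $g_r(e)$ case in the definition of $\Pi'_2$. The objective is to minimize
\[
\sum_{p\in V(H')}(1-\tilde r(p))\;+\;\sum_{e\in K_r} y_e(e)\;+\;\sum_{v(C)\in V_C}\cost(v(C)),
\]
the last term being a constant for the fixed branch. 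The LP has $\Oh(k^3)$ variables and $\Oh(k^3)$ constraints, so by Proposition~\ref{prop:lpsolver} an optimum (or a certificate of infeasibility) can be computed in time $(k^3)^{\Oh(k^3)}\cdot\Oh(k^3)=2^{\Oh(k^3\log k)}$ in the Real RAM model. Taking the minimum over all branches yields an optimum of $\Pi'_2$, and via Lemma~\ref{lem:fvs-pi2-equivalence} an optimum radius assignment for $\cI$; if no branch produces a feasible LP of cost $\le\mu$, we report a no-instance.

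The total running time is $k^{\Oh(k^2)}\cdot 2^{\Oh(k^3)}\cdot 2^{\Oh(k^3\log k)}\cdot n^{\Oh(1)}=2^{\Oh(k^3\log k)}\cdot n^{\Oh(1)}$, matching the stated bound. The main obstacle is ensuring that the LP produced in each branch is genuinely equivalent to that branch's optimization subproblem: one must verify, case by case in the definition of $\Pi'_2$, that the inequalities on $\tilde r$ and $y_e$ faithfully encode both the feasibility condition (an edge in $K_o\cup K_r$ is actually removed) and the correct cost contribution (including the ``hidden'' cost $y_e(e)$ attributable to degree-$2$ vertices that are implicitly shrunk along a reducible path). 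Once this encoding is in place, optimality over all branches follows because every feasible solution of $\Pi'_2$ must agree with exactly one of the discrete guesses, and the LP recovers its cheapest continuous completion.
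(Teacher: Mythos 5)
Your proposal is correct and follows the same route as the paper: compress to $\Pi'_2$ via Theorem~\ref{thm:fvs-min-compression}, handle isolated cycle vertices by bookkeeping, enumerate the acyclic edge set $F$ in $k^{\Oh(k^2)}$ ways and the constantly many removal types per edge in $2^{\Oh(k^3)}$ ways, then solve an LP with $\Oh(k^3)$ variables and constraints per branch using Proposition~\ref{prop:lpsolver}, translating back by Lemma~\ref{lem:fvs-pi2-equivalence}. The running time bookkeeping and the observation that the $y_e$ variables capture the hidden cost of implicitly shrunk degree-$2$ vertices both match the paper's argument.
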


\section{Subexponential FPT Algorithms}\label{sec:subexpFPT} 
Recall that an instance of \probEdgelessnessDown (resp.~\probAcyclicityDown/\probConnectivityFracDown) is denoted by $(P, k, \alpha)$, where $0 \le \alpha \le 1$. For \probEdgelessnessDown/{\sc Acyclicity}, the goal is to shrink \emph{at most} $k$ disks to radius $\alpha$, such that the resulting intersection graph is edgeless (resp.~acyclic). For \probConnectivityFracDown, the goal is to shrink \emph{at least} $k$ disks to radius $\alpha$ such that the resulting graph remains connected. In this section, we prove these problems are not only in \classFPT when parameterized by $k$, but also solvable by fixed-parameter algorithms whose time complexities are subexponential in $k$, if $\alpha$ is an absolute constant. For clarity, we make the dependence of the running time on $\alpha$ explicit.


\subsection{Clique-Grid Graphs and Nice Clique Tree Decompositions}

Towards the presentation of our algorithms, we introduce several notations, definitions and propositions.  Recall that for any set $P$ of points in the plane, $\cD(P, \unitvec)$ denotes the set of unit disks centered at each point of $P$, and $G(P, \unitvec)$ denotes the corresponding intersection graph. For the sake of simplicity and consistency with \cite{DBLP:journals/dcg/FominLPSZ19}, we suppose that disks are closed disks in this subsection. However, these results continue to hold for \emph{open disks}, which is required in the later subsections regarding \probEdgelessnessDown. 

Next, we present the definition of a clique-grid graph and its representation, and a known result concerning the computation of a representation for a unit disk graph.

\begin{definition}[{\bf Clique-Grid Graph}, Definition 3.1 in \cite{DBLP:journals/dcg/FominLPSZ19}] A graph $G$ is a {\em clique-grid graph} if there exists a function $f : V(G) \rightarrow [t]\times [t']$, for some $t,t' \in \mathbb{N}$, such that the following conditions are satisfied:
	\begin{enumerate}
		\item For all $(i, j) \in [t] \times [t']$, it holds that $f^{-1}(i, j)$ is a clique.
		\item For all $\{u,v\}\in E(G)$, it holds that if $f(u)=(i,j)$ and $f(v)=(i',j')$, then $|i-i'|\leq 2$ and $|j-j'|\leq 2$.
	\end{enumerate}
	Such a function $f$ is called a {\em representation} of $G$. A pair $(i,j)\in[t]\times[t']$ is called a {\em cell}.
\end{definition}

\begin{proposition}[Lemma 3.2 in \cite{DBLP:journals/dcg/FominLPSZ19}, Rephrased]\label{prop:represent} Let $P$ be a set of points on the plane. Then, a representation $f$ of the UDG $G(P, \unitvec)$ can be computed in polynomial time.
\end{proposition}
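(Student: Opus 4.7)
The plan is to construct the representation $f$ by overlaying a suitable axis-aligned grid on the plane and mapping each point to the cell in which it lies. Since two unit disks centered at $p,q \in P$ intersect if and only if $|pq| \le 2$, I want cells whose diameter is at most $2$ (to guarantee condition 1) and whose side length is large enough that any two points at distance at most $2$ lie in cells separated by at most $2$ in each coordinate (to guarantee condition 2). Choosing an axis-aligned grid with cell side length $s = \sqrt{2}$ is precisely calibrated for both purposes: the diameter of each closed cell is $\sqrt{s^2 + s^2} = 2$.

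Concretely, I would translate $P$ so that every point has positive coordinates, then, for each $p = (x,y) \in P$, define $f(p) = (\lceil x/\sqrt{2}\rceil, \lceil y/\sqrt{2}\rceil)$. Computing $f(p)$ takes $O(1)$ time per point on a Real RAM, so the overall running time is $O(n)$. If a bound $t, t' \le n$ is desired (rather than an $a$ priori bound depending on the coordinates), one can additionally compactify the grid by sorting the distinct $x$-coordinates and $y$-coordinates of the cells used, which costs an extra $O(n \log n)$ time.

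To verify condition 1, observe that any two points lying in the same cell $(i,j)$ are contained in an axis-aligned square of side $\sqrt{2}$ and are therefore at Euclidean distance at most $2$, so their unit disks intersect and they are adjacent in $G(P,\unitvec)$; hence $f^{-1}(i,j)$ is a clique. For condition 2, suppose $\{u,v\} \in E(G(P,\unitvec))$ with $f(u) = (i,j)$ and $f(v) = (i',j')$. Two points lying in cells whose first coordinates differ by $\Delta \ge 1$ are at horizontal distance at least $(\Delta - 1)\sqrt{2}$, and similarly in the vertical direction. Since $|uv| \le 2$, we have $(\,|i-i'|-1\,)\sqrt{2} \le 2$, i.e., $|i-i'| \le 1 + \sqrt{2} < 3$, so $|i-i'| \le 2$; the same argument gives $|j-j'| \le 2$.

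The only subtlety I anticipate is the choice of the constant $s = \sqrt{2}$: any strictly smaller $s$ would fail condition 2 (since $\lceil 2/s\rceil + 1$ could exceed $3$), while any strictly larger $s$ would fail condition 1. Thus $s = \sqrt{2}$ is the unique value (up to the choice of closed versus open cells, handled by an arbitrary fixed convention on boundary points) that simultaneously makes both constraints tight, and once this calibration is made the rest of the proof is routine.
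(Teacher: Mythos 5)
Your construction is correct and is essentially the argument behind the cited result: the paper itself does not reprove this proposition but imports Lemma 3.2 of Fomin et al., whose proof is exactly this grid-based mapping of each center to its $\sqrt{2}\times\sqrt{2}$ cell, with the same two checks (cell diameter at most $2$ gives cliques, and $(|i-i'|-1)\sqrt{2}\le 2$ gives the index bound). One small inaccuracy in your closing aside: $s=\sqrt{2}$ is not the unique workable side length, since any $s\in(1,\sqrt{2}]$ satisfies both conditions (condition 2 only needs $2s>2$); this does not affect the validity of your proof.
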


We proceed to present the definition of a notion of a tree decomposition that comes in handy in the context of clique-grid graphs.

\begin{definition}[{\bf $\ell$-NCTD}, Definition 3.5 in \cite{DBLP:journals/dcg/FominLPSZ19}]
	A tree decomposition ${\cal T} = (T, \beta)$ of a clique-grid graph $G$ with representation $f$ is a {\em nice $\ell$-clique tree decomposition} ({\em $\ell$-NCTD}) if for the root $r$ of $T$, it holds that $\beta(r) = \emptyset$, and for each node $v \in V(T)$, it holds that
	\begin{enumerate}
		\item There exist at most $\ell$ cells, $(i_1,j_1),\ldots,(i_\ell,j_\ell)$, such that $\beta(v)=\bigcup_{p=1}^\ell f^{-1}(i_p,j_p)$, and
		\item The node $v$ is of one of the following types:
		\begin{itemize}
			\item {\bf Leaf}: $v$ is a leaf in $T$ and $\beta(v)=\emptyset$.
			\item {\bf Forget}: $v$ has exactly one child $u$, and there exists a cell $(i, j) \in [t]\times[t']$ such that $f^{-1}(i, j) \subseteq \beta(u)$ and $\beta(v) = \beta(u) \setminus f^{-1}(i, j)$.
			\item {\bf Introduce}: $v$ has exactly one child $u$, and there exists a cell $(i, j) \in [t]\times[t']$ such that $f^{-1}(i, j) \subseteq \beta(u)$ and $\beta(v)\setminus f^{-1}(i, j) = \beta(u)$.
			\item {\bf Join}: $v$ has exactly two children, $u$ and $w$, and $\beta(v) = \beta(u) = \beta(w)$.
		\end{itemize}
	\end{enumerate}
\end{definition}

We have the following immediate observation.

\begin{observation}\label{obs:nctdToTd}
	Let ${\cal T}$ be a nice $\ell$-clique tree decomposition of a clique-grid graph $G$ with representation $f$. Let $c$ be the maximum size of a clique in $G$. Then, the width of $\cal T$ is bounded from above by $c\cdot\ell$.
\end{observation}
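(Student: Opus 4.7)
The plan is to simply unfold the definitions and bound the size of each bag directly. By definition of a nice $\ell$-clique tree decomposition, for every node $v \in V(T)$ there exist at most $\ell$ cells $(i_1,j_1),\ldots,(i_\ell,j_\ell) \in [t]\times[t']$ such that $\beta(v) = \bigcup_{p=1}^{\ell} f^{-1}(i_p, j_p)$. Hence, the size of the bag $\beta(v)$ is bounded by $\sum_{p=1}^{\ell} |f^{-1}(i_p,j_p)|$.

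Now I would invoke the first condition in the definition of a clique-grid graph: each preimage $f^{-1}(i,j)$ induces a clique in $G$. Since $c$ is the maximum clique size in $G$, we have $|f^{-1}(i_p,j_p)| \le c$ for each $p \in [\ell]$. Combining these two observations yields $|\beta(v)| \le \ell \cdot c$ for every $v \in V(T)$.

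Finally, the width of ${\cal T}$ is $\max_{v \in V(T)} |\beta(v)| - 1 \le \ell \cdot c - 1 \le c \cdot \ell$, which is the desired bound. There is no real obstacle here; the observation is immediate once the definitions are chained together, and its purpose in the sequel is simply to transport treewidth-based dynamic programming bounds from bags counted in cells to bags counted in vertices.
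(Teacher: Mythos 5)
Your proof is correct and is exactly the argument the paper has in mind: the paper states this as an immediate observation without proof, and your chaining of the two definitions (each bag is a union of at most $\ell$ cells, each cell is a clique of size at most $c$, so each bag has size at most $c\cdot\ell$ and the width is at most $c\cdot\ell-1\le c\cdot\ell$) is the intended justification. Nothing is missing.
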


The main tool that we require in the context of the definitions presented so far is provided by the following proposition.

\begin{proposition}[Corollary 4.10 in \cite{DBLP:journals/dcg/FominLPSZ19}, Rephrased]\label{prop:nctd} Let $P$ be a set of points on the plane. Given  a representation $f$ of $H=G(P, \unitvec)$ and $\ell \in \mathbb{N}$, in time $2^{O(\ell)}\cdot n^{O(1)}$, one can either correctly conclude that $H$ contains a $\gamma\ell \times \gamma\ell$ grid as a minor, or compute a $5\ell$-NCTD of $H$, where $\gamma = \frac{1}{100\cdot 599^3}$. In the former case, one can also compute the corresponding minor model.
\end{proposition}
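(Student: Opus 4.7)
The plan is to reduce the problem to a coarser quotient graph $Q$ whose vertices are the \emph{non-empty cells} of the representation $f$, with two cells $(i,j)$, $(i',j')$ adjacent iff $(i,j) \neq (i',j')$ and $\max(|i-i'|,|j-j'|) \le 2$. Since the only edges of $H$ leaving a cell go to cells within $\ell_\infty$-distance $2$, this quotient faithfully captures the ``cross-cell'' topology of $H$, while each individual cell is already a clique. The maximum degree of $Q$ is at most $24$, and, more importantly, $Q$ has a natural plane drawing (place cell $(i,j)$ at integer point $(i,j)$ and draw edges as straight segments) whose crossing structure is purely local: only edges whose endpoints lie within a $5\times 5$ window can interact.

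First I would compute $f$ via \Cref{prop:represent} and build $Q$ in polynomial time. Then I would invoke a planar-style excluded-grid theorem for this bounded-degree ``almost-planar'' graph: one shows, via the Robertson--Seymour--Thomas-type bound for (near-)planar graphs with the explicit constants tracked in \cite{DBLP:journals/dcg/FominLPSZ19}, that in time $2^{O(\ell)} n^{O(1)}$ one can either produce a $\gamma\ell \times \gamma\ell$ grid minor of $Q$ or a tree decomposition of $Q$ of width at most $5\ell$, where $\gamma = \frac{1}{100\cdot 599^3}$. The constant $\gamma$ absorbs the standard losses: the linear grid-minor/treewidth gap for planar graphs, plus the constant-factor blow-up caused by turning the ``king-move distance $2$'' adjacency into an honest planar drawing.

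In the grid-minor case, I would lift the minor model $\varphi_Q$ of $\Gamma_{\gamma\ell \times \gamma\ell}$ in $Q$ to a minor model $\varphi_H$ of the same grid in $H$: simply set $\varphi_H(v) = \bigcup_{c \in \varphi_Q(v)} f^{-1}(c)$. Each $\varphi_H(v)$ is connected because (i) each $f^{-1}(c)$ is a clique in $H$, and (ii) whenever two cells $c,c'$ are adjacent in $Q$, the representation guarantees at least the \emph{possibility} of edges between $f^{-1}(c)$ and $f^{-1}(c')$; one selects witness edges when processing $\varphi_Q$ to ensure this. The $\varphi_H(v)$ are pairwise disjoint by construction, and branch-set adjacency in $H$ is inherited from $Q$.

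In the bounded-treewidth case, I would build an $\ell$-NCTD of $H$ from the width-$(5\ell)$ tree decomposition of $Q$ by replacing each bag $B \subseteq V(Q)$ by $\bigcup_{c \in B} f^{-1}(c)$; this is a valid tree decomposition of $H$ whose bags are unions of at most $5\ell$ cells. To make it \emph{nice} in the sense of the definition, I would apply the standard transformation, but with cells as the atomic unit: leaves have empty bag; forget and introduce nodes add or drop an entire cell $f^{-1}(i,j)$ rather than a single vertex; and join nodes are inserted in the usual manner. All these operations preserve the $5\ell$-cell bound and increase the decomposition size only polynomially. The main obstacle is calibrating the constants so that the two alternatives ($\gamma\ell \times \gamma\ell$ grid vs.\ $5\ell$-NCTD) match simultaneously; this forces one to use the sharpest known planar excluded-grid bound and to track carefully the loss incurred by passing through the cell quotient, which is exactly where the explicit constant $\gamma = 1/(100\cdot 599^3)$ comes from.
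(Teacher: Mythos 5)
Your high-level plan---pass to a cell-quotient graph embedded on the integer lattice, invoke an excluded-grid dichotomy for a graph with only local crossings, and lift back to $H$---is in the right spirit, but the quotient you choose breaks the grid-minor direction of the argument. You define $Q$ by \emph{potential} adjacency (non-empty cells with $\max(|i-i'|,|j-j'|)\le 2$), not by \emph{actual} adjacency (cells between which $H$ really contains a witness edge). For the tree-decomposition alternative this over-approximation is harmless: a tree decomposition of a supergraph of the cell-adjacency graph expands, cell by cell, to a valid decomposition of $H$. But for the grid-minor alternative it is fatal. $Q$ may contain a large grid minor whose branch sets, or whose branch-set adjacencies, are realized only through phantom $Q$-edges, i.e., pairs of cells within king-move distance $2$ that carry no edge of $H$ between them. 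Then your proposed lift $\varphi_H(v)=\bigcup_{c\in\varphi_Q(v)}f^{-1}(c)$ yields sets that need not be connected in $H$, and the required adjacencies between branch sets may simply be absent; ``selecting witness edges'' cannot help when there are no witnesses to select. The fix is to run the dichotomy on the actual-adjacency quotient $Q^*\subseteq Q$, which inherits the lattice drawing and its bounded local-crossing structure and which, unlike $Q$, \emph{is} a minor of $H$; only then does the lift you describe go through.

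A second, smaller issue is that the step ``invoke a planar-style excluded-grid theorem for this bounded-degree almost-planar graph'' is precisely where the content lives, and you supply no argument for it. $Q$ (and $Q^*$) is not planar---it is a subgraph of a radius-$2$ king graph on $\mathbb{Z}^2$ and genuinely has crossings---so the Robertson--Seymour--Thomas planar bound does not apply directly; you need an excluded-grid theorem for graphs with bounded local crossing number, or a reduction to a truly planar auxiliary graph, and this is exactly the technically demanding step carried out in~\cite{DBLP:journals/dcg/FominLPSZ19}, from which the constant $\gamma=1/(100\cdot 599^3)$ actually emerges. (Note also that the present paper does not prove this proposition; it imports Corollary~4.10 of~\cite{DBLP:journals/dcg/FominLPSZ19} by citation, so the right comparison is against that paper's argument, not against any text here.)
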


However, for \probConnectivityFracDown, we need an extension of the above-mentioned tool. Towards its statement, we introduce the following definitions. 

\begin{definition}[{\bf Well-Behaved Minor Model of a Grid}]\label{def:wellbehaved}
	Let $a,b\in\mathbb{N}$. Let $P$ be a finite set of points on the plane.  Suppose that $G(P, \unitvec)$ contains $H_{a\times b}$ as a minor, and let $\varphi$ be a corresponding minor model. Let $d_{a\times b}$ be the natural plane embedding of $\Gamma_{a\times b}$. Then, $\varphi$ is {\em well-behaved} if it satisfies the two following conditions.
	\begin{enumerate}
		\item\label{item:wellbehaved1} Let $C_1,C_2$ be two cycles in $\Gamma_{a\times b}$ such that  $C_2$ is embedded by $d_{a\times b}$ in the interior and boundary $C_1$. Then, the area enclosed by $\bigcup_{v\in V(C_2)}\varphi(v)$ is contained in the area enclosed by $\bigcup_{v\in V(C_1)}\varphi(v)$.
		\item\label{item:wellbehaved2} Let $C_1,C_2$ be two cycles in $\Gamma_{a\times b}$ such that $V(C_1)\cap V(C_2)=\emptyset$ and there do not exist vertices $u\in V(C_1),v\in V(\Gamma_{a\times b}),w\in V(C_2)$ such that $\{u,v\},\{v,w\}\in E(H_{a\times b})$. Then, the areas enclosed by $\bigcup_{v\in V(C_1)}\varphi(v)$ and $\bigcup_{v\in V(C_2)}\varphi(v)$ are disjoint.
	\end{enumerate}
\end{definition}

\begin{definition}[{\bf Enclosed Area}]
	Let $P$ be a finite set of points on the plane. Let ${\cal D}'\subseteq {\cal D}(P, \unitvec)$. Let $\cal C$ be the set of maximally connected regions of the plane once every point contained in at least one disk in ${\cal D}'$ is removed. Let $\widehat{\cal C}$ be the set of bounded regions in $\cal C$ (which includes all regions in $\cal C$ except for one). Then, the {\em area enclosed by ${\cal D}'$} is the set of every point $p$ contained in either at least one disk in ${\cal D}'$ or one region in $\widehat{\cal C}$.
\end{definition}

It can be verified that the proof of Corollary 4.10 in \cite{DBLP:journals/dcg/FominLPSZ19} yields the following extension of the~corollary.

\begin{proposition}[Extension of Corollary 4.10 in \cite{DBLP:journals/dcg/FominLPSZ19}]\label{prop:nctdExtended}
	Let $P$ be a set of points on the plane. Given  a representation $f$ of $H = G(P, \unitvec)$ and $\ell \in \mathbb{N}$, in time $2^{O(\ell)}\cdot n^{O(1)}$, one can compute either  a minor model of a $\gamma\ell \times \gamma\ell$ grid in $H$ that is well-behaved, or a $5\ell$-NCTD of $H$, where $\gamma = \frac{1}{100\cdot 599^3}$.
\end{proposition}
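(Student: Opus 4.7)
The plan is to revisit the proof of Corollary 4.10 in \cite{DBLP:journals/dcg/FominLPSZ19} and observe that, with essentially no extra work, the grid minor it produces is well-behaved; if not, a local modification makes it so. Recall that the original argument proceeds by attempting to build a $5\ell$-NCTD in a bottom-up fashion over the cells of the representation $f$: it sweeps through the $t\times t'$ grid of cells, maintaining a partial decomposition whose bags are unions of pre-images of cells. If the sweep succeeds, we directly obtain the $5\ell$-NCTD. If at some stage the bag required exceeds $5\ell$ cells, then the obstruction is used to exhibit an $\gamma\ell \times \gamma\ell$ grid minor whose branch sets $\varphi(v)$ are obtained by contracting paths that traverse specific blocks of adjacent cells of $f$. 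The first step of my plan is to take this construction as a black box and make explicit the geometric locations of the branch sets.

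Second, I would associate to each branch set $\varphi(v)$ a ``territory'' $T_v$ in the plane, defined as the set of points within Euclidean distance at most $1$ of some disk in $\varphi(v)$ (equivalently, the union of the disks of $\cD(P,\unitvec)$ restricted to $\varphi(v)$, closed-off slightly). Because of how the branch sets are harvested from the cell grid in the original construction, the territories $T_{v_{i,j}}$ corresponding to the grid-minor vertices can be chosen to be arranged in $\Gamma_{a\times b}$ order on the plane: $T_{v_{i,j}}$ lies within a bounded neighborhood of the cell $(i,j)$ of $f$, and $T_{v_{i,j}}\cap T_{v_{i',j'}} \ne \emptyset$ exactly when $|i-i'|+|j-j'|\le 1$ (i.e., precisely when there is an edge in $\Gamma_{a\times b}$). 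If the output of the original algorithm does not already have this property, one can locally re-route branch sets through adjacent cells (still in polynomial time and without exceeding the promised $\gamma\ell\times\gamma\ell$ grid) to ensure it does.

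Third, I would verify that the two conditions of Definition \ref{def:wellbehaved} now follow from planar topology. For Condition \ref{item:wellbehaved1}, suppose $C_2$ lies in the interior (via $d_{a\times b}$) of $C_1$. Then the planar arrangement of the territories $T_v$ inherited from $f$ implies that every $T_v$ for $v\in V(C_2)$ lies in the bounded component of $\mathbb{R}^2\setminus \bigcup_{u\in V(C_1)} T_u$ that contains the grid-interior of $C_1$. Consequently, the area enclosed by $\bigcup_{v\in V(C_2)}\varphi(v)$ is contained in that bounded component, hence in the area enclosed by $\bigcup_{u\in V(C_1)}\varphi(u)$. For Condition \ref{item:wellbehaved2}, if $C_1$ and $C_2$ are vertex-disjoint and have no common neighbor in $\Gamma_{a\times b}$, then the cells occupied by their branch sets are separated by at least two cell-steps in $f$, so their territories are disjoint and neither is contained in the area enclosed by the other; the enclosed areas are therefore disjoint.

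The main obstacle is the bookkeeping of the second step: ensuring that the implicit branch-set extraction from the proof of Corollary 4.10 can be made to place branch sets in the ``expected'' cells of the representation with no topological crossings. This requires carefully tracking how the sweep produces the grid minor (in particular, ruling out the possibility that a branch set ``wraps around'' other branch sets in the plane). I expect this to be handled by a small modification: whenever the construction would produce a branch set whose territory violates the desired planar arrangement, one can swap it with an equivalent branch set drawn from the neighboring cells that caused the sweep to fail, preserving both the grid minor and the $2^{O(\ell)}\cdot n^{O(1)}$ running time guaranteed by Proposition \ref{prop:nctd}.
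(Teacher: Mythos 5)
Your overall strategy---re-open the construction behind Corollary 4.10 of \cite{DBLP:journals/dcg/FominLPSZ19} and check that the grid minor it outputs can be taken to be well-behaved---is in the same spirit as the paper, which in fact offers no further argument beyond asserting that ``it can be verified that the proof of Corollary 4.10 yields the extension.'' The problem is that your proposal does not actually carry out that verification: the entire content of \Cref{prop:nctdExtended} beyond \Cref{prop:nctd} is precisely the claim that the branch sets can be arranged in the plane so that nested grid cycles give nested enclosed areas (Condition~\ref{item:wellbehaved1} of \Cref{def:wellbehaved}) and far-apart grid cycles give disjoint enclosed areas (Condition~\ref{item:wellbehaved2}), i.e.\ that no branch set ``wraps around'' others. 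You explicitly defer this (``I expect this to be handled by a small modification''), and the proposed repair---swapping or re-routing branch sets whenever the planar arrangement is violated---is not accompanied by any argument that such a swap exists, that it preserves the minor model and the $\gamma\ell\times\gamma\ell$ dimensions, or that it stays within the $2^{\Oh(\ell)}\cdot n^{\Oh(1)}$ budget. As written, the plan assumes the conclusion at its decisive step.

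There is also a concrete flaw in the intermediate claim you lean on: you want the territories $T_{v_{i,j}}$ (disks of the branch set inflated by $1$) to satisfy $T_{v_{i,j}}\cap T_{v_{i',j'}}\neq\emptyset$ \emph{exactly} when $|i-i'|+|j-j'|\le 1$. A minor model only forbids branch sets from sharing vertices; it tolerates edges of $H$ (and hence intersecting disks) between branch sets of non-adjacent grid vertices, and in a unit disk graph whose cells have unit scale, branch sets lying in nearby but non-adjacent cells will in general have overlapping territories after inflating by $1$. So this exact-intersection pattern cannot be enforced in general, and your deduction of Conditions~\ref{item:wellbehaved1} and~\ref{item:wellbehaved2} from it does not go through as stated. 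What one actually needs is weaker---containment and disjointness of the areas enclosed by the unions of disks of the relevant cycles, which should be extracted from the way the construction of \Cref{prop:nctd} harvests branch sets from contiguous blocks of cells---but that is exactly the geometric bookkeeping your proposal leaves open.
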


\subsection{Subexponential FPT Algorithm for \probEdgelessnessDown/{\sc Acyclicity}} \label{subsec:fpt-sqrt}

\noindent{\bf Elimination of Large Cliques.} Similarly to other subexponential-time fixed-parameter algorithms for problems on unit disk graphs (see, e.g., \cite{DBLP:journals/jacm/FominLS18,DBLP:journals/dcg/FominLPSZ19,DBLP:journals/jocg/ZehaviFLP021}), we also start by the elimination of large cliques in the input graph $H$. Further, similarly to the analysis of {\sc Vertex Cover} \iffvs(resp., {\sc Feedback Vertex Set})\fi, we can observe that every solution will have to select all but at most one \iffvs(resp., two)\fi of the vertices of any clique. However, unlike the other algorithms, we would not like to branch on which vertices these are, since, having this information at hand, we would not like to shrink (assign a radius of $\alpha$ instead of $1$) all but the corresponding at most two disks---this will require us to later deal with a disk graph (with two types of disks) instead of a unit disk graph. Fortunately, we observe that if we deal with a large enough clique (depending on $\alpha$), then we can simply say No. Specifically, we prove the following lemma.

\begin{lemma}\label{lem:vcfvsLargeDegree}
	Let $(P,\alpha,k)$ be an instance of \probEdgelessnessDown \iffvs(resp., \probAcyclicityDown)\fi. Suppose the $G(P, \unitvec)$ contains a vertex $v$ of degree larger than $(\frac{2}{\alpha}+1)^2-1$ (resp., $6(\frac{2}{\alpha}+1)^2-1$). Then, $(P,\alpha,k)$ is a no-instance.
\end{lemma}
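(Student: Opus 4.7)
The plan is to use area packing plus a pigeonhole at $v$ to exhibit, in each case, a \emph{robust} sub-configuration of $N[v]$, i.e., a set of points whose pairwise distances are all strictly less than $2\alpha$. Any solution assigns each point a radius in $\{\alpha,1\}$, so on every robust pair $p,q$ we have $r(p)+r(q)\ge 2\alpha>|pq|$, which (working with open disks, as per the earlier remark) forces $D_o(p,r(p))\cap D_o(q,r(q))\ne\emptyset$. Hence the edge $pq$ survives in $G(P,r)$ regardless of which solution we pick, so a robust pair kills edgelessness and a robust triangle kills acyclicity; in either case $(P,\alpha,k)$ must be a no-instance.

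For \probEdgelessnessDown, I would first observe $N[v]\subseteq D(v,2)$. Assume for contradiction that all pairs in $N[v]$ are at distance at least $2\alpha$; then the open disks of radius $\alpha$ around the points of $N[v]$ are pairwise disjoint and contained in $D(v,2+\alpha)$, and comparing areas yields $|N[v]|\le(\tfrac{2}{\alpha}+1)^2$. This contradicts $\deg(v)>(\tfrac{2}{\alpha}+1)^2-1$, so some pair $p,q\in N[v]$ witnesses $|pq|<2\alpha$, giving the robust edge.

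For \probAcyclicityDown I need a robust triple. I would fix a maximal subset $M'\subseteq N[v]$ whose pairwise distances are at least $2\alpha$; the previous packing gives $|M'|\le(\tfrac{2}{\alpha}+1)^2=:M$, while maximality forces every point of $N[v]\setminus M'$ to lie strictly within $2\alpha$ of some point of $M'$. Since $\deg(v)>6M-1$, we have $|N[v]\setminus M'|>5M$, so a pigeonhole over $M'$ produces some $u\in M'$ together with six distinct points $p_1,\dots,p_6\in N[v]\setminus M'$, each at distance $<2\alpha$ from $u$. Sorting these six points by angle at $u$ and applying a second pigeonhole on the six arcs summing to $360^\circ$, two of them, say $p_i,p_j$, satisfy $\angle p_iup_j\le 60^\circ$. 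Combined with $|up_i|,|up_j|<2\alpha$, the law of cosines and the elementary bound $d_1^2+d_2^2-d_1d_2\le\max(d_1,d_2)^2$ give $|p_ip_j|<2\alpha$, so $\{u,p_i,p_j\}$ is a robust triangle.

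I expect the main obstacle to be the acyclicity step: converting ``many close pairs anchored at $u$'' into ``three mutually close points.'' The $60^\circ$ angular pigeonhole is exactly what accomplishes this and is responsible for the extra factor $6=\lceil 360^\circ/60^\circ\rceil$ in the degree bound. Everything else is either routine area accounting or a trivial consequence of the open-disk convention that allows a strict inequality $|pq|<2\alpha$ to translate immediately into a surviving edge.
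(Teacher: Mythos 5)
Your proof is correct and follows essentially the same route as the paper: the independence case is the identical packing bound (Observation~\ref{obs:is}) applied to the shrunken $\alpha$-disks inside a radius-$(2+\alpha)$ disk around $v$, and the acyclicity case likewise combines a maximal $2\alpha$-separated subset of $N[v]$ with the fact that at most five further points can cluster around any one of its members without forcing a triangle of pairwise-intersecting $\alpha$-disks, yielding the same factor of $6$. The only difference is presentational: where the paper cites Proposition~\ref{prop:maxDeg} and argues contrapositively, you prove that step inline via the $60^\circ$ angular pigeonhole and the law of cosines, exhibiting the unavoidable triangle directly.
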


Towards the proof of this lemma, we have the following results.

\begin{observation}\label{obs:is}
	A disk of radius $r$ can contain at most $(r/t)^2$ pairwise non-intersecting disks of radius $t$.
\end{observation}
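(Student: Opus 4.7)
The plan is to prove this by a direct area-packing argument. Let $D$ be a closed disk of radius $r$, and let $D_1, D_2, \ldots, D_N$ be pairwise non-intersecting disks of radius $t$, each contained in $D$. I interpret ``non-intersecting'' as having pairwise disjoint interiors; any overlap along boundary arcs has zero two-dimensional Lebesgue measure and hence does not affect the argument. Since $D_i \subseteq D$ for every $i$, and the interiors are pairwise disjoint, we have
\[
\sum_{i=1}^{N} \mathrm{area}(D_i) \;\le\; \mathrm{area}(D),
\]
that is, $N \pi t^2 \le \pi r^2$. Dividing both sides by $\pi t^2$ yields $N \le (r/t)^2$, which is the desired bound.

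There is essentially no obstacle in this proof; the entire statement rests on the elementary fact that the measure of a union of sets with pairwise disjoint interiors equals the sum of their measures. I would note in passing that the bound is not tight for large $N$ (optimal hexagonal packing loses a further factor of $\pi/(2\sqrt 3)$), but the weaker estimate $(r/t)^2$ is precisely what the subsequent applications in the paper (bounding the sizes of cliques in unit disk graphs and the degrees of vertices, as in \Cref{lem:vcfvsLargeDegree}) require, so no refinement is needed here.
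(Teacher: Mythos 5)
Your proof is correct and uses essentially the same argument as the paper: the disks of radius $t$ are contained in the disk of radius $r$ and are pairwise non-intersecting, so comparing total areas gives $N\pi t^2 \le \pi r^2$, i.e.\ $N \le (r/t)^2$. Your extra remarks on disjoint interiors and the non-tightness of the bound are fine but not needed.
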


\begin{proof}
	The area of a disk of radius $r$ is $\pi r^2$, and the area of a disk of radius $t$ is $\pi t^2$. Thus, a disk of radius $r$ can contain at most $(\pi r^2)/(\pi t^2)=(r/t)^2$ pairwise non-intersecting disks of radius~$t$.
\end{proof}

\begin{proposition}[\cite{DBLP:journals/dcg/AtminasZ18}]\label{prop:maxDeg}
	Let $\cal D$ be a set of unit disks. Let $D\in{\cal D}$. Suppose that the disks in ${\cal D}\setminus\{D\}$ are pairwise non-intersecting, and that each disk in ${\cal D}\setminus\{D\}$ intersects $D$. Then, $|{\cal D}\setminus\{D\}|\leq 5$.
\end{proposition}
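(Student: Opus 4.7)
The plan is a direct planar packing/pigeonhole argument on the centers of the disks in ${\cal D}$. Translate so that the center $c$ of $D$ lies at the origin. If $D_i \in {\cal D}\setminus\{D\}$ has center $c_i$, the hypothesis that $D_i$ intersects $D$ gives $\|c_i\| \le 2$, while the hypothesis that $D_i$ and $D_j$ are non-intersecting unit disks gives $\|c_i - c_j\| > 2$ for all $i \neq j$ (with the obvious minor adjustment under the open-disk convention: $\|c_i\| < 2$ and $\|c_i - c_j\| \ge 2$). So the problem reduces to bounding the number of points one can place in the closed disk of radius $2$ around the origin that are pairwise more than $2$ apart.

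Suppose for contradiction that $m := |{\cal D}\setminus\{D\}| \ge 6$, and list the centers in cyclic angular order $c_1, c_2, \ldots, c_m$ around the origin. The $m$ consecutive angular gaps they determine sum to $2\pi$, so by pigeonhole some pair of consecutive centers $c_i, c_j$ subtends an angle $\theta \le 2\pi/m \le \pi/3$ at the origin. Setting $r_i = \|c_i\|$ and $r_j = \|c_j\|$ and applying the law of cosines together with $\cos\theta \ge 1/2$, I obtain
\[
\|c_i - c_j\|^2 \;=\; r_i^2 + r_j^2 - 2 r_i r_j \cos\theta \;\le\; r_i^2 + r_j^2 - r_i r_j.
\]
Assuming without loss of generality $r_i \le r_j$, the right-hand side equals $r_j^2 - r_i(r_j - r_i) \le r_j^2 \le 4$. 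Hence $\|c_i - c_j\| \le 2$, contradicting the required pairwise separation.

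The only step that needs a moment of care is the equality case, which I view as the ``main obstacle'' even though it is mild: the chain of inequalities above is tight only when $r_i = r_j = 2$ and $\theta = \pi/3$, in which case $\|c_i - c_j\| = 2$ exactly. Under the closed-disk convention, this already contradicts the strict inequality $\|c_i - c_j\| > 2$ needed for non-intersection. Under the open-disk convention, one has the strict bound $r_j < 2$ from the outset, which propagates through to give $\|c_i - c_j\| < 2$, again contradicting the (nonstrict) required separation. Either way one concludes $m \le 5$, as claimed.
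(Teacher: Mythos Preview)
Your argument is correct: the angular pigeonhole combined with the law-of-cosines estimate is the standard way to see this, and your handling of the open/closed boundary cases is clean. Note, however, that the paper does not actually prove this proposition; it is quoted as a known fact from \cite{DBLP:journals/dcg/AtminasZ18} and used as a black box, so there is no in-paper proof to compare against. Your self-contained derivation is a welcome addition in that sense.
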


We are now ready to prove \Cref{lem:vcfvsLargeDegree}.

\begin{proof}[Proof of \Cref{lem:vcfvsLargeDegree}]
	Notice that each disk in $N_G[v]$ is at distance at most $2$ from $v$. Let $A$ denote the disks in $N_G[v]$ after they are shrunk (from having radii $1$ to having radii $\alpha$). So, all disks in $A$ are contained in a single disk of radius $2+\alpha$ (whose center is the center of $v$). Let $I$ be a maximum-sized set of pairwise non-intersecting disks in $A$. By \Cref{obs:is}, $|I|\leq(\frac{2+\alpha}{\alpha})^2=(\frac{2}{\alpha}+1)^2$.  So, if $|N_G(v)|>(\frac{2}{\alpha}+1)^2-1$ (meaning that $|N_G[v]|>(\frac{2}{\alpha}+1)^2$), we have a no-instance of \probEdgelessnessDown.
	\iffvs
	Now, notice that each disk in $A$ intersects at least one disk in $I$ (by the maximality of $I$). Moreover, by \Cref{prop:maxDeg}, for each disk in $I$, there can be at most $5$ pairwise non-intersecting disks in $A$ that intersect it. However, $A$ cannot contain $3$ pairwise intersecting disks (which correspond to a triangle). Thus, we conclude that, unless we have a no-instance of \probAcyclicityDown, it must hold that $|A|\leq 5|I|+|I|\leq 6(\frac{2}{\alpha}+1)^2$, and hence $|N_G(v)|\leq 6(\frac{2}{\alpha}+1)^2-1$. 
	\fi
	This completes the proof.
\end{proof}

In particular, we have the following corollary of \Cref{lem:vcfvsLargeDegree}.

\begin{corollary}\label{lem:vcfvsLargeClique}
	Let $(P,\alpha,k)$ be an instance of \probEdgelessnessDown \iffvs(resp., \probAcyclicityDown)\fi. Suppose the UDG $G(P, \unitvec)$ contains a clique of size larger than $(\frac{2}{\alpha}+1)^2$ \iffvs(resp., $6(\frac{2}{\alpha}+1)^2$)\fi. Then, $(P,\alpha,k)$ is a no-instance.
\end{corollary}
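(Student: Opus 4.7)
The plan is to obtain this corollary as an immediate consequence of \Cref{lem:vcfvsLargeDegree}. The key observation I would rely on is that in any clique $K$ of size $s$ in the UDG $G(P, \unitvec)$, every vertex $v \in K$ is adjacent to the other $s-1$ vertices of $K$, and hence $|N_{G(P,\unitvec)}(v)| \geq s-1$.

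For the \probEdgelessnessDown case, suppose $G(P,\unitvec)$ contains a clique of size strictly larger than $(\frac{2}{\alpha}+1)^2$. Pick any vertex $v$ of such a clique. Then the degree of $v$ in $G(P,\unitvec)$ exceeds $(\frac{2}{\alpha}+1)^2 - 1$, and \Cref{lem:vcfvsLargeDegree} applied to $v$ immediately yields that $(P,\alpha,k)$ is a no-instance. The \probAcyclicityDown case proceeds identically, with the threshold $6(\frac{2}{\alpha}+1)^2$ in place of $(\frac{2}{\alpha}+1)^2$, again invoking the relevant part of \Cref{lem:vcfvsLargeDegree}.

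There is no real obstacle here: the corollary requires no new geometric or combinatorial input beyond the degree bound already established, since the clique itself provides, around each of its vertices, a neighborhood witnessing precisely the hypothesis of \Cref{lem:vcfvsLargeDegree}. The only minor subtlety to keep in mind when writing the formal proof is the off-by-one between ``clique of size $s$'' and ``vertex of degree $s-1$'', which is why the clique-size thresholds are one larger than the corresponding degree thresholds.
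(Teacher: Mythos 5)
Your proposal is correct and matches the paper exactly: the paper also derives this statement as an immediate corollary of \Cref{lem:vcfvsLargeDegree}, using precisely the observation that any vertex of a clique of size $s$ has degree at least $s-1$, which pushes it above the degree threshold of the lemma. Nothing further is needed.
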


\medskip
\noindent{\bf Dealing with a Large Grid as a Minor.} Similarly to most Bidimensional problems, when we have a large grid as a minor, we can directly solve the problem. In case of \probEdgelessnessDown \iffvs and \probAcyclicityDown\fi, we can simply say No:

\begin{observation}\label{obs:vcfvsGrid}
	Let $(P,\alpha,k)$ be an instance of \probEdgelessnessDown \iffvs or \probAcyclicityDown\fi. Suppose that $G(P, \unitvec)$ contains a $\lceil 2\sqrt{k+1}\rceil\times \lceil 2\sqrt{k+1}\rceil$  grid as a minor. Then, $(P,\alpha,k)$ is a no-instance.
\end{observation}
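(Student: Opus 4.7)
The plan is a bidimensionality-style argument. First, observe that if $(P, \alpha, k)$ is a yes-instance with shrunken set $S \subseteq P$, then any edge $uv$ of $H \coloneqq G(P, \unitvec)$ with both endpoints outside $S$ survives in the shrunken graph $G(P, r)$ (both disks retain radius $1$), so $H - S$ is a subgraph of $G(P, r)$. If $G(P, r)$ is edgeless (resp.\ acyclic) then so is $H - S$, so $S$ is a vertex cover (resp.\ feedback vertex set) of $H$ of size at most $k$. It therefore suffices to prove that $\mathrm{vc}(H) > k$ and $\mathrm{fvs}(H) > k$ under the grid-minor hypothesis.

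I will show both by combining minor-monotonicity with the bidimensionality of these parameters on grids. Given the minor model $\varphi$ of $\Gamma_{a \times a}$ in $H$ and any vertex cover / feedback vertex set $S$ of $H$, its projection $T \coloneqq \{v \in V(\Gamma_{a\times a}) : \varphi(v) \cap S \ne \emptyset\}$ has $|T| \le |S|$ and is itself a vertex cover / feedback vertex set of $\Gamma_{a \times a}$. For the vertex cover claim, each edge $uv$ of $\Gamma_{a \times a}$ lifts to an edge of $H$ between $\varphi(u)$ and $\varphi(v)$, which must be covered by $S$, forcing $u \in T$ or $v \in T$. For the feedback vertex set claim, any cycle of $\Gamma_{a\times a}$ avoiding $T$ lifts through the (connected) branch sets $\varphi(v)$ and the grid-edge-witnessing edges of $H$ to a closed walk in $H - S$, which contains a cycle of $H - S$, contradicting that $S$ is a feedback vertex set of $H$. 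Thus $\mathrm{vc}(H) \ge \mathrm{vc}(\Gamma_{a \times a})$ and $\mathrm{fvs}(H) \ge \mathrm{fvs}(\Gamma_{a \times a})$.

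Finally, I estimate the grid parameters for $a = \lceil 2\sqrt{k+1}\rceil$. For vertex cover, pairing consecutive vertices along each row gives a matching of size $\lfloor a^2/2 \rfloor$ in $\Gamma_{a \times a}$, and the vertex cover number is at least the matching number, so $\mathrm{vc}(\Gamma_{a \times a}) \ge \lfloor a^2/2 \rfloor \ge 2(k+1) > k$, contradicting $\mathrm{vc}(H) \le k$. For feedback vertex set, a simple edge-count gives $\mathrm{fvs}(\Gamma_{a \times a}) \ge \lceil (a-1)^2/3 \rceil$: the grid has $2a(a-1)$ edges, a forest on $a^2 - k'$ vertices has at most $a^2-k'-1$ edges, and each removed vertex kills at most four edges, so $4k' \ge (a-1)^2 + k'$. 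Substituting $a \ge 2\sqrt{k+1}$, the inequality $(a-1)^2/3 > k$ reduces after algebra to $(k-3)^2 > 0$, which holds for $k \ne 3$; the single borderline case $k = 3$, $a = 4$ is handled by the auxiliary observation that $\Gamma_{4 \times 4}$ decomposes into four vertex-disjoint unit $4$-cycles (the four axis-aligned unit squares at positions $(1,1), (1,3), (3,1), (3,3)$), forcing $\mathrm{fvs}(\Gamma_{4 \times 4}) \ge 4 > 3$. The main subtlety is justifying the minor-monotonicity of $\mathrm{fvs}$ via the lifting argument through branch sets, and verifying the tight arithmetic at $k=3$; both are short to check once set up.
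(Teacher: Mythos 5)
Your proof is correct, but it takes a more roundabout route than the paper's. The paper argues directly on the packing side: the $\lceil 2\sqrt{k+1}\rceil\times\lceil 2\sqrt{k+1}\rceil$ grid contains $k+1$ vertex-disjoint $4$-cycles (hence also a matching of size $k+1$), these lift through the disjoint branch sets to $k+1$ vertex-disjoint cycles (resp.\ a matching of size $k+1$) in $G(P,\unitvec)$, and since an edge or cycle all of whose vertices keep radius $1$ survives shrinking, each of these disjoint obstructions needs its own shrunken disk---so $k$ shrinks cannot suffice. Your argument contains the same core observation (the surviving subgraph $H-S$), but you then pass to covering parameters: $S$ must be a vertex cover / feedback vertex set of $H$, you prove minor-monotonicity of $\mathrm{vc}$ and $\mathrm{fvs}$ by projecting onto branch sets, and you lower-bound these parameters on the grid by a matching and by edge counting. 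This works, but the bound $\mathrm{fvs}(\Gamma_{a\times a})\ge\lceil (a-1)^2/3\rceil$ is only barely enough and forces the separate treatment of $k=3$, a corner case the paper's packing of $\lfloor a/2\rfloor^2\ge k+1$ disjoint unit squares never encounters; also, strictly speaking, pairing consecutive vertices along each row yields a matching of size $a\lfloor a/2\rfloor$ rather than $\lfloor a^2/2\rfloor$, though either count exceeds $k$ here, so nothing breaks. What your route buys is a reusable intermediate statement (minor-monotone lower bounds on $\mathrm{vc}(H)$ and $\mathrm{fvs}(H)$); what the paper's route buys is brevity and no borderline arithmetic.
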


\begin{proof}
	Observe that, because $G(P, \unitvec)$  contains a $\lceil 2\sqrt{k+1}\rceil\times \lceil 2\sqrt{k+1}\rceil$  grid as a minor, then $G(P, \unitvec)$  contains \iffvs $k+1$ vertex-disjoint cycles (and, hence, also a matching of size $k+1$) \else a matching of size $k+1$\fi. Since we cannot shrink at most $k$ disks (irrespective of the value of $\alpha$) and thereby eliminate \iffvs all of these cycles (or edges of the matching)\else all the edges of the matching\fi, we conclude that $(P,\alpha,k)$ is a no-instance of \probEdgelessnessDown \iffvs or \probAcyclicityDown\fi.
\end{proof}

\medskip
\noindent{\bf Dynamic Programming for the Case of Bounded Treewidth.} Having eliminated both large cliques and large grids, it remains to deal with the case where the treewidth of the graph is small, specifically, bounded by a polynomial in $1/\alpha$ and $\sqrt{k}$. For this purpose, we have the following lemma, which can be proved by a straightforward dynamic programming algorithm over tree decompositions (similar to the one for \textsc{Vertex Cover}); see e.g., \cite{DBLP:books/sp/CyganFKLMPPS15}. Thus, we omit the details.
\begin{lemma}\label{lem:vcDP}
	Let $(P,\alpha,k)$ be an instance of \probEdgelessnessDown. Suppose that we are given a tree decomposition of $G(P, \unitvec)$ of width $w$. Then, $(P,\alpha,k)$ is solvable in time $\Oh(2^w\cdot n)$.
\end{lemma}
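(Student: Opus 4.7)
The plan is a standard bag-subset dynamic program over the tree decomposition, the only twist being that the edge-feasibility constraints are not uniform: they depend on the Euclidean distance between the endpoints and therefore come in three distinct flavours that the DP must respect. First I would preprocess $G = G(P,\unitvec)$ by classifying every edge $pq \in E(G)$ according to $|pq|$, using the three regimes highlighted in \Cref{obs:nu-properties}: if $|pq| < 2\alpha$ the edge is \emph{unremovable} (both endpoints shrunk to $\alpha$ still leaves the open disks intersecting), if $2\alpha \le |pq| < 1+\alpha$ the edge is \emph{type-2} (it is removed only if \emph{both} endpoints are shrunk), and if $1+\alpha \le |pq| \le 2$ the edge is \emph{type-1} (it is removed iff at least one endpoint is shrunk). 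If any unremovable edge exists we immediately return No. Then I would convert the given tree decomposition of width $w$ into a \emph{nice} tree decomposition of the same width with $\Oh(n)$ nodes (leaf/introduce/forget/join) in polynomial time in the usual way.

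The DP stores, for each node $t$ of the nice tree decomposition with bag $\beta(t)$ and each subset $X \subseteq \beta(t)$, the value $A[t,X]$ defined as the minimum number of shrunk vertices in a partial assignment $S \subseteq V_t$ (where $V_t$ is the union of the bags in the subtree rooted at $t$) such that $S \cap \beta(t) = X$ and every edge of $G[V_t]$ with both endpoints in $V_t$ is removed by $S$ (type-1 edges need at least one endpoint in $S$, type-2 edges need both). By the standard property of tree decompositions, an edge $pq$ is examined at (at least) one node containing both endpoints in its bag, so the feasibility checks are always local. The transitions are routine: for a leaf (empty bag) we set $A[t,\emptyset]=0$; for an introduce node adding $v$ to the bag of the child $t'$, we set $A[t,X] = A[t',X \setminus \{v\}] + \mathbf{1}[v \in X]$, but only after verifying that for every neighbour $u$ of $v$ in $\beta(t')$ the labels $\mathbf{1}[v \in X]$ and $\mathbf{1}[u \in X]$ satisfy the type-1/type-2 constraint of the edge $uv$ (otherwise the entry is $+\infty$); for a forget node removing $v$ we set $A[t,X] = \min(A[t',X],\,A[t',X \cup \{v\}])$; for a join node with children $t',t''$ (both with bag equal to $\beta(t)$) we set $A[t,X] = A[t',X]+A[t'',X]-|X|$ to avoid double-counting shrunk bag vertices. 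The answer is $\min_{X \subseteq \emptyset} A[r,X] = A[r,\emptyset]$ at the root $r$, which we compare with $k$.

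Correctness follows by induction on the subtree: standard separator arguments of tree decompositions ensure that if $S \subseteq V_t$ is any partial assignment whose restriction to $\beta(t)$ equals $X$ and which removes all edges of $G[V_t]$, then $S$ can be decomposed consistently at each node type, and conversely the DP values can be realised by concrete assignments. The only nonstandard ingredient is the per-edge check, which is genuinely local because whenever the endpoints of an edge both lie in $V_t$, they both lie in some bag along the path between them in $T$, and by the nice form they are already present together at the moment one of them is introduced; thus enforcing the type-1/type-2 constraints on introduction suffices. Finally, the running time: each of the $\Oh(n)$ nodes of the nice tree decomposition stores $2^{|\beta(t)|} \le 2^{w+1}$ entries, and leaf/introduce/forget/join transitions each take $\Oh(2^w)$ time (introductions also do $\Oh(w)$ edge checks, absorbed into the $\Oh(2^w)$ factor). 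This yields total time $\Oh(2^w \cdot n)$, as claimed. No step is particularly delicate; the main thing to be careful about is the per-edge feasibility bookkeeping, which must faithfully implement the three regimes determined by $\alpha$.
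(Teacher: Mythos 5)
Your proposal is correct and is essentially the proof the paper has in mind: the paper omits the details, describing it only as a straightforward dynamic program over the tree decomposition in the style of \textsc{Vertex Cover}, and your bag-subset DP with the per-edge classification into unremovable/type-2/type-1 edges (matching \Cref{defn:nu} and \Cref{obs:nu-properties}) is exactly that, with the standard introduce/forget/join transitions and separator-based correctness argument. The only cosmetic slack is the $\Oh(w)$ edge-check overhead at introduce nodes, which strictly gives $\Oh(2^w\cdot w\cdot n)$ rather than $\Oh(2^w\cdot n)$, but this is the same level of bookkeeping the paper itself glosses over in stating the bound.
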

\iffvs
\begin{lemma}\label{lem:fvsDP}
	Let $(P,\alpha,k)$ be an instance of \probAcyclicityDown. Suppose that we are given a tree decomposition of $G(P, \unitvec)$ of width $w$. Then, $(P,\alpha,k)$ is solvable in time $w^{\Oh(w)}\cdot n$.
\end{lemma}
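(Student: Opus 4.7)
The plan is to adapt the classical tree-decomposition DP for \textsc{Feedback Vertex Set} (see, e.g., \cite{DBLP:books/sp/CyganFKLMPPS15}) to the disk-shrinking setting. First, we would convert the given tree decomposition into a nice tree decomposition with introduce-vertex, introduce-edge, forget, and join nodes of the same width $w$ in polynomial time. The key observation is that $G(P,r)$ is a spanning subgraph of $H = G(P,\unitvec)$ whenever $\alpha \le r(p) \le 1$ for all $p$, since shrinking only removes edges; hence, the given tree decomposition of $H$ remains a valid tree decomposition of $G(P,r)$, and whether an edge $uv \in E(H)$ survives in $G(P,r)$ is determined entirely by the shrinking decisions at $u$ and $v$, namely by a simple test between $|uv|$ and $r(u)+r(v)$.

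For each node $t$ of the nice tree decomposition with bag $\beta(t)$, we would maintain a table of Boolean entries indexed by triples $(S, \pi, j)$, where $S \subseteq \beta(t)$ encodes the subset of bag vertices that have been shrunken, $\pi$ is a partition of $\beta(t)$ whose blocks are exactly the connected components (restricted to $\beta(t)$) of the partial graph built by processing the subtree rooted at $t$, and $j \in \{0,1,\ldots,k\}$ counts the total number of shrunken disks used so far in this subtree. Transitions follow the standard FVS template: an \emph{introduce-vertex} node for $u$ branches on whether $u$ is shrunken (incrementing $j$ accordingly) and adds $u$ as a singleton block to $\pi$; an \emph{introduce-edge} node for $uv$ first determines from $S \cap \{u,v\}$ whether $uv$ survives in $G(P,r)$, and if so merges the blocks of $u$ and $v$ in $\pi$ while rejecting the state if they already coincide (as this would close a cycle); a \emph{forget} node drops the forgotten vertex from both $S$ and $\pi$, collapsing over the choices for that vertex; and a \emph{join} node combines the two child tables over matching $S$ values by merging the two partitions and rejecting any merge that induces a cycle, with $j$ adjusted to avoid double-counting bag vertices. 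The instance is a yes-instance iff the root bag admits some state with $j \le k$.

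The number of partitions of a $w$-element set is the Bell number $B_w = w^{\Oh(w)}$, so the number of states per bag is at most $2^{w+1} \cdot B_{w+1} \cdot (k+1) = w^{\Oh(w)} \cdot n$ (using $k \le n$), and the transitions at each node can be processed in time polynomial in the number of states, yielding a total running time of $w^{\Oh(w)} \cdot n$. The only point that departs meaningfully from the classical FVS DP is the geometric edge-presence test at introduce-edge nodes; however, this reduces to the aforementioned numerical comparison between $|uv|$ and the sum of radii determined by $S$, and poses no substantive difficulty. Correctness will follow by a standard inductive argument over the tree decomposition, mirroring that for \textsc{Feedback Vertex Set}, with the only novel ingredient being this edge-presence check.
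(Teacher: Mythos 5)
Your proposal correctly fills in the dynamic program that the paper dispatches with a one-line reference to the standard treewidth algorithm for \textsc{Feedback Vertex Set}, and your key adaptations---keeping \emph{all} bag vertices (not only the non-shrunk ones) in the connectivity partition, since shrunk disks remain vertices of $G(P,r)$, and replacing the FVS-membership test at an introduce-edge node by the numerical survival test between $r(u)+r(v)$ and $|uv|$---are exactly the intended ones. A small bookkeeping fix: to land cleanly on the stated $w^{\Oh(w)}\cdot n$ bound you should store, for each pair $(S,\pi)$, the \emph{minimum} budget $j$ achieving that profile rather than a Boolean per triple $(S,\pi,j)$, which otherwise inflates the per-bag table by an extra factor of $k$.
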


\begin{proof}
	The lemma can be proved by the design of a straightforward dynamic algorithm over tree decompositions (similar to the one for {\sc Feedback Vertex Set}; see, e.g., \cite{DBLP:books/sp/CyganFKLMPPS15}. Thus, the details are omitted.
\end{proof}
\fi

\noindent{\bf Conclusion of the Algorithm.} We are now ready to present our algorithm. To this end, let $(P,\alpha,k)$ be an instance of \probEdgelessnessDown \iffvs(resp., \probAcyclicityDown)\fi. Then, the algorithm performs the following operations:
\begin{enumerate}
	\item\label{step:vc1} Use the (polynomial-time) algorithm in \cite{clark1990unit} to compute the maximum size of a clique in $H \coloneqq G(P, \unitvec)$. If this size is larger than $(\frac{2}{\alpha}+1)^2$\iffvs (resp., $6(\frac{2}{\alpha}+1)^2$) \fi, then answer No.
	\item\label{step:vc2} Use the algorithm in  \Cref{prop:represent} to compute a representation of $H$. Then, use the algorithm in  \Cref{prop:nctd} with $\ell=\frac{2}{\gamma}\lceil\sqrt{k+1}\rceil$ to either correctly conclude that $H$ contains a $\gamma\ell \times \gamma\ell$ grid as a minor, or compute a $5\ell$-NCTD of $H$, ${\cal T}$, where $\gamma = \frac{1}{100\cdot 599^3}$. 
	\item\label{step:vc3} If the algorithm in \Cref{prop:nctd} concluded that $H$ contains a $\gamma\ell \times \gamma\ell$ grid as a minor, then answer No.
	\item\label{step:vc4} Otherwise, use the algorithm in \Cref{lem:vcDP} \iffvs(resp., \Cref{lem:fvsDP}) \fi to solve the problem using $\cal T$.
\end{enumerate}

The correctness of the algorithm directly follows from \Cref{lem:vcfvsLargeClique} and \Cref{obs:vcfvsGrid}. For the time complexity, first notice that Steps \ref{step:vc1}  and \ref{step:vc3} take polynomial time, and Step \ref{step:vc2} takes time $2^{\Oh(\sqrt{k})}\cdot n^{\Oh(1)}$. So, next suppose that we have $\cal T$ at hand. 
Now, suppose that we have reached Step \ref{step:vc4}. By \Cref{obs:nctdToTd} (and due to Step~\ref{step:vc1}), the width of $\cal T$ is at most $\Oh((\frac{1}{\alpha})^2\sqrt{k})$. Hence, by \Cref{lem:vcDP} \iffvs(resp., \Cref{lem:fvsDP})\fi, the algorithm runs in time $2^{\Oh((\frac{1}{\alpha})^2\sqrt{k})}\cdot n^{\Oh(1)}$ \iffvs (resp.~$(\frac{1}{\alpha}k)^{\Oh((\frac{1}{\alpha})^2\sqrt{k})}\cdot n^{\Oh(1)}$)\fi. Thus, we conclude the correctness of the following theorems.

\subexponentialvck*
\iffvs
\subexponentialfvsk*
\fi

\subsection{Subexponential FPT Algorithm for \probConnectivityFracDown} \label{subsec:fpt-subexp-conn}

We first set up some additional notation and convention that will be used throughout the section. Consider an instance $(P,\alpha,k)$ of \probConnectivityFracDown. Let $H \coloneqq G(P, \unitvec)$ denote the corresponding \emph{original} unit disk graph. For any subset $Q \subseteq P$, we use ${\cal S}_{Q}$ to denote the set of \emph{shrunken disks} corresponding to $Q$, i.e., the set of radius-$\alpha$ disks centered at each point in $Q$. We use the term {\em solution} to refer to $S\subseteq P$ such that assigning $x_p=\alpha$ for all $p\in S$ and $x_p=1$ for all $p\in P\setminus S$ yields a connected disk graph. Further, we use the term {\em optimal solution} to refer to a solution of maximum size.  When no confusion arises, we use $S$ also to refer to the set of disks of radius $1$ corresponding to its points. Recall that, for any set of disks $\cal D'$ (not necessarily of the same radii), we use $G(\cal D')$ to denote the corresponding intersection graph. Throughout the section, we slightly abuse the notation and equate a point in $P$ with its corresponding vertex in an intersection graph. 

\noindent{\bf Elimination of Large Cliques.} Unlike the cases  of \probEdgelessnessDown \iffvs or \probAcyclicityDown\fi, here the existence of a large clique does not mean that we have a no-instance (or a yes-instance). Fortunately, unlike these cases, here we can eliminate(rather than shrink) some of the vertices of a large clique.  Towards the elimination, we start with the the following simple lemma.

\begin{lemma}\label{lem:connElim2}
	Let $(P,\alpha,k)$ be an instance of \probConnectivityFracDown. Fix a $v\in P$ and $U \subseteq N_G[v]$. 
	
	\begin{enumerate}
		\item There exists some $Y_U \subseteq U$ of size at most $9$, such that each vertex in $U$ is adjacent in $G(\cD(U, \unitvec))$ to at least one vertex of $Y_U$.
		\item There exists some $Z_U \subseteq U$ of size at most $(\frac{2}{\alpha}+1)^2$, such that each vertex in $U$ is adjacent in $G(\cS_U)$ to at least one vertex of $Z_U$.
	\end{enumerate}
\end{lemma}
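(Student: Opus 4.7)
The plan is to prove both parts by the same two-step scheme: first localize all points of $U$ inside a small disk around $v$, and then invoke a packing argument tuned to the radius of the disks that actually matter (unit disks in part 1, shrunken disks of radius $\alpha$ in part 2). The key enabling fact is that $U\subseteq N_G[v]$ implies $|uv|\le 2$ for every $u\in U$, so every center of $U$ lies in $D(v,2)$. I will interpret ``$u$ is adjacent to some vertex of $Y_U$ (resp.\ $Z_U$)'' in the closed-neighborhood sense, i.e.\ $u\in Y_U$ is itself considered dominated by $Y_U$ (its disk intersects itself), which is the only way the statement can hold when the relevant intersection graph has isolated vertices.

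For part (1), I would cover the axis-aligned square of side $4$ centered at $v$ (which contains $D(v,2)$, and hence all of $U$) by a $3\times 3$ grid of closed squares of side $4/3$ each. The diameter of each such cell is $4\sqrt{2}/3<2$, so any two centers lying in the same cell are at distance strictly less than $2$, and therefore their unit disks intersect. Picking one representative from each non-empty cell yields $Y_U\subseteq U$ of size at most $9$ such that every $u\in U$ shares a cell with some representative, and hence is adjacent in $G(\cD(U,\unitvec))$ to that representative.

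For part (2), a fixed grid of cells of diameter $\le 2\alpha$ would give a slightly weaker bound; instead I would use a greedy packing argument. Choose $Z_U\subseteq U$ to be any inclusion-maximal subset in which every two chosen centers are at distance strictly greater than $2\alpha$. By maximality, for every $u\in U\setminus Z_U$ there is some $z\in Z_U$ with $|uz|\le 2\alpha$, so the radius-$\alpha$ disks at $u$ and $z$ intersect, i.e.\ $u$ is adjacent to $z$ in $G(\cS_U)$. To bound $|Z_U|$, observe that the radius-$\alpha$ disks centered at points of $Z_U$ are pairwise disjoint (their centers being more than $2\alpha$ apart) and are all contained in $D(v,2+\alpha)$ (since each center lies in $D(v,2)$). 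Applying the area comparison of \Cref{obs:is} with $r=2+\alpha$ and $t=\alpha$ yields $|Z_U|\le\bigl((2+\alpha)/\alpha\bigr)^2=(2/\alpha+1)^2$, as required.

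I do not anticipate a serious obstacle: both parts reduce to standard packing estimates once the localization inside $D(v,2)$ is noted. The only subtleties are bookkeeping ones, namely ensuring we pick representatives from $U$ itself (rather than arbitrary points of the cells) so that $Y_U,Z_U\subseteq U$, and handling the closed-disk convention adopted for \probConnectivityFracDown, which makes the inequalities ``$\le 2$'' and ``$\le 2\alpha$'' the right conditions for disk intersection.
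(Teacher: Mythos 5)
Your proof is correct. For part (2) you do essentially what the paper does: all shrunken disks lie in $D(v,2+\alpha)$, a maximal subset with pairwise center distances greater than $2\alpha$ gives pairwise disjoint $\alpha$-disks, and the area bound of \Cref{obs:is} caps its size by $(\frac{2}{\alpha}+1)^2$, with maximality giving domination. For part (1) you deviate: the paper again uses the packing route (all unit disks of $\cD(U,\unitvec)$ are contained in $D(v,3)$, so a maximal pairwise non-intersecting subfamily has at most $(3/1)^2=9$ members, and by maximality every disk of $U$ meets one of them), whereas you cover the side-$4$ square around $v$ by a $3\times 3$ grid of cells of diameter $4\sqrt{2}/3<2$ and take one representative of $U$ per non-empty cell. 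Both yield the bound $9$; your covering argument has the minor advantages of being constructive without appealing to maximality and of producing strictly sub-$2$ distances, so it is insensitive to the open/closed disk convention, while the paper's packing argument is uniform across both parts and scales directly to the radius-$\alpha$ case. Your remark that the adjacency claim must be read in the closed-neighborhood (domination) sense for vertices of $Y_U$ and $Z_U$ themselves is a legitimate clarification that the paper leaves implicit, and it matches how the lemma is actually used later.
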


\begin{proof}
	Notice that each point in $N_G[v]$ is at distance at most $2$ from $v$. So, all disks in $\cD(U, \unitvec)$ are contained in a single disk of radius $3$ whose center is $v$. Since, from \Cref{obs:is}, the maximum-size of a subset of pairwise non-intersecting unit disks in $\cD(U, \unitvec)$ is at most $9$, the first part follows.
	
	The proof for the second part of the lemma is analogous to the first part, except that the arguments involve the set of shrunken disks $\cS_U$ of radius $\alpha$. Here, all the disks in $\cS_U$ are contained in a single disk of radius $2 + \alpha$ whose center is $v$. Then, \Cref{obs:is} implies that the maximum size of a subset of pairwise non-intersecting disks of radius $\alpha$ in $\cS_U$ is at most $(\frac{2+\alpha}{\alpha})^2 = (\frac{2}{\alpha}+1)^2$.
\end{proof}

Intuitively, the following \Cref{lem:connElim1} asserts that given a high-degree vertex, a substantial portion of a solution is contained in its neighborhood.
However, notice that even if we know which vertices of the neighborhood of some high-degree vertex to shrink, we cannot simply eliminate all of them, since we must ensure that they will stay connected to the rest of disks. Fortunately, with the help of the \Cref{lem:connElim2}, we will be able to argue that, for this purpose, it suffices to keep only some constantly many representatives from the shrunk set (and eliminate the rest).

\begin{lemma}\label{lem:connElim1}
	Let $(P,\alpha,k)$ be an instance of \probConnectivityFracDown and $v \in P$. Then, every optimal solution of $(P,\alpha,k)$ contains all except for at most $9(\frac{4}{\alpha}+1)^2+9$ disks from $N_{H}[v]$.
\end{lemma}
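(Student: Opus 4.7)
The plan is to argue by contradiction: assume $S$ is an optimal solution but $U\coloneqq N_H[v]\setminus S$ has $|U|>9(\tfrac{4}{\alpha}+1)^2+9$, and exhibit some $u\in U$ for which $S\cup\{u\}$ remains a valid solution. Since $|S\cup\{u\}|>|S|$, this will contradict optimality. Call a vertex $u\in U$ \emph{non-shrinkable} (relative to $S$) if $S\cup\{u\}$ is not a solution; optimality forces every $u\in U$ to be non-shrinkable, and for each such $u$ the graph $G_{S\cup\{u\}}$ must be disconnected while $G_S$ is connected. Since shrinking only $u$ affects only edges incident to $u$, there must exist a ``witness'' vertex $w(u)\in V(H)$ such that the edge $uw(u)$ is present in $G_S$ but absent in $G_{S\cup\{u\}}$ and $w(u)$ lies in a connected component of $G_{S\cup\{u\}}$ not containing $u$.

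I would first decompose $U$ via \Cref{lem:connElim2}(1): there exists $Y\subseteq U$ with $|Y|\le 9$ such that every vertex in $U\setminus Y$ lies within Euclidean distance $2$ of some $y\in Y$. Under our contradiction hypothesis $|U\setminus Y|>9(\tfrac{4}{\alpha}+1)^2$, so by pigeonhole some $y\in Y$ carries a ``cluster'' $W_y\subseteq U\setminus Y$ of size $>(\tfrac{4}{\alpha}+1)^2$, each of whose members lies at distance at most $2$ from $y$. For every $u\in W_y$ the corresponding witness $w(u)$ lies within distance $2$ of $u$ (since $uw(u)$ is an edge of $G_S$) and hence within distance $4$ of $y$, so the shrunken disk $D(w(u),\alpha)$ lies inside $D(y,4+\alpha)$.

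The crux is a packing argument on these witnesses. I will show that, after choosing the witnesses $w(u)$ canonically, the shrunken disks $\{D(w(u),\alpha):u\in W_y\}$ may be taken to be pairwise non-intersecting; \Cref{obs:is} applied with $r=4+\alpha$ and $t=\alpha$ then bounds their count by $((4+\alpha)/\alpha)^2=(\tfrac{4}{\alpha}+1)^2<|W_y|$, a contradiction. Non-intersection will be forced because if two witnesses $w(u)$ and $w(u')$ were close, say at distance at most $2\alpha$, then the short path $u\text{--}w(u)\text{--}w(u')\text{--}u'$ (using whichever of the unit-versus-shrunken thresholds applies to each hop) would connect the component of $w(u)$ in $G_{S\cup\{u\}}$ to $u$ itself via $u'$, contradicting that $w(u)$ is separated from $u$ in $G_{S\cup\{u\}}$. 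Together with the overhead $|Y|\le 9$ coming from the dominating set, this yields the claimed bound $9(\tfrac{4}{\alpha}+1)^2+9$.

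The most delicate step will be the packing argument, where ``non-shrinkability'' has to be translated into a crisp geometric separation condition on the witnesses. The subtlety is two-fold: edges break at different thresholds depending on whether the witness is itself in $S$ (threshold $2\alpha$) or in $P\setminus S$ (threshold $1+\alpha$); and the component structure of $G_{S\cup\{u\}}$ depends on $u$, so the same point may play different component-theoretic roles for different choices of $u,u'\in W_y$. Carrying out a case analysis across the four combinations (witnesses shrunken or not; the reroute going through $u'$ or through $w(u')$) so that in every case a close pair of witnesses produces an honest alternative path in some $G_{S\cup\{u\}}$ — thereby forcing the required non-intersection — will be the main technical effort of the proof.
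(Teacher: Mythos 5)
Your skeleton is the paper's: the size-$9$ net from \Cref{lem:connElim2}(1), an averaging step to isolate a large cluster near a single $y\in Y$, witnesses forced by maximality of the solution, and a packing of the witnesses' $\alpha$-disks inside a disk of radius $4+\alpha$ via \Cref{obs:is}, giving exactly $9(\frac{4}{\alpha}+1)^2+9$. The divergence is in how the packing is justified, and there your sketch has a genuine gap. Your rerouting produces, in $G_{S\cup\{u\}}$, a path from $w(u)$ through $w(u')$ to $u'$ (the edge $w(u')u'$ survives because it is not incident to $u$, and $w(u)w(u')$ exists because all radii are at least $\alpha$). But this only places $w(u)$ in the component of $u'$, not of $u$. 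After $u$ is shrunk, its surviving edges require distance at most $1+\alpha$ (or $2\alpha$), so the edges $uy$ and $uu'$ may be gone; $u'$ and $y$ can lie in a component of $G_{S\cup\{u\}}$ that does not contain $u$, and then no contradiction with ``$w(u)$ is separated from $u$'' follows. Worse, with your witness definition (any former neighbour of $u$ lying in another component of $G_{S\cup\{u\}}$) the disjointness you intend to prove is simply false in some configurations: if all $G_S$-neighbours of $u$ and of $u'$ are at distance more than $1+\alpha$, shrinking either vertex isolates it, and then one and the same vertex (for instance $y$, which is a cut neighbour of both) is a legitimate witness for both, so $D(w(u),\alpha)$ and $D(w(u'),\alpha)$ coincide and the count collapses. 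No ``canonical'' choice among such witnesses can rescue this, and your four-way case analysis (witness in $S$ or not, reroute via $u'$ or via $w(u')$) does not touch the actual obstruction, which is whether $u$'s own component after shrinking still reaches $y$.

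For comparison, the paper sidesteps this by asserting from optimality a strictly stronger, ``private'' witness: a $w_u$ whose disk meets $D(u,1)$ but meets no $D(u',1)$ with $u'\in U^\star\setminus\{u\}$, and it likewise derives pairwise non-intersection of the $D(w_u,\alpha)$'s from optimality (the intuition being that a cut-off piece that can reach an unshrunken $u'$ is rescued by it). So to complete your proof you must either establish such private witnesses, or handle separately the vertices $u$ whose component in $G_{S\cup\{u\}}$ loses $y$ (e.g.\ by a separate packing argument for them), before the rerouting argument can be applied to the rest. As it stands, the crucial disjointness step is not just unproven but false under your witness selection, so the proposal does not yet constitute a proof.
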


\begin{proof}
	Consider some optimal solution $S$. Let $U = N_{G}[v] \setminus S$ that are \emph{not} shrunk by $S$. By \Cref{lem:connElim2} part 1, there exists some $Y = Y_U \subseteq U$ of size at most $9$, such that each vertex in $U$ is adjacent to at least one vertex of $Y$ in $G(U, \unitvec)$. Let $U^\star \subseteq U \setminus Y$ be the maximum-size set of the form $N_H(y) \setminus Y$, over all $y \in Y$. It follows that, $|U^\star|\geq \frac{|U|-9}{9} \geq \frac{|U|}{9}-1$. 
	
	Due to the optimality of $S$, for every $u\in U^\star$, there must exist $w_u\in P$ satisfying the following.
	\begin{enumerate}
		\item If $w_u \not\in S$, then the unit disk $D(w_u, 1)$ (A) intersects the unit disk $D(u, 1)$, and (B) does not intersect any other unit disk $D(u', 1)$ where $u' \in U^\star$ and $u' \neq u$. In this case, observe that $D(w_u, \alpha)$ also does not intersect any other unit disk, since it is contained inside $D(w_u, 1)$. 
		\\or
		\item If $w_u \in S$, then the resulting shrunken disk, $D(w_u, \alpha)$ (A) intersects the unit disk $D(u, 1)$, and (B) does not intersect any other unit disk $D(u', 1)$ where $u' \in U^\star$ and $u' \neq u$.
	\end{enumerate}
	In either case, $D(w_u, 1)$ intersects $D(u, 1)$ and $D(w_u, \alpha)$ does not intersect any unit disk centered at a point of $U^\star$ other than $u$. This implies that the disks $D(w_u, \alpha)$ are distinct for each $u \in U^\star$. Let $W=\{D(w_u, \alpha): u\in U^\star\}$. From the optimality of $S$, we further obtain that the disks in $W$ are pairwise non-intersecting. However, all of the disks of $W$ are contained in a single disk of radius $4+\alpha$ (whose center is $v$). Thus, from \Cref{obs:is}, $|W|\leq (\frac{4+\alpha}{\alpha})^2=(\frac{4}{\alpha}+1)^2$. Since $|U^\star|=|W|$, this implies that $\frac{|U|}{9}-1\leq|U^\star|\leq (\frac{4}{\alpha}+1)^2$. So, $|U|\leq 9(\frac{4}{\alpha}+1)^2+9$.
\end{proof}

We now state an immediate corollary of \Cref{lem:connElim1}.

\begin{corollary}\label{cor:elimLarge}
	Let $(P,\alpha,k)$ be an instance of \probConnectivityFracDown. Let $v\in P$ be a vertex of degree at least $9(\frac{4}{\alpha}+1)^2+9+k$ in $H$. Then, $(P,\alpha,k)$ is a yes-instance.
\end{corollary}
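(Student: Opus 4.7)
The plan is to derive the corollary as an immediate consequence of Lemma~\ref{lem:connElim1}. My first step will be to note that an optimal solution to $(P,\alpha,k)$ exists in the first place: since the input unit disk graph $H = G(P,\unitvec)$ is connected (this is built into the problem's promise), the empty set is a feasible solution of size $0$, so the family of solutions is non-empty and therefore admits a maximum-size element $S^\star$. This small observation is what lets us apply Lemma~\ref{lem:connElim1} at all, since that lemma quantifies over ``every optimal solution''.

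Given such an $S^\star$, I will apply Lemma~\ref{lem:connElim1} to the designated high-degree vertex $v$; this yields the bound $|N_H[v] \setminus S^\star| \leq 9(\tfrac{4}{\alpha}+1)^2 + 9$. Combining it with the hypothesis $\deg_H(v) \geq 9(\tfrac{4}{\alpha}+1)^2 + 9 + k$---so that $|N_H[v]| \geq 9(\tfrac{4}{\alpha}+1)^2 + 10 + k$---will give
$$|S^\star| \;\geq\; |S^\star \cap N_H[v]| \;=\; |N_H[v]| - |N_H[v] \setminus S^\star| \;\geq\; k+1.$$
Thus $S^\star$ itself is a solution of size at least $k$, which is precisely what is needed to conclude that $(P,\alpha,k)$ is a yes-instance. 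I do not anticipate any real obstacle here: all of the geometric and combinatorial work has already been packaged inside Lemma~\ref{lem:connElim1}, and the threshold $9(\tfrac{4}{\alpha}+1)^2 + 9 + k$ has been chosen precisely so that the guaranteed lower bound on $|S^\star \cap N_H[v]|$ meets the quota $k$. The only mildly delicate point is the existence of an optimum solution, which is handled by the preliminary observation about connectedness of $H$ and feasibility of the empty shrinkage.
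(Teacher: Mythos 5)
Your proposal is correct and is exactly the argument the paper intends: the corollary is stated as an immediate consequence of Lemma~\ref{lem:connElim1}, obtained by combining the bound $|N_H[v]\setminus S^\star|\le 9(\frac{4}{\alpha}+1)^2+9$ with the degree hypothesis to force $|S^\star|\ge k$. Your preliminary remark that a (maximum-size) solution exists because the empty set is feasible for a connected input is a reasonable bit of extra care that the paper leaves implicit.
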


Before stating the main lemma regarding high-degree vertices (and, hence, also large cliques), we set up some notation. Fix an instance $(P, \alpha, k)$ of \probConnectivityFracDown and fix some $v \in P$ be a vertex of degree at least $\mu \coloneqq 9(\frac{4}{\alpha}+1)^2+9+(\frac{2}{\alpha}+1)^2\cdot((\frac{1}{\alpha}+4)^2+1)+(\frac{2}{\alpha}+1)^2$ and at most $\nu \coloneqq 9(\frac{4}{\alpha}+1)^2+9+k$. We say that a subset $U \subseteq N_{G}[v]$ is \emph{special} if it satisfies the following two properties: (1) $|U| = |N_{G}[v]| - \mu$, and (2) Every disk in $\cS_U$ intersects at least one disk in $\cS_{N_{G}[v]} \setminus \cS_U$. 
Let ${\cal F}_v$ denote a family of reduced instances defined w.r.t. each special subset, defined as follows. ${\cal F}_v \coloneqq \LR{ (P \setminus U, \alpha, k - |U|) : U \text{ is special}}$. Now, we are ready to state our lemma using the notation we have thus defined.

\begin{lemma} \label{lem:connLargeDegree}
	$(P, \alpha, k)$ is a yes-instance if and only if for some $v \in P$ at least one of the instances in ${\cal F}_v$ is a yes-instance.
\end{lemma}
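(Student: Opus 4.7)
The plan is to prove both directions. The backward direction is direct: given a special $U\subseteq N_G[v]$ and a solution $S'\subseteq P\setminus U$ of size at least $k-|U|$ for the reduced instance, I would set $S\coloneqq S'\cup U$ (a disjoint union of size $\geq k$) and check that $G(P,r)$ is connected, where $r$ is the induced radius assignment. Observing that $G(P\setminus U,r')$ equals the induced subgraph $G(P,r)[P\setminus U]$, it is enough to connect every $u\in U$ to this induced subgraph. Specialness of $U$ supplies some $z\in N_G[v]\setminus U\subseteq P\setminus U$ with $D(u,\alpha)\cap D(z,\alpha)\neq\emptyset$; since $r(z)\geq\alpha$, the enlarged disk $D(z,r(z))$ still meets $D(u,r(u))=D(u,\alpha)$, giving the required edge in $G(P,r)$.

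For the forward direction, fix an optimal solution $S$ of $(P,\alpha,k)$. First I would invoke Lemma~\ref{lem:connElim1} to bound $|N_G[v]\setminus S|\leq 9(\tfrac{4}{\alpha}+1)^2+9$ and Lemma~\ref{lem:connElim2}(2) applied to $N_G[v]\cap S$ to produce a set $Z\subseteq N_G[v]\cap S$ of size at most $(\tfrac{2}{\alpha}+1)^2$ that dominates $N_G[v]\cap S$ in $G(\cS_{N_G[v]\cap S})$. I would then introduce an \emph{essential} set $E\subseteq N_G[v]\cap S$ consisting of those vertices whose deletion from $G(P,r)$ would destroy connectivity, and choose
\[
U\subseteq (N_G[v]\cap S)\setminus (Z\cup E)\quad\text{with}\quad |U|=|N_G[v]|-\mu.
\]
The threshold $\mu$ is tailored so that the three excluded sets $N_G[v]\setminus S$, $Z$, and $E$ together cover at most $\mu$ vertices of $N_G[v]$, leaving enough room for $U$. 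Once such a $U$ is in hand, $S'\coloneqq S\setminus U$ witnesses $(P\setminus U,\alpha,k-|U|)\in{\cal F}_v$ as a yes-instance: $|S'|\geq k-|U|$, the induced subgraph $G(P,r)[P\setminus U]=G(P\setminus U,r')$ remains connected by our avoidance of $E$, and specialness of $U$ follows from the dominating property of $Z$ together with $Z\cap U=\emptyset$.

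The hard part will be the precise design of $E$ and the proof of the bound $|E|\leq(\tfrac{2}{\alpha}+1)^2\cdot((\tfrac{1}{\alpha}+4)^2+1)$ that matches exactly the middle term of $\mu$. My plan is to attribute each essential vertex to some $z\in Z$ via the witness $D(u,\alpha)\cap D(z,\alpha)\neq\emptyset$ (so $|uz|\leq 2\alpha$), then use the triangle inequality to show that the entire $G(P,r)$-neighborhood of any such $u$ lies in a ball of radius $1+4\alpha$ around $z$, so that Observation~\ref{obs:is} caps the number of pairwise-disjoint $\alpha$-disks attached to $z$ (and therefore the essential contribution of $z$) at $(\tfrac{1}{\alpha}+4)^2+1$. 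A subtler issue that must be handled alongside the counting is that $G(P,r)$ has to stay connected after the \emph{simultaneous} deletion of the whole $U$, not merely after deleting its vertices one by one; I would address this by showing that every non-essential $u\in U$ admits a short alternative route through its representative $z_u\in Z$, so that paths in $G(P,r)$ using $u$ can be rerouted through $z_u\in P\setminus U$, and iterating this rerouting over $U$ preserves connectivity.
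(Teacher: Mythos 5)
Your backward direction is fine and is essentially the paper's argument. The gap is in the forward direction, and it is exactly at the point you flag as "subtler": excluding the \emph{statically} defined set $E$ of vertices whose \emph{individual} deletion disconnects $G(P,r)$ does not imply that deleting the whole set $U$ \emph{simultaneously} leaves $G(P,r)-U$ connected. A vertex that is non-essential in $G(P,r)$ can become essential once other vertices of $U$ are gone (think of a far component attached to the rest by two internally disjoint routes, one through $u_1$ and one through $u_2$: each alone is non-essential, both together are a cut set; a plain cycle already shows the general principle fails). Your proposed repair does not close this: non-essentiality of $u$ only says $G-u$ is connected, it gives no reroute \emph{through the representative} $z_u$, and whatever reroute exists may itself pass through other members of $U$, so "iterating the rerouting over $U$" can chase its own tail. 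The paper avoids this by not fixing $E$ in advance: it peels disks off one at a time, and at each step runs the packing argument \emph{in the current graph} -- if more than $(\frac{1}{\alpha}+4)^2$ shrunken disks attached to a representative $z$ remain, some one of them is removable \emph{now}, because otherwise each remaining one would have a private witness disk, these witnesses would be pairwise non-intersecting, and they would all pack into a disk of radius $1+4\alpha$ around $z$, contradicting \Cref{obs:is}. The cap $(\frac{1}{\alpha}+4)^2$ per $z\in Z$ therefore bounds the \emph{survivors of this greedy process}, not the initially essential vertices, and the resulting set $Y$ (kept disjoint from a dominating set so that any subset of it may be re-added) is safe to delete in its entirety.

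Relatedly, your bound $|E|\le(\frac{2}{\alpha}+1)^2\bigl((\frac{1}{\alpha}+4)^2+1\bigr)$ on the statically essential vertices is not established by the sketch either. The injectivity and pairwise non-intersection of the witness disks, which is what makes the packing count work, is derived in the paper from the for-contradiction hypothesis that \emph{none} of the disks in the cluster $N_{H'}(z)$ can currently be removed (if two witnesses intersected, they could be used to route around one of the disks, contradicting that hypothesis). For your static cut-vertex definition of $E$ there is no such hypothesis: two essential vertices can share their only available witness (e.g.\ consecutive cut vertices on a path $z\,u_1\,w\,u_2\,z'$ both have $w$ as their unique attachment to the severed side), so the map $u\mapsto w_u$ need not be injective and the disjointness claim has no source. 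So both halves of your plan for the forward direction -- the counting of $E$ and the safety of deleting $U$ all at once -- need the iterative, re-evaluated removal that the paper's claim performs; as written, the proposal does not prove the lemma.
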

\begin{proof}
	In one direction, suppose that $(P,\alpha,k)$ is a yes-instance. By \Cref{lem:connElim1}, every optimal solution contains all except for at most $9(\frac{4}{\alpha}+1)^2+9$ disks from $N_{G}[v]$. So, in particular, there exists an optimal solution $S$ (recall that by convention adopted earlier, $S \subseteq P$ is the subset of points corresponding to the shrunken disks) and a set $X\subseteq N_{G(P, \unitvec)}[v]$ such that $|X|=|N_{G}[v]|-9(\frac{4}{\alpha}+1)^2-9$ and $X\subseteq S$. Let ${\cal D}' \coloneqq \cD(P \setminus S, \unitvec) \cup \cS_S$, i.e., $\cD'$ is the set of disks resulting by shrinking only the disks corresponding to the solution $S$ from radius $1$ to $\alpha$. We have the following claim.
	
	\begin{claim}
		There exists a subset $Y \subseteq X$ of size $|X|-(\frac{2}{\alpha}+1)^2\cdot((\frac{1}{\alpha}+4)^2+1)$ such that $G({\cal D}'\setminus \cS_Y)$ (as well as $G({\cal D}'\setminus \cS_{Y'})$ for any $Y' \subseteq Y$) is connected.
	\end{claim}
	
	\begin{proof}
		By \Cref{lem:connElim2} part 2, there exists some $Z \coloneqq Z_X \subseteq X$ of size at most $(\frac{2}{\alpha}+1)^2$, such that every vertex in $X$ is adjacent to at least one vertex in $Z$ in $H' = G(\cS_X)$. Let $C = N_{H'}(z)$ for some $z \in Z$. We argue that, if $|C| > (\frac{1}{\alpha}+4)^2$, then at least one shrunken disk centered at a point $u \in C$ can be removed from $\cD'$, while the resulting graph remaining connected. Once we show this statement, we can repeatedly apply it, showing that all but at most $(\frac{1}{\alpha}+4)^2$ shrunken disks centered at points in $C$ can be removed. In turn, this will yield the claim.
		
		So, suppose that $|C|>(\frac{1}{\alpha}+4)^2$. Targeting a contradiction, suppose that none of the disks in $C$ can be removed. Now, the proof proceeds similarly to the proof of \Cref{lem:connElim1}. We deduce that for every $u\in C$, there must exist $w_u\in P \setminus \LR{u}$ satisfying:
		\begin{enumerate}
			\item If $w_u \notin S$, then $D(w_u, 1)$ (A) intersects the shrunken disk $D(u, \alpha)$, and (B) does not intersect any other shrunken disk $D(u', \alpha)$ for any other $u' \in C$, $u' \neq u$, or
			\item If $w_u \in S$, then $D(w_u, \alpha)$ (A) intersects the shrunken disk $D(u, \alpha)$, and (B) does not intersect any other shrunken disk $D(u', \alpha)$ for any other $u' \in C$, $u' \neq u$. Note that in this case, (A) implies that $D(w_u, 1)$ also intersects $D(u, \alpha)$, since the former disk contains $D(u, \alpha)$. 
		\end{enumerate}
		In either case, $D(w_u, 1)$ intersects $D(u, \alpha)$ and $D(w_u, \alpha)$ does not intersect any other shrunken disks centered at points of $C$ other than $u$. This implies that the points $w_u$ are distinct for each $u \in C$. Let $W=\{D(w_u, \alpha): u\in C\}$. Because no shrunken disk centered at a point in $C$ can be removed, we further conclude that the disks in $W$ are pairwise non-intersecting -- otherwise, they can be used for connectivity via each other while still being able to remove some shrunken disk. However, all of the disks in $W$ are contained in a single disk of radius $1+4\alpha$ (whose center is $z$). Thus, from \Cref{obs:is}, $|W|\leq (\frac{1+4\alpha}{\alpha})^2=(\frac{1}{\alpha}+4)^2$. Since $|C|=|W|$, we have reached a contradiction.
	\end{proof}
	
	Consider $Y \subseteq X$ as guaranteed by the claim. Observe that all disks in $\cS_Y$ are contained in a disk of radius $2+\alpha$ (whose center is $v$). So, by \Cref{obs:is}, there exists $Z\subseteq Y$ of size $(\frac{2+\alpha}{\alpha})^2=(\frac{2}{\alpha}+1)^2$ such that every disk in $\cS_Y$ intersects at least one disk in $\cS_Z$. Finally, let $Q = Y \setminus Z$. Note that $|Q|=|N_{G}[v]|-9(\frac{4}{\alpha}+1)^2-(\frac{2}{\alpha}+1)^2\cdot((\frac{1}{\alpha}+4)^2+1)-(\frac{2}{\alpha}+1)^2$, and hence, by the choice of $Z$, $Q$ is special. Further, $Q \subseteq X\subseteq S$ and hence $|S\setminus Q|\geq k-|Q|$, and, due to the choice of $Y$, $S\setminus Q$ is a solution for $(P \setminus Q,\alpha,k-|Q|)$. So, at least one of the instances in ${\cal F}_v$ is a yes-instance.
	
	In the other direction, suppose that at least one of the instances in ${\cal F}_v$, say, $(P-U,\alpha,k-|U|)$, is a yes-instance, where $U$ is special. Let $S$ be a solution of this instance of size $k-|U|$. Let $S'=S\cup U$. So, $|S'|=k$. Further, since $U$ is special, it follows that every disk in $\cS_U$ intersects at least one disk in $\cS_{N_{G}[v]} \setminus \cS_U$. However, this means that $S'$ is a solution of $(P,\alpha,k)$. Thus, the proof of the lemma is complete.
\end{proof}

\medskip\noindent{\bf Reduction for an Area Enclosed by Unshrinkable Disks.} With respect to an instance $(P,\alpha,k)$ of \probConnectivityFracDown, every $p \in P$ such that shrinking the corresponding disk (i.e., changing its radius from $1$ to $\alpha$) splits the resulting intersection graph into at least two connected components is marked as ``\emph{unshrinkable}''. We say that such an instance is {\em marked}. Additionally, once a point $p$ is marked as unshrinkable, it never gets unmarked (even if some points are removed from the instance so that $p$ no longer has the property that shrinking the disk splits the graph). Furthermore, we say that a solution is {\em compatible} if it does not contain any point that is marked as unshrinkable. Additionally, we have the following definition. Since any point marked as unshrinkable will have its radius fixed at $1$, we will slightly abuse the notation and say that the corresponding (unit) \emph{disk} is marked as unshrinkable.

\begin{definition}[{\bf Unshrinkable and Non-Empty Closed Walks}] 
	Let $(P,\alpha,k)$ be a marked instance of \probConnectivityFracDown. Let $C \subseteq P$ be a closed walk in $G(P, \unitvec)$, and let $A$ be the area enclosed by $C$.  Let $R \subseteq P$ be subset of points such that the corresponding unit disks intersect the area $A$ but are not contained in $A$ (i.e., the disk is not a subset of the interior of $A$), and let $Q \subseteq P \setminus R$ be the set of points such that the corresponding unit disks are contained inside the area $A$. We refer to $R$ and $Q$ are the {\em relevant set} of $C$ and the {\em interior set} of $C$, respectively. If all of the points in $R$ are marked as unshrinkable, then $C$ is said to be {\em unshrinkable} (and otherwise it is {\em shrinkable}). If $Q\neq\emptyset$, then $C$ is said to be {\em non-empty} (and otherwise it is {\em empty}). Moreover, if at least one point in $Q$ is shrinkable, then $C$ is said to be {\em reducible}.
\end{definition}

Observe that a non-empty shrinkable closed walk might not be reducible. First, we present a lemma concerning non-empty unshrinkable closed walks.

\begin{lemma}\label{lem:nonEmptyCycUnshrinkable}
	Let $(P,\alpha,k)$ be a marked instance of \probConnectivityFracDown such that $H = G(P, \unitvec)$ is connected. Let $C$ be a non-empty unshrinkable closed walk in $H$. Then, $(P,\alpha,k)$ has a compatible solution of size at most $k$ if and only if $(P\setminus Q,\alpha,k)$ has a compatible solution of size at most $k$, where $Q$ is the interior set of $C$.
\end{lemma}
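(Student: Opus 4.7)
The plan is to establish each direction separately, exploiting the geometric fact that any disk (shrunk or not) centered at a point of the interior set $Q$ is contained in the enclosed area $A$, and therefore cannot intersect a disk centered at a point of $P\setminus(Q\cup R)$, where $R$ is the relevant set of $C$.

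For the backward direction, starting from a compatible solution $S'$ for $(P\setminus Q,\alpha,k)$, I would view $S'$ as a subset of $P$ and re-use it as a candidate solution for $(P,\alpha,k)$. The size bound and the compatibility condition (avoiding unshrinkable disks) transfer verbatim. The only task is to check connectivity of the shrunken graph on $P$. The graph on $P\setminus Q$ is connected by assumption, so it suffices to attach every $q\in Q$ to some vertex of $P\setminus Q$. Using that the original graph $H$ is connected and that a point of $Q$ has neighbors in $H$ only in $Q\cup R$, there is a $q$-to-$R$ path in $H$ lying entirely inside $Q\cup R$; since $S'$ is disjoint from $Q\cup R$ (as $R$ is unshrinkable and $S'\subseteq P\setminus Q$), all disks on this path remain at radius one and the path is preserved in the shrunken graph.

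For the forward direction, starting from a compatible solution $S$ for $(P,\alpha,k)$, I would set $S'=S\setminus Q$. The compatibility condition is inherited and the size does not increase. The technical task is to show that the shrunken graph on $P\setminus Q$ is connected. Fix two points $u,v\in P\setminus Q$ and a $u,v$-path in the shrunken graph on $P$ (which exists by assumption). Using the key geometric fact above, any maximal sub-sequence of this path lying inside $Q$ must enter and leave $Q$ through two vertices of $R$. Thus, in order to reroute each such sub-sequence and construct a $u,v$-path avoiding $Q$, it suffices to prove that $R$ induces a connected subgraph of the shrunken graph on $P\setminus Q$.

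Proving the connectivity of $R$ in this graph is the heart of the argument and is where I expect the main obstacle. The walk $C$ provides connectivity among the vertices of $V(C)\subseteq R$: consecutive vertices of $C$ are adjacent in $H$, and since every vertex of $V(C)$ belongs to $R$ and is therefore not shrunk in any compatible solution, these edges persist in the shrunken graph. For a vertex $r\in R\setminus V(C)$, I would argue that $D(r,1)$ intersects some $D(p,1)$ with $p\in V(C)$, giving an edge to $V(C)$. By the definition of the enclosed area (\emph{cf.}~the definition of enclosed area in the excerpt), the topological boundary of $A$ is contained in $\bigcup_{p\in V(C)}\partial D(p,1)$; since $D(r,1)$ meets $A$ but is not contained in $A$, it must contain a boundary point of $A$, yielding the desired intersection. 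The delicate point, and the part I expect to be the main obstacle, is the treatment of general closed walks rather than simple cycles: one must verify that the notions of ``enclosed area'' and ``boundary of $A$'' remain well-behaved in this generality, so that a disk straddling the boundary of $A$ really does touch the disk of a vertex of $C$. Once this geometric fact is pinned down, the rerouting argument and both directions of the equivalence follow directly.
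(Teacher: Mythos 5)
Your proposal is correct and follows essentially the same route as the paper's proof: both directions are path-rerouting arguments using that a disk of the interior set $Q$ can only meet disks of $Q\cup R$, that $R$ is entirely unshrinkable and hence untouched by any compatible solution, and that any disk straddling the boundary of the enclosed area meets a disk of $V(C)$, so $Q$-detours can be replaced by detours along $C$ (respectively, each $q\in Q$ reattaches via a path inside $Q\cup R$). If anything, you spell out two points the paper compresses---replacing $S$ by $S\setminus Q$ in the forward direction and verifying that every vertex of $R$ is adjacent to $V(C)$ so the rerouting endpoints reach $C$---so there is no gap.
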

\begin{proof}
	Let $R$ be the relevant set of $C$. In one direction, suppose that $(P,\alpha,k)$ has a compatible solution $S$ of size at most $k$. Then, $S\cap R=\emptyset$.  Let ${\cal D}'$ be the collection of disks obtained from $\cD(P, \unitvec)$ by shrinking the disks centered at $S$ to radius $\alpha$ (and the radii of all other disks are unit). Since $S\cap R=\emptyset$, $C$ is also a closed walk in $G({\cal D}')$. Further, by the definition of an interior set and since $G({\cal D}')$ is connected, for every two vertices $u,v$ of   ${\cal D}'\setminus Q$, there exists a path between $u$ and $v$ in $G({\cal D}')$ that does not traverse any vertex in $Q$---indeed, if we traverse a vertex in $Q$, then the path goes inside and outside the region enclosed by $C$, and so that part of the path can be simply replaced by a subpath of $C$ itself. This implies that after the removal of $Q$ from $G({\cal D}')$, the graph stays connected. In particular, we get that $S$ is also a compatible solution of size at most $k$ for $(P\setminus Q,\alpha,k)$.
	
	In the other direction, suppose that $(P\setminus Q,\alpha,k)$ has a compatible solution $Z$ of size at most $k$. Then, $Z\cap (R\setminus Q)=\emptyset$. Let $\cS_Z$ be the collection of disks that contains disks of radius $\alpha$ centered at each point in $Z$, and disks of radius $1$ centered at each point in $(P \setminus Q) \setminus Z$. Observe that, by the definition of an interior set and since $H$ is connected, for every two vertices $u,v$ of $Q$ there exists a path between $u$ and $v$ in $H$ that does not traverse any vertex outside $R$---indeed, if we traverse such a vertex, then the path goes outside and inside the region enclosed by $C$, and so that part of the path can be simply replaced by a subpath of $C$ itself. Because $Z\cap (R\setminus Q)=\emptyset$, this means that when we add $Q$ to $G(\cS_Z)$, the graph stays connected. In particular, we get that $Z$ is also a compatible solution of size at most $k$ for $(P,\alpha,k)$.
\end{proof}

Second, we present a lemma concerning non-empty shrinkable (in fact, reducible) closed walks.

\begin{lemma}\label{lem:nonEmptyCycShrinkable}
	Let $(P,\alpha,k)$ be a marked instance of \probConnectivityFracDown such that $H = G(P, \unitvec)$ is connected. Let ${\cal C}$ be a collection of reducible closed walks in $H$ whose relevant sets are pairwise disjoint. Then, there exists a compatible solution of $(P,\alpha,k)$ (possibly of size smaller than $k$) that contains one vertex from the interior set of each of the closed walks in $\cal C$.
\end{lemma}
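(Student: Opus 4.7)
The plan is to construct the required compatible solution directly. For each reducible closed walk $C_i \in \mathcal{C}$, the reducibility hypothesis guarantees the existence of at least one shrinkable point in its interior set $Q_i$; choose one such point $p_i$ and set $S := \{p_i : C_i \in \mathcal{C}\}$. By construction, $S$ contains exactly one vertex from each interior set and avoids every point marked as unshrinkable, so what remains is to verify that $S$ is indeed a solution, namely that the graph $G'$ obtained by shrinking every disk of $S$ to radius $\alpha$ is connected.

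The geometric backbone of the argument relies on two simple observations. First, since the disk of each $p_i$ is contained in the enclosed area $A_i$, every neighbor of $p_i$ in $H$ has its disk intersecting $A_i$ and therefore lies in $R_i \cup Q_i$; in particular, the only edges of $H$ that can be destroyed when $p_i$ is shrunk have both endpoints in $R_i \cup Q_i$. Second, the pairwise disjointness of the relevant sets implies $S \cap R_i = \emptyset$ for every $i$, so the induced subgraph $H[R_i]$ is preserved intact in $G'$; intuitively, the ``boundary rings'' $R_i$ of the regions $A_i$ are never disturbed. The shrinkability of each $p_i$ additionally yields a local detour: the graph $H$ with only $p_i$ shrunk remains connected, so any edge destroyed at $p_i$ can be bypassed via a path that, by the separating-ring property of $R_i$, can be chosen entirely inside $R_i \cup Q_i$.

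To turn these local observations into a proof that $G'$ is connected, I would use induction on $|\mathcal{C}|$ with the closed walks processed in an order compatible with nesting, from innermost to outermost. In the inductive step, when we finally shrink $p_i$, any previously shrunk $p_j$ either lies in a region $A_j$ disjoint from $A_i$, in which case the repair for $p_i$'s lost edges (which stays within $R_i \cup Q_i$) is unaffected by $p_j$, or lies strictly inside $A_i$ (so $R_j \cup Q_j \subseteq Q_i$), in which case the inductive hypothesis applied to the sub-collection of walks nested in $C_i$ guarantees that $R_i \cup Q_i$ is already internally connected up to its ring $R_i$ after those inner shrinkings. Combining an inner-connected $R_i \cup Q_i$ with the untouched outside of $A_i$ through the intact boundary ring $R_i$ then yields connectivity of $G'$ at this step, completing the induction.

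The main obstacle is precisely the nested case, where the detour for $p_i$'s broken edges may pass through other shrunken vertices $p_j$ inside $A_i$. Handling this rigorously requires showing that the shrinkability of $p_i$ in the original graph $H$ persists (in a suitably localized form) after the inner shrinkings have been performed---equivalently, that any repair path witnessing the shrinkability of $p_i$ in $H$ can be chosen to lie inside $R_i \cup Q_i$, so that subsequent inductive corrections stay within the scope of the innermost walks and never reintroduce dependencies on vertices outside. The disjointness of the relevant sets and the separator property of the rings $R_i$ are what makes this localization possible.
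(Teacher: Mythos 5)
Your proposal captures the correct high-level structure --- induction, choosing one shrinkable point $p_i$ per interior set, and exploiting the pairwise disjointness of relevant sets to keep each ring $R_i$ untouched --- but there is a genuine gap, which you yourself flag in the last paragraph and do not close. The weak point is the localization: you argue that a repair path for the edges lost at $p_i$ can be chosen inside $R_i \cup Q_i$, but the other shrunken points $p_j$ for walks nested inside $C_i$ also live in $Q_i$, so such a path may run through some $p_j$ and need not survive the full set of shrinkings. Processing walks innermost-to-outermost and invoking the inductive hypothesis on the inner sub-collection does not resolve this: that hypothesis only tells you the inner graph is connected, not that the specific repair subpath you need is preserved.

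The paper's proof closes exactly this gap with a sharper localization, and without any ordering. For the closed walk $C$ processed last (chosen arbitrarily, not by nesting), it shows that the induced subgraph $F$ on the relevant set $R$ \emph{alone} is connected in the graph where only the new vertex $s$ is shrunk: shrinkability of $s$ makes that graph connected, and any path between two vertices of $R$ that leaves $R$ --- either into $Q$ or outside the enclosed area --- can be re-routed along the closed walk $C$ itself, which lies in $R$. Because the entire solution is disjoint from $R$, the subgraph $F$ survives every other shrink no matter how the walks are nested. Vertices outside $R$ are then handled by paths in the graph with only the previously chosen points shrunk (connected by the inductive hypothesis) that avoid the interior of $C$, hence avoid $s$, and so persist. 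To complete your argument, you would replace ``repair path in $R_i \cup Q_i$'' with ``the subgraph induced on $R_i$ alone is connected after shrinking only $p_i$''; that statement is the one immune to what happens inside $Q_i$, which is the key step your proposal is missing.
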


\begin{proof}
	We prove this claim by induction on $i=|{\cal C}|$. In the base case, when $i=0$, the claim is trivially true (because $H$ is connected). Now, suppose correctness for $i-1\geq 0$, and let us prove it for $i$. Let ${\cal C}'$ be some subcollection of $\cal C$ of size $i-1$. Then, by the inductive hypothesis, there exists a compatible solution, say, $S'$, of $(P,\alpha,k)$ that contains one vertex from the interior set of each of the closed walks in ${\cal C}'$, and, without loss of generality, suppose that it contains no other vertex (so, $|S'|=i-1$). Let $C$ be the closed walk in ${\cal C}\setminus{\cal C}'$. Since $C$ is reducible, there exists a shrinkable vertex, say, $s$, in the interior set of $C$. Let ${\cal D}' = \cD(P\setminus S', \unitvec) \cup \cS_{S'}$ be the set of disks obtained after shrinking the disks in $S'$ to radii $\alpha$. So, $G({\cal D}')$ is connected. We claim that $S'\cup\{s\}$ is a compatible solution of $(P,\alpha,k)$, which will conclude the proof. To this end, it suffices to show that when we shrink $s$ in ${\cal D}'$, obtaining a set of disks ${\cal D}''$, the graph (turned from $G({\cal D}')$ to $G({\cal D}'')$) stays connected.
	
	Let $R$ be the relevant set of $C$. Let $\widehat{\cal D} \coloneqq \cD(P \setminus \LR{s}, \unitvec) \cup D(s, \alpha)$ be the set of disks obtained from $\cal D(P, \unitvec)$ by shrinking the disk centered at $s$ to radius $\alpha$.  Since $H$ is connected and $s$ is shrinkable, we get that $G(\widehat{\cal D})$ is connected. Let $F$ be the subgraph of $G(\widehat{\cal D})$ induced by $R$ where $s$ is shrunk.
	Notice that, because $G(\widehat{\cal D})$ is connected, it follows that $F$ is connected---for every two vertices $u,v\in R$, there exists a path in $G(\widehat{\cal D})$, and if it uses vertices outside $R$, then it goes outside and inside $C$, and we can replace these parts of the path by subpaths of $C$ itself. Moreover, since $S'\cap R=\emptyset$ (by the choice of $S'$ and since the relevant sets of the closed walks in $\cal C$ are pairwise disjoint), $F$ also exists as a subgraph in $G({\cal D}'')$. Now, for every two vertices $u,v$ in $G({\cal D}')$ that do not belong to $F$, there exists a path in $G({\cal D}')$ that does not use any vertex from the interior of $C$ (and in particular it does not use $s$)---indeed, since $G({\cal D}')$ is connected, there exists a path between $u,v$ in $G({\cal D}')$, and if it is uses vertices from the interior of $C$, then we can replace these parts of the path by subpaths of $C$ itself. In turn, this means that $G({\cal D}'')$ is connected, and hence the proof is complete.
\end{proof}

\medskip
\noindent{\bf Dealing with a Large Grid as a Well-Behaved Minor.} Ideally, similarly to most Bidimensional problems, when we have a large grid as a minor, we would have liked to directly solve the problem. However, for \probConnectivityFracDown, we are not able to do this. Still, if the minor is well-behaved (as in \Cref{def:wellbehaved}) {\em and} we are given a marked instance that does not contain any non-empty unshrinkable closed walk, then we are able to do this. However, it is unclear how to turn the instance to have this property. In either case, we observe that we can focus only on some particular set of closed walks. We identify this set in the following definition.

\begin{definition}[{\bf $C_{i,j},W_{i,j}$}]\label{def:specialCycs}
	Let $(P,\alpha,k)$ be a marked instance of \probConnectivityFracDown. Let $\Gamma = \Gamma_{t\times t}$ grid with $V(\Gamma)=\{v_{i,j}: i,j\in[t]\}$, where $t=7\lceil \sqrt{k}\rceil$. Suppose that $G(P, \unitvec)$ contains $\Gamma$ as a well-behaved minor, and let $\varphi$ be a corresponding minor model. Then, we define the following notations. For every $i,j\in[\lceil\sqrt{k}\rceil]$, let $C_{i,j}$ be the cycle (on 8 vertices) in $\Gamma$ on vertex set $\{v_{a,b}: (a,b)\in\{(1+x+7(i-1), 1+y+7(j-1)): x,y\in[3]\}\setminus\{(3+7(i-1),3+7(j-1))\}\}$. Additionally, for every $i,j\in[\lceil\sqrt{k}\rceil]$, let $W_{i,j}$ be some closed walk in $G(P, \unitvec)$ that traverses all vertices in $\bigcup_{v\in V(C_{i,j})}\varphi(v)$ (it might traverse the same edge twice consecutively).\footnote{The existence of such a closed walk follows from the definition of a tree decomposition.}
\end{definition}

Now, we present the lemma that handles large grids as well-behaved minors, if they do not contain certain cycles.

\begin{lemma}\label{lem:connGrid}
	Let $(P,\alpha,k)$ be a marked instance of \probConnectivityFracDown. Let $\Gamma \coloneqq \Gamma_{t\times t}$ grid with $V(\Gamma)=\{v_{i,j}: i,j\in[t]\}$, where $t=7\lceil \sqrt{k}\rceil$. Suppose that $G(P, \unitvec)$ is connected and contains $\Gamma$ as a well-behaved minor for $t=7\lceil \sqrt{k}\rceil$, and let $\varphi$ be a corresponding minor model. For every $i,j\in[\lceil\sqrt{k}\rceil]$, let $C_{i,j}$ and $W_{i,j}$ be as defined in \Cref{def:specialCycs}. Additionally, suppose that for every $i,j\in[\lceil\sqrt{k}\rceil]$, $W_{i,j}$ is not a non-empty unshrinkable closed walk. Then, $(P,\alpha,k)$ is a yes-instance.
\end{lemma}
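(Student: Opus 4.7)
The plan is to apply \Cref{lem:nonEmptyCycShrinkable} to the family $\mathcal{C} = \{W_{i,j} : i,j \in [\lceil\sqrt{k}\rceil]\}$. Once its members are shown to be reducible and to have pairwise disjoint relevant sets, the lemma produces a compatible solution that contains one vertex from each interior set $Q_{i,j}$, yielding $|\mathcal{C}| = \lceil\sqrt{k}\rceil^2 \geq k$ distinct disks to shrink and thereby certifying that $(P,\alpha,k)$ is a yes-instance.

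I would first verify disjointness of the relevant sets. For any $(i,j)\neq(i',j')$, the $8$-cycles $C_{i,j}$ and $C_{i',j'}$ live in disjoint $3\times 3$ sub-grids of $\Gamma_{t\times t}$; inspecting the coordinates in \Cref{def:specialCycs}, the closest pair of vertices across the two cycles is separated in $\Gamma$ by at least four grid steps, so no single vertex of $\Gamma$ can be adjacent both to a vertex of $C_{i,j}$ and to a vertex of $C_{i',j'}$. Well-behavedness condition \ref{item:wellbehaved2} then forces the enclosed regions $A_{i,j}$ and $A_{i',j'}$ associated with $\bigcup_{v \in V(C_{i,j})}\varphi(v)$ and $\bigcup_{v \in V(C_{i',j'})}\varphi(v)$ to be disjoint. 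Since the relevant set $R_{i,j}$ and the interior set $Q_{i,j}$ of $W_{i,j}$ are determined entirely by $A_{i,j}$, both families are pairwise disjoint.

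Next I would argue that each $W_{i,j}$ is reducible. The interior set $Q_{i,j}$ is non-empty: the vertices of $W_{i,j}$ themselves have unit disks lying inside $A_{i,j}$, and well-behavedness condition \ref{item:wellbehaved1}, applied to $C_{i,j}$ and to the $4$-cycle of $\Gamma$ surrounding $v_{3+7(i-1),3+7(j-1)}$, places $\varphi(v_{3+7(i-1),3+7(j-1)})$ entirely inside $A_{i,j}$, so its vertices belong to $Q_{i,j}$. Combined with the hypothesis that $W_{i,j}$ is not a non-empty unshrinkable closed walk, this forces $R_{i,j}$ to contain a shrinkable point $p^\star$. To upgrade this to reducibility (a shrinkable point inside $Q_{i,j}$), I would argue by contradiction: if every point of $Q_{i,j}$ were marked unshrinkable, then both the walk vertices of $W_{i,j}$ and all of $\varphi(v_{3+7(i-1),3+7(j-1)})$ would be cut disks under shrinking, yet the external shrinkable witness $p^\star \in R_{i,j}$ together with the connectivity of $H$ and the nested structure afforded by well-behavedness supplies an alternative route bypassing any single interior walk vertex after it is shrunk, contradicting the very definition of ``marked unshrinkable.'' Having established reducibility and disjoint relevant sets, I invoke \Cref{lem:nonEmptyCycShrinkable} on $\mathcal{C}$ to conclude.

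The main obstacle will be the reducibility step: transferring a shrinkable witness from the boundary set $R_{i,j}$ to the interior set $Q_{i,j}$ is not automatic from the hypothesis alone, and the argument requires a careful joint use of both well-behavedness conditions (to control how the enclosed areas nest and how external disks can route around the interior) together with the precise cut-vertex-under-shrinking meaning of ``marked unshrinkable.''
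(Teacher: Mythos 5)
There is a genuine gap, and it is exactly at the step you flag as the ``main obstacle.'' Your plan applies \Cref{lem:nonEmptyCycShrinkable} directly to the inner walks $W_{i,j}$, which requires each $W_{i,j}$ to be \emph{reducible}, i.e.\ to have a shrinkable point in its interior set $Q_{i,j}$. But the lemma's hypothesis only excludes that $W_{i,j}$ is simultaneously non-empty and unshrinkable; together with the non-emptiness supplied by $\varphi(v_{3+7(i-1),3+7(j-1)})$ this yields a shrinkable point in the \emph{relevant} set $R_{i,j}$, nothing more. The paper explicitly warns that a non-empty shrinkable closed walk need not be reducible, and your attempted ``upgrade'' by contradiction does not go through: a point is marked unshrinkable at the moment shrinking it would have disconnected the then-current graph, and the mark persists through later recursive calls even if the cut property no longer holds. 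Consequently, the entire interior set $Q_{i,j}$ may consist of marked points while $R_{i,j}$ still contains a shrinkable witness; this situation is perfectly consistent, there is no violation of the definition of ``marked unshrinkable'' to exploit, and \Cref{lem:nonEmptyCycShrinkable} is then simply inapplicable to $W_{i,j}$.

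The missing idea is the paper's two-layer construction. Besides the $8$-cycles $C_{i,j}$ one takes, inside each $7\times 7$ block, the larger $16$-vertex cycles $C^\star_{i,j}$ (the boundary of the surrounding $5\times 5$ sub-grid) with associated closed walks $W^\star_{i,j}$. Condition~\ref{item:wellbehaved1} of \Cref{def:wellbehaved} nests the area enclosed by $W_{i,j}$ inside the area enclosed by $W^\star_{i,j}$, so the shrinkable point guaranteed in the relevant set of $W_{i,j}$ (your $p^\star$) lands in the \emph{interior} set of $W^\star_{i,j}$, making every $W^\star_{i,j}$ reducible; condition~\ref{item:wellbehaved2} gives pairwise disjointness of the relevant sets of the $W^\star_{i,j}$ because distinct $5\times 5$ sub-grids are separated by a buffer inside the $7\times 7$ blocks. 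One then invokes \Cref{lem:nonEmptyCycShrinkable} on the family $\{W^\star_{i,j}\}$, obtaining a compatible solution with at least $\lceil\sqrt{k}\rceil^2\ge k$ shrunk disks. Your disjointness argument and the counting at the end are fine (though note that the walk vertices of $W_{i,j}$ are generally in $R_{i,j}$, not in $Q_{i,j}$, since their disks meet the boundary of the enclosed area), but without the passage to the outer walks the reducibility requirement cannot be met, so the proof as proposed does not close.
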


\begin{proof}
	We start with the claim that each of the closed walks $W_{i,j}$ is shrinkable:
	
	\begin{claim}\label{claim:connGrid1}
		For every $i,j\in[\lceil\sqrt{k}\rceil]$, $W_{i,j}$ is shrinkable.
	\end{claim}
	
	\begin{proof}
		Fix some $i,j\in[\lceil\sqrt{k}\rceil]$. Targeting a contradiction, suppose that $W_{i,j}$ is unshrinkable.
		Let $C$ be the cycle (on 4 vertices) in $\Gamma$ on vertex set $\{v_{a,b}: (a,b)\in\{(1+x+7(i-1), 1+y+7(j-1)): x,y\in[2]\}\}$. Observe that $C$ is embedded inside $C_{i,j}$ by $d_{t\times t}$. So, by Property \ref{item:wellbehaved1} in \Cref{def:wellbehaved}, the area enclosed by $\bigcup_{v\in V(C)}\varphi(v)$ is contained in the area enclosed by $V(W_{i,j})$. In particular, this means that $\varphi(v_{3+7(i-1),3+7(j-1)})$ is a subset of the interior set of $W_{i,j}$. Thus, $W_{i,j}$ is non-empty. However, this is a contradiction, since we suppose that $W_{i,j}$ is not a non-empty unshrinkable closed walk.
	\end{proof}
	
	Additionally, for every $i,j\in[\lceil\sqrt{k}\rceil]$, let $C^\star_{i,j}$ be the cycle (on 16 vertices) in $\Gamma$ on vertex set $\{v_{a,b}: (a,b)\in\{(x+7(i-1), y+7(j-1)): x,y\in[5]\}\setminus(V(C_{i,j})\cup\{(3+7(i-1),3+7(j-1))\})\}$. Additionally, for every $i,j\in[\lceil\sqrt{k}\rceil]$, let $W^\star_{i,j}$ be some closed walk in $G(P, \unitvec)$ that traverses all vertices in $\bigcup_{v\in V(C_{i,j})}\varphi(v)$. We claim that each of these closed walks is reducible:
	
	\begin{claim}\label{claim:connGrid2}
		For every $i,j\in[\lceil\sqrt{k}\rceil]$, $W^\star_{i,j}$ is reducible.
	\end{claim}
	
	\begin{proof}
		By Property \ref{item:wellbehaved1} in \Cref{def:wellbehaved}, the area enclosed by $V(W_{i,j})$ is contained in the area enclosed by $V(W^\star_{i,j})$. Moreover, by  \Cref{claim:connGrid1}, $W_{i,j}$ is shrinkable. Thus, we deduce that $W^\star_{i,j}$ is reducible.
	\end{proof}
	
	Additionally, for the application of \Cref{lem:nonEmptyCycShrinkable}, we also need the following claim.
	
	\begin{claim}\label{claim:connGrid3}
		For every $i,j,i',j'\in[\lceil\sqrt{k}\rceil]$ such that $(i,j)\neq(i',j')$, the relevant sets of $W^\star_{i,j}$ and $W^\star_{i',j'}$ are disjoint.
	\end{claim}
	
	\begin{proof}
		Due to our definition of the cycles $C^\star_{a,b}$, $a,b\in[\lceil\sqrt{k}\rceil]$, the claim is an immediate consequence of Property \ref{item:wellbehaved2} in \Cref{def:wellbehaved}.
	\end{proof}
	
	We now return to the proof of the lemma. Due to \Cref{claim:connGrid2} and \Cref{claim:connGrid3}, we can use \Cref{lem:nonEmptyCycShrinkable} to derive that, for every $i,j\in[\lceil\sqrt{k}\rceil]$, we can pick some vertex, say, $w_{i,j}$, from the relevant set of $W_{i,j}$, such that there exists compatible solution, say, $S$, of $(P,\alpha,k)$ that, for every $i,j\in[\lceil\sqrt{k}\rceil]$, contains $w_{i,j}$.  
	However, this means that $|S|\geq k$. Thus, the proof is complete.
\end{proof}

\medskip
\noindent{\bf Dynamic Programming for the Case of Bounded Treewidth.} Having eliminated both large cliques and large grids (under certain assumptions), we deal with the case where the treewidth of the graph is small, specifically, bounded by a polynomial in $1/\alpha$ times $k^{1-\epsilon}$ for some fixed $\epsilon>0$. For this purpose, we have the following lemma.

\begin{lemma}\label{lem:connDP}
	Let $(P,\alpha,k)$ be a marked instance of \probConnectivityFracDown. Suppose that we are given a tree decomposition of $G(P, \unitvec)$ of width $w$. Then, $(P,\alpha,k)$ is solvable (where we seek only compatible solutions) in time $w^{\Oh(w)}\cdot n$.
\end{lemma}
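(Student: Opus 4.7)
My plan is to run a standard partition-based dynamic programming algorithm over the given tree decomposition of $H = G(P, \unitvec)$. The first observation is that any radius assignment $r$ consistent with the instance can only remove edges from $H$ (never add them), so the given decomposition of width $w$ is simultaneously a valid tree decomposition of every intersection graph $G(P, r)$ we might consider. In polynomial time I would refine it into a nice tree decomposition of the same width, with leaf, introduce-vertex, forget-vertex, and join nodes.

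The DP state at a node $t$ with bag $\beta(t)$ is a pair $(\sigma, \mathcal{Q})$, where $\sigma : \beta(t) \to \{0, 1\}$ prescribes the shrinking status of each bag vertex, with $\sigma(v) = 0$ forced whenever $v$ is marked unshrinkable (thus enforcing compatibility), and $\mathcal{Q}$ is a partition of $\beta(t)$ whose blocks record the connectivity classes, within the partial intersection graph on the already-processed subgraph $G_t$, to which each bag vertex belongs. Each state stores the maximum total number of shrunk vertices processed so far. The number of states per bag is at most $2^{w+1} \cdot B_{w+1}$, where $B_{w+1}$ is the Bell number, which is $w^{\mathcal{O}(w)}$.

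Transitions are the standard rules for partition-based connectivity DPs, as used for \textsc{Steiner Tree}, \textsc{Connected Vertex Cover}, and similar problems (see e.g.~\cite{DBLP:books/sp/CyganFKLMPPS15}). At an introduce node for $v$, we guess $\sigma(v)$, insert $\{v\}$ as a fresh block of $\mathcal{Q}$, and merge blocks containing bag-neighbors $u$ of $v$ for which the edge $uv$ \emph{survives}; using the closed-disk convention fixed for \probConnectivityFracDown, the edge $uv$ survives iff $|uv| \leq r_\sigma(u) + r_\sigma(v)$, where $r_\sigma(p) = \alpha$ if $\sigma(p) = 1$ and $r_\sigma(p) = 1$ otherwise. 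At a forget node for $v$, we delete $v$ from $\sigma$ and from its block of $\mathcal{Q}$, incrementing the stored shrunk count by $\sigma(v)$; a singleton block $\{v\}$ is handled via the usual ``orphan'' rule, with an auxiliary flag tracking whether a finalized component has already been registered, so that exactly one finalized component is allowed in total. At a join node we iterate over pairs of child states with identical $\sigma$ and set the new partition to be the transitive closure of the union of the two child partitions, summing the shrunk counts and subtracting the contribution from $\beta(t)$ to avoid double counting.

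At the root, whose bag is empty, we accept iff some reachable state witnesses a single finalized component together with a total shrunk count of at least $k$. Each transition can be performed in $w^{\mathcal{O}(w)}$ time by standard partition algebra, yielding overall running time $w^{\mathcal{O}(w)} \cdot n$. The only delicate point, and the main obstacle to verify, is the bookkeeping at forget nodes that translates the local partition tracking into global connectivity of $G(P, r)$, which is completely routine for connectivity DPs and does not affect the stated bound.
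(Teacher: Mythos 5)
Your proposal is correct and is essentially the same approach the paper intends: the paper's proof of this lemma simply invokes the standard partition-based connectivity dynamic programming over tree decompositions (citing the textbook of Cygan et al.) and omits all details, and your sketch — shrink-status assignment on the bag forced to respect unshrinkable marks, a partition of the bag tracking components of the processed subgraph, the orphan/finalized-component rule at forget nodes, and maximizing the number of shrunk vertices — is exactly that standard routine, yielding the claimed $w^{\Oh(w)}\cdot n$ bound. The only nitpick is a harmless bookkeeping inconsistency at join nodes (if the counter tracks only forgotten vertices, the counts of the two children are over disjoint sets and no bag-subtraction is needed), which does not affect correctness.
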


\begin{proof}
	The lemma can be proved by the design of a straightforward dynamic algorithm over tree decompositions; see, e.g., \cite{DBLP:books/sp/CyganFKLMPPS15}. Thus, the details are omitted.
\end{proof}

\medskip
\noindent{\bf Conclusion of the Algorithm.} We are now ready to present our algorithm, which we call $\mathsf{ALG}$. The algorithm will be a recursive algorithm, working on marked instances (so, when we make a recursive call, we preserve the annotations that have been done so far). To present its execution, let $(P,\alpha,k)$ be an instance of \probConnectivityFracDown. Then, $\mathsf{ALG}$ performs the following operations:
\begin{enumerate}
	\item\label{step:cc1} If $P=\emptyset$, return Yes.
	\item\label{step:cc2} If $H = G(P, \unitvec)$ is not connected, return No.
	\item\label{step:cc3} For every $p \in P$ such that shrinking its radius from $1$ to $\alpha$ splits the resulting intersection graph into at least two connected components, mark it as unshrinkable (unless it is already marked as such).
	\item\label{step:cc4} Use the (polynomial-time) algorithm in \cite{clark1990unit} to compute a clique $K$ of maximum size in $H$.
	\item\label{step:cc5} If $|V(K)|$ is at least $9(\frac{4}{\alpha}+1)^2+9+k+1$, then answer Yes.
	\item\label{step:cc6} Else, if $|V(K)| \ge k^{1/4}+9(\frac{4}{\alpha}+1)^2+9+(\frac{2}{\alpha}+1)^2\cdot((\frac{1}{\alpha}+4)^2+1)+(\frac{2}{\alpha}+1)^2$, then:
	\begin{enumerate}
		\item Let $v\in V(K)$.
		\item Let ${\cal F}_v=\{(P\setminus U,\alpha,k-|U|): U \text{ is special}\}$ (recall that the \emph{special} property is defined before \Cref{lem:connLargeDegree}) 
		\item Call $\mathsf{ALG}$ recursively on each of the instances in ${\cal F}_v$. If at least one of these calls returns Yes, then return Yes, and otherwise return No. 
	\end{enumerate}
	
	\item\label{step:cc7} Use the algorithm in \Cref{prop:represent} to compute a representation of $H$. Let $\Gamma$ be the $t\times t$ grid with $V(\Gamma)=\{v_{i,j}: i,j\in[t]\}$, where $t=7\lceil \sqrt{k}\rceil$. Then, use the algorithm in Proposition \ref{prop:nctd} with $\ell=\frac{7}{\gamma}\lceil\sqrt{k}\rceil$ (then, $t=\gamma\ell$) to  compute either  a minor model $\varphi$ of $\Gamma$  in $H$ that is well-behaved, or a $5\ell$-NCTD of $H$, ${\cal T}$, where $\gamma = \frac{1}{100\cdot 599^3}$. 
	
	\item\label{step:cc8} If the algorithm in \Cref{prop:nctdExtended} returned the minor model $\varphi$, then:
	\begin{enumerate}
		\item\label{step:cc8a}  For every $i,j\in[\lceil\sqrt{k}\rceil]$, let $C_{i,j}$ and $W_{i,j}$ be as defined in \Cref{def:specialCycs}. 
		\item\label{step:cc8b} If there exist $i,j\in[\lceil\sqrt{k}\rceil]$ such that $W_{i,j}$ is a non-empty unshrinkable closed walk, then call $\mathsf{ALG}$ recursively on $(P\setminus Q,\alpha,k)$ where $Q$ is the interior set of $W_{i,j}$, and return its answer.
		\item\label{step:cc8c} Otherwise, answer Yes. 
	\end{enumerate} 
	
	\item\label{step:cc9} Otherwise, use the algorithm in \Cref{lem:connDP} to solve the problem using $\cal T$.
\end{enumerate}

We first consider the correctness of the algorithm.

\begin{lemma}\label{lem:algCorrect}
	Algorithm $\mathsf{ALG}$ correctly solves \probConnectivityFracDown.
\end{lemma}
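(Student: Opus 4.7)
The plan is to verify the correctness of each of the nine steps of $\mathsf{ALG}$ in turn, and then to combine this with a termination argument in order to conclude correctness by induction. Throughout, observe that marking a vertex as unshrinkable in Step~\ref{step:cc3} is safe: if shrinking $p$ disconnects $G(P,\unitvec)$, then any compatible solution containing $p$ would disconnect the final intersection graph, so no optimal solution of $(P,\alpha,k)$ puts $p$ into the shrunken set. This also means that once we have restricted attention to compatible solutions, the answer does not change from restricting to compatible solutions across all recursive calls, because marks are preserved. Steps~\ref{step:cc1} and~\ref{step:cc2} are immediate: an empty point set trivially admits a solution (the shrunken set is empty), and if $H$ is disconnected then no assignment of radii in $[\alpha,1]$ can reconnect it.

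For the clique-based steps, Step~\ref{step:cc5} is justified by \Cref{cor:elimLarge} applied to any vertex $v\in V(K)$: such a vertex has degree at least $9(\frac{4}{\alpha}+1)^2+9+k$, so $(P,\alpha,k)$ is a yes-instance. Step~\ref{step:cc6} is justified directly by \Cref{lem:connLargeDegree}: the vertex $v\in V(K)$ has degree at least $\mu=9(\frac{4}{\alpha}+1)^2+9+(\frac{2}{\alpha}+1)^2\cdot((\frac{1}{\alpha}+4)^2+1)+(\frac{2}{\alpha}+1)^2$ and at most $\nu=9(\frac{4}{\alpha}+1)^2+9+k$ (since Step~\ref{step:cc5} did not fire), so $(P,\alpha,k)$ is a yes-instance if and only if at least one instance in ${\cal F}_v$ is. One small point to check here is that the clique size in $V(K)$ indeed guarantees that $v$'s degree in $H$ is in the required range; this follows since $N_H[v]\supseteq V(K)$ and since the upper bound was the condition to bypass Step~\ref{step:cc5}.

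For the grid/treewidth dichotomy, \Cref{prop:nctdExtended} invoked in Step~\ref{step:cc7} guarantees that we obtain either a well-behaved grid minor of the required side length or a $5\ell$-NCTD. In Step~\ref{step:cc8}, if some $W_{i,j}$ happens to be a non-empty unshrinkable closed walk, then \Cref{lem:nonEmptyCycUnshrinkable} asserts that $(P,\alpha,k)$ has a compatible solution of size at most $k$ if and only if $(P\setminus Q,\alpha,k)$ does, where $Q$ is the interior set of $W_{i,j}$; here we use that after Step~\ref{step:cc2} the graph $G(P,\unitvec)$ is connected and marks are up to date, so the hypotheses of the lemma hold. Otherwise none of the closed walks $W_{i,j}$ is both non-empty and unshrinkable, at which point \Cref{lem:connGrid} (together with the fact that $G(P,\unitvec)$ is connected after Step~\ref{step:cc2} and the grid has side $t=7\lceil\sqrt{k}\rceil$) implies that $(P,\alpha,k)$ is a yes-instance. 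Finally, Step~\ref{step:cc9} is correct by \Cref{lem:connDP} applied to the NCTD (viewed as a tree decomposition via \Cref{obs:nctdToTd}), restricted to compatible solutions.

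The main obstacle, which I expect to require the most care, is argueing termination of the recursion so that the inductive justifications above are actually well-founded. Recursive calls are triggered only in Step~\ref{step:cc6}, where we replace $(P,\alpha,k)$ by instances $(P\setminus U,\alpha,k-|U|)$ with $|U|=|N_H[v]|-\mu\geq 1$, and in Step~\ref{step:cc8b}, where we replace $(P,\alpha,k)$ by $(P\setminus Q,\alpha,k)$ with $Q\neq\emptyset$ (because $W_{i,j}$ is non-empty). In either case, the total number of points strictly decreases, so the recursion depth is at most $|P|$ and the algorithm terminates. Together with the case-by-case correctness above, this completes the inductive proof.
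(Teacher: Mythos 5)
Your proposal is correct and follows essentially the same route as the paper: a step-by-step verification of Steps~\ref{step:cc1}--\ref{step:cc9} citing \Cref{cor:elimLarge}, \Cref{lem:connLargeDegree}, \Cref{lem:nonEmptyCycUnshrinkable}, \Cref{lem:connGrid} and \Cref{lem:connDP}, wrapped in an induction over the recursion (the paper inducts on $|P|+k$, you on the strictly decreasing number of points, which is the same idea). Your explicit justification of the marking in Step~\ref{step:cc3} and of the strengthened claim about marked instances with compatible solutions is exactly the ``slightly stronger claim'' the paper's proof relies on.
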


\begin{proof}
	For correctness, we need the slightly stronger claim that $\mathsf{ALG}$ solves \probConnectivityFracDown even if these instances are marked (and we seek only compatible solutions). Since initially no disk is marked, this will yield the lemma. The proof of this claim is by induction on the size $s=|P|+k$ of the instance. In the base case, when $s=0$, then $P$ is empty, and correctness follows directly from Step \ref{step:cc1}. Now, suppose that the claim is correct for instances of size at most $s-1\geq 0$, and let us prove it for~$s$.
	
	First, notice that the correctness of Steps \ref{step:cc2} and \ref{step:cc3} is immediate. The correctness of Step \ref{step:cc5} follows from \Cref{cor:elimLarge}. Next, the correctness of Step \ref{step:cc6} follows from \Cref{lem:connLargeDegree} and the inductive hypothesis. Additionally, the correctness of Step \ref{step:cc8b} follows from \Cref{lem:nonEmptyCycUnshrinkable} and the inductive hypothesis, and the correctness of Step \ref{step:cc8c} follows from \Cref{lem:connGrid}. Lastly, the correctness of Step \ref{step:cc9} follows from Lemma \Cref{lem:connDP}. Thus, we conclude correctness for $s$.
\end{proof}

Next, we consider the time complexity of the algorithm.

\begin{lemma}\label{lem:algTime}
	Algorithm $\mathsf{ALG}$, called with instance of \probConnectivityFracDown runs in time $(\frac{1}{\alpha}k)^{\Oh((\frac{1}{\alpha})^2k^{3/4})}\cdot n^{\Oh(1)}$.
\end{lemma}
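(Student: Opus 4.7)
The plan is to decompose the total work into three kinds of contributions: the nonrecursive operations inside a single call, the branching at Step~6, and the straight-line reduction at Step~8b. First I would observe that Steps~1--5 and~7, together with the enumeration and checks in Step~8, run in time $2^{\Oh(\sqrt{k})}\cdot n^{\Oh(1)}$; the bottleneck is invoking \Cref{prop:nctdExtended} with $\ell=\Theta(\sqrt{k})$. Next, Step~8b makes a single non-branching recursive call on $(P\setminus Q,\alpha,k)$ with $|P|$ strictly smaller and $k$ unchanged, so on any root-to-leaf path one can have at most $n$ consecutive Step~8b reductions between two uses of Step~6 (or before reaching Step~9), which contributes only a polynomial factor.

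I would then bound the branching at Step~6. Since $|N_{H}[v]|\le 9(\tfrac{4}{\alpha}+1)^2+9+k$ (otherwise Step~5 would have answered Yes already) and $\mu=\Oh(1/\alpha^4)$, the number of special subsets is at most $\binom{|N_H[v]|}{\mu}\le (k/\alpha)^{\Oh(1/\alpha^4)}$, giving the branching factor. Moreover, the trigger condition $|V(K)|\ge k^{1/4}+\mu$ together with $V(K)\subseteq N_H[v]$ forces every special $U$ to satisfy $|U|=|N_H[v]|-\mu\ge k^{1/4}$, so $k$ drops by at least $k^{1/4}$ along every edge of the branching tree.

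The central step is the depth bound. Define the potential $\phi(k)=\tfrac{4}{3}k^{3/4}$, whose derivative is $k^{-1/4}$. By concavity, at each branching level one has $\phi(k_i)-\phi(k_{i+1})\ge \phi'(k_i)\cdot(k_i-k_{i+1})\ge k_i^{-1/4}\cdot k_i^{1/4}=1$. Since the initial potential is $\tfrac{4}{3}k^{3/4}$, any root-to-leaf path contains at most $\Oh(k^{3/4})$ branching nodes, and the total number of leaves in the recursion tree is at most $\bigl((k/\alpha)^{\Oh(1/\alpha^4)}\bigr)^{\Oh(k^{3/4})}=(k/\alpha)^{\Oh(k^{3/4}/\alpha^4)}$. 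At each leaf, Step~6 did not fire, so the maximum clique of $H$ has size at most $k^{1/4}+\Oh(1/\alpha^4)$; by \Cref{obs:nctdToTd} the decomposition $\mathcal{T}$ returned by Step~7 therefore has width $w=\Oh(5\ell\cdot(k^{1/4}+1/\alpha^4))=\Oh(k^{3/4}+\sqrt{k}/\alpha^4)$, and \Cref{lem:connDP} solves the leaf instance in time $w^{\Oh(w)}\cdot n$. Multiplying the per-leaf cost by the number of leaves and absorbing the polynomial work on Step~8b chains and the other nonrecursive steps yields the overall bound $(\tfrac{1}{\alpha}k)^{\Oh((\tfrac{1}{\alpha})^2 k^{3/4})}\cdot n^{\Oh(1)}$ claimed in the statement.

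The hard part will be the depth bound on the Step~6 branching tree: a naive estimate using only ``$k$ strictly decreases'' would give depth $\Oh(k)$, which blows the exponent up to linear in $k$ and destroys the subexponential regime. The potential-function argument above is exactly what converts the per-branch decrease of $k^{1/4}$ into the needed depth $\Oh(k^{3/4})$. A minor auxiliary subtlety is that Step~8b calls can interleave with Step~6 branchings on different root-to-leaf paths, but since Step~8b never increases $k$ or $|P|$ each maximal Step~8b chain can be charged to its nearest Step~6 ancestor at only polynomial cost, so it does not affect the exponential factor.
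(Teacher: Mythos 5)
Your proof is correct and takes essentially the same route as the paper: bound the Step-6 branching by the number of special subsets (the paper writes this as the recurrence $T(n,k)\le\max_i\binom{i}{i-c_\alpha}T(n,k-(i-c_\alpha))+n^{\Oh(1)}$ and leaves the depth analysis implicit, which your potential-function argument for depth $\Oh(k^{3/4})$ makes explicit), treat Step~8b as a non-branching recursion, and pay $w^{\Oh(w)}$ at the leaves using the clique bound inherited from Steps~5--6 together with \Cref{obs:nctdToTd} and \Cref{lem:connDP}. The only caveat---shared with the paper itself, whose Step-9 clique bound is stated as $\Oh((\tfrac{1}{\alpha})^2+k^{1/4})$ even though $c_\alpha=\Theta(1/\alpha^4)$---is the final absorption into the advertised $(\tfrac{1}{\alpha})^2$ exponent, since your own intermediate bounds (leaf count $(k/\alpha)^{\Oh(k^{3/4}/\alpha^4)}$ and width $\Oh(k^{3/4}+\sqrt{k}/\alpha^4)$) literally yield the bound with $(\tfrac{1}{\alpha})^4$ in place of $(\tfrac{1}{\alpha})^2$.
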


\begin{proof}
	Observe that Steps \ref{step:cc1}--\ref{step:cc5} require polynomial time. For Step \ref{step:cc6}, observe that we derive the following recursion, where $T(n,k)$ denotes the runtime of the algorithm on an $n$-vertex graph (i.e., $|P|=n$) and parameter $k$:
	\[\begin{array}{ll}
		T(n,k) &\leq \displaystyle{\max_{k^{1/4}+c_{\alpha}\leq i\leq k+d_{\alpha}} {i\choose i-c_{\alpha}}\cdot T(n,k-(i-c_{\alpha})) + n^{\Oh(1)}.}
	\end{array}\]
	where $c_{\alpha}=9(\frac{4}{\alpha}+1)^2+9+(\frac{2}{\alpha}+1)^2\cdot((\frac{1}{\alpha}+4)^2+1)+(\frac{2}{\alpha}+1)^2$ and $d_{\alpha}=9(\frac{4}{\alpha}+1)^2+9$. From this recursion, supposing that Steps \ref{step:cc7}--\ref{step:cc9} can be performed in time $(\frac{1}{\alpha}k)^{\Oh((\frac{1}{\alpha})^2k^{3/4})}\cdot n^{\Oh(1)}$, we derive that $T(n,k) \leq (\frac{1}{\alpha}k)^{\Oh((\frac{1}{\alpha})^2k^{3/4})}\cdot n^{\Oh(1)}$. So, it remains to consider Steps \ref{step:cc7}--\ref{step:cc9}.
	
	According to Propositions \ref{prop:represent} and \ref{prop:nctdExtended}, Step \ref{step:cc7} can be performed in time $2^{\Oh(\sqrt{k})}n^{\Oh(1)}$. In Step \ref{step:cc8}, we perform computations in polynomial time and possibly make a single recursive call on a smaller instance. Lastly, consider Step \ref{step:cc9}. Due to Steps \ref{step:cc5} and \ref{step:cc6}, the maximum size of a clique in $G(P, \unitvec)$ (once we reach Step \ref{step:cc9}) is $\Oh((\frac{1}{\alpha})^2+k^{1/4})$. So, by \Cref{obs:nctdToTd}, the width of $\cal T$ is  $\Oh((\frac{1}{\alpha})^2k^{1/2}+k^{3/4})\leq\Oh((\frac{1}{\alpha})^2k^{3/4})$. Hence, by \Cref{lem:connDP}, the algorithm runs in time $(\frac{1}{\alpha}k)^{\Oh((\frac{1}{\alpha})^2k^{3/4})}\cdot n^{\Oh(1)}$.
	This completes the proof of the lemma.
\end{proof}

From \Cref{lem:algCorrect} and \Cref{lem:algTime}, we conclude the correctness of the following theorem.
\subexpconnectivity*

\section{Approximation Schemes for (Min) \probEdgelessnessDown} \label{sec:vc-eptas}
In this section, we design approximation schemes for (\textsc{Min}) \probEdgelessnessDown. Since many of the arguments for both problems are similar, we present the algorithm in a unified manner, and only highlight the differences. As in \Cref{sec:vc}, we adopt the abbreviation $\Pi_1$ (resp.\ $\Pi_2$) for \probEdgelessnessDown (resp.\ \probEdgelessnessMinDown). 

We assume that the given instance $\cI = (P, k, \alpha)$ of $\Pi_1$ (resp.\ $\cI = (P, k, \alpha, \mu)$ of $\Pi_2$) is a yes-instance. Then, let $(S^*, r^*)$ be a solution to $\cI$ (resp.\ an optimal solution for $\Pi_2$), where $S^* \subseteq P$ with $|S^*| \le k$, and $\alpha \le r^*(p) \le 1$ iff $p \in S^*$, and $r^*(p) = 1$ otherwise. Finally, for $\Pi_2$, let $\mathsf{OPT} = \cost(S^*, r^*) \sum_{p \in S^*} (1-r^*(p))$. 

Without loss of generality, let us assume that $P$ is completely contained in the first quadrant. Let $\ell = \lceil 16/\epsilon \rceil$. For all $0 \le i, j < 2\ell$, let $G(i, j)$ be a grid of sidelength $2\ell \times 2\ell$ with origin at $(-2i, -2j)$. By slight perturbation of the grid, we assume that no point of $P$ lies on the boundary of $G(i, j)$ for any $0 \le i, j < 2\ell$.

For an index $0 \le i < 2\ell$, let $S^*_i \subseteq S^*$ be the set of points with cartersian coordinates $(x, y)$ such that $x \mod 2\ell \in (i-1, i+1]$. 

Let $A = \{ 0 \le i < 2\ell : |S^*_i| \le k/\ell \}$. For $\Pi_1$, let $B = \{0, 1, \ldots, 2\ell\}$, whereas for $\Pi_2$, let $B = \{ 0 \le i < 2\ell : \sum_{p \in S^*_i} (1-r^*(p)) \le \mathsf{OPT}/\ell \}$. Since the sets $S^*_i$ form a partition of $S^*$, it follows from an averaging argument that $|A| > \ell$ and $|B| > \ell$. Therefore, $A \cap B \neq \emptyset$. Let $\imath \in A \cap B$.  

By repeating the preceding argument w.r.t.\ the indices $j$, it follows that for there exists an index $0 \le \jmath < 2\ell$ such that the number of points of $S^*$ such that the unit disks centered at the point intersects the boundary of the grid $G(\imath, \jmath)$ is at most $k - k (1-\frac{1}{\ell})^2 = k(\frac{2}{\ell} - \frac{1}{\ell^2}) \le 2k/\ell$ -- note that this argument holds for both $\Pi_1$ and for $\Pi_2$. For $\Pi_2$, we have the additional property that the total contribution of the shrinking factors corresponding to such points is at most $\mathsf{OPT} (\frac{2}{\ell} - \frac{1}{\ell^2}) \le 2\mathsf{OPT}/\ell$. Due to the definition of $\ell$, we have that $2/\ell \le \epsilon/8$.

The first step is to ``guess'' the pair $(\imath, \jmath)$, by running the following procedure on every pair $(i, j)$ with $0 \le i, j < 2\ell$, and returning the minimum cost solution. 

Consider a $2\ell \times 2\ell$ grid cell $C$ in a grid $G(i, j)$ for some pair $(i, j)$. Let $P' \subseteq P$ be the set of points $p$ such that the closed unit disk centered at $p$ is either completely contained in $C$, or \emph{intersects} the boundary of $p$. Let $C'$ denote the \emph{expanded version} of $C$, i.e., the Minkowski sum of $C$ with a unit disk centered at the origin. Note that $P' \subseteq C'$.

Now, we further subdivide $C'$ into $\Oh(\ell^2)$ sub-cells of unit sidelength. Note that the distance between any pair of points belonging to a sub-cell is at most $\sqrt{2} < 2$, thus, the set of points in a sub-cell form a clique in the \emph{original} unit disk graph. Observe that $S^*$ must contain at least $|K|-1$ points from a clique $K$ -- otherwise there exist points $p, q \in K$ such that the neither of the unit disks centered at $p$ and $q$ is shrunk. Thus, the edge $pq$ persists in the resulting disk graph $G(P, r^*)$, which is a contradiction.  Now, two cases depending on the number of points contained in a sub-cell. 

If the number of points in a sub-cell is less than $1 + \frac{4}{\epsilon}$, then we say that this is a \emph{small} sub-cell. For every small sub-cell $c$, we guess the intersection of $S^*$ with $c$. For a fixed guess, we mark the set of points in the $c$ that belong to the guessed intersection as \emph{shrinkable}, and mark the remaining points in $c$ as \emph{unshrinkable}. From the argument in the previous paragraph, note that the number of possible intersections for a particular sub-cell is at most $\Oh(1/\epsilon)$, which implies that the total number of guesses corresponding to the grid cell $C$ is at most $2^{\Oh(1/\epsilon)}$. 

Note that each of the remaining sub-cells contains at least $1 + 4/\epsilon$ points. We say that such a sub-cell is \emph{large}. In this case, we ``mark'' all the points in the corresponding sub-cell as \emph{shrinkable}. Note that from every clique $K$ corresponding to a large sub-cell, any solution must shrink either $|K|-1$ or $|K|$ disks. Thus, by marking all points as shrinkable, our solution may shrink at most $\frac{|K|}{|K|-1} \le 1+\epsilon/4$ factor additional points.

Thus, for each fixed guess corresponding to each small sub-cell, we partition the points in $P'$ into \emph{shrinkable} and \emph{unshrinkable}. For $\Pi_1$, the situation is much simpler -- we simply iterate over all guesses, and check whether shrinking all disks corresponding to points marked as shrinkable results in an edgeless graph. Formally, for a fixed guess, we consider the radius assignment $r: P' \to \{\alpha, 1\}$, where $r(p) = \alpha$ iff $p$ is marked as shrinkable, and check whether the graph $G(P', r)$ is edgeless. By iterating over all guesses, we store a solution that shrinks the minimum number of disks, if any.

Now, we turn to $\Pi_2$. Note that here we cannot simply shrink all disks marked as shrinkable to $\alpha$, since we want to minimize the cost of the solution. Thus, we formulate the problem as a linear program. For each point $p \in P'$, we add a variable $r(p)$. For all unshrinkable points $p$, we add the constraint $r(p) = 1$; whereas for all shrinkable points $q$, we add the constraint $\alpha \le r(q) \le 1$. Finally, for each pair of points $p, q \in P'$ with $|pq| \le 2$, we add the constraint $r(p) + r(q) \le |pq|$. It is easy to see that the LP is an exact formulation of the subproblem.  In particular, if the LP is feasible, then an optimal solution to the LP is a solution to the sub-problem corresponding to the grid-cell $C$, and it does not shrink any unit disk centered at a point that was marked unshrinkable. Note that this LP has $\Oh(1/\epsilon)$ variables and $\Oh(1/\epsilon^2)$ constraints, and can be solved using the algorithm of \Cref{prop:lpsolver} in time $(1/\epsilon)^{\Oh(1/\epsilon)} \cdot \frac{1}{\epsilon^2} = (1/\epsilon)^{\Oh(1/\epsilon)}$.
For this guess, let $t$ denote the number of disks that are shrunk, i.e., the number of points $p \in P'$ with $r(p) < 1$. Note that $0 \le t \le k$ -- otherwise, we can safely discard this guess, since the total number of disks that are shrunk cannot exceed $k$. Now, we iterate over each of the $2^{\Oh(1/\epsilon)}$ guesses, and maintain a set of solutions as follows. For every $0 \le t \le k$, we denote by $A[C, t]$, the minimum cost of a solution that shrinks at most $t$ disks centered at $P'$ in the subproblem corresponding to $C$. Note that all the values $A[C, t]$ can be computed in $(1/\epsilon)^{\Oh(1/\epsilon)} \cdot n^{\Oh(1)}$ time. 

Now we use dynamic programming to combine the solutions computed for the subproblems corresponding to every cell $C$. Let us order the cells as $C_1, C_2, \ldots$ in an arbitrary manner -- say we order the rows from bottom to top, and in each row, we list the cells from left to right. For $i \ge 1$, and $0 \le k' \le k$, let $B[i, k']$ denote the minimum cost of a solution for the subproblem for the cells $C_1 \cup C_2 \cup \ldots \cup C_i$ that uses at most $k'$ disks. It is easy to compute this value using dynamic programming, via the following recurrence.
$$B[m, k'] = \min_{0 \le t \le k'} B[m-1, k'-t] + A[C_m, t].$$
It is easy to modify this computation so as to also compute the corresponding solution, although we need to be slightly careful because points near the boundaries of the cells may appear in multiple (at most $4$) subproblems. However, this is easily handled as follows -- for every point $p$, we let the final radius $r(p)$ to be the minimum of its radii from at most $4$ solutions that are combined by the dynamic program. The overall smallest cost solution for the grid $G(i, j)$ will be found at the entry $B[z, k]$, where $z$ is the index of the last cell. Finally, we simply iterate over all $\ell^2$ pairs $(i, j)$, and return the smallest cost solution over all $G(i, j)$'s. Note that dynamic programming computation is polynomial, therefore, the algorithm takes $(1/\epsilon)^{\Oh(1/\epsilon)} \cdot n^{\Oh(1)}$ time. It remains to show the correctness and the approximation guarantee of the algorithm.

Fix a pair of indices $(\imath, \jmath)$ such that the number of unit disks centered at an optimal solution $S^*$ intersects the boundary of cells is at most $\epsilon k/8$, and the total shrinking factors of such disks is at most $\epsilon \cdot \mathsf{OPT}/8$. Let $C_1, C_2, \ldots, C_y$ denote the cells of the grid $G(\imath, \jmath)$, and for any $1 \le m \le t$, let $S^*_m \subseteq S^*$ denote the set of points of $S^*$ that are contained in the expanded version $C'_m$ of the cell $C_m$. Let $S^*_m = T^*_m \uplus L^*_m$, where $T^*_m$ and $S^*_m$ denote the subsets of points from $S^*_m$ that belong to the tiny and large sub-cells respectively. 

Out of $(1/\epsilon)^{\Oh(1/\epsilon^2)}$ guesses corresponding to $C_m$, let us fix the guess where the set of points from the tiny sub-cells of $C_m$ that are marked as shrinkable is exactly $T^*_m$. Note that for large sub-cells, all the points are marked as shrinkable, and the total number of shrinkable points from large sub-cells is at most $(1+\epsilon/4) \cdot |L^*_m|$. Therefore, the total number of shrinkable points corresponding to this guess is at most $|T^*_m| + (1+\epsilon/4) \cdot |L^*_m| \le (1+\epsilon) \cdot |S^*_m| =: t^*_m$. This implies that the solution $(S^*_m, r^*_m)$ is a valid solution to the subproblem corresponding to $A[C_m, t^*_m]$, where $r^*_m$ is a restriction of $r^*$ to the points in $S^*_m$.

Let $R^*_m \subseteq S^*_m$ be the set of points $p$ such that the unit disk centered at $p$ intersects the boundary of a cell $C_m$ in the grid $G(\imath, \jmath)$, and let $Q^*_m \coloneqq S^*_m \setminus R^*_m$. Let $R^* = \bigcup_{m = 1}^y R^*_m$, and let $Q^* \coloneqq S^* \setminus R^*$. Note that for any $p \in R^*$, the number of cells $C_m$ such that the unit disk centered at $p$ intersects the boundary of $C_m$ is at most $4$. Therefore, $p$ is counted in at most $4$ subproblems corresponding to different cells $C_m$, i.e., $|\left\{ m : p \in S^*_m \right\}| \le 4$. On the other hand, for any $p \in S^* \setminus R^*$, it occurs in exactly one $S^*_m$. Furthermore, we know that $\sum_{m = 1}^y |R^*_m| \le 4 \cdot |R^*| \le 4 \epsilon \cdot k/8 \le \epsilon \cdot k/2$, and $\sum_{p \in R^*} (1-r^*(p)) \le \epsilon \cdot \mathsf{OPT}/8$. Now, note that

\begin{align*}
	\sum_{m = 1}^y A[C_m, t_m] &\le \sum_{m = 1}^y \sum_{p \in S^*_m} (1-r^*_m(p)) \tag{Since $(S^*_m, r^*_m)$ is a valid solution for $A[C_m, t_m]$}
	\\&= \sum_{m = 1}^y \lr{ \sum_{p \in R^*_m} (1-r^*_m(p)) + \sum_{p \in S^*_m \setminus R^*_m} (1-r^*_m(p))}
	\\&\le \sum_{p \in Q^*} (1-r^*(p)) + 4 \sum_{p \in R^*} (1-r^*(p)) 
	\\&\le \mathsf{OPT} + 4 \epsilon/8 \cdot \mathsf{OPT}
	\\&= (1+\epsilon/2) \cdot \mathsf{OPT}
\end{align*}

For every $1 \le m \le y$ and $0 \le t_m \le (1+\epsilon)k$, let $S[C_m, t_m] \subseteq P \cap C'_m$ be the solution corresponding to the subproblem $A[C_m, t_m]$. Then, note that,
\begin{align*}
	\sum_{m = 1}^y |S[C_m, t^*_m]| &\le \sum_{m = 1}^y (1+\epsilon/4) \cdot |S^*_m| 
	\\&\le (1+\epsilon/4)\cdot\sum_{m = 1}^y |R^*_m| + |Q^*_m|
	\\&\le (1+\epsilon/4) \cdot \lr{|Q^*| + \epsilon \cdot k/2 }
	\\&\le (1+\epsilon) \cdot k \tag{Since $|Q^*| \le |S^*| \le k$}
\end{align*}
This concludes the analysis for $\Pi_2$. For $\Pi_1$, a similar (but much simpler) argument shows that the solution found by combining the optimal solution for each subproblem shrinks at most $(1+\epsilon)k$ disks. Thus, we obtain the following results.

\begin{theorem}
	There exists an algorithm that, given an instance $(P, k, \alpha)$ of \probEdgelessnessDown, and a fixed $\epsilon > 0$, runs in time $2^{\Oh(\frac{1}{\epsilon} \log(\frac{1}{\epsilon}))} \cdot n^{\Oh(1)}$, and either concludes that there exists no solution of size $k$; or returns a solution $(S, r)$, such that (i) $|S| \le (1+\epsilon)k $, (ii) $r(p) = \alpha$ for all $p \in S$, and $r(q) = 1$ for all $q \not\in S$, and (iii) $G(P, r)$ is edgeless.
\end{theorem}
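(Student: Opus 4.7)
The plan is to adapt the shifting-based EPTAS pipeline described above for \probEdgelessnessMinDown to the cardinality objective, exploiting the fact that the linear program used there can be dispensed with when we only have to track the \emph{number} of shrunken disks. Fix $\ell = \lceil 16/\epsilon \rceil$ and enumerate all shift pairs $(i, j)$ with $0 \le i, j < 2\ell$, giving the grid $G(i, j)$ of cells of side $2\ell$. Applied to an unknown optimal cardinality solution $S^\ast$ of size at most $k$, the standard Baker/Hochbaum--Maass averaging argument produces a good shift $(\imath, \jmath)$ for which the set $R^\ast \subseteq S^\ast$ of points whose unit disk crosses some cell boundary has size $|R^\ast| \le \epsilon k / 8$; trying all $\Oh(1/\epsilon^2)$ shifts hits this one.

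For each cell $C$ of $G(\imath, \jmath)$, expand to $C'$ (the Minkowski sum of $C$ with a unit disk) and partition $C'$ into sub-cells of side $1$. Every sub-cell has diameter at most $\sqrt{2} < 2$, so its points induce a clique in $G(P, \unitvec)$, and any feasible solution may leave at most one disk of each sub-cell unshrunk. Call a sub-cell \emph{small} if it contains at most $1 + 4/\epsilon$ points and \emph{large} otherwise. For each small sub-cell I guess the subset of points that an optimum shrinks: there are only $\Oh(1/\epsilon)$ such choices per sub-cell because at most one point can remain unshrunk. For each large sub-cell I simply mark every point as shrinkable, which inflates the count from that clique by at most $|K|/(|K|-1) \le 1 + \epsilon/4$.

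For the cardinality version each guess immediately specifies a radius assignment $r$ on $P \cap C'$, and the predicate ``$G(P \cap C', r)$ is edgeless'' can be checked in polynomial time; the LP of the cost-minimisation version is unnecessary. I therefore record, for each cell $C$ and each $0 \le t \le k$, whether there is a feasible shrinkage of size at most $t$, and if so store one. These per-cell tables are merged with the same dynamic program $B[m, k'] = \min_{0 \le t \le k'}\bigl(B[m-1, k'-t] + A[C_m, t]\bigr)$ used for the minimisation version, and points lying in the boundary strips that belong to up to four neighbouring cells are handled by taking the union of the per-cell shrunken sets (the only effect is a bounded overcounting that the analysis already anticipates).

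The main obstacle is the cardinality accounting. Writing $Q^\ast = S^\ast \setminus R^\ast$, each $p \in Q^\ast$ is counted in exactly one subproblem while each $p \in R^\ast$ contributes to at most four. For each cell $C_m$ consider the guess in which the shrunken set on small sub-cells equals $S^\ast_m$ restricted to those sub-cells; the large sub-cells contribute at most a $(1 + \epsilon/4)$ factor over the optimum there. Summing,
\[
\sum_{m} |S[C_m, t_m^\ast]| \;\le\; (1 + \epsilon/4)\bigl(|Q^\ast| + 4|R^\ast|\bigr) \;\le\; (1 + \epsilon/4)\bigl(k + \epsilon k/2\bigr) \;\le\; (1 + \epsilon)k,
\]
which yields the approximation guarantee; the running time is dominated by the $2^{\Oh((1/\epsilon)\log(1/\epsilon))}$ factor from the per-cell enumeration of guesses, matching the claimed bound.
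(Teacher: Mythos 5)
Your proposal is correct and follows essentially the same route as the paper's proof: the same shifting with $\ell = \lceil 16/\epsilon\rceil$, the same expanded cells and unit sub-cells split into small (guess the shrunk subset) and large (mark everything shrinkable, losing a $1+\epsilon/4$ factor), the same per-cell table combined by the dynamic program, and the same $|Q^\ast| + 4|R^\ast|$ accounting for boundary-crossing disks. Your observation that the LP is unnecessary for the cardinality objective and that per-guess feasibility reduces to checking edgelessness of $G(P\cap C', r)$ is exactly how the paper treats \probEdgelessnessDown as the ``simpler'' case, and taking the union of shrunken sets on overlapping boundary strips matches the paper's rule of taking the minimum radius over the at most four cell solutions.
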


\begin{theorem}
	There exists an algorithm that, given an instance $(P, k, \alpha, \mu)$ of \probEdgelessnessMinDown, and a fixed $\epsilon > 0$, runs in time $2^{\Oh(\frac{1}{\epsilon} \log(\frac{1}{\epsilon}))} \cdot n^{\Oh(1)}$, and either concludes that there exists no solution of size $k$; or returns a solution $(S, r)$, such that (i) $|S| \le (1+\epsilon)k $, (ii) $\sum_{p \in S} (1-r(p)) \le (1+\epsilon) \cdot {\sf OPT} \le (1+\epsilon) \mu$, (iii) $\alpha \le r(p) < 1$ for all $p \in S$; otherwise $r(q) = 1$ for all $q \in P \setminus S$, and (iv) $G(P, r)$ is edgeless.
\end{theorem}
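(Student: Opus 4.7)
The plan is to follow the well-known shifting technique of Baker and Hochbaum-Maass. First, I would assume without loss of generality that the point set $P$ lies in the first quadrant, and fix $\ell = \Theta(1/\epsilon)$. For every shift $(i,j)$ with $0 \le i, j < 2\ell$, consider the grid $G(i,j)$ with cells of side length $2\ell \times 2\ell$ whose origin is translated by $(-2i, -2j)$. A standard averaging argument over the optimal solution $(S^\ast, r^\ast)$ shows there is a shift $(\imath, \jmath)$ such that simultaneously (a) the number of optimal disks centered at points whose unit disk crosses some cell boundary is at most $\epsilon k / 2$, and (b) the total cost contribution of those disks is at most $\epsilon \cdot \mathsf{OPT} / 2$. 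Since the desired shift is unknown, the algorithm iterates over all $O(\ell^2)$ shifts and returns the best solution found.

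For a fixed shift and a single $2\ell \times 2\ell$ cell $C$, let $C'$ be the Minkowski expansion of $C$ by a unit disk, and $P' = P \cap C'$. Subdivide $C'$ into unit sub-cells; any two points of a unit sub-cell are at distance $\le \sqrt{2}$ and therefore induce a clique in $G(P,\unitvec)$, so any feasible solution must shrink all but at most one disk per sub-clique. Call a sub-cell \emph{small} if it contains at most $1 + 4/\epsilon$ points and \emph{large} otherwise. For every small sub-cell, guess its intersection with $S^\ast$ (there are at most $O(1/\epsilon)$ admissible guesses per small sub-cell; however, since all but at most one point per sub-clique must be shrunk, the total number of ``shapes'' of guesses across all small sub-cells is $2^{O(1/\epsilon)}$). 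For every large sub-cell, mark \emph{all} of its points as shrinkable; this over-commits by at most a factor $|K|/(|K|-1) \le 1 + \epsilon/4$ in the number of shrunken disks, which is absorbed into the bicriteria slack.

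Each joint guess partitions $P'$ into \emph{shrinkable} and \emph{unshrinkable} points. To find the optimal cost compatible with the guess, I would write a linear program with variables $r(p)$ for $p \in P'$, constraints $r(p)=1$ if $p$ is unshrinkable, $\alpha \le r(p) \le 1$ if shrinkable, and $r(p)+r(q) \le |pq|$ for every pair with $|pq|\le 2$, minimizing $\sum_{p \in P'}(1-r(p))$. This LP has $O(1/\epsilon)$ variables and $O(1/\epsilon^2)$ constraints, hence by Proposition~\ref{prop:lpsolver} is solvable in $(1/\epsilon)^{O(1/\epsilon)}$ time in the Real RAM model. For each cell $C$ and each $0 \le t \le k$, I record the cheapest solution that shrinks at most $t$ disks as $A[C,t]$.

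To combine the per-cell solutions into a global one, I would run a straightforward DP over an arbitrary ordering $C_1, \ldots, C_y$ of cells, with $B[m,k'] = \min_{0\le t \le k'}\bigl(B[m-1,k'-t] + A[C_m,t]\bigr)$, carefully reconciling duplicated points that lie in the expanded regions of up to $4$ neighboring cells by taking the minimum assigned radius (which can only help feasibility since every pairwise-distance constraint was enforced in each subproblem). For the analysis of the bicriteria guarantee, I would restrict attention to the ``good'' shift $(\imath,\jmath)$ and to the guess that agrees with $S^\ast$ on every small sub-cell: by the choice of the shift and the large-sub-cell blow-up, the total number of disks shrunk in the returned solution is at most $(1+\epsilon/4)\cdot(|S^\ast| + \epsilon k/2) \le (1+\epsilon)k$, and its cost is at most $\mathsf{OPT} + 4 \cdot \epsilon \mathsf{OPT}/8 = (1+\epsilon/2)\mathsf{OPT} \le (1+\epsilon)\mu$. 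The main subtlety—and what I expect to be the trickiest step to pin down carefully—is accounting for the up-to-factor-$4$ double counting of boundary points in the DP combination, making sure that the combined radius assignment remains feasible (no edge is reinstated) while its cost is still bounded by the sum of the per-cell costs plus the small boundary contribution.
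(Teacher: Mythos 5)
Your proposal is correct and follows essentially the same route as the paper: the same shifting over $O(\ell^2)$ translated $2\ell\times 2\ell$ grids, expanded cells $C'$, unit sub-cells as cliques with the small/large threshold $1+4/\epsilon$, per-guess LP solved via Proposition~\ref{prop:lpsolver}, the table $A[C,t]$, the knapsack-style DP $B[m,k']$, and taking the minimum radius for points shared by up to $4$ expanded cells. The only adjustment needed is in the constants: pick the shift so that the boundary disks number at most $\epsilon k/8$ and contribute cost at most $\epsilon\,\mathsf{OPT}/8$ (the paper's $\ell=\lceil 16/\epsilon\rceil$), so that the factor-$4$ double counting you flag still yields $(1+\epsilon)k$ and $(1+\epsilon/2)\mathsf{OPT}$; with your looser $\epsilon k/2$ and $\epsilon\,\mathsf{OPT}/2$ the multiplied-by-$4$ bounds would overshoot.
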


\section{Lower Bounds}\label{sec:lower} 

\subsection{NP-hardness of  {\sc $k$-Shrinking to Connectivity}}\label{subsec:conn-NP-hard}  
In this subsection, we show that  \probConnectivityFracDown
is  \classNP-hard.

Our lower bound is obtained by a reduction from \textsc{Monotone Planar $3$-SAT}.  Given a Boolean formula $\varphi$ in the conjunctive normal form with $n$ variables $v_1,\ldots,v_n$ and $m$ clauses $C_1,\ldots,C_m$, we associate with $\varphi$ the bipartite graph $G_\varphi$ with the vertex set $\{v_1,\ldots,v_n\}\cup\{C_1,\ldots,C_m\}$ such that $v_i$ is adjacent to $C_j$  if and only if the clause $C_j$ contains either $v_i$ or its negation. 
By the well-known result of Lichtenstein~\cite{Lichtenstein82}, \textsc{$3$-SAT} is \classNP-complete for the instances $\varphi$ with planar $G_\varphi$. Moreover (see~\cite{Lichtenstein82}), the problem remains hard if (a) $\varphi$ is monotone in the sense that each clause of $\varphi$ contain either only positive occurrences  of variables or only negations, (b) the graph $G_\varphi'$ obtained from $G_\varphi$ by the additions of the edges $v_{i-1}v_i$ for $i\in\{1,\ldots,n\}$ (assuming that $v_0=v_n$) forming the cycle $C$ is planar, and (c) $G_\varphi'$ has a planar embedding such that each $C_j$ is in the inner face of $C$ if the clause $C_j$ contains positive literals and is in the outer face otherwise (see Figure~\ref{fig:sat} a). Throughout this subsection, whenever we refer to \textsc{Monotone Planar $3$-SAT}, we mean the restricted variant of \textsc{$3$-SAT} with the  instances satisfying (a)--(c).  

Furthermore, we use the observation of  Knuth and
Raghunathan~\cite{KnuthR92} that $G_\varphi$ can be drawn in a special way. Following~\cite{KnuthR92}, we give an informal description of such a drawing (see Figure~\ref{fig:sat} b). The variables $v_1,\ldots,v_n$ are represented by segments of a line $L$,  the clauses $C_1,\ldots,C_m$ are drawn as points above or below $L$ depending on whether $C_j$ is positive or negative, and the edges incident to each $C_j$ is drawn as rectilinear   ``three-legged'' structures.  These three-legged structures are properly nested so that none of the legs between clauses and variables cross each other. Such a drawing in a rectilinear grid can be constructed in polynomial time. We start our reductions from such drawings.

\begin{figure}[ht]
	\centering
	\scalebox{0.7}{
		\input{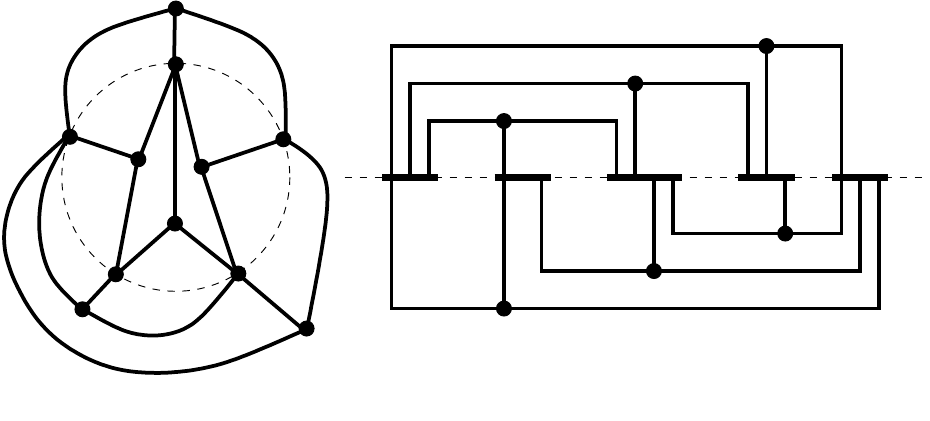_t}
	}
	\caption{Graphs $G_\varphi$ and $G_\varphi'$ for $\varphi=(v_1\vee v_2\vee v_3)\wedge(v_1\vee v_3\vee v_4)\wedge(v_1\vee v_4\vee v_5)\wedge (\overline{v}_1\vee\overline{v}_2\vee \overline{v}_5)\wedge (\overline{v}_2\vee\overline{v}_3\vee \overline{v}_5)\wedge (\overline{v}_3\vee\overline{v}_4\vee \overline{v}_5))$ (a) and the drawing of $G_\varphi$ (b). }\label{fig:sat}
\end{figure}

It is convenient to prove hardness for the dual problem, namely, \probConnectivityFracUp. 
Observe that expanding $k$ disks of radius $1$ to radius $\beta$  is equivalent to shrinking  $|P|-k$ disks of radius $\beta$ to radius $1$. Thus, \probConnectivityFracDown and  \probConnectivityFracUp are indeed dual.

\probConnectivityFracUp is trivial if $\beta=1$. We show that the problem is \classNP-hard for every $\beta>1$.

\begin{theorem}\label{thm:hard-one}
	For every rational $\beta>1$, \probConnectivityFracUp  is  \classNP-hard. Furthermore, for any $\beta\geq 2$, the problem is \classNP-hard when restricted to the sets of points with integer coordinates. 
\end{theorem}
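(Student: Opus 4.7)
}
My plan is to reduce from \textsc{Monotone Planar 3-SAT} using the rectilinear Knuth--Raghunathan drawing of $G_\varphi$, where variables are horizontal segments along a line $L$, positive (resp.\ negative) clauses are points above (resp.\ below) $L$, and the clause-to-literal edges are non-crossing three-legged rectilinear wires. For a fixed rational $\beta>1$, I will embed a family of unit-disk gadgets into (a polynomial blow-up of) this drawing, producing a point set $P$ and budget $k$ such that $(P,\beta,k)$ is a yes-instance of \probConnectivityFracUp if and only if $\varphi$ is satisfiable. The single geometric primitive underlying the whole construction is what I will call a \emph{bridgeable gap}: two unit disks whose centers are at distance $d\in(2,1+\beta]$ are non-adjacent in $G(P,\unitvec)$ but become adjacent as soon as one of them is expanded to radius $\beta$. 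Every other pair of centers used in the construction will be placed at distance either $\le 2$ (already adjacent) or $>2\beta$ (still non-adjacent even after both are expanded), so that expansions cannot create unintended shortcuts across the gadget boundaries.

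The gadgets are as follows. For variable $v_i$ occurring in $t_i$ clauses, I place a thin "necklace" of unit disks along its segment; the consecutive disks are at distance $2$ (so the necklace is internally connected) except for $2t_i$ deliberately placed bridgeable gaps, split into $t_i$ positive gaps on the upper side and $t_i$ negative gaps on the lower side, one per incident wire. Using a double-track necklace paired so that bridging one gap on one side forces bridging its twin, one can enforce that making the necklace a single connected component costs exactly $t_i$ expansions, and the only two minimum-cost ways to do so are "expand all positive gaps" ($v_i=\true$) or "expand all negative gaps" ($v_i=\false$). Along each three-legged wire I string a chain of unit disks, internally at distance $2$, that is welded to the corresponding side of the variable necklace through a single shared bridgeable gap; thus the wire is connected to the necklace iff $v_i$ is assigned so that its literal in the clause is true. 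Finally, at the tip of each clause $C_j$ I place a small clique of unit disks joined to the three incoming wires only through three bridgeable gaps, one per wire, and I ensure (by the welding above) that these three gaps are \emph{automatically} bridged whenever the corresponding wire is active, so no extra expansion is spent at the clause. I set $k=\sum_i t_i$.

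If $\varphi$ admits a satisfying assignment $\tau$, I expand exactly the $t_i$ gaps on the $\tau(v_i)$-side of each variable necklace; this costs $k$ expansions, connects every necklace, every wire corresponding to a true literal, and (since $\tau$ satisfies every clause) at least one wire per clause, so the clause clusters also join the rest, yielding a connected intersection graph. Conversely, a solution of size at most $k$ must, by a cut argument on the paired-gaps structure, spend all $t_i$ expansions of variable $i$ on a single side; reading off these sides gives a truth assignment, and the connectivity of each clause cluster forces at least one of its wires to be active, i.e.\ the corresponding literal to be true. The Knuth--Raghunathan grid can be scaled by a polynomial depending on $\beta$ so that every distance used in the construction is realizable with rational coordinates of polynomial bit-length. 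For $\beta\ge 2$ we have $1+\beta\ge 3$ and $2\beta\ge 4$, so bridgeable gaps of integer length $3$ are available and non-interacting disks can be separated by integer distance $\ge 5>2\beta$; this leaves enough slack to snap every center to the integer grid, giving the stronger integer-coordinate statement.

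The main obstacle I anticipate is the variable gadget: I need that the \emph{only} ways to keep the necklace internally connected while using at most $t_i$ expansions are the two monochromatic choices, with no hybrid choice saving expansions. The paired double-track idea sketched above forces this, but verifying it requires checking that no clique of three or more close-by disks inside the necklace accidentally lets one expansion substitute for two. A secondary obstacle is quantitative: when $\beta$ is close to $1$ the usable interval $(2,1+\beta]$ for bridgeable gaps is thin, so all gadget coordinates must be chosen with precision scaling like $1/(\beta-1)$; I expect a straightforward but careful polynomial blow-up of the drawing to handle this uniformly for every fixed rational $\beta>1$.
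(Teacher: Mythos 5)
Your high-level plan matches the paper's: both reduce from \textsc{Monotone Planar 3-SAT} via the Knuth--Raghunathan rectilinear drawing, build variable gadgets with exactly two ``synchronized'' minimum-cost expansion patterns encoding \true/\false, attach clause structures through expansion-bridgeable gaps, and set the budget tight so that clause attachment must come for free from the variable-side expansions. However, as written your construction has a concrete failure in how wires and clauses are attached. You place a bridgeable gap at \emph{both} ends of each wire (wire-to-necklace and wire-to-clause-clique) and claim the clause-end gap is ``automatically bridged whenever the corresponding wire is active.'' No expansion in your budget can do this: the only expanded disks are at the variable necklace gaps, and an expansion there cannot make two unit disks near the distant clause tip adjacent. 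Consequently, for a satisfying assignment, every wire whose literal is \emph{false} is bridged at neither end and floats as its own connected component, so $G(P,r)$ is not connected and the forward direction of your equivalence fails (alternatively, bridging the clause-end gaps would require extra expansions, breaking the budget $k=\sum_i t_i$). The paper avoids this by making each clause's three-legged structure (wires included) a single permanently connected component with \emph{no} gap at the clause end; its only gaps to the rest of the instance are the three bridgeable distances to designated $y$-points of the variable gadgets, so connectivity forces at least one of those $y$-points to be expanded, i.e.\ one literal to be true, while false-literal legs remain attached through the clause component.

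The second gap is the one you yourself flag but do not resolve: the claim that connecting a variable necklace costs exactly $t_i$ expansions, that only the two monochromatic choices achieve this, and that the global budget cannot be reallocated across variables. Counting gaps is not enough, because a single expanded disk can merge several components at once, and ``number of gaps'' does not lower-bound the number of expansions without a packing argument. The paper's mechanism is worth noting here: it introduces per-gadget demand points ($z$-points) whose disks are connected to the rest only if some disk within a prescribed small neighborhood is expanded, chooses distances so that no single expanded disk can serve two distinct $z$-points, and sets $k$ equal to the total number of $z$-points. This pins every expansion to a unique $z$-point (so nothing outside the gadgets can be expanded), and the auxiliary point sets $A_{ij}$ then propagate the choice along the gadget, forcing the all-positive or all-negative pattern. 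Your ``double-track necklace'' would need an analogous exclusivity argument (each of $t_i$ demands met by a distinct expansion, with geometry ruling out one expansion covering two demands) to make both the per-variable lower bound and the synchronization go through; without it, hybrid or cheaper solutions are not excluded. A minor further omission: you never connect the different variable necklaces to one another (the paper inserts connector points $b_1,\ldots,b_{m-1}$ along $L$); this is easy to fix but currently also breaks connectivity of the yes-instances.
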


\begin{proof}
	We reduce from \textsc{Monotone Planar $3$-SAT} with instances satisfying (a)--(c). Consider such an instance $\varphi$ with variables $v_1,\ldots,v_n$ and clauses $C_1,\ldots,C_m$. Let $\ell$ be be the maximum number of occurrences of a variable in the clauses. Let also 
	$\delta>0$ be a rational constant and let
	$s=2\delta n(\ell+1)$, $q=2(\lceil \delta/2\rceil+1)$ and $d=q+6$. 
	For the proof for $\beta<2$, we take $\delta=\beta-1$ and for the case $\beta\geq 2$, we set $\delta=\lfloor \beta\rfloor-1$.

	\begin{sidewaysfigure}[p]
		\centering
		\scalebox{0.75}{
			\input{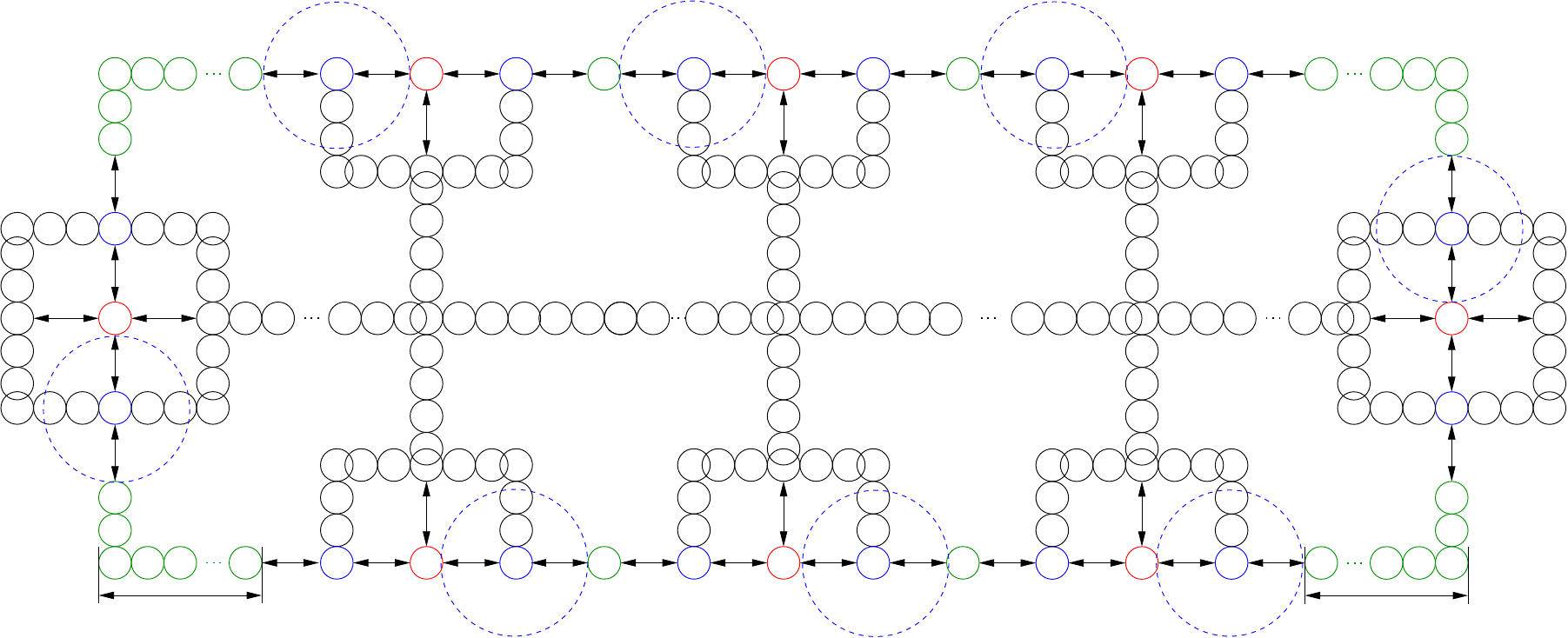_t}
		}
		\caption{The variable gadget for $v_i$. The radius scaling corresponding to the positive assignment of $v_i$ is shown by dashed blue circles.}\label{fig:var-one}
	\end{sidewaysfigure}
	
	For each variable $v_i$, we construct the set of points as it is shown in Figure~\ref{fig:var-one}. Since we are constructing an instance of \probConnectivityFracUp, we draw disks of radius one having their centers in the points instead of points themselves to show the connectivity.  Denote $P_i$ the set of points in the gadget. The points $y_{ij},\overline{y}_{ij},y'_{ij},\overline{y}'_{ij}$ are used to connected the variable gadget with the gadgets that will be constructed for clauses. The crucial idea behind the construction of the gadget is that if we are forced to set $r(p)=\beta$ to at most $2\ell+2$ points $p$ in the gadget then (a)~we have to set $r(p)=\beta$ for exactly $2\ell+2$ points, 
	(b)~because of the points $z_i,z'_{i}$ and $z_{ij},z'_{ij}$ for $j\in\{1,\ell\}$, for each of these $z$-points, we have to assign $r(p)=\beta$ to one of closest the $y$-points $p$, and moreover (c)~the assignment for $y$-points has to be ``synchronized'' as it is shown in   Figure~\ref{fig:var-one}. Thus, we obtain exactly two possible choices for the assignment corresponding to the value of the variable $v_i$.

	Recall that $G_\varphi$ has a special planar embedding demonstrated in Figure~\ref{fig:sat}. We use this and construct the set of points $P$ as it is shown in Figure~\ref{fig:clause-one}:
	\begin{itemize}
		\item arrange variable gadgets for $v_1,\ldots,v_n$ along a line at distances two and construct $m-1$  ``connecting'' greenpoints $b_1,\ldots,b_{m-1}$ between them,
		\item for each $k\in\{1,\ldots,m\}$, construct a set of points $S_k$ (shown in magenta) corresponding to the clause $C_k$ in such a way that the unit disk graph with disks having their centers in $S_k$ form a  ``three-legged'' structure; the structures for distinct clauses should be non-intersecting. 
	\end{itemize}
	We set $N=|P|$ and define $k=2n(\ell+1)$. This completes the construction of the instance $(P,k,\beta)$ of \probConnectivityFracUp.
	Since $\delta$ is rational, we have that the points of $P$ have rational coordinates, and if $\delta$ is integer, then 
	the coordinates of the points can be chosen to be integer.
	By~\cite{KnuthR92}, it can be seen that the construction of $(P,k,\beta)$ from $\varphi$
	can be done in polynomial time.
	
	\begin{sidewaysfigure}[p]
		\centering
		\scalebox{0.75}{
			\input{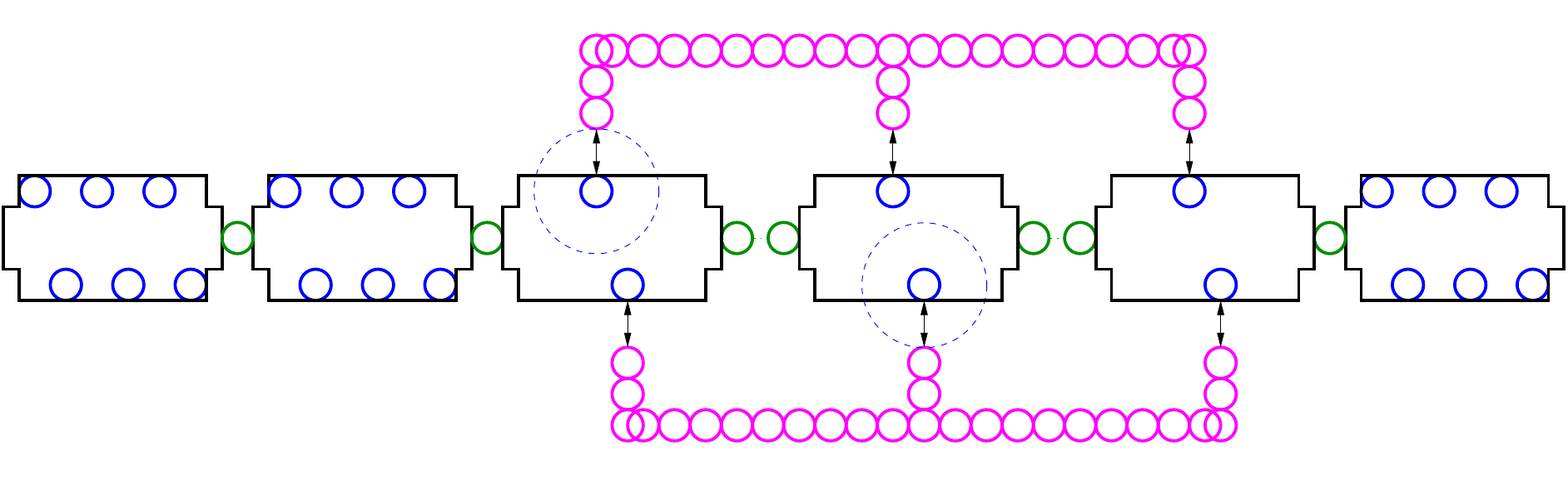_t}
		}
		\caption{The connection of  the variable gadgets and the construction of the clause gadgets; here $C_r=v_h\vee v_i\vee v_j$ and 
			$C_r=\overline{v}_h\vee \overline{v}_i\vee \overline{v}_j$.
			The radius epansion corresponding to the positive assignment of $v_h$ and the negative assignment of $v_i$ is shown by dashed blue circles.}\label{fig:clause-one}
	\end{sidewaysfigure}
	
	We claim that the Boolean variables $v_1,\ldots,v_n$ have an assignment satisfying $\varphi$ if and only if  $(P,k,\beta)$ is a yes-instance of \probConnectivityFracUp.
	
	First, we assume that $\varphi$ is satisfied by an assignment of $v_1,\ldots,v_n$. For each $i\in\{1,\ldots,n\}$, we set 
	$S_i=\{y_i, y_{i'}\}\cup\{y_{ij},y_{ij}'\mid j\in\{1,\ldots,\ell\}\}$ if $v_i=true$. Symmetrically, we set 
	$S_i=\{\overline{y}_i, \overline{y}_i'\}\cup\{\overline{y}_{ij},\overline{y}_{ij}\mid j\in\{1,\ldots,\ell\}\}$ if $v_i=false$ (see Figure~\ref{fig:var-one}). Then we define $S=\bigcup_{i=1}^nS_i$ and 
	define the function $r\colon P\rightarrow \{1,\beta\}$ by setting 
	$r(p)=\beta$ for $p\in S$ and $r(p)=1$ for the other points $p$. Notice that $|S|=2n(\ell+1)=k$. 
	
	For each $i\in\{1,\ldots,n\}$, we have that the disk graph $G(P_i,r_i)$, where  $r_i$ is the restriction of $r$ on $P_i$,
	is connected (see Figure~\ref{fig:var-one}). Due to the disks of radius one with centers in the points in $b_1,\ldots,b_m$, the subgraphs $G(P_i,r_i)$ are connected with each other in $G(P,r)$.
	For each $k\in \{1,\ldots,m\}$, $G(S_k,r_k')$, where $r_k'$ is the restriction of $r$ on $S_k$,  is connected (see Figure~\ref{fig:clause-one}). Moreover, because $\varphi$ is satisfied, we have that if $C_k=v_h\vee v_i\vee v_j$, then one of the graphs  $G(P_h,r_h)$, $G(P_i,r_i)$, $G(P_j,r_j)$ has a disks intersecting a disk of  $G(S_k,r_k')$. For clauses with negative literals, the arguments are the same. This implies that  $G(P,r)$ is connected. 
	
	For the opposite direction, assume that $S\subseteq P$ is a set of point of size at most $k$ such that $G(P,r)$ is connected for 
	$r(p)=
	\begin{cases}
		\beta&\mbox{if } p\in S,\\
		1&\mbox{if } p\in P\setminus S.
	\end{cases}
	$
	
	For each $i\in\{1,\ldots,n\}$, let 
	$Z_i=\{z_i,z_i',z_{i1},\ldots,z_{i\ell},\overline{z}_{i1},\ldots,\overline{z}_{i\ell}\}$ (see Figure~\ref{fig:var-one}; these points are shown in red), and let $Z=\bigcup_{i=1}^nZ_i$. 
	Because the disk graph $G(P,r)$ is connected, for every $z\in Z$, there is a point $p\in P$ distinct from $z$ such that $r(z)+r(p)\leq 1+\beta$, that is, either $z\in S$ or $p\in S$.
	We say that $p$ is \emph{assigned} to $z$ if $p\in S$ and $r(p)=\beta$. By the choice of $\delta$ and $q$, no point $p$ can be assigned to two distinct $z,z'\in Z$. Because $|Z_i|=2(\ell+1)$ for every $i\in\{1,\ldots,n\}$, we have that $|Z|=2n(\ell+1)=k$. We obtain that (a) for every $z\in Z$, at most one point $p_z\in P$ is assigned to $z$ and (b) for every $p\in P\setminus Z$ that is not assigned to any point of $Z$, $r(p)=1$.  Also by the choice of $\delta$ and $q$, we  have that for  every $i\in\{1,\ldots,n\}$, $p_{z_i}\in\{y_i,\overline{y}_i\}$, $p_{z_i'}\in\{y_i',\overline{y}_i'\}$, and $p_{z_{ij}}\in\{y_{ij},\overline{y}_{ij}\}$, $p_{z_{ij}'}\in\{y_{ij}',\overline{y}_{ij}'\}$ for $j\in\{1,\ldots,\ell\}$ (see Figure~\ref{fig:var-one}).

	Consider $i\in\{1,\ldots,n\}$. Suppose that either no point is assigned to $z_i$ or $p_{z_i}=y_i$. Then consider the points from the set $A_{i0}$ (again, we refer to Figure~\ref{fig:var-one}). Since $r(p)=1$ for $p\in A_{0i}$, we have that $p_{z_{i1}}=y_{i1}$ to ensure the connectivity. By the same arguments for $A_{i1}$, $p_{z_{i2}}=y_{i2}$. Iterating, we obtain that $r(y_i)=r(y_i')=r(y_{i1})=\ldots=r(y_{i\ell})=r(y_{i1}')=\ldots=r(y_{i\ell}')=\beta$ and 
	$r(\overline{y}_i)=r(\overline{y}_i')=r(\overline{y}_{i1})=\ldots=r(\overline{y}_{i\ell})=r(\overline{y}_{i1}')=\ldots=r(\overline{y}_{i\ell}')=1$. 
	If this holds, we set the value of the Boolean variable $v_i=true$. Symmetrically, if $p_{z_i}=\overline{y}_i$, then 
	$r(y_i)=r(y_i')=r(y_{i1})=\ldots=r(y_{i\ell})=r(y_{i1}')=\ldots=r(y_{i\ell}')=1$ and 
	$r(\overline{y}_i)=r(\overline{y}_i')=r(\overline{y}_{i1})=\ldots=r(\overline{y}_{i\ell})=r(\overline{y}_{i1}')=\ldots=r(\overline{y}_{i\ell}')=\beta$, and we set $v_i=false$ in this case. 
	We claim that this is a satisfying assignment for $\varphi$.   
	
	To see the claim, consider $C_k$ for some $k\in\{1,\ldots,m\}$. By symmetry, we can assume that $C_k=v_h\vee v_i\vee v_j$, that is, the clause contains only positive literals. Observe that the set of points $S_k$ corresponding to $C_k$ (see Figure~\ref{fig:clause-one}) contains exactly tree points that are at distance $2\alpha+\delta$ from $y_{hh'}$, $y_{ii'}$, and $y_{jj'}$, respectively, for some $h',i',j'\in\{1,\ldots,\ell\}$, and other points of $S_k$ are at distances that are bigger than $\beta+1$ from the points of the variable gadgets. Then because $G(P,r)$ is connected,
	$r(y_{hh'})=\beta$ or $r(y_{ii'})=\beta$ or $r(y_{jj'})=\beta$. This means that at least one of the variables $x_h$, $x_i$, $x_j$ is set $true$. This proves that $C_k$ is satisfied. 
	As $k\in\{1,\ldots,m\}$ is arbitrary, we obtain that $\varphi$ is satisfied. This completes the proof.
\end{proof}

%
%
%
%

By the duality between \probConnectivityFracUp and \probConnectivityFracDown, we obtain the following corollary.

\begin{corollary}\label{cor:hard-two}
	For every rational positive $\alpha<1$, \probConnectivityFracDown  is  \classNP-hard. 
\end{corollary}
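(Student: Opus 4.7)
The plan is to derive the corollary directly from Theorem~\ref{thm:hard-one} via the duality remark made just before that theorem. Fix an arbitrary rational $\alpha \in (0,1)$ and set $\beta \coloneqq 1/\alpha$, which is rational and strictly greater than $1$. By Theorem~\ref{thm:hard-one}, \probConnectivityFracUp is \classNP-hard for this value of $\beta$, so it suffices to exhibit a polynomial-time reduction from \probConnectivityFracUp at radius $\beta$ to \probConnectivityFracDown at radius $\alpha$.

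Given an instance $(P, \beta, k)$ of \probConnectivityFracUp, I would construct the instance $(P', \alpha, k')$ of \probConnectivityFracDown by letting $P' \coloneqq \{\alpha\cdot p : p \in P\}$ (a uniform scaling of the point set by the factor $1/\beta = \alpha$) and setting $k' \coloneqq |P| - k$. The construction is clearly polynomial time, and for rational $\alpha$ it preserves rationality of coordinates. The key observation is that uniform scaling is a similarity transformation of the plane, so the intersection structure of a family of disks is preserved when both centers and radii are multiplied by the same positive constant.

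The correctness check then amounts to a direct bijection between solutions. Suppose $(S, r)$ is a solution to the Up instance that expands the disks of $S \subseteq P$ to radius $\beta$ and leaves the remaining $|P|-|S| \ge |P|-k = k'$ disks at radius $1$; scaling by $\alpha$ turns these into disks of radii $1$ and $\alpha$, respectively, centered at the scaled points, while preserving the intersection graph and hence its connectivity. This yields a solution to the Down instance that shrinks the $|P|-|S| \ge k'$ points of $P'$ corresponding to $P\setminus S$. The same argument run backward (scaling by $\beta$) converts any solution of $(P', \alpha, k')$ shrinking at least $k'$ disks back into a solution of $(P, \beta, k)$ expanding at most $|P|-k' = k$ disks. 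Thus the two instances are equivalent yes-/no-instances, completing the reduction.

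The only subtlety I foresee is bookkeeping the complementation between the ``at most $k$'' constraint of \probConnectivityFracUp and the ``at least $k'$'' constraint of \probConnectivityFracDown, which is resolved by the identity $k' = |P| - k$; everything else is the routine observation that uniform scaling is an isomorphism of the disk intersection graph. No new geometric or combinatorial argument beyond Theorem~\ref{thm:hard-one} should be needed.
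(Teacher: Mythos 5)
Your proof is correct and is essentially the paper's own argument: the paper derives the corollary from Theorem~\ref{thm:hard-one} via precisely the duality remark (expanding $k$ of $|P|$ unit disks to radius $\beta$ is the same, after a uniform rescaling, as shrinking the other $|P|-k$ disks to radius $\alpha=1/\beta$), which you have merely spelled out with the explicit scaling and the complementation $k'=|P|-k$.
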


\subsection{W[1]-Hardness of a generalization of {\sc $k$-Expanding to Connectivity}} \label{subsec:w1-hard-conn}
In this section, we consider a generalization of \probConnectivityFracDown the set of disks that may be expanded must come from a designated subset. A formal definition follows. 

\begin{tcolorbox}[colback=white!5!white,colframe=gray!75!black]
	\probConnGen
	\begin{description}
		\item[Input:]  A finite set of points $P \subset \real^2$, a subset $A \subseteq P$, a constant $\alpha \ge 1$, and a parameter $k \ge 0$.
		\item[Task:] Decide whether there exists a solution $(S, r)$ such that
		\begin{itemize}[leftmargin=*]
			\item $S \subseteq A$ is a set of size \textbf{at most} $k$ corresponding to expanded disks, which must come from the set of \emph{allowed} set $A$,
			\item \vspace{-0.25cm}corresponding radii $r: P \to \realplus$ such that $r(p) = \begin{cases}
				1 &\text{ if } p \not\in S
				\\\alpha &\text{ if } p \in S
			\end{cases}$
			\item \vspace{-0.4cm}$G(P, r)$ is connected.
		\end{itemize}
	\end{description}
\end{tcolorbox}

For convenience, we consider an formulation of \probConnGen, wherein we are given a collection of unit disks ${\cal Q}$ with the corresponding set of centers being $P$, along with the \emph{allowed subset} $\mathcal{A} \subseteq \mathcal{Q}$, $k$, and $\alpha \ge 1$. The initial UDG defined by $G({\cal Q})$ may not be connected. The question is whether there exists a subset $\mathcal{S} \subseteq \mathcal{A}$ of size at most $k$, such that after expanding the disks in $\mathcal{S}$ to radius $\alpha$ (and leaving all other disks unit), the resulting intersection graph is connected. Clearly, this is a reformulation of the original problem. We refer to this formulation as {\sc $k$-Expanding UDG to Connectivity}.

The reduction is shown from the {\sc Covering Points by Unit Disks} problem, where we are given a set $P$ of points in the Euclidean plane, a set of unit disks $\cal D$ and a parameter $k\in\mathbb{N}$, and the objective is to decide whether there exists ${\cal D}'\subseteq{\cal D}$ of size at most $k$ such that every point in $P$ belongs to at least one unit disk in ${\cal D}'$.\footnote{The formulation of the problem where $\cal D$ is not part of the input, but one can choose any $k$ unit disks on the plane, is not harder, since it is not difficult to see that there there exist only polynomially many centers that are sufficient to consider in order to find a solution, if one exists.} This problem is known to be W[1]-hard when parameterized by $k$~\cite{DBLP:conf/esa/Marx05,giannopoulos2008parameterized}.

Having defined the source problem for our reduction, we state the result of this section.

\begin{theorem} \label{thm:genw1hard}
	\probConnGen is W[1]-hard parameterized by $k$.
\end{theorem}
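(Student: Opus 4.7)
The plan is to design a polynomial-time parameterized reduction from {\sc Covering Points by Unit Disks}, known to be \classW{1}-hard parameterized by $k$, to \probConnGen that preserves the parameter $k$. Given an instance $(P,\mathcal{D},k)$ of the covering problem with $|P|=n$ and $|\mathcal{D}|=m$, I will construct in polynomial time an equivalent instance of {\sc $k$-Expanding UDG to Connectivity}, specified by a set $\mathcal{Q}$ of unit disks, an allowed subset $\mathcal{A}=\{Q_D:D\in\mathcal{D}\}\subseteq\mathcal{Q}$ of size $m$, the same parameter $k$, and a value $\alpha=\Theta(\sqrt{n})$. The initial intersection graph $G(\mathcal{Q})$ will have exactly $n+1$ connected components, and at most $k$ allowed disks can be expanded to radius $\alpha$ so as to connect $G(\mathcal{Q})$ iff some $k$ disks of $\mathcal{D}$ cover $P$.

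Construction. The instance consists of three kinds of unit disks. First, a \emph{backbone}: enumerate $\mathcal{D}$ as $D_1,\ldots,D_m$, place an allowed disk $Q_{D_i}$ at $(\Delta i,0)$ with spacing $\Delta$ sufficiently large that the disks of radius $1+\alpha$ around the $Q_{D_i}$'s are pairwise disjoint, and chain the points $(\Delta i,0)$ together along the $x$-axis with non-allowed filler unit disks at spacing $1.9$. Second, for each $p\in P$, a \emph{point hub} $Q_p$ placed in a private corridor of the upper half-plane, with different $p$'s using disjoint corridors. Third, for each $(p,D)$ with $p\in D$, a \emph{reacher} unit disk $R_{p,D}$ placed in the annulus of radii $(2,1+\alpha]$ around $Q_D$; the reachers of one $Q_D$ are spread over a lattice inside this annulus so that they are pairwise at distance greater than $2$, which is possible because the annulus has area $\Omega(\alpha^2)$ and needs to hold at most $|D\cap P|\le n$ reachers, so $\alpha=\Theta(\sqrt{n})$ suffices. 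Finally, for each $p$ the hub $Q_p$ is joined to every reacher $R_{p,D}$ with $D\ni p$ by a non-allowed chain of unit disks; the chain is routed inside the corridor of $p$ except near its endpoint $R_{p,D}$, and the overall routing is planar since the hubs $Q_p$ can be placed at different heights and horizontal positions far above the backbone.

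Correctness. Initially $G(\mathcal{Q})$ consists of the backbone and, for each $p\in P$, a \emph{point-$p$ component} formed by $Q_p$, its corridor chains, and its reachers $\{R_{p,D}:D\ni p\}$; by the separation guarantees above these $n+1$ components are pairwise disjoint. If $\mathcal{D}'\subseteq\mathcal{D}$ covers $P$ with $|\mathcal{D}'|\le k$, then expanding $\{Q_D:D\in\mathcal{D}'\}$ to radius $\alpha$ makes every such $Q_D$ intersect precisely its reachers and merges every point-$p$ component with the backbone, so $G(\mathcal{Q})$ becomes connected. Conversely, given any solution $\mathcal{S}\subseteq\mathcal{A}$ with $|\mathcal{S}|\le k$ that connects $G(\mathcal{Q})$, each point-$p$ component must merge with the backbone; by construction, the only cross-component edges that can be created by expansion are of the form $\{Q_D,R_{p,D}\}$ with $p\in D$, so the set $\{D:Q_D\in\mathcal{S}\}$ covers every $p\in P$.

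Main obstacle. The crux is the geometric realization: besides the intended intersections $Q_D$--$R_{p,D}$ that appear only after expansion, no unintended pair of disks may lie at distance at most $2$, either initially or after any admissible expansion. The tight spot is the annulus of radii $(2,1+\alpha]$ around each $Q_D$, which must host up to $|D\cap P|$ pairwise-separated reachers; this is what forces $\alpha=\Theta(\sqrt{n})$, which in turn forces the backbone spacing $\Delta=\Omega(\alpha)$ so that different $Q_D$'s do not reach each other's reachers under expansion. The remaining checks---disjoint corridors for different point hubs and for their routing chains, and the absence of spurious intersections between a chain of one $p$ and any disk outside its corridor---reduce to straightforward but tedious scaling choices. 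The overall reduction is polynomial in $n+m$ and preserves $k$, so \classW{1}-hardness transfers to \probConnGen parameterized by $k$.
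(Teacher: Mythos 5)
Your high-level strategy---reduce from {\sc Covering Points by Unit Disks} and build an instance in which expanding a disk at the center of some $D$ merges into the backbone exactly the components of the points $p\in D$---is the same as the paper's. The realization, however, is completely different and contains a genuine gap. The paper shrinks everything to radius-$\epsilon$ disks and takes $\alpha=1/\epsilon$, so that expanding $Q_D$ recovers (after rescaling) exactly the original unit disk $D$, which then touches the \emph{isolated} point-disk $Q_p$ directly for every $p\in D$; there are no chains, hubs, or reachers, and hence nothing to route. You instead keep unit disks and introduce a chain/reacher apparatus, which reduces the whole proof to a planar routing claim that you assert but do not argue: for each $p$ you must draw a tree of unit disks from the hub $Q_p$ to a reacher in the annulus of every $D\ni p$, so that (a) the $n$ trees are pairwise non-intersecting, and (b) each tree stays entirely outside the annulus of every $D\not\ni p$. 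Constraint (b) is forced by your own sentence that ``the only cross-component edges that can be created by expansion are of the form $\{Q_D,R_{p,D}\}$''---if a chain of $p$'s tree enters the annulus of some $D\not\ni p$, then expanding $Q_D$ spuriously merges $p$'s component into the backbone and the backward direction fails.

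This routing problem is not ``straightforward but tedious.'' The incidence bipartite graph between points and disks can contain $K_{3,3}$, so the touchdown pattern is non-planar in the naive sense; a drawing can conceivably still exist only because the annuli are two-dimensional and straddle the backbone (so trees may approach from above, from below, or around the ends), but you neither exploit nor bound this freedom, and you would have to verify all pairwise non-intersections of chain disks, reachers, hubs, fillers, and $Q_D$'s. Even your space estimate is insufficient: $\alpha=\Theta(\sqrt n)$ only packs the reachers of one annulus; it does not account for the up to $n$ chains of \emph{other} trees that must thread through or skirt that annulus while respecting constraint (b), which by an area argument alone already forces $\alpha=\Omega(n)$. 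Until the routing is proved to exist with explicit coordinates, the construction is not established. The lesson from the paper's proof is that one should scale the disks so that the expansion reproduces the original covering geometry directly, eliminating the need for any auxiliary connector trees.
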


\begin{proof}
	We give a polynomial-time parameter-preserving reduction from the {\sc Covering Points by Unit Disks} problem to {\sc $k$-Expanding UDG to Connectivity}. 
	To this end, consider some instance $(C,{\cal D},k)$ of {\sc Covering Points by Unit Disks}. Let $A$ denote the set of centers of the disks in $\cD$. Let $\epsilon>0$ be small enough such that the following properties hold.
	\begin{enumerate}
		\item\label{prop1} Let ${\cal Q}_C=\{Q_p: p\in C\}$ where $Q_p$ is a disk with center $p$ and radius $\epsilon$. Then, for all distinct $p,p'\in C$, $Q_p$ and $Q_{p'}$ do not intersect.
		
		\item\label{prop2} For every unit disk $D\in{\cal D}$ and every point $p\in C$, $Q_p$ and $D$ intersect if and only if $p$ belongs to $D$. 
		
		\item\label{prop3} Let ${\cal Q}_{\cal D}=\{Q_D: D\in {\cal D}\}$ where $Q_D$ is a disk with the same center as $D$ and radius $\epsilon$. Then, for every point $p\in C$, if the center of $Q_D$ is not $p$, then the distance between $Q_D$ and $p$ is at least $10\epsilon$.
	\end{enumerate}
	
	\begin{figure}
		\begin{center}
			\fbox{\includegraphics[scale=1.1]{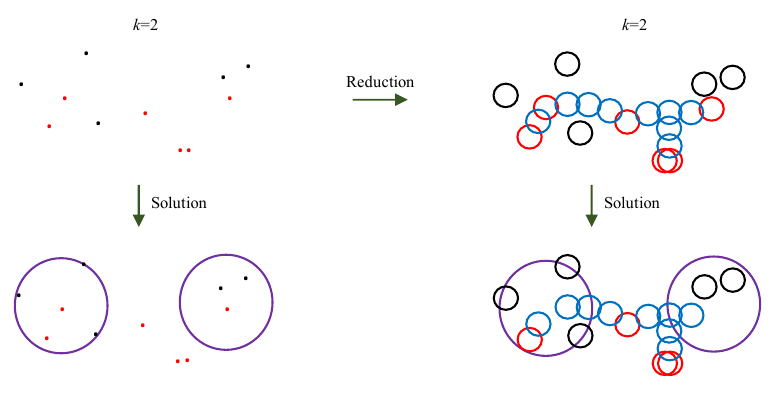}}
			\caption{Reduction of a Yes-instance of {\sc Covering Points by Unit Disks} to a Yes-instance of {\sc Scaling to Connected Graph}. On the left side, the points are colored black, and the possible poitions for centers are colored red. On the right, only the new-unit disks colored red can be enlarged.}
			\label{fig:WHard}
		\end{center}
	\end{figure}
	
	We will output an instance $({\cal Q},\mathcal{A}, k,\alpha = 1/\epsilon)$ of {\sc $k$-Expanding UDG to Connectivity} where we initially define ${\cal Q}$ as disks of radius $\epsilon$ (referred to next as new-unit for the sake of clarity, to avoid confusion with the different unit used with respect to $(C,{\cal D},k)$). Further, each disk in $\mathcal{A}$ can be expanded from $\epsilon$ to $1$; whereas all other disks must remain of radius $\epsilon$. By ultimately scaling the entire instance by a factor of $1/\epsilon$, we obtain a valid of input instance for {\sc $k$-Expanding UDG to Connectivity}. In the following we construct the instance before scaling.
	
	The construction of $\cal Q$ is done as follows. First, we insert all the new-unit disks in ${\cal Q}_C$ into $\cal Q$, and these disks are \emph{not} included in $\mathcal{A}$. Next we add all the new-unit disks in ${\cal Q}_{\cal D}$ into $\cal Q$, as well as in $\cal A$. Now, we add to $\cal Q$ (but not to $\cal A$) a set ${\cal Q}^{\mathsf{connect}}$ of polynomially many additional new-unit disks with scaling parameter $0$ so that {\em (i)} they do not intersect any new-unit disk $Q_p \in {\cal Q}_C$ centered at $p$ but they do intersect the (hypothetical) unit disk centered at $p$, and {\em (ii)}  the intersection graph of them together with all the new-unit disk already inserted into $\cal Q$ except for those in ${\cal Q}_C$ is connected (see Fig.~\ref{fig:WHard}), which is possible due to Property \ref{prop3} above. This completes the construction.

	As the reduction can be executed in polynomial time, and the parameter is unchanged, it remains to prove that $(C,{\cal D},k)$ is a yes-instance of {\sc Covering Points by Unit Disks} if and only if $({\cal Q},\mathcal{A}, k,\alpha)$ is a yes-instance of {\sc $k$-Expanding UDG to Connectivity}. In one direction, suppose that $(C,{\cal D},k)$ is a yes-instance of {\sc Covering Points by Unit Disks}, and let ${\cal D}'$ be a set of at most $k$ unit disks that covers all the points of $C$. Then, we scale by a factor of $1/\epsilon$ every new-unit disk in $Q_D\in{\cal Q}_D = {\cal A} \subseteq {\cal Q}$ such that $D\in{\cal D}'$. So, because ${\cal D}'$ is a solution, we have thus scaled at most $k$ new-unit disks, and every new-unit disk in ${\cal Q}_C $ is intersected by at least one of the new-unit disks that we scaled, so they ``join'' the connected component of the intersection graph formed by $({\cal Q}_{\cal D}\setminus{\cal R})\cup {\cal Q}^{\mathsf{connect}}$ (note that here we implicitly use that for every $R\in{\cal R}$, ${\cal Q}^{\mathsf{connect}}$ does intersect the (hypothetical) unit disk with the same center as $R$, else if such new-unit disks are enlarged, they could have possibly form their own connected components). Hence, we derive a connected intersection graph. Hence, we have a yes-instance of {\sc $k$-Expanding UDG to Connectivity}.
	
	In the other direction, suppose that $({\cal Q}, {\cal A}, k, \alpha)$ is a yes-instance of {\sc $k$-Expanding UDG to Connectivity}. Let ${\cal S}$ be the set of new-unit disks scaled by a solution for this instance. Then, by the choice of the scaling parameters, ${\cal S}\subseteq {\cal A} = {\cal Q}_{\cal D}$, and for each $Q_D\in{\cal S}$, the scaling yields a disk contained in $\cal D$. So, due to Properties \ref{prop1} and \ref{prop2} above and as after scaling we obtain a connected intersection graph, each point $p\in P$ must be covered by at least one of the new-unit disks in ${\cal S}$ after scaling. As we scale at most $k$ new-unit disks, this implies that $\{D\in{\cal D}: Q_D\in {\cal S}\}$ is a solution to $(P,{\cal D},k)$, and hence it is a yes-instance of {\sc Covering Points by Unit Disks}.
\end{proof}

\subsection{NP-Hardness of {\sc (Min)} \probEdgelessnessDown} \label{subsec:np-hard-vc}

In this section we show that (\textsc{Min}) \probEdgelessnessDown is \classNP-complete for any fixed rational $0 < \alpha < 1$. We fix a value of $0 < \alpha < 1$ that is rational. We give a reduction from \textsc{Independent Set} in cubic planar graphs, i.e., planar graphs of degree exactly $3$, which is known to be \classNPH \ (\cite{Mohar01}). First, we make use of the following crucial observation. Let $G$ be a graph, and $t_e$ be positive integers given for every edge $e \in E(G)$. Suppose $G'$ is a new graph obtained replacing every edge $e =  uv \in E(G)$ with an induced path $(u, w_1, w_2, \ldots, w_{2s_e}, v)$ in the new graph $G'$. We refer to this operation as ``subdividing an edge $e \in E(G)$ by $2s_e$ times''. It is easy to see that $G$ has an independent set of size $k$ iff $G'$ has an independent set of size $k + \sum_{e \in E(G)} s_e$. 

A \emph{rectilinear embedding} of a planar graph is a planar embedding where the vertices are mapped to points with integer coordinates, and each edge is mapped to a broken line, consisting of an alternate sequence of horizontal and vertical segments. We have the following result due to Liu, Morgana, and Simeone \cite{LiuMS98}. 

\begin{proposition}[\cite{LiuMS98}] \label{prop:planar-embedding}
	Every $n$-vertex planar graph of maximum degree at most $4$ admits a rectilinear embedding with at most $3$ bends for every edge with the area $\Oh(n^2)$. Furthermore, such an embedding can be constructed in $\Oh(n)$ time.
\end{proposition}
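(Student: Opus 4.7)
The plan is to follow the canonical-ordering paradigm of Kant, extended to biconnected (not necessarily 3-connected) planar graphs of maximum degree~$4$. First I would compute a combinatorial planar embedding of $G$ in $\Oh(n)$ time via Hopcroft--Tarjan, and reduce to the biconnected case by decomposing along the block-cut tree and stitching the per-block drawings together using a constant number of padding rows and columns at each cut vertex; since every cut vertex has degree at most $4$, this preserves both the $\Oh(n^2)$ area and the three-bend-per-edge bound. If $G$ has degree-$1$ vertices they are attached at the end as short tails, which contribute no bends.

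For biconnected $G$, I would compute a canonical ordering $v_1, v_2, \ldots, v_n$ in $\Oh(n)$ time in which, writing $G_k = G[\{v_1,\ldots,v_k\}]$, the outer face of $G_k$ is a simple cycle $C_k$ containing the fixed edge $v_1v_2$, and $v_{k+1}$ attaches to $G_k$ along a contiguous interval of $C_k$. Inductively I would maintain an orthogonal drawing of $G_k$ whose outer face forms a monotone staircase, and in which every outer-face vertex still has at least one free "upward" port on its boundary. To insert $v_{k+1}$ (of degree $d\le 4$ with $d'\le 3$ back-neighbours and $d-d'$ forward edges), place it on a fresh row above the current drawing, insert at most two new columns immediately to the left and right of the back-neighbour interval, and drop a vertical segment from $v_{k+1}$ to each back-neighbour using the newly reserved columns, introducing at most three bends per edge (one to leave $v_{k+1}$, one to turn into the target's free port, and at most one intermediate turn to avoid the staircase). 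The remaining $d-d'$ ports of $v_{k+1}$ are left free to absorb future forward edges.

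The main technical obstacle, and the heart of the Liu--Morgana--Simeone argument, is showing the three-bend bound uniformly for a vertex with three back-neighbours: two back-edges can be routed straight or with a single bend, but the third is forced to cross the staircase and requires a careful choice of which free port it consumes together with a column-shifting step to keep its bend count at three. I would handle this by a case analysis on the horizontal position of $v_{k+1}$ relative to the middle back-neighbour, combined with a shifting lemma (in the spirit of de Fraysseix--Pach--Pollack) that rigidly translates part of the current drawing horizontally without introducing new bends. The $\Oh(n^2)$ area bound then follows because each insertion step adds at most $\Oh(1)$ rows and columns, and the $\Oh(n)$ running time follows by implementing the shifting step with the standard amortised technique that tracks only the relative offsets of a collection of contours stored in a balanced forest over the current outer face.
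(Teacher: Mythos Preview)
The paper does not prove this proposition; it is quoted as a known result from Liu, Morgana, and Simeone~\cite{LiuMS98} and used purely as a black box in the NP-hardness reductions of Sections~\ref{subsec:np-hard-vc} and~\ref{subsec:np-hard-fvs}. There is therefore no proof in the paper to compare your attempt against.

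For what it is worth, your outline is in the right spirit for how such orthogonal-drawing results are obtained (incremental insertion along a canonical ordering, with a shifting lemma to maintain the invariant), and you correctly identify the crux as the three-back-neighbour case. As a self-contained argument, though, it is still a sketch rather than a proof: the canonical ordering you describe is the one for $3$-connected planar triangulations, and extending it to merely biconnected degree-$4$ graphs requires either an $st$-ordering with a different outer-face invariant or Kant's more general ordering; the ``monotone staircase'' invariant together with the three-bend routing for all port configurations needs the actual case analysis written out. None of this matters for the present paper, which only cites the statement.
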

Let $r = \lceil \frac{2}{1-\alpha} \rceil$, and note that $r \ge 2$ is an integer. Given an instance $(G, k)$ of \textsc{Independent Set}, where $G$ is a cubic planar graph, we use \Cref{prop:planar-embedding} to obtain a rectilinear embedding. Next, we scale the embedding by an integer factor $\gamma \ge 1$, such that (1) the rectilinear distance between any two parallel segments is at least $2r+1$, (2) the length of every segment (horizontal or vertical) in any broken line corresponding to an edge is a positive integer divisible by $(2r+2)(2r-1)$. 

Consider an edge $e = uv \in E(G)$. Suppose the total length of the broken line (i.e., sum of lengths of all horizontal and vertical segments comprising the broken line) corresponding to $e$ in the rectilinear embedding is equal to $(2r-1) \cdot t_e$, for some positive integer $t_e$. We want to add an odd number of points along the embedding of $e$ separated by a specific distance. We consider two cases depending on whether $t_e$ is even or odd. If $t_e$ is odd with $t_e = 2s_e + 1$ for some positive integer $s_e$, then, we place a series of points $w_1, w_2, \ldots, w_{2s_e}$ along the broken line corresponding to $e$, each separated by a distance of exactly $2r-1$ along the broken edge (see \Cref{fig:nph-vc}). 

Otherwise, suppose that $t_e = 2s_e$ for some positive integer $s_e \ge r+1$ (this follows since the length of each segment is divisible by $(2r+2)(2r-1)$. In this case, we ``squeeze'' one additional point along the first edge so that the total number of points added along the embedding of $e$ is odd. To achieve this, we proceed as follows. Suppose we start from $u$, and walk along the edge. We place the first point $w_1$ at distance $2r+1$ from $u$ along the edge, and $w_2$ is placed at distance $2r+1$ after $w_1$. Next, we place $w_3, w_4, \ldots, w_{2r+1}$, each separated by distance $2r-2 \ge 2$ after $w_2$ along the edge. Note that the distance between $u$ and $w_{2r+3}$ is $2(2r-1) + (2r-2)(2r-1) = 2r(2r-1)$, thus the length of the remaining portion of the edge $e$ is an even multiple of $2r-1$. Thus, we can place an even number of points, each separated by distance $5$ along the edge. Thus, we place points $w_{2r+2}, w_{2r+3}, \ldots, w_{2s_e + 1}$ in the remaining portion of $e$, separated by distance $2r-1$. 

\begin{figure}[hbt]
	\centering
	\includegraphics[scale=1]{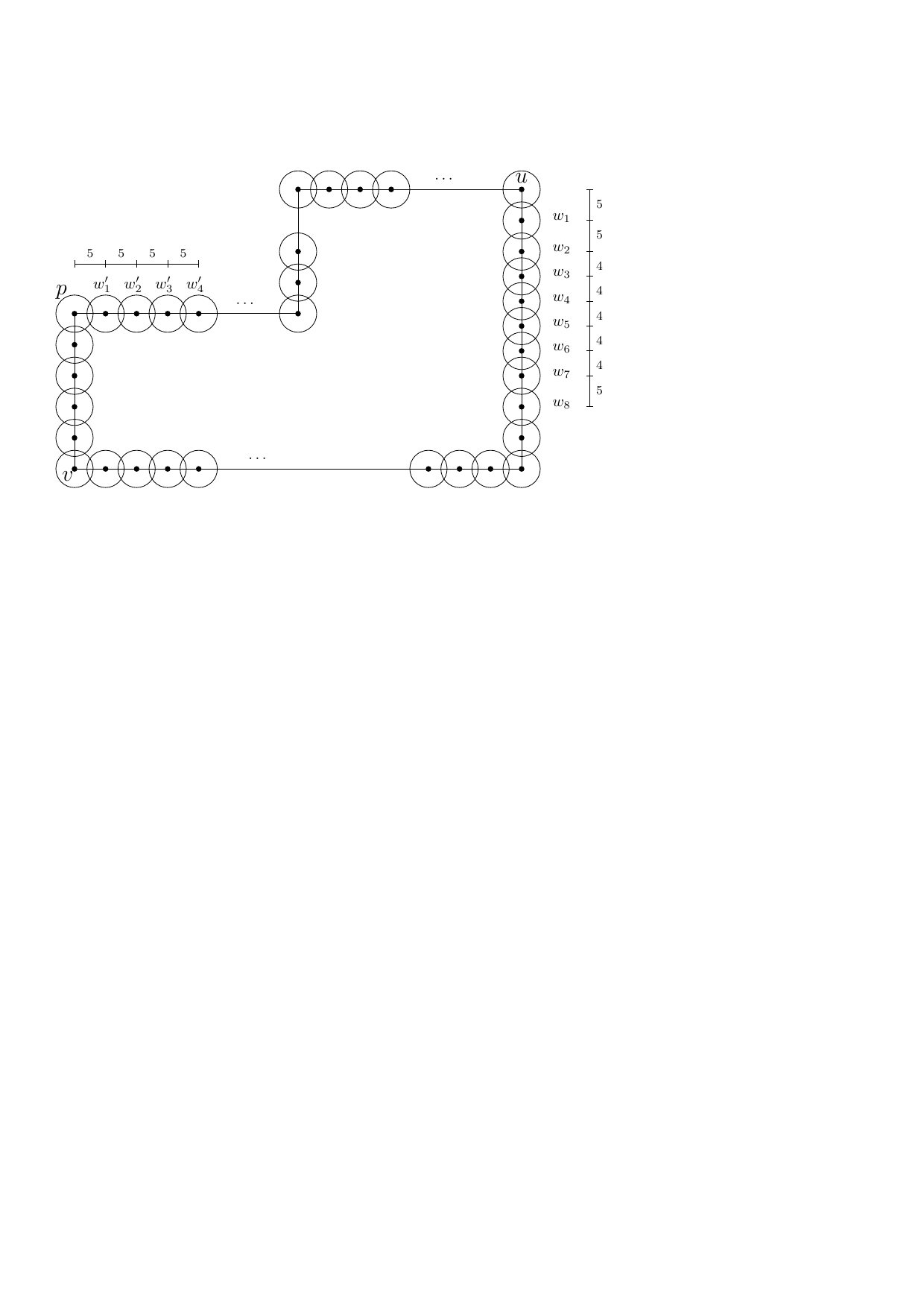}
	\caption{Example for construction of the instance of \probEdgelessnessDown. Here, the length of the broken line corresponding edge $pu$ is an odd number, therefore, all the points $w'_1, w'_2, \ldots$ are placed at distance $5$ along the drawing of the edge. On the other hand, suppose the length of the broken line corresponding to the edge $uv$ is an even number. Nevertheless, we need to place an odd number of points along the broken line. Therefore, we ``squeeze'' points $w_2, w_3, w_4, w_5, w_5, w_7$ in an interval of length $20$ by placing them at distance $4$. The rest of the points are placed at an even spacing of $5$ along the broken line. } \label{fig:nph-vc}
\end{figure}

After performing this operation for all edges $e \in V(G)$, the resulting embedding can be seen as a rectilinear embedding of a graph $G'$ that is obtained by subdividing each edge of $G$ exactly $2s_e$ times. Furthermore,  $p, q \in V(G')$ are adjacent in $G'$ iff the $\ell_1$ and $\ell_2$ distance between their images is either $2r-1$ or $2r-2$ (note that they have either the same $x$- or $y$-coordinate). On the other hand, $p, q \in V(G')$ are non-adjacent in $G'$ iff the $\ell_1$ and $\ell_2$ distance between their images is strictly greater than $2r-1$. From the previous observation, it follows that the original $(G, k)$ is a yes-instance of \textsc{Independent Set} iff $(G', k')$ is a yes-instance of \textsc{Independent Set}, where $k' \coloneqq k + \sum_{e \in E(G)} s_e$. 

Let $P$ be the image of $V(G')$ in the resulting embedding of $G'$ constructed as above. For each $p \in P$, we add a disk of radius $r$ centered at $p$. From the properties of the embedding of $G'$, it follows that the disks centered at $p$ and $q$ intersect iff there is an edge between the corresponding vertices in $G'$. Note that $G'$ is exactly the intersection graphs of disks of radius $r$ centered at points of $P$. By scaling down the entire instance by a factor of $r$, we obtain an instance with disks of unit radii. In the reduced instance, the distance between any neighbors is either $\frac{2r-1}{r}$ or $\frac{2r-2}{r}$. Thus, an edge can be removed by shrinking at least one of the disks to radius $\alpha \le \frac{2r-2}{2}$. 

Thus, we obtain the instance $\cI = (P, |P|-k', \alpha)$ \probEdgelessnessDown.

Now we argue that $\cI$ is a yes-instance of \probEdgelessnessDown iff $(G', k')$ is a yes-instance of \textsc{Independent Set}, which is equivalent to $(G, k)$. In the following, we slightly abuse the notation, and equate a vertex of $V(G')$ (or a subset of $V(G')$) with the corresponding image under the embedding of $G'$. Thus, $P$ is synonymous with $V(G')$ under this convention.

In the backward direction, let $U \subseteq V(G')$ be an independent set of size at least $k'$ in $G$. Let $S = V(G) \setminus V(G')$ that has size at most $|V(G')| - k' = |P| - k'$. Define $r(p) = \alpha$ for all points $p \in S$, and $r(q) = 1$ for all points $q \in U$. We claim that $(S, r)$ is a solution for $\cI$, by showing that the intersection graph $G(P, r)$ is edgeless. Suppose for contradiction that there exists an edge $pq \in E(G(P, r))$. Now, if either of $p$ or $q$ belongs to $S$, then $r(p) + r(q) < 2\alpha \le |pq|$, which implies that the edge $pq$ cannot exist in $G(P, r)$ (recall that we assume that the disks are open). Therefore, both $p$ and $q$ belong to $U$. However, this contradicts that $U$ is an independent set in $G'$. The proof of the forward direction is almost identical, and is therefore omitted.

We note that the same construction can be modified the \classNP-hardness of \probEdgelessnessMinDown. Indeed, let $\cI' = (P, |P|-k', \alpha, \mu)$ be the instance of \probEdgelessnessMinDown, where $P$ and $\alpha$ are as defined above, and $\mu = (1-\alpha) \cdot (|P|-k')$. The proof of the backward direction is same as above, where one can observe that the cost of the constructed solution $(S, r)$, is equal to $\sum_{p \in P} (1-r(p)) = (1-\alpha) \cdot |S| \le \mu$. 

For the forward direction, suppose $\cI'$ is a yes-instance of \probEdgelessnessMinDown. Then, it admits a solution $(S, r)$ such that $|S'| \le |P| - k'$, and $\alpha \le r(p) < 1$ iff $p \in S$. Let $U = P \setminus S$ be the set of size at least $k'$. We claim that $U$ is an independent set in $V(G')$. Indeed, suppose for some edge $pq \in E(G')$, both endpoints $p$ and $q$ belong to $U$. However, since $r(p') < 1$ only for points $p' \in S$, it follows that $r(p) = r(q) = 1$, which implies that $r(p) + r(q) = 2 > \frac{2r-1}{r} \ge |pq|$. However, this contradicts the assumption that the resulting intersection graph $G(P, r)$ is edgeless. 

\begin{theorem} \label{thm:np-hardness-vc}
	\probEdgelessnessDown and \probEdgelessnessMinDown are \classNPC\ for any fixed rational $0 < \alpha < 1$.
\end{theorem}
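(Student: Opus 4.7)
The plan is to establish NP-completeness in both cases by reduction from \textsc{Independent Set} on cubic planar graphs (shown \classNPH\ by Mohar). Membership in \classNP\ is routine for both problems: a certificate $(S,r)$ is verified in polynomial time by checking $|S|\le k$, that $r$ respects the prescribed range on $S$ and $P\setminus S$, that $G(P,r)$ is edgeless, and (for the \textsc{Min} variant) that $\sum_{p\in S}(1-r(p))\le\mu$.

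For hardness, given a cubic planar graph $G$, I would first compute a rectilinear embedding on the integer grid via \Cref{prop:planar-embedding}, then scale it by a large integer so that (i) any two parallel segments are at $\ell_\infty$-distance at least $2r+1$, and (ii) every segment length is divisible by $(2r+2)(2r-1)$, where $r\coloneqq\lceil 2/(1-\alpha)\rceil$. The choice of $r$ is calibrated so that after a final scale-down by $1/r$, two points at distance $(2r-2)/r$ or $(2r-1)/r$ are adjacent in the unit-disk graph and \emph{any} such edge is removable by shrinking at least one endpoint to $\alpha$ (since $2\alpha\le(2r-2)/r$), while any two points at distance strictly more than $(2r-1)/r$ are nonadjacent.

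I would then subdivide each edge $e$ of $G$ by placing points along its broken line at spacing $2r-1$; if the number of such points comes out with the wrong parity (i.e., when the total length is an even multiple of $2r-1$), I would use a short ``squeezed'' prefix of $2r+1$ points at spacing $2r-2$ followed by the usual spacing $2r-1$, which inserts one extra subdivision while keeping consecutive-subdivision distances in $\{2r-2,2r-1\}$ and all nonconsecutive ones strictly larger. This yields a point set $P$ whose unit-disk graph is exactly a subdivision $G'$ of $G$, where each edge of $G$ has been subdivided an even number $2s_e$ of times; in particular, $G$ has an independent set of size $k$ iff $G'$ has one of size $k'=k+\sum_e s_e$.

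The equivalence is then a standard vertex-cover/independent-set duality. For $\Pi_1$, set the instance to $(P,|P|-k',\alpha)$. From an independent set $U\subseteq V(G')$ of size $k'$, take $S=P\setminus U$ and $r(p)=\alpha$ on $S$, $r(q)=1$ on $U$; every edge of $H=G(P,\unitvec)$ has at least one endpoint in $S$, and since $r(p)+r(q)\le1+\alpha<(2r-2)/r$ for $p\in S$, the resulting graph is edgeless. Conversely, any solution $(S,r)$ forces $P\setminus S$ to be independent in $G'$, since two unshrunken neighbors would yield an edge in $G(P,r)$. For $\Pi_2$, the same reduction works with $\mu=(1-\alpha)(|P|-k')$: the forward direction attains cost exactly $\mu$, and the converse uses that any point with $r(p)<1$ must lie in $S$, so $P\setminus S$ is still independent. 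The main obstacle is the parity-correcting ``squeeze'' step, which must be performed without creating unintended adjacencies, and this is exactly why the scaling factor is chosen to force segment lengths to be divisible by $(2r+2)(2r-1)$.
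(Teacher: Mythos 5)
Your proposal is correct and follows essentially the same route as the paper: a reduction from \textsc{Independent Set} on cubic planar graphs via the rectilinear embedding of \Cref{prop:planar-embedding}, scaling with segment lengths divisible by $(2r+2)(2r-1)$ for $r=\lceil 2/(1-\alpha)\rceil$, the parity-correcting ``squeeze'' giving an even subdivision $G'$, the instance $(P,|P|-k',\alpha)$ with the independent-set/shrunken-set duality, and $\mu=(1-\alpha)(|P|-k')$ for the \textsc{Min} variant. The only nitpick is your strict inequality $1+\alpha<(2r-2)/r$, which may be an equality when $2/(1-\alpha)$ is an integer, but this is harmless under the paper's open-disk convention where non-strict suffices.
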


\subsection{NP-Hardness of {\sc (Min)} \probAcyclicityDown} \label{subsec:np-hard-fvs}

Note that \textsc{Feedback Vertex Set} remains \classNPH\ on instances $(G, k)$, where $G$ is a planar graph of maximum degree $4$ \cite{Speckenmeyer83,Speckenmeyer88}. Given an instance $(G, k)$ of \textsc{Feedback Vertex Set}, where $G$ is a planar graph of maximum degree $4$, we use \Cref{prop:planar-embedding} to obtain a rectilinear embedding. Next, we scale the embedding by a factor of $\gamma \ge 1$, such that (1) the rectilinear distance between any two parallel segments is at least $20$, (2) the length of every segment (horizontal or vertical) in any broken line corresponding to an edge is a positive integer divisible by $221 = 13 \cdot 17$. Note that the scaling factor $\gamma$ required to achieve that these properties are satisfied is polynomial in $n$.

Consider an edge $e = uv \in E(G)$, such that the length of the broken line segment corresponding to $e = uv$ is $2\cdot (17) + s_e \cdot (13)$, where $s_e$ is a positive integer. Then, we place a series of points $w_1, w_2, \ldots, w_{s_e}$ along the broken line corresponding to $e$ as follows. We start ``walking'' along $uv$, starting from $u$. We place the first point $w_1$ at a distance of $17$ along the rectilinear embedding of the edge. We place the subsequent points $w_2, w_3, \ldots, w_{s_e}$ along the embedding of $e$ such that the rectilinear distance between the consecutive pair of points is exactly $13$. Note that the rectilinear distance between $w_{s_e}$ and $v$ is exactly $17$.

After performing this operation for all edges $e \in V(G)$, the resulting embedding can be seen as a rectilinear embedding of a graph $G'$ that is obtained by subdividing each edge of $G$ exactly $s_e$ times. Furthermore, $p, q \in V(G')$ are adjacent in $G'$ iff the $\ell_1$ and $\ell_2$ distance between their images is at most $17$. On the other hand, $p, q \in V(G')$ are non-adjacent in $G'$ iff the $\ell_1$ and $\ell_2$ distance between their images is at least  $\sqrt{13^2 + 17^2} > 18$. We will refer to the vertices that originated from the embedding of $V(G)$ as \emph{original} vertices, and the vertices that are obtained as a result of subdividing the edges of $E(G)$ as \emph{chain} vertices. 

It is straightforward to see that $(G, k)$ is a yes-instance of \textsc{Feedback Vertex Set} iff $(G', k)$ is a yes-instance of \textsc{Feedback Vertex Set}, since subdividing an edge cannot increase the optimal size of a feedback vertex set. This also implies that if $(G', k)$ is a yes-instance of \textsc{Feedback Vertex Set}, then without loss of generality, it admits a solution $S \subseteq V(G')$ that does not contain a vertex that is obtained by subdividing an original edge from $E(G)$. In the following, we slightly abuse the notation, and equate a vertex of $V(G')$ (or a subset of $V(G')$) with the corresponding image under the embedding of $G'$. Thus, $P$ is synonymous with $V(G')$ under this convention.

Let $P$ be the image of $V(G')$ in the resulting embedding of $G'$ constructed as above. For each $p \in P$, we add a disk of radius $9$ centered at $p$. From the properties of the embedding of $G'$, it follows that the disks centered at $p$ and $q$ intersect iff there is an edge between the corresponding vertices in $G'$. Furthermore, the distance between an original vertex and an adjacent chain vertex is exactly $17$, i.e., the ``overlap'' between two disks centered at the respective disks is of ``width'' $1$. On the other hand, the overlap between two disks centered at adjacent chain vertices is $3$. Now we scale the entire instance by a factor of $9$ such that each disk becomes a unit disk. Also, if we set $\alpha = 8/9$, then only the disks corresponding to an original vertex, or a chain vertex adjacent to them can be shrunk. Thus, we define an instance $\cI= (P, k, \alpha)$ of \probAcyclicityDown by setting $\alpha = 8/9$. We claim that $\cI$ is a yes-instance of \probAcyclicityDown iff $(G', k)$ is a yes-instance of \textsc{Feedback Vertex Set}.

In the forward direction, we first claim that that $\cI$ admits a solution $(S, r)$ such that $S$ is a subset of original vertices. Suppose for contradiction that this does not hold, and $p \in S$ is a chain vertex. If $p$ is not adjacent to an original vertex, then note that even if we shrink the disk centered at $p$ to radius $8$, both the edges incident to $p$ are still present in $G(P, r)$, since the distance between $p$ and its adjacent chain vertices is equal to $13/9 < 8/9 + 8/9$. In this case, we simply remove $p$ from $S$, and obtain a solution of smaller size. On the other hand, if a chain vertex $p$ is adjacent to an original vertex $v$, then we obtain a solution by removing $p$ and adding $v$ (and also exchanging their $x$-values). Note that since $p$ has degree exactly $2$ in $G'$, it follows that this results in a valid solution for $\cI$. Thus, after performing all such operations, we can assume that $S$ is a subset of original vertices. Let $(S, r)$ be a solution with this property. Since we shrink all vertices of $S$ to radius $\alpha = 8/9$, and the distance between an original vertex and a chain vertex is $17/9$, it follows that all vertices of $S$ in the resulting intersection graph $G(P, r)$ are isolated. Furthermore, the graph $G(P, r) \setminus S$ is acyclic. However, $G(P, r) \setminus S$ is isomorphic to the graph $G' \setminus S$, which implies that $S$ is a feedback vertex set of size at most $k$ for $G'$.

In the backward direction, if $S \subseteq V(G')$ is a feedback vertex set for $G'$ of size at most $k$, then note that we may assume that $S$ only consists of original vertices. Furthermore, $G$ is acyclic. Then using an argument similar to the previous paragraph, if we set $r(p) = 8/9$ for all points $p \in S$, and $r(q) = 1$ for all $q \in P \setminus S$, then it follows that the resulting intersection graph $G(P, r)$ is acyclic, and the vertices of $S$ are isolated in $G(P, r)$. It follows that $(S, r)$ is a solution for $\cI$. 

Note that the same construction can be extended to show the \classNP-Hardness of \probAcyclicityMinDown, by constructing the instance $(P, k, \alpha = 8/9, \mu)$ as long as the budget $\mu$ is set to at least $8k/9$. 

Finally, we also observe that the same construction also shows that \probAcyclicity is \classNPH\ for the following reason. Recall that given an instance $(P, \alpha, \beta, \mu)$ of \probAcyclicity, the goal is to find a solution $(S, r)$ such that cost $\sum_{p \in P} (1-r(p))$ is at most $\mu$ (note that there is no cardinality constraint on $S$). We obtain the same instance $\cI' = (P, \alpha = 8/9, \mu)$, where $\mu = 8k/9$. Consider an optimal solution $(S, r)$ of cost at most $\mu$. As argued previously, if $S$ contains a chain vertex $p$ such that both of the neighbors of $p$ are also chain vertices, then we can obtain a solution of smaller cost by simply removing $p$ from $S$, and setting $r(p) = 1$; which contradicts the optimality of the solution $S$. On the other hand, if a chain vertex $p$ is adjacent to an original vertex $v$, then we consider two cases. If the edge $pv$ exists in $G(P, r)$, then we can remove $p$ from $S$, set $r(p) = 1$, and obtain a solution of the smaller cost -- note that the resulting intersection graph remains unchanged. Finally, if the edge $pv$ does not exist in $G(P, r)$, then it follows that $16/9 < r(p) + r(v) < |pv| = 17/9$. In this case, we simply set $r'(p) = 1/9$, and $r'(v) = 17-r(p)$. Since $8/9 \le r'(v) \le 1$, it follows that the resulting solution has the same cost as the original solution. Therefore, we can assume that if $\cI'$ is a yes-instance, then $S$ is a subset of original vertices. Next, we argue that in any optimal solution $(S, r)$ such that $S$ is a subset of original vertices, $x_p = 8$ for all $p \in S$. Indeed, if some original vertex $p \in S$ has $8/9 < r(p) < 1$, then all the edges incident to $p$ in the graph $G'$ are also present in $E(G(P, r))$. Therefore, we can obtain a solution of smaller cost by removing $p$ from $S$. We conclude the proof by observing that since $1- r(p) = 1$ for all $p \not\in S$, $(P, \alpha, \mu = k)$ admits a solution of cost at most $k$ iff $|S| \le k$. Now, the equivalence follows from the argument made for \probAcyclicityMinDown. 

Finally, it is not difficult to see that, by appropriately choosing the constants in the construction (as done in the previous section), the hardness can be generalized to any fixed rational $0 < \alpha < 1$. We omit the details and conclude with the following theorem.

\begin{theorem} \label{thm:np-hardness-fvs}
	\probAcyclicity, \probAcyclicityDown, and \probAcyclicityMinDown are \classNPC, for any fixed rational $0 < \alpha < 1$.
\end{theorem}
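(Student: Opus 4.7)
My plan is to show \classNPH ness by reduction from \textsc{Feedback Vertex Set} on planar graphs of maximum degree four, which is \classNPH\ by \cite{Speckenmeyer83,Speckenmeyer88}; membership in \classNP\ for each variant follows from the existence of a polynomial-size certificate describing the shrunken set and (canonical choices of) the new radii. Given an instance $(G,k)$, I will produce a rectilinear embedding via \Cref{prop:planar-embedding}, uniformly scale it so that distinct non-incident segments are far apart, and subdivide each edge $e=uv$ along its broken line by introducing a sequence of ``chain'' vertices $w_1,\dots,w_{s_e}$ while keeping $u$ and $v$ as ``original'' vertices. Endowing each resulting point with a disk of radius $R$ and then normalizing to unit radius will produce a UDG isomorphic to the subdivision $G'$ of $G$; since subdivision preserves the minimum feedback vertex set size, $(G,k)$ is a yes-instance of planar FVS iff $(G',k)$ is.

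For rational $\alpha=p/q$ with $0<p<q$, I will calibrate $R$ and two spacings $d_1$ (between an original vertex and its incident chain vertex) and $d_2$ (between consecutive chain vertices on the same straight sub-segment) so that, after normalization to unit disks, the following hold: (i) $d_1<2R$ and $d_2<2R$, so consecutive points produce an edge in the UDG; (ii) $d_1\ge R(1+\alpha)$, so shrinking either endpoint of an original-chain edge to radius $\alpha$ removes it; and (iii) $R<d_2<R(1+\alpha)$, so that non-consecutive chain vertices on the same straight sub-segment do \emph{not} form an edge, \emph{and} shrinking either endpoint of a chain-chain edge does \emph{not} remove it. Setting $R=2q$, $d_1=2(p+q)$, and $d_2=2q+1$ satisfies (i)--(iii) for every rational $\alpha\in(0,1)$. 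I will then further scale the embedding so that every broken line is long enough to host an appropriate number of chain vertices, using small admissible perturbations in $d_1$ and $d_2$ to match the exact length of each broken line, while keeping chain vertices a safe distance from the bends so that chain vertices lying on different sub-segments of the same broken line remain at Euclidean distance $>2R$.

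The correctness argument proceeds by exchange. Any solution $(S,r)$ can be transformed into one of the same size whose shrunken set lies entirely in $V(G)$: a chain vertex $p$ whose two neighbors are both chain vertices can be dropped from $S$ (shrinking $p$ did not remove any of its edges by (iii)); a chain vertex $p$ adjacent to an original vertex $v$ can be replaced in $S$ by $v$ (which has degree at most $4$ in $G'$, so shrinking $v$ never creates cycles---it can only remove further edges). Once $S\subseteq V(G)$, the intersection graph $G(P,r)$ has each $p\in S$ isolated, and its complement is isomorphic to $G'\setminus S$; thus $G(P,r)$ is acyclic iff $S$ is a feedback vertex set of $G'$. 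The reverse direction is immediate: given a FVS $S$ of $G'$ of size $\le k$, setting $r(p)=\alpha$ on $S$ and $r(q)=1$ elsewhere yields a valid solution.

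For \probAcyclicityMinDown, I will instantiate the same construction with budget $\mu=(1-\alpha)k$: every shrunken disk in an optimal solution lies on an original vertex at the full radius $\alpha$, contributing exactly $1-\alpha$ to the cost, so $\cost(S,r)\le\mu$ iff $|S|\le k$. The same reduction handles \probAcyclicity via an additional exchange argument showing that in an optimal solution one may assume the radius of every shrunken disk is exactly $\alpha$ and the shrunken set lies in $V(G)$ without increasing the cost. The main obstacle I anticipate is the simultaneous calibration of $R,d_1,d_2$ together with the scaled rectilinear embedding for \emph{all} rational $\alpha\in(0,1)$---in particular ruling out spurious intersections around the bends of the broken lines and confirming that the flexibility in $d_1,d_2$ within their permitted ranges is always sufficient to fit an integer number of chain vertices onto every edge; everything else closely mirrors the template worked out for $\alpha=8/9$.
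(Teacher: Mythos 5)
Your overall route is the same as the paper's: reduce from \textsc{Feedback Vertex Set} on planar graphs of maximum degree four, lay the graph out rectilinearly via \Cref{prop:planar-embedding}, subdivide each edge into a path of chain vertices with two calibrated spacings (original--chain removable by shrinking one endpoint, chain--chain not), and then purify a solution onto original vertices by local exchanges; the Min variants are handled by a budget of $(1-\alpha)k$, exactly as in the paper. The difference is that the paper only works out $\alpha=8/9$ in detail and asserts the general rational case, while you attempt the calibration for all $\alpha$ at once -- and that is precisely where your argument has a gap.

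The gap is in the exchange step ``a chain vertex $p$ whose two neighbors are both chain vertices can be dropped from $S$ (shrinking $p$ did not remove any of its edges by (iii)).'' Your condition (iii) only rules out removal of a chain--chain edge by shrinking \emph{one} endpoint. With your constants ($R=2q$, $d_2=2q+1$, so normalized chain--chain distance $1+\tfrac{1}{2q}$), shrinking \emph{both} endpoints to $\alpha=p/q$ removes the edge whenever $2\alpha\le 1+\tfrac{1}{2q}$, i.e.\ $4p\le 2q+1$ -- in particular for every $\alpha\le 1/2$. In that regime a solution may break a cycle by shrinking two adjacent chain vertices, and dropping one of them from $S$ reintroduces that edge and can recreate a cycle, so your purification does not preserve feasibility. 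Note that this regime is unavoidable under any calibration: chain--chain spacing must exceed $1$ (to avoid spurious adjacencies along a segment), so the paper's trick of making chain--chain edges outright irremovable (spacing $<2\alpha$, as with $13/9<16/9$) is only available when $\alpha>1/2$. The cardinality-constrained variants can be rescued without any exchange -- if $(S,r)$ is a solution then $G'-S$ is a subgraph of the acyclic graph $G(P,r)$, so $S$ itself is a feedback vertex set of $G'$ of size at most $k$ -- but for \probAcyclicity (cost budget only, no cardinality bound) this shortcut fails, and there your one-sentence ``additional exchange argument'' must actually do the work: you need the explicit cost comparison that removing a chain--chain edge costs at least $1-\tfrac{1}{2q}>1-\alpha$, together with an exchange that reroutes such removals (and any fractional shrinking split across an original--chain edge) onto a full shrink of the original endpoint without re-closing cycles. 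As written, neither the flawed drop-step nor the missing cost argument is covered, so the proof is incomplete for $\alpha\le 1/2$ and for the no-cardinality variant, even though the construction itself can be repaired along these lines.
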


\section{Conclusion and Future Research} \label{sec:conclusion}

In this paper, we initiated the study of graph modification by \emph{scaling of disks}  from the perspective of Parameterized Complexity. Specifically, we focused on modifying the connectivity properties of the graph while shrinking or expanding a subset of disks. The (\textsc{Min}) \probEdgelessnessDown/\textsc{Acyclicity} problems correspond to shrinking at most $k$ disks to achieve a completely disconnected (i.e., edgeless), and minimally connected (i.e., acyclic) intersection graph, respectively. Whereas \probConnectivityFracDown(resp. \probConnectivityFracUp) corresponds to shrinking at least (resp. at most) $k$ disks while retaining (resp.~achieving) connectivity in the intersection graph.

Besides our NP/\classW{1}-hardness results for the above problems, we also gave several algorithmic results. For (\textsc{Min}) \probEdgelessnessDown, we designed (partial) polynomial kernels, FPT algorithms, and (bicriteria) approximation schemes. For \probEdgelessnessMinDown, we use a novel combination of linear programming with ideas from parameterized complexity and geometric approximation respectively. The FPT algorithm is obtained by first using kernelization to derive a bounded-size instance, and then using LP to solve this instance; whereas the approximation algorithm uses a combination of the classical shifting technique along with LP. We also obtained similar results for (\textsc{Min})\probAcyclicityDown; however our arguments and techniques are more sophisticated due to ``non-local'' nature of the problem. Finally, for \probConnectivityFracDown, we introduced a novel decomposition theorem that returns a grid minor of the disk graph along with a particular kind of embedding on the plane, and used this theorem to design a subexponential FPT algorithm for the problem. Similar ideas also gave subexponential FPT algorithms for \probEdgelessnessDown/\textsc{Acyclicity}.

Analogous to the results obtained for \probEdgelessnessDown, it is natural to ask whether \probConnectivityFracDown admits (partial) polynomial kernel, and approximation schemes. Furthermore, it is also possible to define cost-optimization version of \probConnectivityFracDown and study it from the perspective of fixed-parameter and approximation algorithms. 

At a broader level, we believe that a significant portion of our contribution lies in the conceptual realm, with the goal of pioneering a new domain within the field of geometric graph modification problems. Furthermore, it opens up a multitude of potential directions for future research. In this context, we propose a few specific avenues to explore.
\begin{itemize}
	\setlength{\itemsep}{-1pt}
	\item {\bf Modification to other graph Classes:} 
	In addition to graph modification to achieve edgelessness or connectivity, it is intriguing to consider modifying the given geometric graph to belong to other families of graphs, such as acyclic graphs (for minimal connectivity assurance), cluster graphs (for clustering purposes), cliques (for direct communication facilitation), and bounded-degree graphs (which generalize edgeless graphs).

	\item {\bf Various geometric intersection graphs as input:} Our study primarily focused on the simplest geometric intersection graphs, specifically disk graphs, where graph modification problems often become tractable within the realm of Parameterized Complexity despite being \classNP-hard. Beyond disk graphs, various other geometric intersection graphs, such as string graphs and map graphs, present substantial differences. Additionally, we can extend our investigations to three-dimensional spaces (e.g., unit ball graphs and more general ball graphs). Therefore, a natural research direction is to explore geometric modifications for graph classes within these broader and distinct geometric intersection graph categories.
	
	\item {\bf New geometric modification operations:} 
	While we utilized scaling as a quintessential example of a geometric modification operation, other geometric operations, like movement (recently studied in \cite{FominG0Z23, FominG0Z22}) or rotation, may be more suitable depending on the context and the nature of the geometric objects under consideration. Furthermore, one might contemplate simultaneously shifting, scaling, and rotating objects (or applying any combination of two out of these three operations). This parallels research on graph modification problems where significant efforts have been made to concurrently address edge deletions and insertions. Beyond these combinations, defining and researching other geometric operations can be highly relevant based on the objectives and the class of geometric intersection graphs at hand. This includes operations like smoothing and dimension reduction.

	\item {\bf Meta-theorems.} While the study of research questions on a problem-to-problem basis can be sometimes challenging, yet interesting in its own right, the ultimate goal is often to prove meta-theorems that simultaneously assert or refute, say, membership in \classFPT\ or the existence of a polynomial kernel, for a host of problems. For example, we may claim, for a whole class of problems, that if a specific constraint X in the definition of any problem in the class can be expressed in some logic (e.g., Monadic Second Order Logic), then the problem is FPT, or that if and only if the input graph (or one of the input graphs) has bounded treewidth, then the problem is FPT, or that if the problem admits an OR-cross-composition from an \classNP-hard problem, then it has no polynomial kernel unless the polynomial hierarchy collapses. We refer to the books \cite{downey2013fundamentals,DBLP:books/sp/CyganFKLMPPS15,fomin2019kernelization} for examples of such results. 
	Clearly, after gaining a better understanding of the problem class introduced in this paper, the pursuit of such meta-theorems represents an intriguing avenue for further research.	
\end{itemize}

\bibliographystyle{siam}
\bibliography{Scaling,references}

\begin{thebibliography}{10}

\bibitem{acharyya2020range}
{\sc A.~Acharyya, M.~De, S.~C. Nandy, and B.~Roy}, {\em Range assignment of
  base-stations maximizing coverage area without interference}, Theoretical
  Computer Science, 804 (2020), pp.~81--97.

\bibitem{DBLP:journals/talg/AgrawalLMSZ19}
{\sc A.~Agrawal, D.~Lokshtanov, P.~Misra, S.~Saurabh, and M.~Zehavi}, {\em
  Feedback vertex set inspired kernel for chordal vertex deletion}, {ACM}
  Trans. Algorithms, 15 (2019), pp.~11:1--11:28.

\bibitem{DBLP:conf/soda/AgrawalM0Z19}
{\sc A.~Agrawal, P.~Misra, S.~Saurabh, and M.~Zehavi}, {\em Interval vertex
  deletion admits a polynomial kernel}, in Proceedings of the Thirtieth Annual
  {ACM-SIAM} Symposium on Discrete Algorithms, {SODA} 2019, San Diego,
  California, USA, January 6-9, 2019, 2019, pp.~1711--1730.

\bibitem{alber2004geometric}
{\sc J.~Alber and J.~Fiala}, {\em Geometric separation and exact solutions for
  the parameterized independent set problem on disk graphs}, Journal of
  Algorithms, 52 (2004), pp.~134--151.

\bibitem{AloupisDFKORW13}
{\sc G.~Aloupis, M.~Damian, R.~Y. Flatland, M.~Korman, {\"{O}}.~{\"{O}}zkan,
  D.~Rappaport, and S.~Wuhrer}, {\em Establishing strong connectivity using
  optimal radius half-disk antennas}, Comput. Geom., 46 (2013), pp.~328--339.

\bibitem{DBLP:journals/dcg/AtminasZ18}
{\sc A.~Atminas and V.~Zamaraev}, {\em On forbidden induced subgraphs for unit
  disk graphs}, Discret. Comput. Geom., 60 (2018), pp.~57--97.

\bibitem{Baker94}
{\sc B.~S. Baker}, {\em Approximation algorithms for np-complete problems on
  planar graphs}, J. {ACM}, 41 (1994), pp.~153--180.

\bibitem{BasavarajuFRS16}
{\sc M.~Basavaraju, M.~C. Francis, M.~S. Ramanujan, and S.~Saurabh}, {\em
  Partially polynomial kernels for set cover and test cover}, {SIAM} J.
  Discret. Math., 30 (2016), pp.~1401--1423.

\bibitem{BetzlerGKN11}
{\sc N.~Betzler, J.~Guo, C.~Komusiewicz, and R.~Niedermeier}, {\em Average
  parameterization and partial kernelization for computing medians}, J. Comput.
  Syst. Sci., 77 (2011), pp.~774--789.

\bibitem{biniaz2018faster}
{\sc A.~Biniaz, P.~Bose, P.~Carmi, A.~Maheshwari, J.~I. Munro, and M.~Smid},
  {\em Faster algorithms for some optimization problems on collinear points},
  arXiv preprint arXiv:1802.09505,  (2018).

\bibitem{bliznets2016lower}
{\sc I.~Bliznets, M.~Cygan, P.~Komosa, L.~Mach, and M.~Pilipczuk}, {\em Lower
  bounds for the parameterized complexity of minimum fill-in and other
  completion problems}, in Proceedings of the twenty-seventh annual ACM-SIAM
  symposium on Discrete algorithms, SIAM, 2016, pp.~1132--1151.

\bibitem{bliznets2018subexponential}
{\sc I.~Bliznets, F.~V. Fomin, M.~Pilipczuk, and M.~Pilipczuk}, {\em
  Subexponential parameterized algorithm for interval completion}, ACM
  Transactions on Algorithms (TALG), 14 (2018), p.~35.

\bibitem{DBLP:journals/algorithmica/BonnetR19}
{\sc {\'{E}}.~Bonnet and P.~Rzazewski}, {\em Optimality program in segment and
  string graphs}, Algorithmica, 81 (2019), pp.~3047--3073.

\bibitem{DBLP:conf/icalp/BougeretJS20}
{\sc M.~Bougeret, B.~M.~P. Jansen, and I.~Sau}, {\em Bridge-depth characterizes
  which structural parameterizations of vertex cover admit a polynomial
  kernel}, in 47th International Colloquium on Automata, Languages, and
  Programming, {ICALP} 2020, July 8-11, 2020, Saarbr{\"{u}}cken, Germany
  (Virtual Conference), 2020, pp.~16:1--16:19.

\bibitem{calinescu2003range}
{\sc G.~Calinescu and P.-J. Wan}, {\em Range assignment for high connectivity
  in wireless ad hoc networks}, in International Conference on Ad-Hoc Networks
  and Wireless, Springer, 2003, pp.~235--246.

\bibitem{cao2016linear}
{\sc Y.~Cao}, {\em Linear recognition of almost interval graphs}, in
  Proceedings of the twenty-seventh annual ACM-SIAM symposium on Discrete
  algorithms, SIAM, 2016, pp.~1096--1115.

\bibitem{cao2017minimum}
{\sc Y.~Cao and R.~Sandeep}, {\em Minimum fill-in: Inapproximability and almost
  tight lower bounds}, in Proceedings of the Twenty-Eighth Annual ACM-SIAM
  Symposium on Discrete Algorithms, SIAM, 2017, pp.~875--880.

\bibitem{caragiannis2006energy}
{\sc I.~Caragiannis, C.~Kaklamanis, and P.~Kanellopoulos}, {\em
  Energy-efficient wireless network design}, Theory of Computing Systems, 39
  (2006), pp.~593--617.

\bibitem{CarmiK07}
{\sc P.~Carmi and M.~J. Katz}, {\em Power assignment in radio networks with two
  power levels}, Algorithmica, 47 (2007), pp.~183--201.

\bibitem{ChambersFHMMVSW18}
{\sc E.~W. Chambers, S.~P. Fekete, H.~Hoffmann, D.~Marinakis, J.~S.~B.
  Mitchell, S.~Venkatesh, U.~Stege, and S.~Whitesides}, {\em Connecting a set
  of circles with minimum sum of radii}, Comput. Geom., 68 (2018), pp.~62--76.

\bibitem{ChanTALG18}
{\sc T.~M. Chan}, {\em Improved deterministic algorithms for linear programming
  in low dimensions}, {ACM} Trans. Algorithms, 14 (2018), pp.~30:1--30:10.

\bibitem{clark1990unit}
{\sc B.~N. Clark, C.~J. Colbourn, and D.~S. Johnson}, {\em Unit disk graphs},
  Discrete mathematics, 86 (1990), pp.~165--177.

\bibitem{CrespelleDFG23}
{\sc C.~Crespelle, P.~G. Drange, F.~V. Fomin, and P.~A. Golovach}, {\em A
  survey of parameterized algorithms and the complexity of edge modification},
  Comput. Sci. Rev., 48 (2023), p.~100556.

\bibitem{DBLP:books/sp/CyganFKLMPPS15}
{\sc M.~Cygan, F.~V. Fomin, L.~Kowalik, D.~Lokshtanov, D.~Marx, M.~Pilipczuk,
  M.~Pilipczuk, and S.~Saurabh}, {\em Parameterized Algorithms}, Springer,
  2015.

\bibitem{DBLP:conf/stoc/BergBKMZ18}
{\sc M.~de~Berg, H.~L. Bodlaender, S.~Kisfaludi{-}Bak, D.~Marx, and T.~C.
  van~der Zanden}, {\em A framework for eth-tight algorithms and lower bounds
  in geometric intersection graphs}, in Proceedings of the 50th Annual {ACM}
  {SIGACT} Symposium on Theory of Computing, {STOC} 2018, Los Angeles, CA, USA,
  June 25-29, 2018, 2018, pp.~574--586.

\bibitem{DBLP:journals/tcs/BergKW19}
{\sc M.~de~Berg, S.~Kisfaludi{-}Bak, and G.~J. Woeginger}, {\em The complexity
  of dominating set in geometric intersection graphs}, Theor. Comput. Sci., 769
  (2019), pp.~18--31.

\bibitem{Diestel12}
{\sc R.~Diestel}, {\em Graph Theory, 4th Edition}, vol.~173 of Graduate texts
  in mathematics, Springer, 2012.

\bibitem{downey2013fundamentals}
{\sc R.~G. Downey and M.~R. Fellows}, {\em Fundamentals of parameterized
  complexity}, vol.~4, Springer, 2013.

\bibitem{eppstein2016maximizing}
{\sc D.~Eppstein}, {\em Maximizing the sum of radii of disjoint balls or
  disks}, arXiv preprint arXiv:1607.02184,  (2016).

\bibitem{FominG0Z23}
{\sc F.~V. Fomin, P.~A. Golovach, T.~Inamdar, S.~Saurabh, and M.~Zehavi}, {\em
  Kernelization for spreading points}, in 31st Annual European Symposium on
  Algorithms, {ESA} 2023, September 4-6, 2023, Amsterdam, The Netherlands,
  I.~L. G{\o}rtz, M.~Farach{-}Colton, S.~J. Puglisi, and G.~Herman, eds.,
  vol.~274 of LIPIcs, Schloss Dagstuhl - Leibniz-Zentrum f{\"{u}}r Informatik,
  2023, pp.~48:1--48:16.

\bibitem{FominG0Z22}
{\sc F.~V. Fomin, P.~A. Golovach, T.~Inamdar, and M.~Zehavi}, {\em (re)packing
  equal disks into rectangle}, in 49th International Colloquium on Automata,
  Languages, and Programming, {ICALP} 2022, July 4-8, 2022, Paris, France,
  M.~Bojanczyk, E.~Merelli, and D.~P. Woodruff, eds., vol.~229 of LIPIcs,
  Schloss Dagstuhl - Leibniz-Zentrum f{\"{u}}r Informatik, 2022,
  pp.~60:1--60:17.

\bibitem{DBLP:journals/talg/FominLLSTZ19}
{\sc F.~V. Fomin, T.~Le, D.~Lokshtanov, S.~Saurabh, S.~Thomass{\'{e}}, and
  M.~Zehavi}, {\em Subquadratic kernels for implicit 3-hitting set and 3-set
  packing problems}, {ACM} Trans. Algorithms, 15 (2019), pp.~13:1--13:44.

\bibitem{DBLP:conf/icalp/FominLP0Z19}
{\sc F.~V. Fomin, D.~Lokshtanov, F.~Panolan, S.~Saurabh, and M.~Zehavi}, {\em
  Decomposition of map graphs with applications}, in 46th International
  Colloquium on Automata, Languages, and Programming, {ICALP} 2019, July 9-12,
  2019, Patras, Greece, 2019, pp.~60:1--60:15.

\bibitem{DBLP:journals/dcg/FominLPSZ19}
\leavevmode\vrule height 2pt depth -1.6pt width 23pt, {\em Finding, hitting and
  packing cycles in subexponential time on unit disk graphs}, Discret. Comput.
  Geom., 62 (2019), pp.~879--911.

\bibitem{DBLP:conf/compgeom/FominLP0Z20}
\leavevmode\vrule height 2pt depth -1.6pt width 23pt, {\em Eth-tight algorithms
  for long path and cycle on unit disk graphs}, in 36th International Symposium
  on Computational Geometry, SoCG 2020, June 23-26, 2020, Z{\"{u}}rich,
  Switzerland, 2020, pp.~44:1--44:18.

\bibitem{DBLP:conf/soda/FominLS12}
{\sc F.~V. Fomin, D.~Lokshtanov, and S.~Saurabh}, {\em Bidimensionality and
  geometric graphs}, in Proceedings of the Twenty-Third Annual {ACM-SIAM}
  Symposium on Discrete Algorithms, {SODA} 2012, Kyoto, Japan, January 17-19,
  2012, 2012, pp.~1563--1575.

\bibitem{DBLP:journals/jacm/FominLS18}
\leavevmode\vrule height 2pt depth -1.6pt width 23pt, {\em Excluded grid minors
  and efficient polynomial-time approximation schemes}, J. {ACM}, 65 (2018),
  pp.~10:1--10:44.

\bibitem{fomin2019kernelization}
{\sc F.~V. Fomin, D.~Lokshtanov, S.~Saurabh, and M.~Zehavi}, {\em
  Kernelization: theory of parameterized preprocessing}, Cambridge University
  Press, 2019.

\bibitem{giannopoulos2008parameterized}
{\sc P.~Giannopoulos, C.~Knauer, and S.~Whitesides}, {\em Parameterized
  complexity of geometric problems}, The Computer Journal, 51 (2008),
  pp.~372--384.

\bibitem{HochbaumM85}
{\sc D.~S. Hochbaum and W.~Maass}, {\em Approximation schemes for covering and
  packing problems in image processing and {VLSI}}, J. {ACM}, 32 (1985),
  pp.~130--136.

\bibitem{jansen2010polynomial}
{\sc B.~Jansen}, {\em Polynomial kernels for hard problems on disk graphs}, in
  Scandinavian Workshop on Algorithm Theory (SWAT), Springer, 2010,
  pp.~310--321.

\bibitem{jansen2018approximation}
{\sc B.~M. Jansen and M.~Pilipczuk}, {\em Approximation and kernelization for
  chordal vertex deletion}, SIAM Journal on Discrete Mathematics, 32 (2018),
  pp.~2258--2301.

\bibitem{KnuthR92}
{\sc D.~E. Knuth and A.~Raghunathan}, {\em The problem of compatible
  representatives}, {SIAM} J. Discret. Math., 5 (1992), pp.~422--427.

\bibitem{DBLP:journals/corr/abs-1911-02653}
{\sc A.~Kulik and H.~Shachnai}, {\em Analysis of two-variable recurrence
  relations with application to parameterized approximations}, CoRR,
  abs/1911.02653, Accepted to Symposium on Foundations of Computer Sciences
  (FOCS 2020) (2019).

\bibitem{DBLP:conf/soda/LiN20}
{\sc J.~Li and J.~Nederlof}, {\em Detecting feedback vertex sets of size
  \emph{k} in \emph{O}*(2.7\({}^{\mbox{\emph{k}}}\)) time}, in Proceedings of
  the 2020 {ACM-SIAM} Symposium on Discrete Algorithms, {SODA} 2020, Salt Lake
  City, UT, USA, January 5-8, 2020, 2020, pp.~971--989.

\bibitem{Lichtenstein82}
{\sc D.~Lichtenstein}, {\em Planar formulae and their uses}, {SIAM} J. Comput.,
  11 (1982), pp.~329--343.

\bibitem{LiuMS98}
{\sc Y.~Liu, A.~Morgana, and B.~Simeone}, {\em A linear algorithm for 2-bend
  embeddings of planar graphs in the two-dimensional grid}, Discret. Appl.
  Math., 81 (1998), pp.~69--91.

\bibitem{lloyd2005algorithmic}
{\sc E.~L. Lloyd, R.~Liu, M.~V. Marathe, R.~Ramanathan, and S.~S. Ravi}, {\em
  Algorithmic aspects of topology control problems for ad hoc networks}, Mobile
  Networks and applications, 10 (2005), pp.~19--34.

\bibitem{DBLP:conf/soda/LokshtanovPSXZ22}
{\sc D.~Lokshtanov, F.~Panolan, S.~Saurabh, J.~Xue, and M.~Zehavi}, {\em
  Subexponential parameterized algorithms on disk graphs (extended abstract)},
  in Proceedings of the 2022 {ACM-SIAM} Symposium on Discrete Algorithms,
  {SODA} 2022, Virtual Conference / Alexandria, VA, USA, January 9 - 12, 2022,
  2022, pp.~2005--2031.

\bibitem{DBLP:conf/esa/Marx05}
{\sc D.~Marx}, {\em Efficient approximation schemes for geometric problems?},
  in Algorithms - {ESA} 2005, 13th Annual European Symposium, Palma de
  Mallorca, Spain, October 3-6, 2005, Proceedings, 2005, pp.~448--459.

\bibitem{DBLP:conf/iwpec/Marx06}
\leavevmode\vrule height 2pt depth -1.6pt width 23pt, {\em Parameterized
  complexity of independence and domination on geometric graphs}, in
  Parameterized and Exact Computation, Second International Workshop, {IWPEC}
  2006, Z{\"{u}}rich, Switzerland, September 13-15, 2006, Proceedings, 2006,
  pp.~154--165.

\bibitem{Marx07a}
\leavevmode\vrule height 2pt depth -1.6pt width 23pt, {\em On the optimality of
  planar and geometric approximation schemes}, in 48th Annual {IEEE} Symposium
  on Foundations of Computer Science {(FOCS} 2007), October 20-23, 2007,
  Providence, RI, USA, Proceedings, {IEEE} Computer Society, 2007,
  pp.~338--348.

\bibitem{MatousekSW96}
{\sc J.~Matousek, M.~Sharir, and E.~Welzl}, {\em A subexponential bound for
  linear programming}, Algorithmica, 16 (1996), pp.~498--516.

\bibitem{Mohar01}
{\sc B.~Mohar}, {\em Face covers and the genus problem for apex graphs}, J.
  Comb. Theory, Ser. {B}, 82 (2001), pp.~102--117.

\bibitem{DBLP:conf/soda/Panolan0Z19}
{\sc F.~Panolan, S.~Saurabh, and M.~Zehavi}, {\em Contraction decomposition in
  unit disk graphs and algorithmic applications in parameterized complexity},
  in Proceedings of the Thirtieth Annual {ACM-SIAM} Symposium on Discrete
  Algorithms, {SODA} 2019, San Diego, California, USA, January 6-9, 2019, 2019,
  pp.~1035--1054.

\bibitem{santi2005topology}
{\sc P.~Santi}, {\em Topology control in wireless ad hoc and sensor networks},
  ACM computing surveys (CSUR), 37 (2005), pp.~164--194.

\bibitem{Speckenmeyer83}
{\sc E.~Speckenmeyer}, {\em Untersuchungen zum Feedback-vertex-set-Problem in
  ungerichteten Graphen}, PhD thesis, University of Paderborn, Germany, 1983.

\bibitem{Speckenmeyer88}
\leavevmode\vrule height 2pt depth -1.6pt width 23pt, {\em On feedback vertex
  sets and nonseparating independent sets in cubic graphs}, J. Graph Theory, 12
  (1988), pp.~405--412.

\bibitem{DBLP:journals/talg/Thomasse10}
{\sc S.~Thomass{\'{e}}}, {\em A 4\emph{k}\({}^{\mbox{2}}\) kernel for feedback
  vertex set}, {ACM} Trans. Algorithms, 6 (2010), pp.~32:1--32:8.

\bibitem{wan2002minimum}
{\sc P.-J. Wan, G.~C{\u{a}}linescu, X.-Y. Li, and O.~Frieder}, {\em
  Minimum-energy broadcasting in static ad hoc wireless networks}, Wireless
  Networks, 8 (2002), pp.~607--617.

\bibitem{DBLP:journals/jocg/ZehaviFLP021}
{\sc M.~Zehavi, F.~V. Fomin, D.~Lokshtanov, F.~Panolan, and S.~Saurabh}, {\em
  Eth-tight algorithms for long path and cycle on unit disk graphs}, J. Comput.
  Geom., 12 (2021), pp.~126--148.

\end{thebibliography}

\end{document}